\documentclass{article}

\usepackage{arxiv}

\usepackage[utf8]{inputenc} 
\usepackage[T1]{fontenc}    
\usepackage{hyperref}       
\usepackage{url}            
\usepackage{booktabs}       
\usepackage{amsfonts}       
\usepackage{nicefrac}       
\usepackage{microtype}      
\usepackage{lipsum}		
\usepackage{graphicx}
\usepackage{natbib}
\usepackage{doi}
\usepackage[labelfont=bf]{caption}
\usepackage{bm}
\usepackage{graphicx}
\usepackage[space]{grffile}
\usepackage{latexsym}
\usepackage{textcomp}
\usepackage{longtable}
\usepackage{tabulary}
\usepackage{booktabs, array, multirow}
\usepackage{amsfonts,amsmath,amssymb}
\usepackage{natbib}
\usepackage{url}
\usepackage{hyperref}
\hypersetup{colorlinks=false,pdfborder={0 0 0}}
\usepackage{etoolbox}
\usepackage{mathtools}

\usepackage{siunitx}

\usepackage{caption}
\usepackage{subcaption}

\DeclareMathOperator*{\dif}{\mathrm{d} \!}
\DeclareMathOperator*{\vech}{vech}
\DeclareMathOperator*{\diag}{diag}
\DeclareMathOperator*{\tr}{Tr}
\DeclareMathOperator*{\cov}{cov}

\DeclareMathOperator*{\argmin}{arg\,min}

\newtheorem{theorem}{Theorem}[section]
\newtheorem{lemma}[theorem]{Lemma}
\newtheorem{proposition}[theorem]{Proposition}
\newtheorem{corollary}[theorem]{Corollary}
\newtheorem{definition}[theorem]{Definition}
\newtheorem{remark}{Remark}

\allowdisplaybreaks

\newenvironment{proof}[1][Proof]{\begin{trivlist}
\item[\hskip \labelsep {\bfseries #1}]}{\end{trivlist}}

\newcommand{\qed}{\nobreak \ifvmode \relax \else
      \ifdim\lastskip<1.5em \hskip-\lastskip
      \hskip1.5em plus0em minus0.5em \fi \nobreak
      \vrule height0.75em width0.5em depth0.25em\fi}

\makeatletter
\newcommand{\myitem}[1]{%
\item[#1]\protected@edef\@currentlabel{#1}%
} 
\makeatother

\makeatletter
\DeclareRobustCommand\widecheck[1]{{\mathpalette\@widecheck{#1}}}
\def\@widecheck#1#2{%
    \setbox\z@\hbox{\m@th$#1#2$}%
    \setbox\tw@\hbox{\m@th$#1%
       \widehat{%
          \vrule\@width\z@\@height\ht\z@
          \vrule\@height\z@\@width\wd\z@}$}%
    \dp\tw@-\ht\z@
    \@tempdima\ht\z@ \advance\@tempdima2\ht\tw@ \divide\@tempdima\thr@@
    \setbox\tw@\hbox{%
       \raise\@tempdima\hbox{\scalebox{1}[-1]{\lower\@tempdima\box
\tw@}}}%
    {\ooalign{\box\tw@ \cr \box\z@}}}
\makeatother

\title{Strang Splitting for Parametric Inference in Second-order Stochastic Differential Equations}


\author{ \href{https://orcid.org/0000-0002-8890-421X}{\includegraphics[scale=0.06]{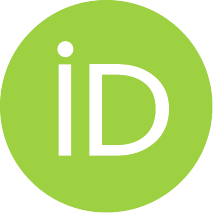}\hspace{1mm} Predrag ~Pilipovic}\\
	Department of Mathematical Sciences\\
	University of Copenhagen\\
	2100 Copenhagen, Denmark \\
    \texttt{predrag@math.ku.dk} \\
    Bielefeld Graduate School of Economics and Management\\
    University of Bielefeld\\
    33501 Bielefeld, Germany\\
	\texttt{predrag.pilipovic@uni-bielefeld.de} \\
	\And
	Adeline ~Samson \\
	Univ. Grenoble Alpes\\
	CNRS, Grenoble INP, LJK\\
	38000 Grenoble, France\\
	\texttt{adeline.leclercq-samson@univ-grenoble-alpes.fr} \\
	\And
	\href{https://orcid.org/0000-0002-1998-2783}{\includegraphics[scale=0.06]{orcid.pdf}\hspace{1mm} Susanne ~Ditlevsen} \\
	Department of Mathematical Sciences\\
	University of Copenhagen\\
	2100 Copenhagen, Denmark \\
	\texttt{susanne@math.ku.dk}\\
}

\date{}


\hypersetup{
pdftitle={Strang Splitting for Parametric Inference in Second-order Stochastic Differential Equations},
pdfsubject={math.ST},
pdfauthor={Predrag ~Pilipovic, Adeline ~Samson, Susanne ~Ditlevsen},
pdfkeywords={First keyword, Second keyword, More},
}

\begin{document}
\maketitle

\begin{abstract}

We address parameter estimation in second-order stochastic differential equations (SDEs), which are prevalent in physics, biology, and ecology. The second-order SDE is converted to a first-order system by introducing an auxiliary velocity variable, which raises two main challenges. First, the system is hypoelliptic since the noise affects only the velocity, making the Euler-Maruyama estimator ill-conditioned. We propose an estimator based on the Strang splitting scheme to overcome this. Second, since the velocity is rarely observed, we adapt the estimator to partial observations. We present four estimators for complete and partial observations, using the full pseudo-likelihood or only the velocity-based partial pseudo-likelihood. These estimators are intuitive, easy to implement, and computationally fast, and we prove their consistency and asymptotic normality. Our analysis demonstrates that using the full pseudo-likelihood with complete observations reduces the asymptotic variance of the diffusion estimator. With partial observations, the asymptotic variance increases as a result of information loss but remains unaffected by the likelihood choice. However, a numerical study on the Kramers oscillator reveals that using the partial pseudo-likelihood for partial observations yields less biased estimators. We apply our approach to paleoclimate data from the Greenland ice core by fitting the Kramers oscillator model, capturing transitions between metastable states reflecting observed climatic conditions during glacial eras.
\end{abstract}

\keywords{Second-order stochastic differential equations, Hypoellipticity, Partial observations, Strang splitting estimator, Greenland ice core data, Kramers oscillator}

\section{Introduction} \label{sec:Intro}

Second-order stochastic differential equations (SDEs) are an effective instrument for modeling complex systems showcasing both deterministic and stochastic dynamics, which incorporate the second derivative of a variable - the acceleration. These models are extensively applied in many fields, including physics \citep{RosenblumPikovsky2003}, molecular dynamics \citep{leimkuhler2015molecular}, ecology \citep{CTCRWJohnson2008, CTCRWMichelot2019}, paleoclimatology \citep{ditlevsen2002fast}, and neuroscience \citep{Ziv1994, jansen1995electroencephalogram}.

The general form of a second-order SDE in Langevin form is given as follows
\begin{equation}
\ddot{\mathbf{X}}_t = \mathbf{F}(\mathbf{X}_t, \dot{\mathbf{X}}_t,  \bm{\beta}) + \bm{\Sigma} \bm{\xi}_t. \label{eq:2ndOrderLangevin}
\end{equation}
Here, $\mathbf{X}_t$ denotes the random variable of interest taking values in $\mathbb{R}^d$, the dot indicates derivative with respect to time $t$, drift $\mathbf{F}$ represents the deterministic force, and $\bm{\xi}_t$ is a white noise representing the system's random perturbations around the deterministic force. We assume that $\bm{\Sigma} \in \mathbb{R}^{d\times d}$ is constant, i.e., the noise is additive.

The main goal of this study is to estimate parameters in second-order SDEs. We first reformulate the $d$-dimensional second-order SDE \eqref{eq:2ndOrderLangevin} into a $2d$-dimensional first-order SDE. We define an auxiliary velocity variable and express the second-order SDE in terms of its position $\mathbf{X}_t$ and velocity $\mathbf{V}_t$
\begin{equation}
\begin{alignedat}{2}\label{eq:sdeXV}
\dif \mathbf{X}_t &= \mathbf{V}_t \dif t, \qquad &&\mathbf{X}_0 = \mathbf{x}_0, \\
\dif \mathbf{V}_t &= \mathbf{F}\left(\mathbf{X}_t, \mathbf{V}_t;\bm{\beta}\right) \dif t + \bm{\Sigma}\dif \mathbf{W}_t, \qquad &&\mathbf{V}_0 = \mathbf{v}_0,
\end{alignedat}
\end{equation}
where $\mathbf{W}_t$ is a standard Wiener process. We refer to $\mathbf{X}_t$ and $\mathbf{V}_t$ as the smooth and rough coordinates, respectively. SDE \eqref{eq:sdeXV} is sometimes referred to as an integrated diffusion (see, e.g., \cite{DitlevsenSorensen2004, Gloter2000, Gloter2006}).

A specific example of model \eqref{eq:sdeXV} is $\mathbf{F}(\mathbf{x}, \mathbf{v}) = -\mathbf{c} (\mathbf{x}, \mathbf{v}) \mathbf{v} - \nabla \mathbf{U}(\mathbf{x})$, for some function $\mathbf{c}(\cdot)$ and potential $\mathbf{U}(\cdot)$. Then, model \eqref{eq:sdeXV} is called a stochastic damping Hamiltonian system. This system describes the motion of a particle subjected to potential, dissipative, and random forces \citep{WU2001205}. An example of a stochastic damping Hamiltonian system is the Kramers oscillator introduced in Section \ref{sec:Kramers}.

Let
\begin{equation*}
    \mathbf{Y}_t = \begin{bmatrix}
        \mathbf{X}_t\\
        \mathbf{V}_t
    \end{bmatrix}, \quad
    \widetilde{\mathbf{F}}(\mathbf{x}, \mathbf{v}; \bm{\beta}) = \begin{bmatrix}
        \mathbf{v}\\
        \mathbf{F}(\mathbf{x}, \mathbf{v}; \bm{\beta})
    \end{bmatrix}, \quad \text{and} \quad
    \widetilde{\bm{\Sigma}} = \begin{bmatrix}
        \mathbf{0} & \mathbf{0}\\
        \mathbf{0} & \bm{\Sigma}
    \end{bmatrix}.
\end{equation*}
Then \eqref{eq:sdeXV} is formulated as
\begin{align}
\dif \mathbf{Y}_t &= \widetilde{\mathbf{F}}\left(\mathbf{Y}_t;\bm{\beta}\right) \dif t + \widetilde{\bm{\Sigma}}\dif \widetilde{\mathbf{W}}_t, \qquad \mathbf{Y}_0 = \mathbf{y}_0. \label{eq:SDE}
\end{align}
The extended random process $\widetilde{\mathbf{W}}_t$ takes values in $\mathbb{R}^{2d}$ such that its first $d$ coordinates do not matter, and the second $d$ coordinates are equal to $\mathbf{W}_t$. This extension is needed only for the consistency of the notation. The tilde $\ \widetilde{} \ $ over an object indicates that it is associated with process $\mathbf{Y}_t$. Specifically, the object is of dimension $2d$ or $2d \times 2d$. 

When it exists, the unique solution of \eqref{eq:SDE} is called a diffusion or diffusion process.  System \eqref{eq:SDE} is usually not fully observed since the velocity $\mathbf{V}_t$ is not observable. Thus, our primary objective is to estimate the underlying drift parameter $\bm{\beta}$ and the diffusion parameter ${\bm{\Sigma}}$, based on discrete observations of either $\mathbf{Y}_t$ (referred to as the complete observation case), or only $\mathbf{X}_t$ (referred to as the partial observation case). If $\bm{\Sigma}$ is of full rank $d$, the diffusion $\mathbf{Y}_t$ is said to be hypoelliptic since the matrix 
\begin{align}
\widetilde{\bm{\Sigma}}\widetilde{\bm{\Sigma}}^\top = \begin{bmatrix}
\bm{0} & \bm{0}\\
\bm{0} & \bm{\Sigma}\bm{\Sigma}^\top
\end{bmatrix} \label{eq:SigmaSigmaT}
\end{align}
is not of full rank, while $\mathbf{Y}_t$ admits a smooth density, see Section \ref{sec:Hypoellipticity}. Thus, \eqref{eq:sdeXV} is a subclass of a larger class of hypoelliptic diffusions. 

\subsection*{Literature review}

Parametric estimation for hypoelliptic diffusions is an active area of research. \cite{DitlevsenSorensen2004} studied discretely observed integrated diffusion processes. They proposed using prediction-based estimating functions, which are suitable for non-Markovian processes and do not require access to the unobserved component. They proved consistency and asymptotic normality of the estimators for the number of observations $N \to \infty$ without any requirements on the sampling interval $h$. Certain moment conditions are needed to obtain results for fixed $h$, which are often difficult to fulfill for nonlinear drift functions. The estimator was applied to paleoclimate data in \cite{ditlevsen2002fast}, similar to the data we analyze in Section \ref{sec:Greenland}.

\cite{Gloter2006} also focused on parametric estimation for discretely observed integrated diffusion processes, introducing a contrast function using the Euler-Maruyama (EM) discretization. He studied the asymptotic properties as $h \to 0$ and $N \to \infty$, under the so-called rapidly increasing experimental design $N h \to \infty$ and $N h^2 \to 0$. To address the ill-conditioned contrast from the EM discretization, he suggested using only the rough coordinates of the SDE. He proposed to recover the unobserved integrated component through the finite difference approximation $(\mathbf{X}_{t_{k+1}} - \mathbf{X}_{t_k})/h$. This approximation makes the estimator biased and requires a correction factor of 3/2 in one of the terms of the contrast function for partial observations. The correction increases the asymptotic variance of the estimator of the diffusion parameter. \cite{SamsonThieullen2012} expanded the ideas and proved the results of \citep{Gloter2006} in more general models. Their focus was also on contrasts using the EM discretization limited to only the rough coordinates. 

\cite{Pokern2009} proposed an It\^o-Taylor expansion, adding a noise term of order $h^{3/2}$ to the smooth component in the numerical scheme. They argued against the use of finite differences for approximating unobserved components. Instead, they suggested using the It\^o-Taylor expansion leading to non-degenerate conditionally Gaussian approximations of the transition density and using Markov Chain Monte Carlo (MCMC) Gibbs samplers for conditionally imputing missing components based on the observations. However, this resulted in a biased estimator of the drift parameter of the rough component.

\cite{ditlevsen2018hypoelliptic} focused on filtering and inference methods for complete and partial observations. They proposed a contrast estimator based on the strong order 1.5 scheme  \citep{KloedenPlaten}, which incorporates noise of order $h^{3/2}$ into the smooth component, similar to \cite{Pokern2009}. Moreover, they retained terms of order $h^2$ in the mean, which removed the bias in the drift parameters noted in \citep{Pokern2009}. They proved consistency and asymptotic normality under complete observations, with the standard rapidly increasing experimental design $N h \to \infty$ and $N h^2 \to 0$. They adopted an unconventional approach using two separate contrast functions, resulting in marginal asymptotic results rather than a joint central limit theorem. The model was limited to a scalar smooth component and a diagonal diffusion coefficient matrix for the rough component.

\cite{melnykova2020parametric} developed a contrast estimator using local linearization (LL) \citep{Ozaki1985StatisticalIO, ShojiOzaki1998, OzakiJimenezLorenz} and compared it to the least-squares estimator. She employed local linearization of the drift function, providing a non-degenerate conditional Gaussian discretization scheme, enabling the construction of a contrast estimator that achieves asymptotic normality under the standard conditions $N h \to \infty$ and $N h^2 \to 0$. She proved a joint central limit theorem, bypassing the need for two separate contrasts as in \cite{ditlevsen2018hypoelliptic}. The models in \cite{ditlevsen2018hypoelliptic} and \cite{melnykova2020parametric} allow for parameters in the smooth component of the drift, in contrast to models based on second-order SDEs. 

Recent work by \cite{gloter2020, gloter2021} introduced adaptive and non-adaptive methods in hypoelliptic diffusion models, proving asymptotic normality in the complete observation regime. Their non-adaptive estimator is based on a higher-order It\^o-Taylor expansion that introduces additional Gaussian noise onto the smooth coordinates along with an appropriate higher-order mean approximation. The resulting estimator was later termed the local (or locally) Gaussian (LG), which is different from LL. The LG estimator can be viewed as an extension of the estimator proposed in \cite{ditlevsen2018hypoelliptic}, with fewer restrictions on the class of models. \cite{gloter2020, gloter2021} found that using the full SDE to create a contrast reduces the asymptotic variance of the estimator of the diffusion parameter compared to methods using only rough coordinates in the case of complete observations. 

The most recent contributions are \cite{iguchi2023.2, iguchi2023.1, iguchi2023.3},  building on the foundation of the LG estimator and focusing on high-frequency regimes addressing limitations in earlier methods. \cite{iguchi2023.1} presented a new closed-form contrast estimator for hypoelliptic SDEs (denoted as Hypo-I) based on Edgeworth-type density expansion and Malliavin calculus that achieves asymptotic normality under the less restrictive condition of $N h^3 \to 0$. \cite{iguchi2023.2} focused on a highly degenerate class of SDEs (denoted as Hypo-II) where smooth coordinates split into further sub-groups and proposed estimators for both complete and partial observation settings. \cite{iguchi2023.3} further refined the conditions for asymptotic normality for both Hypo-I and Hypo-II estimators under a weak design $N h^p \to 0$, for $p \geq 2$. 

The existing methods are generally based on approximations with varying degrees of refinements to correct for possible nonlinearities. This implies that they quickly degrade for highly nonlinear models if the step size increases. In particular, this is the case for Hamiltonian systems. Instead, we propose to use splitting schemes, more precisely, the Strang splitting (SS) scheme. 

Splitting schemes are established techniques initially developed for solving ordinary differential equations (ODEs) and have proven to be effective also for SDEs \citep{Ableidinger2017, BukwarSamsonTamborrinoTubikanec2021, Pilipovic2024}. These schemes yield accurate results in many practical applications since they incorporate nonlinearities in their construction. This makes them particularly suitable for second-order SDEs, where they have been widely used. Early work in dissipative particle dynamics \citep{Shardlow2003, SERRANO2006}, applications to molecular dynamics \citep{Vanden-Eijnden2006, Melchionna2007, leimkuhler2015molecular} and studies on internal particles \citep{PAVLIOTIS2009} all highlight the scheme's versatility. \cite{Burrage2007}, \cite{Bou_Rabee_2010}, and \cite{Abdulle2015} focused on the long-run statistical properties such as invariant measures. \cite{BouRabee2017CayleySF, Bréhier2019} and \cite{Adams2022} used splitting schemes for stochastic partial differential equations (SPDEs). 

\subsection*{Approach}

Despite the extensive use of splitting schemes in different areas, statistical applications have been lacking.  We have recently proposed statistical estimators for elliptic SDEs \citep{Pilipovic2024}. The straightforward and intuitive schemes lead to easy-to-implement estimators, offering an advantage over more numerically intensive and less user-friendly state-of-the-art frequentist-based methods. We use the SS scheme to approximate the transition density between two consecutive observations and derive the pseudo-likelihood function since the exact likelihood function is often unknown or intractable. Then, to estimate parameters, we employ maximum likelihood estimation (MLE). However, two specific statistical problems arise due to hypoellipticity and partial observations. 

First, hypoellipticity leads to degenerate EM transition schemes, which can be addressed by constructing the pseudo-likelihood solely from the rough coordinates of the SDE, referred to as the rough objective function hereafter. The SS technique enables the estimator to incorporate both smooth and rough components (referred to as the full objective function). It is also possible to construct SS estimators using only the rough objective function, raising the question of which estimator performs better. Our results align with \cite{gloter2020, gloter2021} in the complete observation setting, where we find that using the full objective function reduces the asymptotic variance of the diffusion estimator. We found the same results in the simulation study for the LL estimator proposed by \cite{melnykova2020parametric}. 

Second, we suggest to treat the unobserved velocity by approximating it using finite difference methods. While \cite{Gloter2006} and \cite{SamsonThieullen2012} exclusively use forward differences, we also investigate central and backward differences. The forward difference approach leads to a biased estimator unless it is corrected. One of the main contributions of this work is finding suitable corrections of the pseudo-likelihoods for different finite difference approximations such that the SS estimators are asymptotically unbiased. This also ensures consistency of the diffusion parameter estimator at the cost of increasing its asymptotic variance. 

When only partial observations are available, we explore the impact of using the full objective function versus the rough objective function and how different finite differentiation approximations influence parametric inference. We find that the choice of objective function does not affect the asymptotic variance of the estimator. However, our simulation study on the Kramers oscillator suggests that using the full objective function in finite sample setups introduces more bias than using only the rough objective function, which is the opposite of the complete observation setting. Finally, we analyze a paleoclimate ice core dataset from Greenland using Kramers oscillator. 

\subsection*{Main results}

The main contributions of this paper are:
\begin{enumerate}
    \item We extend the SS estimator of \citep{Pilipovic2024} to hypoelliptic models given by second-order SDEs in complete and partial observation settings. Moreover, we find the appropriate correction factors to obtain consistency in the partial observation setting.
    \item When complete observations are available, we show that the asymptotic variance of the estimator of the diffusion parameter is smaller when optimizing the full objective function. In contrast, for partial observations, we show that the asymptotic variance remains unchanged regardless of using the full or rough objective function. 
    \item We discuss the influence on the statistical properties of using the forward difference approximation for imputing the unobserved velocity variables compared to using the backward or the central difference. 
    \item We evaluate the performance of the estimators through a simulation study of a second-order SDE, the Kramers oscillator. Additionally, we show numerically in a finite sample study that the rough objective function for partial observations is more favorable than the full objective function.
    \item Based on a simulation study on the Kramers oscillator, we conclude that the new estimators outperform estimators based on the EM, LG, and LL schemes based on accuracy and computational speed.
    \item We fit the Kramers oscillator to a paleoclimate ice core dataset from Greenland and estimate the average time needed to pass between two metastable states. 
\end{enumerate}

\subsection*{Outline}

In Section \ref{sec:ProblemSetup}, we introduce the class of SDE models, define hypoellipticity, introduce the Kramers oscillator, and explain the SS scheme and its associated estimators. The asymptotic properties of the estimator are established in Section \ref{sec:EstimatiorProperties}. The theoretical results are illustrated in a simulation study on the Kramers Oscillator in Section \ref{sec:Simulations}. Section \ref{sec:Greenland} illustrates our methodology on the Greenland ice core data, while the technical results and the proofs of the main theorems and properties are in Section \ref{sec:AuxiliaryProperties} and Appendix \ref{sec:Appendix}, respectively. In Sections \ref{sec:Discussion} and \ref{sec:Conclusion} we discuss the results and give concluding remarks. 

\subsection*{Notation}

We use capital bold letters for random vectors, vector-valued functions, and matrices, while lowercase bold letters denote deterministic vectors. $\|\cdot\|$ denotes the $L^2$ vector norm in $\mathbb{R}^d$. Superscript $(i)$ on a vector denotes the $i$-th component, while on a matrix it denotes the $i$-th column. Double subscript $ij$ on a matrix denotes the component in the $i$-th row and $j$-th column. The transpose is denoted by $\top$. Operator $\tr (\cdot)$ returns the trace of a matrix and $\det(\cdot)$ the determinant. $\mathbf{I}_d$ denotes the $d$-dimensional identity matrix, while $\bm{0}_{d\times d}$ is a $d$-dimensional zero square matrix. We denote by $[a_i]_{i=1}^d$ a vector with coordinates $a_i$, and by $[b_{ij}]_{i,j=1}^d$ a matrix with coordinates $b_{ij}$, for $i,j= 1,\ldots,d$. For a real-valued function $g: \mathbb{R}^d \to \mathbb{R}$, $\partial_{x^{(i)}} g(\mathbf{x})$ denotes the partial derivative with respect to $x^{(i)}$ and $\partial_{x^{(i)}x^{(j)}}^2 g(\mathbf{x})$ denotes the second partial derivative with respect to $x^{(i)}$ and $x^{(j)}$. The nabla operator $\nabla_{\mathbf{x}}$ denotes the gradient vector of $g$ with respect of $\mathbf{x}$, that is, $\nabla_{\mathbf{x}} g(\mathbf{x}) =[\partial_{x^{(i)}} g(\mathbf{x})]_{i=1}^d $. $\mathbb{H}$  denotes the Hessian matrix of function $g$, $\mathbb{H}_g(\mathbf{x})= [\partial_{x^{(i)}x^{(j)}} g(\mathbf{x})]_{i,j=1}^d$. For a vector-valued function $\mathbf{F}: \mathbb{R}^d \to \mathbb{R}^d$, the differential operator $D_\mathbf{x}$ denotes the Jacobian matrix $D_\mathbf{x} \mathbf{F}(\mathbf{x}) = [\partial_{x^{(i)}} F^{(j)}(\mathbf{x})]_{i,j=1}^d$.  $\mathbf{R}$ represents a vector (or a matrix) valued function defined on $(0, 1) \times \mathbb{R}^d$ (or $(0, 1) \times \mathbb{R}^{d\times d}$), such that, for some constant $C$, $\|\mathbf{R}(a, \mathbf{x})\| < a C (1 + \|\mathbf{x}\|)^C$ for all $a, \mathbf{x}$. When denoted by $R$, it refers to a scalar function. For an open set $A$, the bar $\overline{A}$ indicates closure. We write $\xrightarrow{\mathbb{P}}$ for convergence in probability $\mathbb{P}$.  

\section{Problem setup} \label{sec:ProblemSetup}

Let $\mathbf{Y} = (\mathbf{Y}_t)_{t\geq 0}$ in \eqref{eq:SDE} be defined on a complete probability space $(\Omega, \mathcal{F}, \mathbb{P}_{\bm{\theta}})$ with a complete right-continuous filtration $\mathcal{F} = (\mathcal{F}_t)_{t\geq 0}$, and let the $2d$-dimensional Wiener process $\widetilde{\mathbf{W}}= (\widetilde{\mathbf{W}}_t)_{t \geq 0}$ be adapted to $\mathcal{F}_{t}$. Let $\mathbf{W}_t = [\widetilde{\mathbf{W}}_t^{(i)}]_{i=d+1}^{2d}$. The probability measure $\mathbb{P}_{\bm{\theta}}$ is parameterized by the parameter $\bm{\theta} = \left(\bm{\beta}, \bm{\Sigma}\right)$. Rewrite equation \eqref{eq:SDE} as 
\begin{align}
\dif \mathbf{Y}_t = \widetilde{\mathbf{A}}(\bm{\beta}) (\mathbf{Y}_t - \widetilde{\mathbf{b}}(\bm{\beta}))\dif t + \widetilde{\mathbf{N}}\left(\mathbf{Y}_t; \bm{\beta}\right) \dif t + \widetilde{\bm{\Sigma}} \dif \widetilde{\mathbf{W}}_t, \label{eq:SDEsplitted}
\end{align}
where
\begin{align}
\widetilde{\mathbf{A}}(\bm{\beta}) = \begin{bmatrix}
\bm{0}_{d\times d} & \mathbf{I}_d\\
\mathbf{A}_{\mathbf{x}}(\bm{\beta}) & \mathbf{A}_{\mathbf{v}}(\bm{\beta})
\end{bmatrix}, \quad \widetilde{\mathbf{b}}(\bm{\beta}) = \begin{bmatrix}
\mathbf{b}(\bm{\beta})\\
\bm{0}_d
\end{bmatrix}, \quad \widetilde{\mathbf{N}}(\mathbf{x}, \mathbf{v}; \bm{\beta}) = \begin{bmatrix}
\bm{0}_d\\
\mathbf{N}(\mathbf{x}, \mathbf{v}; \bm{\beta})
\end{bmatrix}. \label{eq:ANtilde}
\end{align}
The function $\mathbf{F}$ in \eqref{eq:sdeXV} is thus split as $\mathbf{F}(\mathbf{x}, \mathbf{v}; \bm{\beta}) = \mathbf{A}_{\mathbf{x}}(\bm{\beta})(\mathbf{x} - \mathbf{b}(\bm{\beta})) + \mathbf{A}_{\mathbf{v}}(\bm{\beta})\mathbf{v} + \mathbf{N}(\mathbf{x}, \mathbf{v}; \bm{\beta})$. 

Let $\overline{\Theta}_\beta \times \overline{\Theta}_\Sigma = \overline{\Theta}$ denote the closure of the parameter space with $\Theta_\beta$ and $\Theta_\Sigma$ being two convex open bounded subsets of $\mathbb{R}^r$ and $\mathbb{R}^{d\times d}$, respectively. The function $\mathbf{N}: \mathbb{R}^{2d} \times \overline{\Theta}_\beta \to \mathbb{R}^d$ is assumed locally Lipschitz; functions $\mathbf{A}_{\mathbf{x}}$ and $\mathbf{A}_{\mathbf{v}}$ are defined on $\overline{\Theta}_\beta$ and take values in $\mathbb{R}^{d \times d}$; and the parameter matrix $\bm{\Sigma}$ takes values in $\mathbb{R}^{d \times d}$ and it has full rank. Matrix $\bm{\Sigma}\bm{\Sigma}^\top$ is assumed to be positive definite on $\overline{\Theta}_\Sigma$, shaping the variance of the rough coordinates. As any square root of $\bm{\Sigma}\bm{\Sigma}^\top$ induces the same distribution, $\bm{\Sigma}$ is identifiable only up to equivalence classes. Hence, estimation of the parameter $\bm{\Sigma}$ means estimation of $\bm{\Sigma}\bm{\Sigma}^\top$. The drift function $\widetilde{\mathbf{F}}$ in \eqref{eq:SDE} is divided into a linear part given by the matrix $\widetilde{\mathbf{A}}$ and a nonlinear part given by $\widetilde{\mathbf{N}}$. 

The true value of the parameter is denoted by $\bm{\theta}_0 = \left(\bm{\beta}_0, \bm{\Sigma}_0\right)$, and we assume that $\bm{\theta}_0 \in \Theta$. When referring to the true parameters, we write $\mathbf{A}_{\mathbf{x},0}$, $\mathbf{A}_{\mathbf{v},0}$, $\mathbf{b}_0$, $\mathbf{N}_0(\mathbf{x})$, $\mathbf{F}_0(\mathbf{x})$ and $\bm{\Sigma}\bm{\Sigma}_0^\top$ instead of $\mathbf{A}_{\mathbf{x}}(\bm{\beta}_0)$, $\mathbf{A}_{\mathbf{v}}(\bm{\beta}_0)$, $\mathbf{b}(\bm{\beta}_0)$, $\mathbf{N}(\mathbf{x}; \bm{\beta}_0)$, $\mathbf{F}(\mathbf{x}; \bm{\beta}_0)$ and $\bm{\Sigma}_0\bm{\Sigma}_0^\top$, respectively. We write $\mathbf{A}_{\mathbf{x}}$, $\mathbf{A}_{\mathbf{v}}$, $\mathbf{b}$, $\mathbf{N}(\mathbf{x})$, $\mathbf{F}(\mathbf{x})$, and $\bm{\Sigma}\bm{\Sigma}^\top$ in short for an arbitrary parameter $\bm{\theta}$.

\subsection{Example: The Kramers oscillator} \label{sec:Kramers}

To analyze the ice core data in Section \ref{sec:Greenland}, we propose a stochastic model of the escape dynamics in metastable systems, the Kramers oscillator \citep{Kramers1940BrownianTransitionState}, formulated initially to model the escape rate of Brownian particles from potential wells. The escape rate is related to the mean first passage time — the time needed for a particle to exceed the potential's local maximum for the first time, starting at a neighboring local minimum. This rate depends on the damping coefficient, noise intensity, temperature, and specific potential features, including the barrier's height and curvature at the minima and maxima. We apply this framework to quantify the rate of transitions between climate states.

Following \cite{ArnoldImkeller2000}, we introduce the Kramers oscillator as the stochastic Duffing oscillator - an example of a second-order SDE and a stochastic damping Hamiltonian system. The Duffing oscillator \citep{duffing1918erzwungene} is a forced nonlinear oscillator featuring a cubic stiffness term. The governing equation is given by
\begin{equation}
\ddot{x}_t + \eta \dot{x}_t + \frac{\dif }{\dif x} U(x_t) = f(t), \quad \text{where} \quad U(x) = -a \frac{x^2}{2} + b \frac{x^4}{4}, \quad \text{with} \quad a, b > 0, \quad \eta \geq 0. \label{eq:Duffing}
\end{equation}
Parameter $\eta$ in \eqref{eq:Duffing} indicates the damping level, $a$ regulates the linear stiffness, and $b$ determines the nonlinear component of the restoring force. Function $f$ represents the driving force and is usually set to $f(t) = \eta \cos(\omega t)$, which introduces deterministic chaos \citep{korsch1999chaos}.

When the driving force is $f(t) = \sqrt{2\eta \mathrm{T}} \xi(t)$, where $\xi(t)$ is white noise, equation \eqref{eq:Duffing} characterizes the stochastic movement of a particle within a bistable potential well, interpreting $\mathrm{T} > 0$ as the temperature of a heat bath. Setting $\sigma = \sqrt{2 \eta \mathrm{T}}$, equation \eqref{eq:Duffing} can be reformulated as an It\^o SDE for variables $X_t$ and $V_t = \dot{X}_t$ as
\begin{equation} \label{eq:KramersSDE}
    \begin{aligned}
\dif X_t &= V_t \dif t, \\
\dif V_t &= \left (-\eta V_t - \frac{\dif}{\dif x} U(X_t) \right) \dif t + \sigma \dif W_t,
\end{aligned}
\end{equation}
where $W_t$ denotes a standard Wiener process. The parameter set of SDE \eqref{eq:KramersSDE} is $\bm{\theta} = \{\eta, a, b, \sigma^2\}$. Figure \ref{fig:sim_data} illustrates an example of the Kramers oscillator trajectory simulated using the EM scheme with true parameters $\eta_0 = 62.5, a_0 = 297, b_0 = 219, \sigma_0^2 = 9125$ and with step size $h = 0.02$.

\begin{figure}
    \centering
    \includegraphics[width = \textwidth]{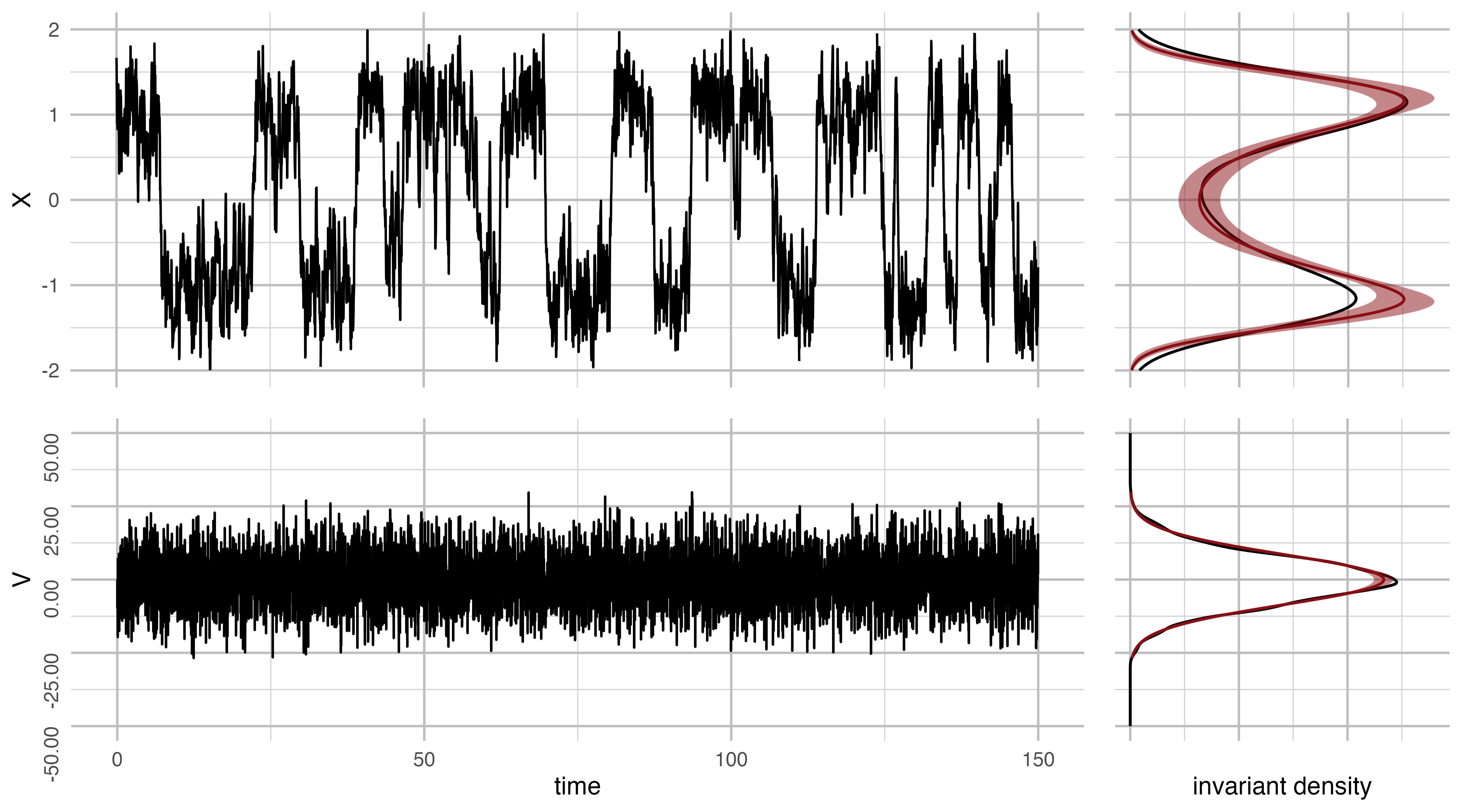}
    \caption{\textbf{Simulated trajectory from the Kramers oscillator.} The trajectory is simulated using Euler-Maruyama scheme with parameters $\eta_0 = 62.5, a_0 = 297, b_0 = 219, \sigma_0^2 = 9125$ and with the step size $h = 0.02$. \textbf{Left:} Trajectories over time of state $X$ (top) and rate of change $V$ (bottom). \textbf{Right:} Empirical densities (black) and estimated invariant densities with confidence intervals (dark red).}
    \label{fig:sim_data}
\end{figure}

The existence and uniqueness of the invariant measure $\nu_0(\dif x, \dif y)$ of \eqref{eq:KramersSDE} is proved in Theorem 3 in \citep{ArnoldImkeller2000}. The invariant measure $\nu_0$ is linked to the invariant density $\pi_0$ through $\nu_0(\dif x, \dif y) = \pi_0(x,v) \dif x \dif y$. Here we write $\pi_0(x,v)$ instead of $\pi(x,v;\bm{\theta}_0)$, and $\pi(x,v)$ instead of $\pi(x,v;\bm{\theta})$. The Fokker-Plank equation for $\pi$ is given by
\begin{equation}
    - v \frac{\partial }{\partial x}\pi(x, v) + \eta \pi(x, v) + \eta v \frac{\partial}{\partial v} \pi(x,v) + \frac{\dif}{\dif x}U(x) \frac{\partial}{\partial v}\pi(x,v) + \frac{\sigma^2}{2} \frac{\partial^2}{\partial v^2}\pi(x,v) = 0. \label{eq:Fokker-Plank}
\end{equation}
The invariant density that solves the Fokker-Plank equation is
\begin{equation}
    \pi(x,v) = C \exp\left(-\frac{2 \eta}{\sigma^2}U(x)\right) \exp\left(-\frac{\eta}{\sigma^2} v^2\right), \label{eq:XVinv}
\end{equation}
where $C$ is the normalizing constant. 

The partial invariant probability of $V_t$ is thus a Gaussian with zero mean and variance $\sigma^2/(2\eta)$. The partial invariant probability of $X_t$ is bimodal driven by the potential $U(x)$:
\begin{equation}
    \pi(x) = C \exp\left(-\frac{2 \eta}{\sigma^2} U(x)\right). \label{eq:Xinv}
\end{equation}
At steady state, for a particle moving in any potential $U(x)$ and driven by random Gaussian noise, the position $x$ and velocity $v$ are independent. This is reflected by the decomposition of the joint density $\pi(x,v)$ into $\pi(x) \pi(v)$.

The Fokker-Plank equation \eqref{eq:Fokker-Plank} can also be used to derive the mean first passage time $\tau$, which is inversely related to Kramers' escape rate $\kappa$  \citep{Kramers1940BrownianTransitionState}:
\begin{equation*}
\tau = \frac{1}{\kappa} \approx \frac{2 \pi }{\left(\sqrt{1 + \frac{\eta^2}{4 \omega^2}} - \frac{\eta}{2\omega}\right)\Omega}  \exp\left(\frac{\Delta U}{\rm T}\right),
\end{equation*}
where $x_\mathrm{barrier} = 0$ is the local maximum of $U(x)$ and $x_\mathrm{well} = \pm \sqrt{a/b}$ are the local minima, $\omega = \sqrt{|U''(x_\mathrm{barrier})|}=\sqrt{a}$, $\Omega = \sqrt{U''(x_\mathrm{well})} = \sqrt{2a}$, and $\Delta U = U(x_\mathrm{barrier}) - U(x_\mathrm{well})=a^2/4b$. The formula assumes strong friction, i.e., an over-damped system ($\eta \gg \omega$), and a small parameter $\mathrm{T}/\Delta U \ll 1$, indicating sufficiently deep potential wells. For the potential defined in \eqref{eq:Duffing}, the mean waiting time $\tau$ is then approximated by
\begin{equation}
\tau \approx \frac{\sqrt{2}\pi}{\sqrt{a + \frac{\eta^2}{4}} - \frac{\eta}{2}} \exp\left(\frac{a^2 \eta}{2b \sigma^2 } \right). \label{eq:tau}
\end{equation}

\subsection{Hypoellipticity} \label{sec:Hypoellipticity}

The SDE \eqref{eq:SDEsplitted} is said to be hypoelliptic if its quadratic diffusion matrix $\widetilde{\bm{\Sigma}}\widetilde{\bm{\Sigma}}^\top$ is not of full rank, while its solutions admit a smooth transition density with respect to the Lebesgue measure. According to H\"ormander's theorem \citep{nualart2006malliavin}, this is fulfilled if the SDE in its Stratonovich form satisfies the weak H\"ormander condition. Since $\bm{\Sigma}$ does not depend on $\mathbf{y}$, the It\^o and Stratonovich forms coincide. 

We begin by recalling the concept of Lie brackets: for smooth vector fields $\bm{f},\bm{g}: \mathbb{R}^{2d} \to \mathbb{R}^{2d}$, the Lie bracket, $[\bm{f}, \bm{g}]$, is defined as
\begin{equation*}
[\bm{f}, \bm{g}] \coloneqq D_{\mathbf{y}}\bm{g}(\mathbf{y}) \bm{f}(\mathbf{y}) - D_{\mathbf{y}}\bm{f}(\mathbf{y}) \bm{g}(\mathbf{y}).
\end{equation*}
We define the set $\mathcal{H}$ of vector fields by initially including $\widetilde{\bm{\Sigma}}^{(i)}$, $i = 1, 2, ..., 2d$, and then recursively adding Lie brackets
\begin{equation*}
H \in \mathcal{H} \Rightarrow [\widetilde{\mathbf{F}}, H], [\widetilde{\bm{\Sigma}}^{(1)}, H], \dots, [\widetilde{\bm{\Sigma}}^{(2d)}, H] \in \mathcal{H}.
\end{equation*}
The weak H\"ormander condition is met if the vectors in $\mathcal{H}$ span $\mathbb{R}^{2d}$ at every point $\mathbf{y} \in \mathbb{R}^{2d}$. The initial vectors span $\{(\mathbf{0}, \mathbf{v}) \in \mathbb{R}^{2d} \mid \mathbf{v} \in \mathbb{R}^d\}$, a $d$-dimensional subspace. We, therefore, need to verify the existence of some $H \in \mathcal{H}$ with a non-zero first element. The first iteration of the system yields
\begin{align*}
[\widetilde{\mathbf{F}}, \widetilde{\bm{\Sigma}}^{(i+d)}] &= -\bm{\Sigma}^{(i)}, && i = 1 ,2, \dots d,\\
[\widetilde{\bm{\Sigma}}^{(i)}, \widetilde{\bm{\Sigma}}^{(j)}] &= \mathbf{0}, && i, j = 1, 2, \dots 2d
\end{align*}
The first $d$ elements of $[\widetilde{\mathbf{F}}, \widetilde{\bm{\Sigma}}^{(i+d)}]$ are linearly independent, which results from the assumption that $\bm{\Sigma}$ has full rank. Thus, the second-order SDE defined in \eqref{eq:SDEsplitted} is always hypoelliptic. 

\subsection{Assumptions} \label{sec:Assumptions}

The following assumptions generalize those presented in \citep{Pilipovic2024}. 

Let $T>0$ be the length of the observed time interval. We assume that \eqref{eq:SDEsplitted} has a unique strong solution $\mathbf{Y} = \{\mathbf{Y}_t \mid t\in[0, T]\}$, adapted to $\mathcal{F} = \{\mathcal{F}_t \mid t\in[0, T]\}$, which follows from the following first two assumptions (Theorem 2 in \citep{Alyushina1988}, Theorem 1 in \citep{Krylov1991}, Theorem 3.5 in \citep{mao2007stochastic}). We need the last three assumptions to prove the properties of the estimators.

\begin{itemize}
\myitem{(A1)} \label{as:NLip} Function ${\mathbf{N}}$ is three times continuously differentiable with respect to both $\mathbf{y}$ and $\bm{\theta}$, i.e., ${\mathbf{N}} \in C^3(\mathbb{R}^{2d} \times \overline{\Theta}_\beta)$. Moreover, it is globally one-sided Lipschitz continuous with respect to $\mathbf{y}$ on $\mathbb{R}^{2d} \times \overline{\Theta}_\beta$. That is, there exists a constant $C > 0$ such that for all $\mathbf{y}_1, \mathbf{y}_2 \in \mathbb{R}^{2d}$,
\begin{equation*}
\left(\mathbf{y}_1 - \mathbf{y}_2\right)^\top\left({\mathbf{N}}(\mathbf{y}_1; \bm{\beta}) - {\mathbf{N}}(\mathbf{y}_2; \bm{\beta})\right) \leq C \|\mathbf{y}_1 - \mathbf{y}_2\|^2.
\end{equation*}
\myitem{(A2)} \label{as:NPoly} Function ${\mathbf{N}}$ exhibits at most polynomial growth in $\mathbf{y}$, uniformly in $\bm{\theta}$. Specifically, there exist constants $C > 0$ and $\chi \geq 1$ such that for all $\mathbf{y}_1, \mathbf{y}_2 \in \mathbb{R}^{2d}$,
\begin{equation*}
\|{\mathbf{N}}\left(\mathbf{y}_1;\bm{\beta}\right) - {\mathbf{N}}\left(\mathbf{y}_2; \bm{\beta}\right) \|^2 \leq C \left(1 + \|\mathbf{y}_1\|^{2\chi - 2} + \| \mathbf{y}_2\|^{2\chi - 2}\right) \| \mathbf{y}_1 - \mathbf{y}_2 \|^2.
\end{equation*}
Additionally, its derivatives exhibit polynomial growth in $\mathbf{y}$, uniformly in $\bm{\theta}$.
\myitem{(A3)} \label{as:Invariant} The solution $\mathbf{Y}$ to SDE \eqref{eq:SDEsplitted} has a unique invariant probability $\nu_0(\dif \mathbf{y})$.
\myitem{(A4)} \label{as:Ergodic} The solution $\mathbf{Y}$ to SDE \eqref{eq:SDEsplitted} satisfies a weak version of the ergodic theorem, namely
\begin{align*}
    \frac{1}{T} \int_0^T \bm{g}(\mathbf{Y}_s; \bm{\theta}) \dif s \xrightarrow[T \to \infty]{a.s.} \nu_0(\mathbf{g(\cdot; \bm{\theta})}),
\end{align*}
for any continuous function $\mathbf{g}$ with polynomial growth. 
\myitem{(A5)} \label{as:DiffusionInv} $\bm{\Sigma}\bm{\Sigma}^\top$ is invertible on $\overline{\Theta}_\Sigma$.
\myitem{(A6)} \label{as:Identifiability} $\bm{\beta}$ is identifiable, that is, if $\mathbf{F}(\mathbf{y}, \bm{\beta}_1) = \mathbf{F}(\mathbf{y}, \bm{\beta}_2)$ for all $\mathbf{y} \in \mathbb{R}^{2d}$, then $\bm{\beta}_1 = \bm{\beta}_2$.
\end{itemize}

Assumption \ref{as:NLip} ensures finiteness of the moments of the solution $\mathbf{X}$ \citep{TretyakovAndZhang}, i.e.,
\begin{align*}
\mathbb{E}[\sup\limits_{t\in [0, T]}\|\mathbf{Y}_t\|^{{2p}}] < C(1 + \|\mathbf{y}_0\|^{{2p}}), \hspace{3ex} \forall {\, p \geq 1}.  
\end{align*}
Assumptions \ref{as:Invariant} and \ref{as:Ergodic} are necessary for the ergodic theorem to ensure convergence in distribution. Assumption \ref{as:DiffusionInv} ensures that the model \eqref{eq:SDEsplitted} is hypoelliptic. Assumption \ref{as:Identifiability} ensures the identifiability of the drift parameter.

\subsection{Strang splitting scheme} \label{sec:StrangSplitting}

Consider the following splitting of \eqref{eq:SDEsplitted}
\begin{align}
    \dif \mathbf{Y}^{[1]}_t &= \widetilde{\mathbf{A}} (\mathbf{Y}^{[1]}_t - \widetilde{\mathbf{b}}) \dif t + \widetilde{\bm{\Sigma}} \dif \widetilde{\mathbf{W}}_t, && \mathbf{Y}^{[1]}_0 = \mathbf{y}_0, \label{eq:SplittingEq1}\\    
    \dif \mathbf{Y}^{[2]}_t &= \widetilde{\mathbf{N}}(\mathbf{Y}^{[2]}_t) \dif t, && \mathbf{Y}^{[2]}_0 = \mathbf{y}_0. \label{eq:SplittingEq2}
\end{align}
There are no assumptions on the choice of $\widetilde{\mathbf{A}}$ and $\widetilde{\mathbf{b}}$ except for the block structure from \eqref{eq:ANtilde} that follows from the second-order SDE assumption. Indeed, we show that the asymptotic results hold for any choice of $\widetilde{\mathbf{A}}_{\mathbf{x}}$, $\widetilde{\mathbf{A}}_{\mathbf{v}}$ and $\mathbf{b}$ in both the complete and the partial observation settings. This extends the results in \cite{Pilipovic2024}, which are also shown to hold in the elliptic complete observation case. While asymptotic results are invariant to the choice of $\widetilde{\mathbf{A}}$ and $\widetilde{\mathbf{b}}$, finite sample properties of the scheme and the corresponding estimators are very different and it is important to choose the splitting wisely. Intuitively, when the process is close to a fixed point of the drift, the linear dynamics are dominating, whereas far from the fixed points, the nonlinearities might be dominating. If the drift has a fixed point $\mathbf{y}^\star$, we therefore suggest setting  $\widetilde{\mathbf{A}} = D_\mathbf{y} \widetilde{\mathbf{F}}(\mathbf{y}^\star)$ and $\widetilde{\mathbf{b}} = \mathbf{y}^\star$. This choice is confirmed in simulations (see \cite{Pilipovic2024} for more details).

Solution of SDE \eqref{eq:SplittingEq1} is an Ornstein–Uhlenbeck (OU) process given by the following $h$-flow
\begin{align}
     \mathbf{Y}_{t_k}^{[1]} &= \Phi_h^{[1]}(\mathbf{Y}_{t_{k-1}}^{[1]}) = \widetilde{\bm{\mu}}_h(\mathbf{Y}_{t_{k-1}}^{[1]}; \bm{\beta}) +  \widetilde{\bm{\varepsilon}}_{h,k}, \label{eq:OU}\\
     \widetilde{\bm{\mu}}_h(\mathbf{y}; \bm{\beta}) &\coloneqq  e^{\widetilde{\mathbf{A}} h} (\mathbf{y} - \widetilde{\mathbf{b}}) + \widetilde{\mathbf{b}},\label{eq:muh}\\
     \widetilde{\bm{\Omega}}_h &= \int_0^h e^{ \widetilde{\mathbf{A}}(h-u)} \widetilde{\bm{\Sigma}} \widetilde{\bm{\Sigma}}^\top e^{ \widetilde{\mathbf{A}}^\top (h-u)} \dif u,\label{eq:Omegah_definition} 
\end{align}
where $\widetilde{\bm{\varepsilon}}_{h,k} \stackrel{i.i.d}{\sim} \mathcal{N}_{2d} (\bm{0}, \widetilde{\bm{\Omega}}_h )$ for $k=1, \ldots , N$. It is useful to rewrite $\widetilde{\bm{\Omega}}_h $ in the following block matrix form
\begin{equation}
\label{eq:Omegahblock}
    \widetilde{\bm{\Omega}}_h = \begin{bmatrix}
        \bm{\Omega}_h^{\mathrm{[SS]}} & \bm{\Omega}_h^{\mathrm{[SR]}}\\
        \bm{\Omega}_h^{\mathrm{[RS]}} & \bm{\Omega}_h^{\mathrm{[RR]}}
    \end{bmatrix}, 
\end{equation}
where $\mathrm{S}$ in the superscript stands for smooth and $\mathrm{R}$ stands for rough. The Schur complement of $\widetilde{\bm{\Omega}}_h$ with respect to $\bm{\Omega}_h^{\mathrm{[RR]}}$ and the determinant of $\widetilde{\bm{\Omega}}_h$ are given by
\begin{align*}
    \bm{\Omega}_h^{\mathrm{[S|R]}} \coloneqq \bm{\Omega}_h^{\mathrm{[SS]}}  - \bm{\Omega}_h^{\mathrm{[SR]}}(\bm{\Omega}_h^{\mathrm{[RR]}})^{-1}\bm{\Omega}_h^{\mathrm{[RS]}},\qquad \det \widetilde{\bm{\Omega}}_h = \det \bm{\Omega}_h^{\mathrm{[RR]}} \det \bm{\Omega}_h^{\mathrm{[S\mid R]}}.
\end{align*}
Assumptions \ref{as:NLip}-\ref{as:NPoly} ensure the existence and uniqueness of the solution of \eqref{eq:SplittingEq2} (Theorem 1.2.17 in \citep{Humphries2002}). Thus, there exists a unique function $\widetilde{\bm{f}}_h : \mathbb{R}^{2d} \times \Theta_\beta \to \mathbb{R}^{2d}$, for $h \geq0$, such that
\begin{equation*}
     \mathbf{Y}_{t_k}^{[2]} = \Phi_h^{[2]}(\mathbf{Y}_{t_{k-1}}^{[2]}) = \widetilde{\bm{f}}_h(\mathbf{Y}_{t_{k-1}}^{[2]}; \bm{\beta}).
\end{equation*}
For all $\bm{\beta} \in \Theta_\beta$, the $h$-flow $\widetilde{\bm{f}}_h$ fulfills the following semi-group properties
\begin{align*}
    \widetilde{\bm{f}}_0 (\mathbf{y}; \bm{\beta}) &= \mathbf{y}, \qquad \widetilde{\bm{f}}_{t+s}(\mathbf{y}; \bm{\beta}) = \widetilde{\bm{f}}_t(\widetilde{\bm{f}}_s(\mathbf{y}; \bm{\beta}); \bm{\beta}), \ \ t, s \geq 0. 
\end{align*}
For  $\mathbf{y}=(\mathbf{x}^\top, \mathbf{v}^\top)^\top$, we have
\begin{align}
     \widetilde{\bm{f}}_h(\mathbf{x}, \mathbf{v}; \bm{\beta}) = \begin{bmatrix}
         \mathbf{x}\\
         \bm{f}_h(\mathbf{x}, \mathbf{v}; \bm{\beta})
     \end{bmatrix}, \label{eq:fhtilde}
\end{align}
where $\bm{f}_h(\mathbf{x}, \mathbf{v}; \bm{\beta})$ is the solution of the ODE with vector field $\mathbf{N}(\mathbf{x}, \mathbf{v}; \bm{\beta})$.

We introduce another assumption to define the pseudo-likelihood based on the splitting scheme.
\begin{itemize}
    \myitem{(A7)} \label{as:fhInv} The inverse function $\widetilde{\bm{f}}_h^{-1}(\mathbf{y}; \bm{\beta})$ is defined asymptotically for all $\mathbf{y}\in \mathbb{R}^{2d}$ and all $\bm{\beta} \in \Theta_\beta$, when $h \to 0$.
\end{itemize}

Then, the inverse of $\tilde{\bm{f}}_h$ can be decomposed as
\begin{align}
    \widetilde{\bm{f}}_h^{-1}(\mathbf{x}, \mathbf{v}; \bm{\beta}) = \begin{bmatrix}
         \mathbf{x}\\
         \bm{f}^{\star -1}_h(\mathbf{x}, \mathbf{v}; \bm{\beta})
     \end{bmatrix}, \label{eq:fhstartildeinv}
\end{align}
where $\bm{f}^{\star -1}_h(\mathbf{x}, \mathbf{v}; \bm{\beta})$ is the rough part of the inverse of $\widetilde{\bm{f}}_h^{-1}$. Note that $\bm{f}^{\star -1}_h$ does not equal $\bm{f}_{h}^{-1}$ since the inverse does not propagate through coordinates when $\bm{f}_h$  depends on $\mathbf{x}$.

We are now ready to define the Strang splitting scheme for model \eqref{eq:SDEsplitted}.
\begin{definition}(Strang splitting) \label{def:splitting}
Let Assumptions \ref{as:NLip}-\ref{as:NPoly} hold. The Strang splitting approximation (SS) of the solution of \eqref{eq:SDEsplitted} is given by
\begin{align}
    \Phi_h^\mathrm{[SS]}({\mathbf{Y}}^\mathrm{[SS]}_{t_{k-1}}) = (\Phi_{h/2}^{[2]} \circ \Phi_h^{[1]} \circ \Phi_{h/2}^{[2]} )({\mathbf{Y}}^\mathrm{[SS]}_{t_{k-1}}) = \widetilde{\bm{f}}_{h/2} (\widetilde{\bm{\mu}}_h(\widetilde{\bm{f}}_{h/2}({\mathbf{Y}}^\mathrm{[SS]}_{t_{k-1}})) +  \widetilde{\bm{\varepsilon}}_{h,k}). \label{eq:StrangSplitting}
\end{align}
\end{definition}

\begin{remark} \label{rmkr:Strang}
The order of composition in the splitting scheme is not unique. Changing the order leads to a sum of 2 independent random variables, one Gaussian and one non-Gaussian, whose likelihood is not trivial. Thus, we only use the splitting \eqref{eq:StrangSplitting}. 
\end{remark}

\subsection{Strang splitting estimators}

In this section, we introduce four estimators based on the SS scheme. We distinguish between estimators based on complete observations (denoted by $\mathrm{C}$ when both $\mathbf{X}$ and $\mathbf{V}$ are observed) and partial observations (denoted by $\mathrm{P}$ when only $\mathbf{X}$ is observed). In applications, we often only have access to partial observations. However, complete observations can occur and are theoretically interesting in their own right. Furthermore, the complete observation estimator is used as a building block for the partial observation case. We distinguish the estimators based on the type of objective function employed. These are the full objective function (denoted by $\mathrm{F}$) based on the full pseudo-likelihood and the rough objective function (denoted by $\mathrm{R}$) based on the velocity-only partial pseudo-likelihood. To decompose the full pseudo-likelihood, we furthermore introduce the conditional pseudo-likelihood based on the smooth component given the rough part (denoted by $\mathrm{S}\mid\mathrm{R}$).

\subsubsection{Complete observations}

Assume we observe the complete sample $\mathbf{Y}_{0:t_N} := (\mathbf{Y}_{t_k})_{k=1}^N$ from \eqref{eq:SDEsplitted} at time steps $0 = t_0 < t_1 < ... < t_N = T$. For notational simplicity, we assume equidistant step size $h = t_k - t_{k-1}$. The SS scheme \eqref{eq:StrangSplitting} is a nonlinear transformation of a Gaussian random variable $\widetilde{\bm{\mu}}_h(\widetilde{\bm{f}}_{h/2}({\mathbf{Y}}^\mathrm{[SS]}_{t_{k-1}})) +  \widetilde{\bm{\varepsilon}}_{h,k}$. We define
\begin{align}
     \widetilde{\mathbf{Z}}_{k, k-1}(\bm{\beta}) &\coloneqq \widetilde{\bm{f}}_{h/2}^{-1}(\mathbf{Y}_{t_k}; \bm{\beta}) - \widetilde{\bm{\mu}}_h(\widetilde{\bm{f}}_{h/2}(\mathbf{Y}_{t_{k-1}};\bm{\beta});\bm{\beta}), \label{eq:Ztktilde}
\end{align}
and apply change of variables to get
\begin{align*}
    p(\mathbf{y}_{t_k}\mid \mathbf{y}_{t_{k-1}}) &= \varphi\left(\widetilde{\bm{f}}_{h/2}^{-1}(\mathbf{y}_{t_k}; \bm{\beta}); \widetilde{\bm{\mu}}_h(\widetilde{\bm{f}}_{h/2}(\mathbf{y}_{t_{k-1}};\bm{\beta});\bm{\beta}), \widetilde{\bm{\Omega}}_h(\bm{\theta})\right)|\det D_\mathbf{y} \widetilde{\bm{f}}_{h/2}^{-1}(\mathbf{y}_{t_k})|\\
    &= \varphi(\widetilde{\mathbf{z}}_{k, k-1}; \mathbf{0}, \widetilde{\bm{\Omega}}_h(\bm{\theta}))|\det D_\mathbf{y} \widetilde{\bm{f}}_{h/2}^{-1}(\mathbf{y}_{t_k})|,
\end{align*}
where $\varphi(\mathbf{x}; \bm{\mu}, \bm{\Sigma})$ is the Gaussian $\mathcal{N}(\bm{\mu}, \bm{\Sigma})$ density evaluated at $\mathbf{x}$.

Using that $-\log |\det D_\mathbf{y} \widetilde{\bm{f}}_{h/2}^{-1}\left(\mathbf{y}; \bm{\beta}\right)| = \log |\det D_\mathbf{y} \widetilde{\bm{f}}_{h/2}\left(\mathbf{y};\bm{\beta}\right)|$ and $\det D_\mathbf{y} \widetilde{\bm{f}}_{h/2}\left(\mathbf{y};\bm{\beta}\right) = \det D_\mathbf{v} \bm{f}_{h/2}\left(\mathbf{y};\bm{\beta}\right)$, together with the Markov property of $\mathbf{Y}_{0:t_N}$, we get the following objective function based on the full pseudo-likelihood
\begin{align}
\mathcal{L}^{\mathrm{[CF]}}(\mathbf{Y}_{0:t_N}; \bm{\theta}) &\coloneqq \sum_{k=1}^N \left(\log\det \widetilde{\bm{\Omega}}_h(\bm{\theta}) +   \widetilde{\mathbf{Z}}_{k, k-1}(\bm{\beta})^\top \widetilde{\bm{\Omega}}_h(\bm{\theta})^{-1}  \widetilde{\mathbf{Z}}_{k, k-1}(\bm{\beta})+ 2 \log |\det D_\mathbf{v}\bm{f}_{h/2}(\mathbf{Y}_{t_k}; \bm{\beta})|\right). \label{eq:L_CF}
\end{align}
Now, split $\widetilde{\mathbf{Z}}_{k, k-1}$ from \eqref{eq:Ztktilde} into the smooth and rough parts $\widetilde{\mathbf{Z}}_{k, k-1} = ((\mathbf{Z}_{k, k-1}^{\mathrm{[S]}})^\top, (\mathbf{Z}_{k, k-1}^{\mathrm{[R]}})^\top)^\top$ defined as
\begin{align}
    \mathbf{Z}_{k, k-1}^{\mathrm{[S]}}(\bm{\beta}) &\coloneqq  [\widetilde{Z}_{k, k-1}^{(i)}(\bm{\beta})]_{i = 1}^{d} = \mathbf{X}_{t_k} - \bm{\mu}_h^{\mathrm{[S]}}(\widetilde{\bm{f}}_{h/2}(\mathbf{Y}_{t_{k-1}}; \bm{\beta});\bm{\beta}), \label{eq:ZtkS} \\
    \mathbf{Z}_{k, k-1}^{\mathrm{[R]}}(\bm{\beta}) &\coloneqq  [\widetilde{Z}_{k, k-1}^{(i)}(\bm{\beta})]_{i = d+1}^{2d} = \bm{f}^{\star -1}_{h/2}(\mathbf{Y}_{t_k}; \bm{\beta}) - \bm{\mu}_h^{\mathrm{[R]}}(\widetilde{\bm{f}}_{h/2}(\mathbf{Y}_{t_{k-1}}; \bm{\beta});\bm{\beta}),\label{eq:ZtkR}
\end{align}
where 
\begin{equation}
    \bm{\mu}_h^{\mathrm{[S]}}(\mathbf{y}; \bm{\beta}) \coloneqq [\widetilde{\mu}^{(i)}_h(\mathbf{y}; \bm{\beta})]_{i = 1}^{d}, \qquad 
   \bm{\mu}_h^{\mathrm{[R]}}(\mathbf{y}; \bm{\beta}) \coloneqq [\widetilde{\mu}^{(i)}_h(\mathbf{y}; \bm{\beta})]_{i = d+1}^{2d}.\label{eq:muh_splitted}
\end{equation}
We also define the following sequence of vectors
\begin{equation}
    \mathbf{Z}_{k, k-1}^{\mathrm{[S\mid R]}}(\bm{\beta}) \coloneqq  \mathbf{Z}_{k, k-1}^{\mathrm{[S]}}(\bm{\beta}) - \bm{\Omega}_h^{\mathrm{[SR]}}(\bm{\Omega}_h^{\mathrm{[RR]}})^{-1}\mathbf{Z}_{k, k-1}^{\mathrm{[R]}}(\bm{\beta}).
\end{equation}
Then, we rewrite the joint density as a product of the marginal and conditional densities
\begin{align*}
    \varphi(\widetilde{\mathbf{z}}_{k, k-1}; \mathbf{0}, \widetilde{\bm{\Omega}}_h) &= \varphi(\mathbf{z}_{k, k-1}^{\mathrm{[R]}}; \mathbf{0}, \bm{\Omega}_h^{\mathrm{[RR]}}) \varphi(\mathbf{z}_{k, k-1}^{\mathrm{[S]}}; \bm{\Omega}_h^{\mathrm{[SR]}}(\bm{\Omega}_h^{\mathrm{[RR]}})^{-1}\mathbf{z}_{k, k-1}^{\mathrm{[R]}}, \bm{\Omega}_h^{\mathrm{[S\mid R]}}). 
\end{align*}
This leads to dividing the full objective function $\mathcal{L}^{\mathrm{[CF]}}$ into a sum of the rough objective function $\mathcal{L}^{\mathrm{[CR]}}(\mathbf{Y}_{0:t_N}; \bm{\theta})$ and the smooth-given-rough objective function $\mathcal{L}^{\mathrm{[CS\mid R]}}(\mathbf{Y}_{0:t_N}; \bm{\theta})$
\begin{align*}
    \mathcal{L}^{\mathrm{[CF]}}(\mathbf{Y}_{0:t_N}; \bm{\theta}) &= \mathcal{L}^{\mathrm{[CR]}}(\mathbf{Y}_{0:t_N}; \bm{\theta}) + \mathcal{L}^{\mathrm{[CS\mid R]}}(\mathbf{Y}_{0:t_N}; \bm{\theta}),
\end{align*}
where 
\begin{align}
&\mathcal{L}^{\mathrm{[CR]}}\left(\mathbf{Y}_{0:t_N}; \bm{\theta}\right) \coloneqq \sum_{k=1}^N\Bigg(\log\det \bm{\Omega}_h^{\mathrm{[RR]}}(\bm{\theta}) +\mathbf{Z}_{k, k-1}^{\mathrm{[R]}}\left(\bm{\beta}\right)^\top \bm{\Omega}_h^{\mathrm{[RR]}}(\bm{\theta})^{-1} \mathbf{Z}_{k, k-1}^{\mathrm{[R]}}\left(\bm{\beta}\right)\notag\\
&\hspace{25ex}+ 2 \log \left|\det D_\mathbf{v}\bm{f}_{h/2}\left(\mathbf{Y}_{t_k}; \bm{\beta}\right)\right|\Bigg), \label{eq:L_CR}\\
&\mathcal{L}^{\mathrm{[CS\mid R]}}\left(\mathbf{Y}_{0:t_N}; \bm{\theta}\right) \coloneqq  \sum_{k=1}^N\left( \log\det \bm{\Omega}_h^{\mathrm{[S\mid R]}}(\bm{\theta}) + \mathbf{Z}_{k, k-1}^{\mathrm{[S\mid R]}}(\bm{\beta})^\top \bm{\Omega}_h^{\mathrm{[S\mid R]}}(\bm{\theta})^{-1} \mathbf{Z}_{k, k-1}^{\mathrm{[S\mid R]}}(\bm{\beta})\right).\label{eq:L_CSR}
\end{align}
The terms containing the drift parameter in $\mathcal{L}^{\mathrm{[CR]}}$ in \eqref{eq:L_CR} are of order $h^{1/2}$, as in the elliptic case, whereas the terms containing the drift parameter in $\mathcal{L}^{\mathrm{[CS\mid R]}}$ in \eqref{eq:L_CSR} are of order $h^{3/2}$. Consequently, under a rapidly increasing experimental design where $N h \to \infty$ and $N h^2 \to 0$, the objective function \eqref{eq:L_CSR} is degenerate for estimating the drift parameter. However, it contributes to the estimation of the diffusion parameter when the full objective function \eqref{eq:L_CF} is used. We show in later sections that employing \eqref{eq:L_CF} results in a lower asymptotic variance for the diffusion parameter making it more efficient in the complete observation scenario.

The estimators based on complete observations are then defined as
\begin{equation*}
    \hat{\bm{\theta}}_N^\mathrm{[obj]}  \coloneqq \argmin_{\bm{\theta}}  \mathcal{L}^\mathrm{[obj]} \left(\mathbf{Y}_{0:t_N}; \bm{\theta}\right), \quad \mathrm{obj} \in\{\mathrm{[CF]}, \mathrm{[CR]}\}. 
\end{equation*}
Although the full objective function is based on twice as many equations as the rough objective function, its implementation complexity, speed, and memory requirements are similar to the rough objective function. Therefore, if the complete observations are available, we recommend using the objective function \eqref{eq:L_CF} based on the full pseudo-likelihood.

\subsubsection{Partial observations}

Assume that we only observe the smooth coordinates $\mathbf{X}_{0:t_N} \coloneqq (\mathbf{X}_{t_k})_{k=0}^N$. The observed process $\mathbf{X}_t$ alone is not a Markov process, although the complete process $\mathbf{Y}_t$ is. To approximate $\mathbf{V}_{t_k}$, we define the backward difference process
\begin{equation}
\Delta_h \mathbf{X}_{t_k} \coloneqq \frac{\mathbf{X}_{t_k} - \mathbf{X}_{t_{k-1}}}{h}. \label{eq:DeltahX}
\end{equation}
From SDE \eqref{eq:sdeXV} it follows that
\begin{equation}
\Delta_h \mathbf{X}_{t_k} = \frac{1}{h}\int_{t_{k-1}}^{t_k} \mathbf{V}_t \dif t. \label{eq:DeltahX_as_V}
\end{equation}
We approximate $\mathbf{V}_{t_k}$ by applying one of the three approximations:
\begin{enumerate}
\item Backward difference $\mathbf{V}_{t_k} \approx \Delta_h \mathbf{X}_{t_k}$;
\item Forward difference $\mathbf{V}_{t_k} \approx \Delta_h \mathbf{X}_{t_{k+1}}$;
\item Central difference $\mathbf{V}_{t_k} \approx \frac{\Delta_h \mathbf{X}_{t_k} + \Delta_h \mathbf{X}_{t_{k+1}}}{2}$.
\end{enumerate}
In our simulation study, the approximation that performs the best is the forward difference, which is also the one used in \cite{Gloter2006} and \cite{SamsonThieullen2012}.

In numerical approximations of ODEs, backward and forward finite differences have the same order of convergence, whereas the central difference has a higher convergence rate, making it a better approximation. However, in the context of stochastic processes, finite differences act as natural smoothers by reducing the variance. Consequently, the central difference $(\mathbf{X}_{t_{k+1}} - \mathbf{X}_{t_{k-1}})/2h$ decreases the true variance more than the forward or backward differences, making it less suitable for statistical purposes. This point is also discussed in Remark \ref{rmrk:Differences}.

Thus, we focus exclusively on forward differences, following \cite{Gloter2006, SamsonThieullen2012}, and all proofs are carried out for this approximation. Similar results hold for the backward difference, with adjustments required in the conditional moments due to filtration issues.

We start by approximating $\widetilde{\mathbf{Z}}$ for the case of partial observations denoted by $\widetilde{\overline{\mathbf{Z}}}$
\begin{align}
     \widetilde{\overline{\mathbf{Z}}}_{k+1, k, k-1}(\bm{\beta}) &\coloneqq \widetilde{\bm{f}}_{h/2}^{-1}(\mathbf{X}_{t_k}, \Delta_h \mathbf{X}_{t_{k+1}}; \bm{\beta}) - \widetilde{\bm{\mu}}_h(\widetilde{\bm{f}}_{h/2}(\mathbf{X}_{t_{k-1}}, \Delta_h \mathbf{X}_{t_{k}};\bm{\beta});\bm{\beta}). \label{eq:Ztkbartilde}
\end{align}
The smooth and rough parts of $\widetilde{\overline{\mathbf{Z}}}$ are thus equal to
\begin{align}
\overline{\mathbf{Z}}_{k, k-1}^{[\mathrm{S}]}(\bm{\beta}) &\coloneqq \mathbf{X}_{t_k} - \bm{\mu}_h^{\mathrm{[S]}}(\widetilde{\bm{f}}_{h/2}(\mathbf{X}_{t_{k-1}}, \Delta_h \mathbf{X}_{t_{k}};\bm{\beta});\bm{\beta}), \label{eq:ZtkSbar}\\
\overline{\mathbf{Z}}_{k+1, k, k-1}^{[\mathrm{R}]}(\bm{\beta}) &\coloneqq \bm{f}^{\star -1}_{h/2}(\mathbf{X}_{t_k}, \Delta_h \mathbf{X}_{t_{k+1}}; \bm{\beta}) - \bm{\mu}_h^{\mathrm{[R]}}(\widetilde{\bm{f}}_{h/2}(\mathbf{X}_{t_{k-1}}, \Delta_h \mathbf{X}_{t_{k}};\bm{\beta});\bm{\beta}), \label{eq:ZtkRbar}
\end{align}
and
\begin{equation}
    \overline{\mathbf{Z}}_{k+1, k, k-1}^{\mathrm{[S\mid R]}}(\bm{\beta}) \coloneqq  \overline{\mathbf{Z}}_{k, k-1}^{[\mathrm{S}]}(\bm{\beta}) - \bm{\Omega}_h^{\mathrm{[SR]}}(\bm{\Omega}_h^{\mathrm{[RR]}})^{-1}\overline{\mathbf{Z}}_{k+1, k, k-1}^{\mathrm{[R]}}(\bm{\beta}).
\end{equation}
Compared to $\mathbf{Z}_{k, k-1}^{\mathrm{[R]}}$ in \eqref{eq:ZtkR}, $\overline{\mathbf{Z}}_{k+1, k, k-1}^{[\mathrm{R}]}$ in \eqref{eq:ZtkRbar} depends on three consecutive data points, with the additional point $\mathbf{X}_{t_{k+1}}$ entering through $\Delta_h \mathbf{X}_{t_{k+1}}$. Furthermore, $\mathbf{X}_{t_k}$ enters both $\bm{f}^{\star -1}_{h/2}$ and $\widetilde{\bm{\mu}}^{\mathrm{[R]}}_h$, making them coupled. This coupling significantly influences later derivations of the estimator's asymptotic properties, in contrast to the elliptic case where the derivations simplify.

Although it may seem straightforward to incorporate $\widetilde{\overline{\mathbf{Z}}}$, $\overline{\mathbf{Z}}_{k, k-1}^{[\mathrm{S}]}$ and $\overline{\mathbf{Z}}_{k, k-1}^{[\mathrm{R}]}$ into the objective functions \eqref{eq:L_CF}, \eqref{eq:L_CR} and \eqref{eq:L_CSR}, it introduces bias in the estimators of the diffusion parameters, as also discussed in \cite{Gloter2006, SamsonThieullen2012}. The bias arises because the finite differences underestimate the covariances. It turns out that both forward and backward differences decrease the variance of the rough part by $2/3$ (see Remark \ref{rmrk:Differences}). Figure \ref{fig:sim_V} illustrates this discrepancy of 2/3 between the variance of the real velocity $V_{t_k}$ from the Kramers oscillator \eqref{eq:KramersSDE} and its forward approximation $\Delta_h X_{t_{k+1}}$.

\begin{figure}
    \centering
    \includegraphics[width = \textwidth]{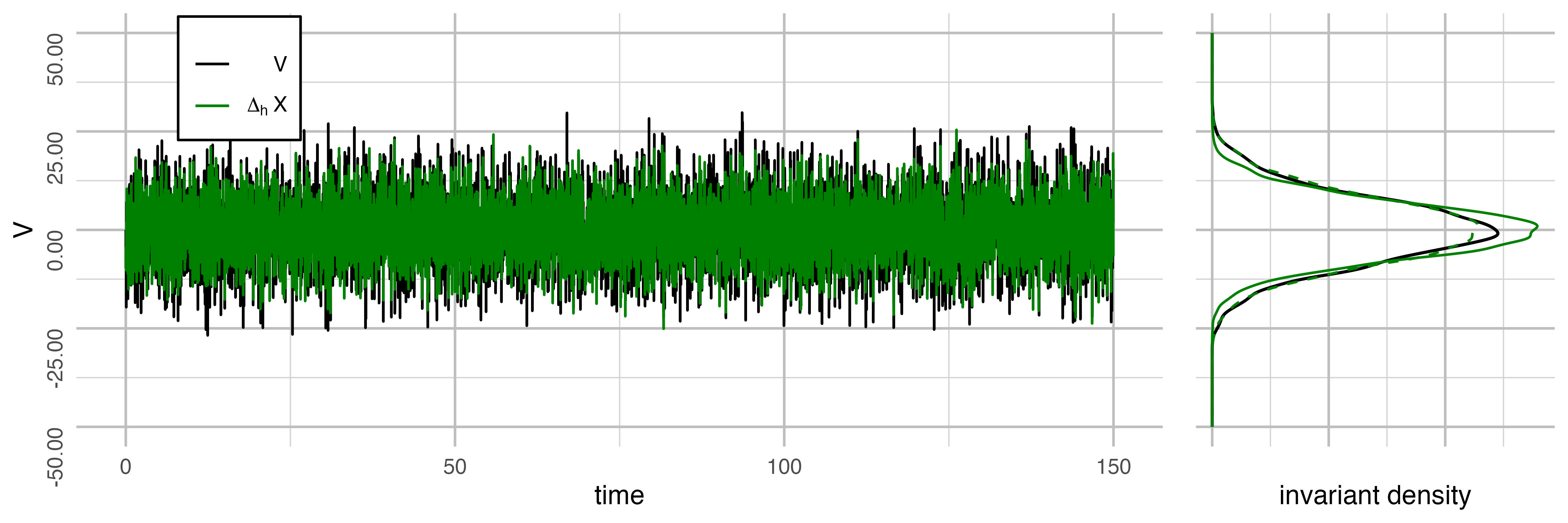}
    \caption{\textbf{Simulated velocity vs. approximated velocity from the Kramers oscillator.} The trajectory is simulated as in Figure \ref{fig:sim_data}. \textbf{Left:} Simulated trajectory over time of velocity $V$ (black) and approximated velocity using forward difference $\Delta_h X$ (green). \textbf{Right:} Empirical densities of $V$ (black), $\Delta_h X$ (green solid) and $\sqrt{3/2}\Delta_h X$ (green dashed).}
    \label{fig:sim_V}
\end{figure}

An obvious way to correct this underestimation would be to multiply $\Delta_h X$ in eq. \eqref{eq:DeltahX} by $\sqrt{3/2}$. However, this approach would result in an inconsistent estimator due to breaking the connection in eq. \eqref{eq:DeltahX_as_V}. Nonetheless, it may be beneficial to multiply the velocity $V$ by $\sqrt{3/2}$ in figures to make results comparable. 

To eliminate this bias, \cite{Gloter2006, SamsonThieullen2012} applied a correction of $2/3$ multiplied to $\log \det$ of the covariance term in the objective functions, which is $\log \det \bm{\Sigma} \bm{\Sigma}^\top$ in the EM discretization. We also need appropriate corrections to the objective functions \eqref{eq:L_CF}, \eqref{eq:L_CR} and \eqref{eq:L_CSR}, however, caution is needed because $\log\det \widetilde{\bm{\Omega}}_h(\bm{\theta})$ depends on both drift and diffusion parameters. To counterbalance this, we also incorporate an adjustment to $h$ in $\bm{\Omega}_h$. Moreover, we add the term $4\log |\det D_\mathbf{v}\bm{f}_{h/2}|$ to the objective function \eqref{eq:L_CSR} to obtain consistency of the drift estimator under partial observations. The detailed derivation of these correction factors will be elaborated in the following sections.

We, thus, propose the following objective functions
\begin{align}
    \mathcal{L}^{\mathrm{[PF]}}(\mathbf{X}_{0:t_N}; \bm{\theta}) &\coloneqq \frac{4}{3}(N-2)\log\det \widetilde{\bm{\Omega}}_{3h/4}(\bm{\theta}) \label{eq:L_PF}\\
    &+\sum_{k=1}^{N-1} \left(\widetilde{\overline{\mathbf{Z}}}_{k+1, k, k-1}(\bm{\beta})^\top \widetilde{\bm{\Omega}}_h(\bm{\theta})^{-1}  \widetilde{\overline{\mathbf{Z}}}_{k+1, k, k-1}(\bm{\beta})+ 6 \log |\det D_\mathbf{v}\bm{f}_{h/2}(\mathbf{X}_{t_k}, \Delta_h \mathbf{X}_{t_{k+1}}; \bm{\beta})|\right),\notag\\
    \mathcal{L}^{\mathrm{[PR]}}\left(\mathbf{X}_{0:t_N}; \bm{\theta}\right) &\coloneqq \frac{2}{3}(N-2) \log\det \bm{\Omega}_{3h/2}^{\mathrm{[RR]}}(\bm{\theta})\label{eq:L_PR}\\
    &\hspace{-2ex} + \sum_{k=1}^{N-1} \left(\overline{\mathbf{Z}}_{k+1, k, k-1}^{[\mathrm{R}]}\left(\bm{\beta}\right)^\top \bm{\Omega}_h^{\mathrm{[RR]}}(\bm{\theta})^{-1} \overline{\mathbf{Z}}_{k+1, k, k-1}^{[\mathrm{R}]}\left(\bm{\beta}\right) + 2 \log \left|\det D_{\mathbf{v}} {\bm{f}}_{h/2}\left(\mathbf{X}_{t_k}, \Delta_h \mathbf{X}_{t_{k+1}}; \bm{\beta}\right)\right|\right), \notag\\
    \mathcal{L}^{\mathrm{[PS\mid R]}}\left(\mathbf{X}_{0:t_N}; \bm{\theta}\right)  &\coloneqq 2(N-2) \log\det \bm{\Omega}_h^{\mathrm{[S\mid R]}}(\bm{\theta}) \label{eq:L_PSR}\\
    &\hspace{-7ex}+\sum_{k=1}^{N-1}  \left(\overline{\mathbf{Z}}_{k+1, k, k-1}^{\mathrm{[S\mid R]}}(\bm{\beta})^\top \bm{\Omega}_h^{\mathrm{[S\mid R]}}(\bm{\theta})^{-1}  \overline{\mathbf{Z}}_{k+1, k, k-1}^{\mathrm{[S\mid R]}}(\bm{\beta}) + 4 \log |\det D_\mathbf{v}\bm{f}_{h/2}(\mathbf{X}_{t_k}, \Delta_h \mathbf{X}_{t_{k+1}}; \bm{\beta})|\right).\notag
\end{align}

\begin{remark}
Due to the correction factors in the objective functions, we now have that \begin{equation}
    \mathcal{L}^{\mathrm{[PF]}}(\mathbf{X}_{0:t_N}; \bm{\theta}) \neq \mathcal{L}^{\mathrm{[PR]}}(\mathbf{X}_{0:t_N}; \bm{\theta}) + \mathcal{L}^{\mathrm{[PS\mid R]}}(\mathbf{X}_{0:t_N}; \bm{\theta}). \label{eq:sum_LP}
\end{equation}
However, when expanding the objective functions \eqref{eq:L_PF}-\eqref{eq:L_PSR} using Taylor series to the lowest relevant order in $h$, their approximations will be equal in \eqref{eq:sum_LP}, as shown in Section \ref{sec:AuxiliaryProperties}.
\end{remark}

\begin{remark} \label{rmrk:Bias_L_PSF}
    As shown in Section \ref{sec:AuxiliaryProperties}, the absence of the usual quadratic term involving $\mathbf{F}(\mathbf{Y}_{t_{k-1}}) - \mathbf{F}_0(\mathbf{Y}_{t_{k-1}})$ in the first order expansion of \eqref{eq:L_PSR} implies that $\mathcal{L}^{\mathrm{[PS\mid R]}}$ is not suitable for estimating the drift parameter. Furthermore, there is a term in the first order expansion of \eqref{eq:L_PSR} that does not vanish, needing an additional correction term of $2h\sum_{k=1}^{N-1} \tr D_\mathbf{v} \mathbf{N}(\mathbf{Y}_{t_k}; \bm{\beta})$ for consistency. This correction is represented as $4\log |\det D_\mathbf{v}\bm{f}_{h/2}|$ in \eqref{eq:L_PSR}. Notably, this term is absent in the complete objective function \eqref{eq:L_CSR}, making this adjustment somewhat artificial and could potentially deviate further from the true negative log-likelihood. Consequently, the objective function based on the full pseudo-likelihood \eqref{eq:L_PF} inherits this characteristic from \eqref{eq:L_PSR}, suggesting that in the partial observation scenario, using only the rough objective function \eqref{eq:L_PR} may be more appropriate.
\end{remark}

The estimators based on the partial sample are then defined as
\begin{equation*}
    \hat{\bm{\theta}}_N^\mathrm{[obj]}  \coloneqq \argmin_{\bm{\theta}}  \mathcal{L}^\mathrm{[obj]}  \left(\mathbf{X}_{0:t_N}; \bm{\theta}\right), \quad \mathrm{obj} \in\{\mathrm{[PF]}, \mathrm{[PR]}\}. 
\end{equation*}
In the partial observation case, the asymptotic variances of the diffusion estimators are identical whether using \eqref{eq:L_PF} or \eqref{eq:L_PR}, in contrast to the complete observation scenario. This variance is shown to be $9/4$ times higher than the variance of $\hat{\bm{\theta}}_N^{\mathrm{[CF]}}$, and $9/8$ times higher than that of $\hat{\bm{\theta}}_N^{\mathrm{[CR]}}$.

The numerical study in Section \ref{sec:Simulations} shows that the estimator based on the rough objective function \eqref{eq:L_PR} is less biased than the one based on the full objective function \eqref{eq:L_PF} in finite sample scenarios with partial observations. A potential reason for this is discussed in Remark \ref{rmrk:Bias_L_PSF}. Therefore, we recommend using the objective function \eqref{eq:L_PR} for partial observations.

\section{Main results} \label{sec:EstimatiorProperties}

This section states the two main results -- consistency and asymptotic normality of all four proposed estimators. The proofs are presented in Appendix \ref{sec:Appendix}.

First, we state the consistency of the estimators in both the complete and partial observation cases. Let $\mathcal{L}^\mathrm{[obj]}$ be one of the objective functions \eqref{eq:L_CF}, \eqref{eq:L_CR}, \eqref{eq:L_PF} or \eqref{eq:L_PR} and $\widehat{\bm{\theta}}_N^\mathrm{[obj]}$ the corresponding estimator. Thus,
\begin{equation*}
    \mathrm{obj} \in\{\mathrm{[CF]}, \mathrm{[CR]}, \mathrm{[PF]}, \mathrm{[PR]}\}.
\end{equation*} 

\begin{theorem}[Consistency of the estimators] \label{thm:Consistency}
Assume \ref{as:NLip}-\ref{as:fhInv}, $h \to 0$, and $Nh \to \infty$. Then, under the complete or partial observation setting, it holds
\begin{align*}
    &\widehat{\bm{\beta}}_N^\mathrm{[obj]} \xrightarrow[]{\mathbb{P}_{\bm{\theta}_0}} \bm{\beta}_0, && \widehat{\bm{\Sigma}\bm{\Sigma}}_N^\mathrm{[obj]}  \xrightarrow[]{\mathbb{P}_{\bm{\theta}_0}} \bm{\Sigma}\bm{\Sigma}^\top_0.
\end{align*}
\end{theorem}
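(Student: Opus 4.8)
The plan is to treat all four estimators uniformly as $M$-estimators and apply the standard argmin (contrast) consistency theorem: it suffices to exhibit, for a suitable normalization, a deterministic limit contrast to which the normalized objective function converges in probability \emph{uniformly} over the compact closure $\overline{\Theta}$, and to show that this limit is minimized \emph{uniquely} at $\bm{\theta}_0$. The essential subtlety, already visible in the remark that the drift terms enter $\mathcal{L}^{\mathrm{[CR]}}$ at order $h^{1/2}$ while the smooth-given-rough contributions enter at $h^{3/2}$, is that $\bm{\beta}$ and $\bm{\Sigma}\bm{\Sigma}^\top$ are identified at different rates. I would therefore carry out a two-stage argument: first establish consistency of the diffusion estimator from the $O(1)$-per-term contributions (normalizing by $1/N$), then, having fixed a consistent diffusion estimate, establish consistency of the drift estimator from the $O(h)$-per-term contributions (normalizing by $1/(Nh)$).

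For the diffusion parameter I would use the expansions developed in Section~\ref{sec:AuxiliaryProperties} to write $\mathbf{Z}_{k,k-1}^{\mathrm{[R]}}(\bm{\beta})$ (resp.\ $\overline{\mathbf{Z}}_{k+1,k,k-1}^{[\mathrm{R}]}(\bm{\beta})$) as a leading Gaussian increment of order $h^{1/2}$ plus a drift-mismatch term $h(\mathbf{F}_0(\mathbf{Y}_{t_{k-1}}) - \mathbf{F}(\mathbf{Y}_{t_{k-1}};\bm{\beta}))$ plus a remainder controlled by the polynomial-growth bounds of \ref{as:NPoly} and the moment estimates following \ref{as:NLip}. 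Dividing by $N$, the drift mismatch is negligible, the pure-noise cross terms vanish by a law of large numbers for the martingale increments, and the ergodic theorem \ref{as:Ergodic} converts the remaining time averages into $\nu_0$-integrals. This yields, uniformly in $\bm{\theta}$, a limit of the Gaussian log-likelihood type built from $\log\det(\bm{\Sigma}\bm{\Sigma}^\top)$ and $\tr\,((\bm{\Sigma}\bm{\Sigma}^\top)^{-1}\bm{\Sigma}\bm{\Sigma}_0^\top)$, which is minimized uniquely at $\bm{\Sigma}\bm{\Sigma}^\top = \bm{\Sigma}\bm{\Sigma}_0^\top$ by the strict concavity of $\log\det$. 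In the partial-observation case the decisive point is that the finite-difference proxy $\Delta_h\mathbf{X}$ deflates the rough variance by the factor $2/3$; I would verify that the correction factors and the rescaled arguments $\widetilde{\bm{\Omega}}_{3h/4}$, $\bm{\Omega}_{3h/2}^{\mathrm{[RR]}}$ in \eqref{eq:L_PF}--\eqref{eq:L_PR} exactly compensate this deflation at the level of the limiting contrast, so that the minimizer is the true $\bm{\Sigma}\bm{\Sigma}_0^\top$ and not a scaled version.

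For the drift parameter I would substitute the now-consistent diffusion estimate and renormalize by $1/(Nh)$, so that the $O(h)$ drift-mismatch terms survive. The ergodic theorem then gives the limit
\[
\bm{\beta} \mapsto \nu_0\bigl( (\mathbf{F}_0(\cdot) - \mathbf{F}(\cdot;\bm{\beta}))^\top (\bm{\Sigma}\bm{\Sigma}_0^\top)^{-1} (\mathbf{F}_0(\cdot) - \mathbf{F}(\cdot;\bm{\beta})) \bigr),
\]
a nonnegative contrast vanishing iff $\mathbf{F}(\cdot;\bm{\beta}) = \mathbf{F}_0(\cdot)$ $\nu_0$-almost everywhere, hence iff $\bm{\beta} = \bm{\beta}_0$ by the identifiability assumption \ref{as:Identifiability} together with the full support of $\nu_0$. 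Compactness of $\overline{\Theta}$ plus uniform convergence to a contrast with a well-separated unique minimum then yields $\widehat{\bm{\theta}}_N^{\mathrm{[obj]}} \xrightarrow{\mathbb{P}_{\bm{\theta}_0}} \bm{\theta}_0$ for each objective function.

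The main obstacle I anticipate is the uniform-in-$\bm{\theta}$ control of the martingale and ergodic averages over the unbounded state space using only one-sided Lipschitz and polynomial-growth hypotheses, in combination with the nonlinearity introduced by the flow $\widetilde{\bm{f}}_{h/2}$ and its inverse. For partial observations this is compounded by the coupling of three consecutive observations through $\Delta_h\mathbf{X}_{t_{k+1}}$ and by the filtration mismatch of the forward difference, which forces a careful computation of the conditional means and covariances of $\overline{\mathbf{Z}}_{k+1,k,k-1}$ to the relevant order in $h$; the bookkeeping that shows the $2/3$ deflation is cancelled precisely by the stated correction factors is the delicate step on which the identification of $\bm{\Sigma}\bm{\Sigma}_0^\top$ rests.
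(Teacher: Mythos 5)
Your proposal is correct and takes essentially the same route as the paper's proof: a two-stage, rate-separated argument that first identifies $\bm{\Sigma}\bm{\Sigma}_0^\top$ from the $1/N$-normalized objective functions, whose uniform ergodic limits have the Gaussian form $c\left(\log\det\bm{\Sigma}\bm{\Sigma}^\top + \tr\left((\bm{\Sigma}\bm{\Sigma}^\top)^{-1}\bm{\Sigma}\bm{\Sigma}_0^\top\right)\right)$ with the partial-observation correction factors exactly compensating the $2/3$ deflation of the forward differences, and then identifies $\bm{\beta}_0$ from the $1/(Nh)$-normalized contrast difference, which reduces to the quadratic form in $\mathbf{F}_0-\mathbf{F}$ weighted by $(\bm{\Sigma}\bm{\Sigma}_0^\top)^{-1}$ once the faster-converging diffusion estimate is used. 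The only cosmetic difference is that the paper computes the drift-contrast limit at an arbitrary $\bm{\sigma}$, where an additional trace term $\int\tr\left(D_\mathbf{v}\left(\mathbf{F}_0(\mathbf{y})-\mathbf{F}(\mathbf{y})\right)\left(\bm{\Sigma}\bm{\Sigma}_0^\top(\bm{\Sigma}\bm{\Sigma}^\top)^{-1}-\mathbf{I}\right)\right)\mathrm{d}\nu_0(\mathbf{y})$ appears and is then eliminated by setting $\bm{\Sigma}\bm{\Sigma}^\top=\bm{\Sigma}\bm{\Sigma}_0^\top$ — precisely the substitution step your argument performs.
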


\begin{remark}
We split the full objective function \eqref{eq:L_CF} into the sum of the rough objective function \eqref{eq:L_CR} and the smooth-given-rough objective function \eqref{eq:L_CSR}. Even if \eqref{eq:L_CSR} cannot identify the drift parameter $\bm{\beta}$, it is an important intermediate step in understanding the full objective function \eqref{eq:L_CF}. This can be seen in the proof of Theorem \ref{thm:Consistency}, where we first establish consistency of the diffusion estimator with a convergence rate of $\sqrt{N}$, which is faster than $\sqrt{N h}$, the convergence rate of the drift estimators. Then, under complete observations, we show that
\begin{align}
    \frac{1}{N h} (\mathcal{L}^{\mathrm{[CR]}}(\bm{\beta}, \bm{\sigma}_0)  - \mathcal{L}^{\mathrm{[CR]}}(\bm{\beta}_0, \bm{\sigma}_0)) \xrightarrow[\substack{Nh \to \infty\\ h \to 0}]{\mathbb{P}_{\bm{\theta}_0}} \int(\mathbf{F}_0(\mathbf{y}) - \mathbf{F}(\mathbf{y}))^\top (\bm{\Sigma}\bm{\Sigma}^\top)^{-1} (\mathbf{F}_0(\mathbf{y}) - \mathbf{F}(\mathbf{y})) \dif \nu_0(\mathbf{y}). \label{eq:drift_CR_cons_main}
\end{align}
The right-hand side of \eqref{eq:drift_CR_cons_main} is non-negative, with a unique zero for $\mathbf{F} = \mathbf{F}_0$. Conversely, for objective function \eqref{eq:L_CSR}, it holds
\begin{align}
     \frac{1}{N h} (\mathcal{L}^{\mathrm{[CS\mid R]}}(\bm{\beta}, \bm{\sigma})  - \mathcal{L}^{\mathrm{[CS\mid R]}}(\bm{\beta}_0, \bm{\sigma})) \xrightarrow[\substack{Nh \to \infty\\ h \to 0}]{\mathbb{P}_{\bm{\theta}_0}} 0,  \label{eq:drift_CSR_cons_main}
\end{align}
for all $\bm{\beta} \in \overline{\Theta}_{\beta}$. Hence, the limit in \eqref{eq:drift_CSR_cons_main} does not converge to a function with a unique zero as in \eqref{eq:drift_CR_cons_main}, making the drift parameter unidentifiable. Similar conclusions are drawn in the partial observation case. 
\end{remark}

Now, we state the asymptotic normality of the estimator. First, we need some preliminaries. Let $\rho >0$ and $\mathcal{B}_\rho\left(\bm{\theta}_0\right) = \{\bm{\theta} \in \Theta \mid \left\|\bm{\theta}-\bm{\theta}_0\right\| \leq \rho\}$ be a ball around $\bm{\theta}_0$. Since $\bm{\theta}_0 \in \Theta$, for sufficiently small $\rho >0$, $\mathcal{B}_\rho(\bm{\theta}_0) \in \Theta$. For $\hat{\bm{\theta}}_N^\mathrm{[obj]} \subseteq \mathcal{B}_\rho\left(\bm{\theta}_0\right)$, the mean value theorem yields
\begin{equation}
    \left(\int_0^1 \mathbb{H}_{\mathcal{L}^\mathrm{[obj]}}(\bm{\theta}_0 + t (\hat{\bm{\theta}}_N^\mathrm{[obj]} - \bm{\theta}_0))\dif t\right) (\hat{\bm{\theta}}_N^\mathrm{[obj]} - \bm{\theta}_0) = - \nabla_{\bm{\theta}} \mathcal{L}^\mathrm{[obj]}\left(\bm{\theta}_0\right). \label{eq:AssymptoticNormalityDecomp}
\end{equation}
Define
\begin{align}
    \mathbf{C}_N^\mathrm{[obj]}(\bm{\theta}) \coloneqq 
    \begin{bmatrix} \vspace{1ex}
    \left[ \frac{1}{Nh}\partial_{\beta^{(i_1)}\beta^{(i_2)}}^2 \mathcal{L}^\mathrm{[obj]}(\bm{\theta})\right]_{i_1,i_2 =1}^r & \left[\frac{1}{N\sqrt{h}}\partial_{\beta^{(i)}\sigma^{(j)}}^2 \mathcal{L}^\mathrm{[obj]}(\bm{\theta})\right]_{i=1,j=1}^{r,s}\\ 
    \left[\frac{1}{N\sqrt{h}}\partial_{\sigma^{(j)}\beta^{(i)}}^2 \mathcal{L}^\mathrm{[obj]}(\bm{\theta})\right]_{i=1,j=1}^{r,s} & \left[\frac{1}{N} \partial_{\sigma^{(j_1)} \sigma^{(j_2)}}^2 \mathcal{L}^\mathrm{[obj]} (\bm{\theta})\right]_{j_1, j_2 = 1}^s
    \end{bmatrix}, \label{eq:CN}
\end{align}
\begin{align}
    \mathbf{s}_N^\mathrm{[obj]} \coloneqq \begin{bmatrix}
    \sqrt{Nh} (\hat{\bm{\beta}}_N^\mathrm{[obj]} - \bm{\beta}_0) \vspace{1ex}\\
    \sqrt{N} (\hat{\bm{\sigma}}_N^\mathrm{[obj]} - \bm{\sigma}_0)
    \end{bmatrix},  \qquad
    \bm{\lambda}_N^\mathrm{[obj]} \coloneqq \begin{bmatrix}
    -\dfrac{1}{\sqrt{Nh}} \nabla_{\bm{\beta}} \mathcal{L}^\mathrm{[obj]}(\bm{\theta}_0)\\
    -\dfrac{1}{\sqrt{N}}  \nabla_{\bm{\sigma}} \mathcal{L}^\mathrm{[obj]}(\bm{\theta}_0)
    \end{bmatrix}, \label{eq:sNLN}
\end{align}
and $\mathbf{D}_N^\mathrm{[obj]} \coloneqq \int_0^1 \mathbf{C}_N^\mathrm{[obj]}(\bm{\theta}_0 + t (\hat{\bm{\theta}}_N^\mathrm{[obj]} - \bm{\theta}_0)) \dif t$. Then, \eqref{eq:AssymptoticNormalityDecomp} is equivalent to $\mathbf{D}_N^\mathrm{[obj]} \mathbf{s}_N^\mathrm{[obj]} = \bm{\lambda}_N^\mathrm{[obj]}$. Let
\begin{align}
    &[\mathbf{C}_{\bm{\beta}}(\bm{\theta}_0)]_{i_1,i_2} \coloneqq \int (\partial_{\beta^{(i_1)}} \mathbf{F}_0(\mathbf{y}))^\top (\bm{\Sigma}\bm{\Sigma}^\top_0)^{-1}(\partial_{\beta^{(i_2)}} \mathbf{F}_0(\mathbf{y}))\dif \nu_0(\mathbf{y}), \ 1\leq i_1,i_2 \leq r, \label{eq:Cb}\\
    &[\mathbf{C}_{\bm{\sigma}}(\bm{\theta}_0)]_{j_1,j_2} \coloneqq \tr((\partial_{\sigma^{(j_1)}} \bm{\Sigma}\bm{\Sigma}_0^\top)(\bm{\Sigma}\bm{\Sigma}_0^\top)^{-1}(\partial_{\sigma^{(j_2)}} \bm{\Sigma}\bm{\Sigma}_0^\top)(\bm{\Sigma}\bm{\Sigma}_0^\top)^{-1}), \ 1\leq j_1,j_2 \leq s. \label{eq:Cs}
\end{align}
\begin{theorem} \label{thm:AsymtoticNormality}
    Let assumptions \ref{as:NLip}-\ref{as:fhInv} hold and $\mathrm{obj} \in\{\mathrm{[CF]}, \mathrm{[CR]}, \mathrm{[PF]}, \mathrm{[PR]}\}$. If $h \to 0$, $Nh \to \infty$, and $Nh^2 \to 0$, then 
    \begin{align*}
    \begin{bmatrix}
        \sqrt{Nh} (\hat{\bm{\beta}}_N^{\mathrm{[obj]}} - \bm{\beta}_0)\\
        \sqrt{N} (\hat{\bm{\sigma}}_N^{\mathrm{[obj]}} - \bm{\sigma}_0)
    \end{bmatrix} & \xrightarrow[]{d} \mathcal{N}\left(\bm{0}, \begin{bmatrix}
        \mathbf{C}_{\bm{\beta}}(\bm{\theta}_0)^{-1} & \bm{0}_{r\times s}\\
        \bm{0}_{s\times r} & c^{\mathrm{[obj]}}\mathbf{C}_{\bm{\sigma}}(\bm{\theta}_0)^{-1}
        \end{bmatrix}\right), 
\end{align*}
under $\mathbb{P}_{\bm{\theta}_0}$, where
\begin{equation*}
    c^{\mathrm{[CF]}} = 1, \qquad c^{\mathrm{[CR]}} = 2, \qquad c^{\mathrm{[PF]}} = \frac{9}{4},   \qquad c^{\mathrm{[PR]}} = \frac{9}{4}.
\end{equation*}
\end{theorem}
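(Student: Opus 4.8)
The plan is to build on the mean value decomposition already recorded in \eqref{eq:AssymptoticNormalityDecomp}, i.e. $\mathbf{D}_N^{\mathrm{[obj]}} \mathbf{s}_N^{\mathrm{[obj]}} = \bm{\lambda}_N^{\mathrm{[obj]}}$, and to reduce the theorem to two separate limit statements: a law of large numbers for the rescaled integrated Hessian $\mathbf{D}_N^{\mathrm{[obj]}}$ and a central limit theorem for the rescaled score $\bm{\lambda}_N^{\mathrm{[obj]}}$. If I can show $\mathbf{D}_N^{\mathrm{[obj]}} \xrightarrow{\mathbb{P}_{\bm{\theta}_0}} \mathbf{C}(\bm{\theta}_0)$ for an invertible block-diagonal limit $\mathbf{C}(\bm{\theta}_0) = \mathrm{diag}(\mathbf{C}_{\bm{\beta}}(\bm{\theta}_0), \mathbf{H}_{\bm{\sigma}})$ and $\bm{\lambda}_N^{\mathrm{[obj]}} \xrightarrow{d} \mathcal{N}(\bm{0}, \bm{\Gamma})$ with $\bm{\Gamma} = \mathrm{diag}(\mathbf{C}_{\bm{\beta}}(\bm{\theta}_0), \bm{\Gamma}_{\bm{\sigma}})$, then Slutsky's theorem gives $\mathbf{s}_N^{\mathrm{[obj]}} \xrightarrow{d} \mathcal{N}(\bm{0}, \mathbf{C}(\bm{\theta}_0)^{-1} \bm{\Gamma} \mathbf{C}(\bm{\theta}_0)^{-1})$, and the whole content of the theorem reduces to two identifications: the drift identity $\mathbf{C}_{\bm{\beta}}^{-1}\mathbf{C}_{\bm{\beta}}\mathbf{C}_{\bm{\beta}}^{-1} = \mathbf{C}_{\bm{\beta}}^{-1}$ (the information identity holds for the drift, so the objective-function choice is irrelevant there) and the diffusion sandwich $\mathbf{H}_{\bm{\sigma}}^{-1}\bm{\Gamma}_{\bm{\sigma}}\mathbf{H}_{\bm{\sigma}}^{-1} = c^{\mathrm{[obj]}} \mathbf{C}_{\bm{\sigma}}(\bm{\theta}_0)^{-1}$.

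For the Hessian, I would Taylor-expand each block of $\mathbf{C}_N^{\mathrm{[obj]}}(\bm{\theta})$ in \eqref{eq:CN} in powers of $h$, using the small-$h$ expansions of $\widetilde{\bm{\mu}}_h$, $\widetilde{\bm{f}}_{h/2}$ and of the block entries of $\widetilde{\bm{\Omega}}_h$, retaining only the leading order and discarding remainders controlled by the polynomial-growth bounds of \ref{as:NLip}-\ref{as:NPoly} together with the moment bound. The off-diagonal drift-diffusion blocks, scaled by $1/(N\sqrt{h})$, should vanish because the mixed second derivative is centred and of strictly smaller order; the diagonal drift block should converge, via the ergodic theorem \ref{as:Ergodic} and the consistency of Theorem \ref{thm:Consistency} (used to replace the integration argument $\bm{\theta}_0 + t(\hat{\bm{\theta}}_N^{\mathrm{[obj]}} - \bm{\theta}_0)$ by $\bm{\theta}_0$), to $\mathbf{C}_{\bm{\beta}}(\bm{\theta}_0)$ in \eqref{eq:Cb}, and the diagonal diffusion block to a fixed multiple of $\mathbf{C}_{\bm{\sigma}}(\bm{\theta}_0)$ in \eqref{eq:Cs}.

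For the score, I would expand $\bm{\lambda}_N^{\mathrm{[obj]}}$ in \eqref{eq:sNLN} and split it into a leading martingale-difference array with respect to the discrete filtration $(\mathcal{F}_{t_k})$ plus a negligible remainder. To leading order the drift component is driven by the centred innovation $\widetilde{\mathbf{Z}}_{k,k-1}$, which is a conditionally Gaussian increment under $\mathbb{P}_{\bm{\theta}_0}$, so its conditional variance reproduces $\mathbf{C}_{\bm{\beta}}(\bm{\theta}_0)$; the diffusion component is driven by the recentred quadratic forms $\widetilde{\mathbf{Z}}_{k,k-1}^\top \widetilde{\bm{\Omega}}_h^{-1}\widetilde{\mathbf{Z}}_{k,k-1} - \mathbb{E}[\,\cdot\,]$, whose conditional variance yields $\bm{\Gamma}_{\bm{\sigma}}$. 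I would then invoke a triangular-array martingale CLT, verifying conditional-variance convergence and a Lyapunov/Lindeberg condition from the moment bounds, and use $Nh^2 \to 0$ precisely to annihilate the $O(\sqrt{N}h^{3/2})$ discretization-bias terms in the drift score (and the analogous bias in the diffusion score), with the $\mathbf{R}$-notation bounds and \ref{as:Ergodic} dispatching the remainders.

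The main obstacle I anticipate is the partial-observation analysis. Unlike $\mathbf{Z}_{k,k-1}^{\mathrm{[R]}}$ in \eqref{eq:ZtkR}, the quantity $\overline{\mathbf{Z}}_{k+1,k,k-1}^{\mathrm{[R]}}$ in \eqref{eq:ZtkRbar} depends on three consecutive observations and couples $\mathbf{X}_{t_k}$ through both $\bm{f}^{\star -1}_{h/2}$ and $\widetilde{\bm{\mu}}_h$, so the summands are no longer a clean martingale-difference sequence and a further decomposition (isolating a genuine martingale part and treating the overlapping dependence, or an $m$-dependent CLT) is required; moreover the $2/3$ variance deflation of the finite difference and the correction factors in \eqref{eq:L_PF}-\eqref{eq:L_PSR} must be tracked exactly through the conditional moments. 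This bookkeeping is what pins down $c^{\mathrm{[CF]}}=1$, $c^{\mathrm{[CR]}}=2$, and $c^{\mathrm{[PF]}}=c^{\mathrm{[PR]}}=9/4$, the last two coinciding because the extra smooth-given-rough information separating $c^{\mathrm{[CF]}}$ from $c^{\mathrm{[CR]}}$ is destroyed once the velocity enters only through the finite difference. I would confirm each constant by computing $\mathbf{H}_{\bm{\sigma}}$ and $\bm{\Gamma}_{\bm{\sigma}}$ in closed form for the leading-order Gaussian model and forming the sandwich.
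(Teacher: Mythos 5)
Your proposal follows essentially the same route as the paper's proof: the mean-value decomposition $\mathbf{D}_N^{\mathrm{[obj]}}\mathbf{s}_N^{\mathrm{[obj]}} = \bm{\lambda}_N^{\mathrm{[obj]}}$ is resolved by a uniform LLN for the rescaled Hessian (the paper's Lemma \ref{lemma:AsymptoticNormality1}) and a triangular-array martingale CLT for the rescaled score (Lemma \ref{lemma:LnConvergence}, via Proposition 3.1 of \citep{CRIMALDI2005571}), with the constants $c^{\mathrm{[obj]}}$ emerging from exactly the sandwich bookkeeping you describe, e.g. $(\tfrac{2}{3}\mathbf{C}_{\bm{\sigma}})^{-1}\mathbf{C}_{\bm{\sigma}}(\tfrac{2}{3}\mathbf{C}_{\bm{\sigma}})^{-1} = \tfrac{9}{4}\mathbf{C}_{\bm{\sigma}}^{-1}$. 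The obstacle you flag in the partial-observation case — that $\mathbf{U}_{k,k-1}$ is only $\mathcal{F}_{t_{k+1}}$-measurable so the score is not a clean martingale-difference array — is precisely what the paper resolves by the reordering device of \cite{Gloter2000, Gloter2006}, i.e. your "isolating a genuine martingale part," together with the Gaussian fourth-moment identities of Corollary \ref{cor:4thmoments} to pin down the limiting variances.
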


\section{Simulation study} \label{sec:Simulations}

This Section presents the simulation study of the Kramers oscillator \eqref{eq:KramersSDE}, demonstrating the theoretical aspects and comparing the proposed Strang splitting estimators (denoted by SS) against estimators based on the EM and LL approximations. We chose to compare our proposed estimators to these two because the EM estimator is routinely used in applications, and the LL estimator has shown to be one of the best discrete MLE methods for the elliptic case, see, for example, \citep{ HurnJeismanAndLindsay, López-Pérez2021, Pilipovic2024}. The true parameters are set to $\eta_0 = 6.5, a_0 = 1, b_0 = 0.6$ and $\sigma_0^2 =0.1$. We outline the estimators specifically designed for the Kramers oscillator, explain the simulation procedure, describe the optimization implemented in the \texttt{R} programming language \citep{R}, and then present and interpret the results.

\subsection{Estimators used in the study}

For the Kramers oscillator \eqref{eq:KramersSDE}, the EM approximated transition distribution is
\begin{align*}
    \begin{bmatrix}
        X_{t_k}\\
        V_{t_k} \\
        \end{bmatrix} \mid
        \begin{bmatrix}
        X_{t_{k-1}}\\
        V_{t_{k-1}}\\
        \end{bmatrix} = \begin{bmatrix}
        x\\
        v\\
    \end{bmatrix} &\sim \mathcal{N}\left(
    \begin{bmatrix}
        x + h v\\
        v + h \left (-\eta v + a x - b x^3 \right)\\
    \end{bmatrix}, 
    \begin{bmatrix}
       0 & 0 \\
        0 & h \sigma^2 \\
    \end{bmatrix}\right). 
\end{align*}
The ill-conditioned variance of this discretization restricts us to an estimator based only on the rough objective function. The estimator for complete observations directly follows from the Gaussian distribution. The estimator for partial observations is defined as \citep{SamsonThieullen2012}
\begin{align*}
\widehat{\bm{\theta}}_\mathrm{EM}^\mathrm{[PR]} = \argmin_{\bm{\theta}} \left\{\frac{2}{3}(N-4) \log \sigma^2 + \frac{1}{h \sigma^2}\sum_{k=2}^{N-2} (\Delta_h X_{t_{k+1}} - \Delta_h X_{t_k} - h  (-\eta \Delta_h X_{t_{k-1}} + a X_{t_{k-1}} - b X_{t_{k-1}}^3 ))^2\right\}.
\end{align*}
Note that the EM estimator as defined in \cite{SamsonThieullen2012} uses four consecutive points $X_{t_{k-2}}, X_{t_{k-1}}, X_{t_k}$ and $X_{t_{k+1}}$. This comes from the shift in indices originally introduced to avoid correlation between $\Delta_h X_{t_{k+1}} - \Delta_h X_{t_k}$ and a function of $(X_{t_{k-1}}, \Delta_h X_{t_k})$. We keep this definition in the simulation study but do not apply the shift of indices to other estimators.

To address the issue of the ill-conditioned EM approximation, \citep{gloter2020} introduced the local Gaussian (LG) approximation by adding an additional order of approximation in the mean of the smooth comportment and the smallest necessary orders in each of the four sub-components of the covariance matrix. For the Kramers oscillator \eqref{eq:KramersSDE}, LG yields the following conditional distribution
\begin{align*}
    \begin{bmatrix}
        X_{t_k}\\
        V_{t_k} \\
        \end{bmatrix} \mid
        \begin{bmatrix}
        X_{t_{k-1}}\\
        V_{t_{k-1}}\\
        \end{bmatrix} = \begin{bmatrix}
        x\\
        v\\
    \end{bmatrix} &\sim \mathcal{N}\left(
    \begin{bmatrix}
        x + h v + \frac{h^2}{2}\left(-\eta v + a x - b x^3 \right)\\
        v + h \left(-\eta v + a x - b x^3 \right)\\
    \end{bmatrix}, 
    \begin{bmatrix}
       \frac{h^3}{3} \sigma^2  & \frac{h^2}{2} \sigma^2  \\
        \frac{h^2}{2} \sigma^2 & h \sigma^2 \\
    \end{bmatrix}\right). 
\end{align*}
For the application of LG to second-order SDE, see Section 4.1.4. in \citep{iguchi2023.1}. Note that using only the rough part to form the objective function based on the LG matches using the rough part of the EM approximation.

To our knowledge, the LL estimator has not previously been applied to partial observations. Given the similar theoretical and computational performance of the SS and LL discretizations, we suggest (without formal proof) adjusting the LL objective functions with the same correction factors as used in the SS estimator. The numerical evidence indicates that the LL estimator has the same asymptotic properties as those proved for the SS estimator. We omit the definition of the LL estimator due to its complexity (see \cite{melnykova2020parametric, Pilipovic2024} and accompanying code). However, note that the following covariance matrix has to be computed at each observation to implement the LL estimator
\begin{equation*}
    \widetilde{\bm{\Omega}}_{h, k}^\mathrm{[LL]} \coloneqq \int_0^h e^{ D\widetilde{\mathbf{F}}(\mathbf{Y}_{t_{k-1}})(h-u)} \widetilde{\bm{\Sigma}} \widetilde{\bm{\Sigma}}^\top e^{ D\widetilde{\mathbf{F}}(\mathbf{Y}_{t_{k-1}})^\top (h-u)} \dif u.
\end{equation*}
Matrix $\widetilde{\bm{\Omega}}_{h, k}^\mathrm{[LL]}$ is similar to $\widetilde{\bm{\Omega}}_{h}$ in eq. \eqref{eq:Omegah_definition}, with the key difference being that $\widetilde{\bm{\Omega}}_{h}$ does not depend on data. Although the same code is used to compute both $\widetilde{\bm{\Omega}}_{h}$ and $\widetilde{\bm{\Omega}}_{h, k}^\mathrm{[LL]}$, the computation is considerably more time-consuming for $\widetilde{\bm{\Omega}}_{h, k}^\mathrm{[LL]}$, as shown in the following results. 

To define SS estimators based on the Strang splitting scheme, we first split SDE \eqref{eq:KramersSDE} as follows
\begin{align*}
    \dif \begin{bmatrix}
        X_{t}\\
        V_{t} \\
        \end{bmatrix} =
        \underbrace{\begin{bmatrix}
        0 & 1\\
        -2a & - \eta\\
        \end{bmatrix}}_{\widetilde{\mathbf{A}}} \Bigg(\begin{bmatrix}
        X_t\\
        V_t\\
    \end{bmatrix} - \underbrace{\begin{bmatrix}
        x_\pm^\star\\
        0\\
    \end{bmatrix}}_{\widetilde{\mathbf{b}}}\Bigg)\dif t + 
    \underbrace{\begin{bmatrix}
        0\\
        a X_t - b X_t^3 + 2a (X_t - x_\pm^\star)\\
    \end{bmatrix}}_{\widetilde{\mathbf{N}}(X_t,V_t)}\dif t + \begin{bmatrix}
        0\\
        \sigma\\
    \end{bmatrix}\dif W_t, 
\end{align*}
where $\mathbf{y}_\pm^\star = (x_\pm^\star, v^\star) = (\pm \sqrt{a/b}, 0)$ are the two stable points of the dynamics. Recall that in Section \ref{sec:StrangSplitting}, we suggested splitting SDE such that $\widetilde{\mathbf{A}} = D\widetilde{\mathbf{F}}(y_\pm^\star)$ and $\mathbf{b} = \mathbf{y}_\pm^\star$. Since there are two stable points, we suggest splitting with $x_+^\star$ when $X_t > 0$, and $x_-^\star$ when $X_t < 0$. This splitting follows the guidelines from \cite{Pilipovic2024}. Note that the nonlinear ODE driven by $\widetilde{\mathbf{N}}(x,v)$ has a trivial solution where $x$ is a constant. To obtain SS estimators, we plug in the corresponding components in the objective functions \eqref{eq:L_CF}, \eqref{eq:L_CR}, \eqref{eq:L_PF} and \eqref{eq:L_PR}.

\subsection{Simulation of trajectories}

We simulate a sample path using the EM discretization with a step size of $h^{\mathrm{sim}} = 0.0001$ to ensure good performance. We sub-sample from the path to decrease discretization errors and use a time step $h = 0.1$. The path has $N = 5000$ data points. We repeat the simulations to obtain 500 data sets.

\subsection{Optimization in \texttt{R}}

For optimizing the objective functions, we proceed as in \cite{Pilipovic2024} using the \texttt{R} package \texttt{torch} \citep{Torch}, which allows automatic differentiation. The optimization employs the resilient backpropagation algorithm, \texttt{optim\_rprop}. We use the default hyperparameters and limit the number of optimization iterations to 2000. The convergence criterion is set to a precision of $10^{-5}$ for the difference between estimators in consecutive iterations. The initial parameter values are $(-0.1, -0.1, 0.1, 0.1)$.

\begin{figure}
    \centering
    \includegraphics[width = \textwidth]{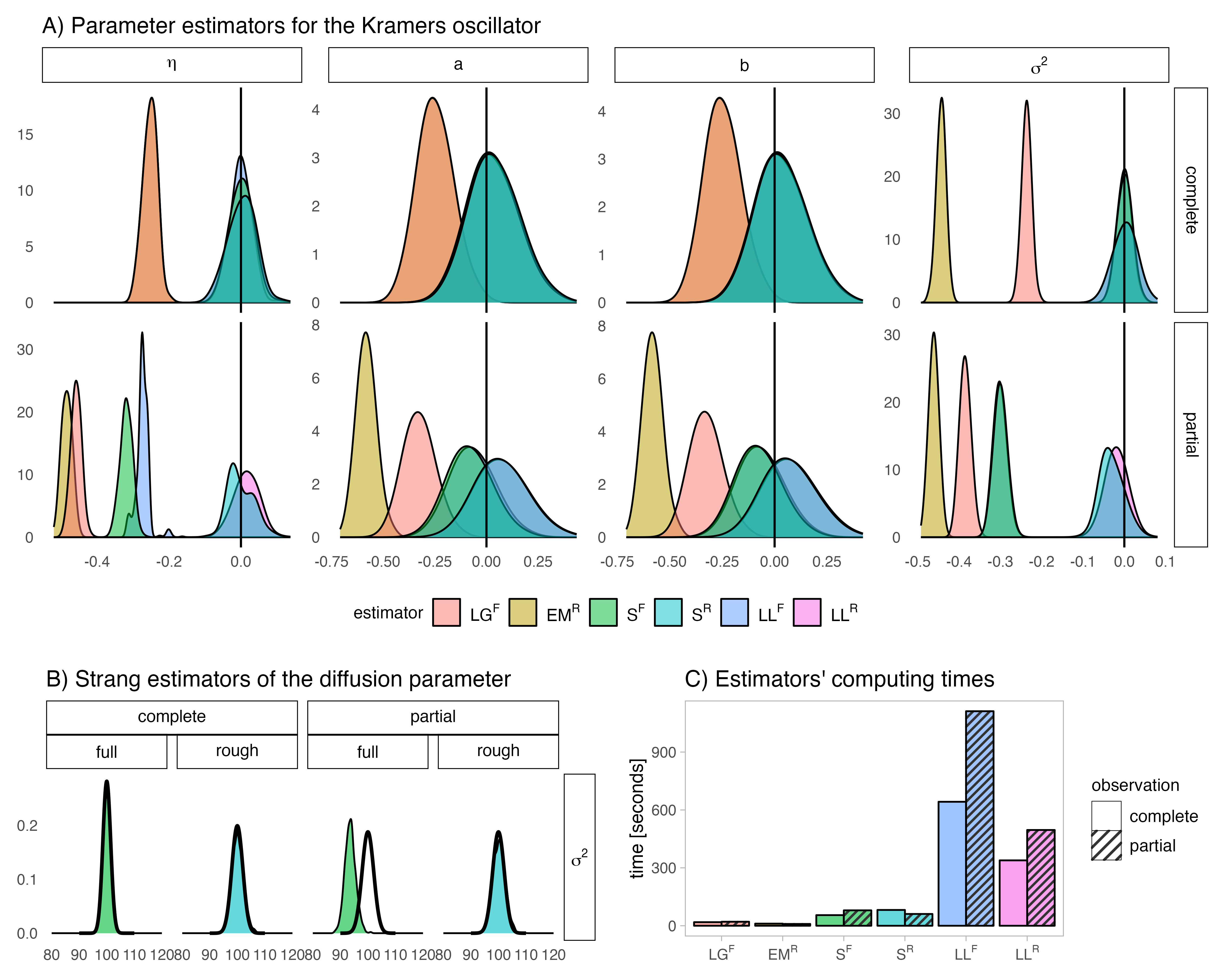}
    \caption[\bf Parameter estimates in a simulation study for the Kramers oscillator, eq. \eqref{eq:KramersSDE}]{
    {\bf Parameter estimates in a simulation study for the Kramers oscillator, eq. \eqref{eq:KramersSDE}}. The color code is the same across all three figures.
    \textbf{A)} Normalized distributions of parameter estimation errors $(\hat{\bm{\theta}}_N - \bm{\theta}_0) \oslash \bm{\theta}_0$ (where $\oslash$ is the element-wise division) in the complete and partial observation cases, based on 500 simulated data sets with $h = 0.1$ and $N = 5000$. Each column corresponds to a different parameter, while the color indicates the type of estimator. Estimators are distinguished by superscripted objective functions ($\mathrm{F}$ for full and $\mathrm{R}$ for rough).
    \textbf{B)} Distribution of $\widehat{\sigma}_N^2$ estimators based on 1000 simulations with $h = 0.02$ and $N = 5000$ across different observation settings (complete or partial) and objective function choices (full or rough) using the Strang splitting scheme. The true value of $\sigma^2$ is set to $\sigma_0^2 = 100$. Theoretical normal densities are overlaid for comparison. Theoretical variances are calculated based on $C_{\sigma^2}(\bm{\theta}_0)$, eq. \eqref{eq:Csigma_Kramers}.
    \textbf{C)} Median computing time in seconds for one estimation of various estimators based on 500 simulations with $h = 0.1$ and $N = 5000$. Shaded color patterns represent times in the partial observation case, while no color pattern indicates times in the complete observation case.}
    \label{fig:final_plot}
\end{figure}

\subsection{Results}

The simulation study results are presented in Figure \ref{fig:final_plot}. Figure \ref{fig:final_plot}A) presents the distributions of the normalized estimators $(\hat{\bm{\theta}}_N - \bm{\theta}_0) \oslash \bm{\theta}_0$ (where $\oslash$ is the element-wise division) in the complete and partial observation cases. 

Under complete observations, the SS and LL estimators perform nearly identically (the turquoise color in the figure comes from overlapping). The variances of the SS and LL estimators of $\sigma^2$ based on the rough objective function are higher than those derived from the full objective function, aligning with theoretical expectations that adding the smooth part in the objective function only influences the variance estimation. Moreover, the EM and LG estimators for the drift parameters also overlap, making the color orange, and they both display notable bias. This aligns with the theoretical findings. 

Interestingly, in the partial observation scenario, Figure \ref{fig:final_plot}A) reveals that estimators employing the full objective function display greater finite sample bias than those based on the rough objective function. Possible reasons for the bias of the SS estimators are discussed in Remark \ref{rmrk:Bias_L_PSF}. However, this bias is considerably reduced for smaller time steps (not shown), thus confirming the theoretical asymptotic results. This observation suggests that the rough objective function is preferable under partial observations due to its lower bias. Backward finite difference approximations of the velocity variables perform similarly to the forward differences and are therefore excluded from the figure for clarity.

We closely examine the variances of the SS estimators of $\sigma^2$ in Figure \ref{fig:final_plot}B). The LL estimators are omitted due to their similarity to the SS estimators and because the computation times for the LL estimators are prohibitive. We opt for $h = 0.02$ and conduct 1000 simulations to align more closely with the asymptotic predictions. Additionally, we set $\sigma_0^2 = 100$ to test different noise levels. On top of the empirical distributions, we overlay theoretical normal densities that match the variances in Theorem \ref{thm:AsymtoticNormality}. The theoretical variance is derived from $C_{\sigma^2}(\bm{\theta}_0)$ in \eqref{eq:Cs}, which for the Kramers oscillator in \eqref{eq:KramersSDE} is
\begin{align}
    C_{\sigma^2}(\bm{\theta}_0) = \frac{1}{\sigma_0^4}. \label{eq:Csigma_Kramers}
\end{align}
Figure \ref{fig:final_plot}B) illustrates that the lowest variance of the diffusion estimator is obtained using the full objective function with complete observations. The second lowest variance is achieved using the rough objective function with complete observations. The largest variance is observed in the partial observation case; however, it remains independent of whether the full or rough objective function is used. Once again, we observe that using the full objective function introduces finite sample bias.

In Figure \ref{fig:final_plot}C), we compare running times calculated using the \texttt{tictoc} package in \texttt{R}. Running times are measured from the start of the optimization step until convergence. The figure depicts the median over 500 repetitions to mitigate the influence of outliers. The EM method is the fastest and is closely followed by the LG and SS estimators. The LL estimators are 10-100 times slower than the SS estimators, depending on whether complete or partial observations are used and whether the full or rough objective function is employed. 

We conclude from the simulation study that the SS estimators perform better than the alternatives when considering both accuracy and speed. Additionally, using the full objective function is preferable under complete observation, whereas in the partial observation regime, it is more effective to base the objective function on the rough components.

\section{Application to Greenland Ice Core Data} \label{sec:Greenland}

The abrupt temperature changes during the ice ages, known as the Dansgaard–Oeschger (DO) events, are essential for understanding the climate \citep{Dansgaard1993}.  These events occurred during the last glacial era spanning approximately the period from 115,000 to 12,000 years before the present and are characterized by rapid warming phases followed by gradual cooling periods, revealing colder (stadial) and warmer (interstadial) climate states \citep{RASMUSSEN201414}. 

Proxy data from Greenland ice cores, particularly stable water isotope composition ($\delta^{18}\text{O}$) and calcium ion concentrations ($\text{Ca}^{2+}$), offer valuable insights into these past climate variations \citep{Boersetal2017, Boersetal2018, Boers2018, ditlevsen2002fast, LohmannDitlevsen, Hassanibesheli2020Reconstructing}.

The $\delta^{18} \text{O}$ ratio, reflecting the relative abundance of $^{18}\text{O}$ and $^{16}\text{O}$ isotopes in the ice, serves as a proxy for paleotemperatures during snow deposition. Conversely, calcium ions, originating from dust deposition, exhibit a strong negative correlation with $\delta^{18} \text{O}$, with higher calcium ion levels indicating colder conditions. Here, we use $\text{Ca}^{2+}$ time series due to its finer temporal resolution.

The data are average measurements within ice segments, with each segment corresponding to a time interval that increases as a function of depth. Since the data represent integrals of the underlying process over time intervals, we should employ a second-order SDE, as discussed in \cite{ditlevsen2002fast, DitlevsenSorensen2004}.

\begin{figure}
    \centering
    \includegraphics[width = \textwidth]{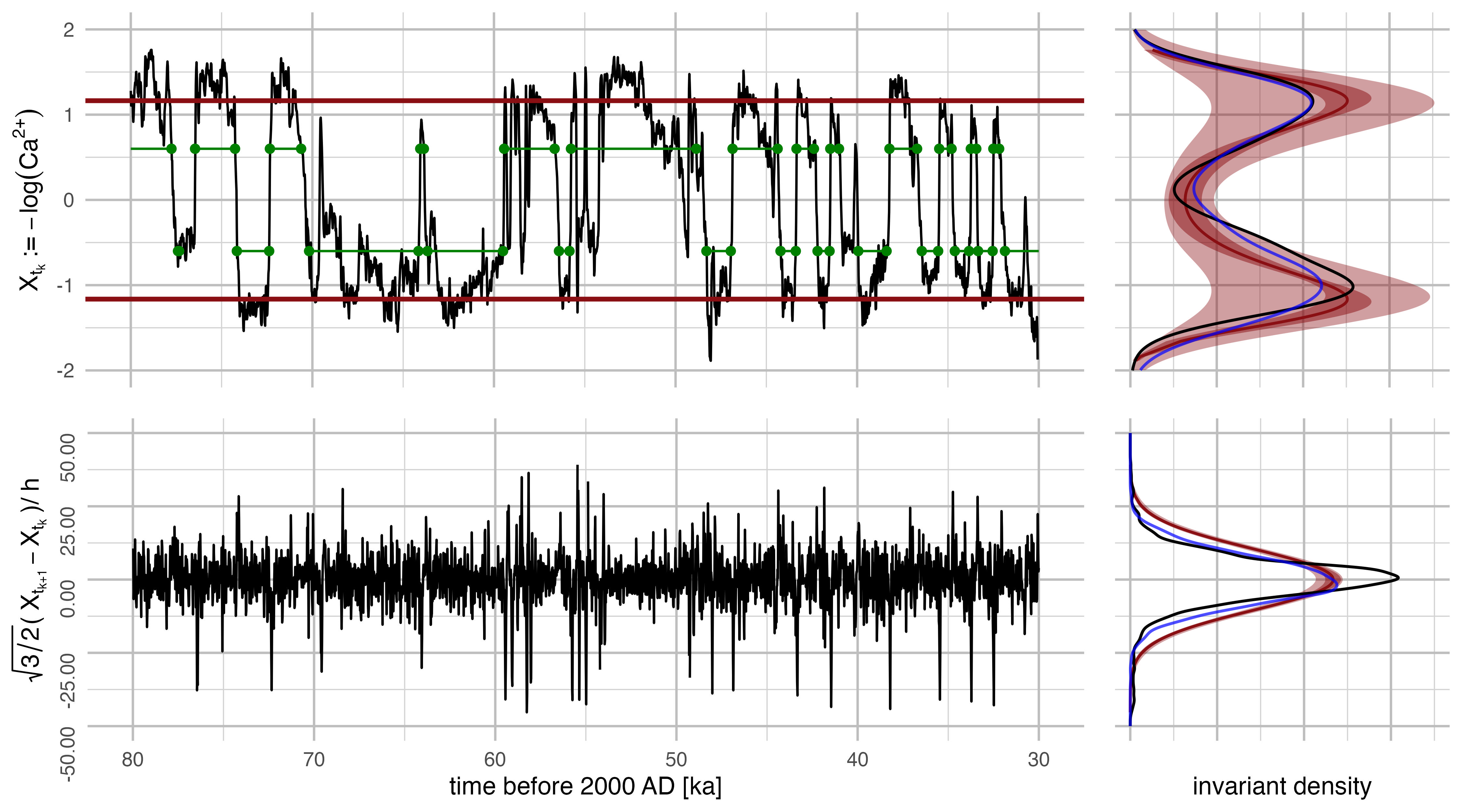}
    \caption{\textbf{Ice core data from Greenland.} \textbf{Left:} Trajectories over time (in 1000 years) of the centered negative logarithm of the $\text{Ca}^{2+}$ measurements (top) and forward difference approximations of its rate of change (bottom). Forward differences are multiplied by $\sqrt{3/2}$ to compensate for underestimating the variance. The two vertical dark red lines represent the estimated stable equilibria of the double-well potential function. Green points denote up- and down-crossings of level $\pm 0.6$, conditioned on having crossed the other level. Green vertical lines indicate empirical occupancy estimates in either of the two metastable states. \textbf{Right:} Empirical densities (black) and estimated invariant densities with confidence intervals (dark red), prediction intervals (light red), and the empirical density of a simulated sample from the estimated model (blue). The empirical density of forward differences is also rescaled by $\sqrt{3/2}$ to match the variance of the theoretical density.}
    \label{fig:ice_core}
\end{figure}

The DO events are abrupt transitions from colder climates (stadials) to approximately 10 degrees warmer climates (interstadials) within a few decades. Although the waiting times between state switches last a couple of thousand years, their spacing exhibits significant variability. The underlying mechanisms driving these changes remain largely elusive, prompting discussions on whether they follow cyclic patterns, result from external forcing, or emerge from noise-induced processes \citep{Boers2018, ditlevsen2007}. We aim to determine if noise-induced transitions of the Kramers oscillator can explain the observed data.

The measurements were conducted at the summit of the Greenland ice sheet as part of the Greenland Icecore Project (GRIP) \citep{GRIP, Andersen2004}. Initially, the data were sampled at 5 cm intervals, resulting in a non-equidistant time series due to ice compression at greater depths, where 5 cm of ice core spans longer time periods. For our analysis, we use a version of the data transformed into a uniformly spaced series through 20-year binning and averaging. This transformation simplifies the analysis and highlights significant climatic trends. Moreover, the binning and averaging suggest that we work with integrated diffusions and that we should model data using second-order SDEs. The dataset is available in the Appendix of \citep{RASMUSSEN201414, Seierstad2014}.

To address the large amplitudes and negative correlation with temperature, we transform the data to the negative logarithm of $\text{Ca}^{2+}$, where higher values of the transformed variable indicate warmer climates at the time of snow deposition. Additionally, we center the transformed measurements around zero. With the 20-year binning, to obtain one point per 20 years, we average across the bins, resulting in a time step of $h = 0.02 \mathrm{ka}$ (kiloanni; $1 \mathrm{ka} = 1000$ years). We imputed a few missing values using the \texttt{na.approx} function from the \texttt{zoo} package. Following  \cite{Hassanibesheli2020Reconstructing}, we analyze a subset of the data with a sufficiently good signal-to-noise ratio. \cite{Hassanibesheli2020Reconstructing} examined the data from $30$ to $60 \mathrm{ka}$ before the present. Here, we extend the analysis to cover $30 \mathrm{ka}$ to $80 \mathrm{ka}$, resulting in a time interval of $T = 50 \mathrm{ka}$ and a sample size of $N = 2500$. We approximate the velocity of the transformed $\text{Ca}^{2+}$ by the forward difference method. Figure \ref{fig:ice_core} illustrates the trajectories and invariant distributions.

We fit the Kramers oscillator to the $-\log \text{Ca}^{2+}$ time series and estimate parameters using the SS estimator. Following Theorem \ref{thm:AsymtoticNormality}, we compute $\mathbf{C}_{\bm{\beta}}(\bm{\theta}_0)$ from \eqref{eq:Cb}. Applying the invariant density $\pi_0(x, v)$ from \eqref{eq:XVinv}, which decouples into $\pi_0(x)$ from \eqref{eq:Xinv} and a Gaussian with zero mean and $\sigma_0^2/(2\eta_0)$ variance, yields 
\begin{align}
    \mathbf{C}_{\bm{\beta}}(\bm{\theta}_0) = \begin{bmatrix}
        \frac{1}{2\eta_0} & 0 & 0\\ \vspace{1ex}
        0 & \frac{1}{\sigma_0^2} \int_{-\infty}^\infty x^2 \pi_0(x) \dif x & -\frac{1}{\sigma_0^2} \int_{-\infty}^\infty x^4 \pi_0(x) \dif x\\
        0 & -\frac{1}{\sigma_0^2} \int_{-\infty}^\infty x^4 \pi_0(x) \dif x & \frac{1}{\sigma_0^2} \int_{-\infty}^\infty x^6 \pi_0(x) \dif x
    \end{bmatrix}. \label{eq:Cbeta_Kramers}
\end{align}
Thus, to obtain $95\%$ confidence intervals (CI) for the estimated parameters, we plug $\widehat{\bm{\theta}}_N$ into \eqref{eq:Csigma_Kramers} and \eqref{eq:Cbeta_Kramers}. Table \ref{tab:est} shows the estimators and confidence intervals. We also calculate the expected waiting time $\tau$, eq. \eqref{eq:tau} of crossing from one state to another, and its confidence interval using the Delta method. 
\begin{table}[ht]
\centering
\begin{tabular}{lccc}
Parameter & Estimate & 95\% CI \\ \hline
$\eta$   & 62.5 & $59.4 - 65.6$ \\
$a$      & 296.7 & $293.6 - 299.8$ \\
$b$      & 219.1 & $156.4 - 281.7$ \\
$\sigma^2$ & 9125 & $8589 - 9662$ \\
$\tau$   & 3.97 & $3.00 - 4.94$ \\ \hline
\end{tabular}
\caption{Estimated parameters of the Kramers oscillator from Greenland ice core data.}
\label{tab:est}
\end{table}

The model fit is assessed in the right panels of Figure \ref{fig:ice_core}. Here, we present the empirical invariant distributions of the two coordinates along with the fitted theoretical invariant distribution and a $95\%$ confidence interval. A prediction interval for the distribution is provided by simulating 1000 datasets from the fitted model, matching the size of the empirical data. We estimate the empirical distributions for each simulated dataset and construct a $95\%$ prediction interval using the pointwise 2.5th and 97.5th percentiles of these estimates. A single example trace is included in blue. To account for the underestimated variance of the forward differences $\Delta_h X$ (as shown in Figure \ref{fig:sim_V}), we scale the trajectory at the bottom of Figure \ref{fig:ice_core}, along with its corresponding empirical density, by $\sqrt{3/2}$.

While the fitted distribution for $-\log \text{Ca}^{2+}$ appears to fit well even with the symmetric Kramers model, the velocity variables are not adequately captured despite the scaling by $\sqrt{3/2}$. This discrepancy is likely caused by extreme values in the data that are not adequately modeled by additive Gaussian noise. Consequently, the model compensates by estimating a large variance. A more suitable noise structure might be obtained using a Student's $t$-distribution, which, however, implies non-additive noise.

We estimate the waiting time between metastable states to be approximately 4000 years. However, this approximation relies on certain assumptions, namely $62.5 \approx \eta \gg \sqrt{a} \approx 17.2$ and $73 \approx \sigma^2/2\eta \ll a^2/4b \approx 100$. Since these assumptions are only vaguely satisfied, the waiting time should be interpreted with caution.

Defining the current state of the process is not straightforward. One method involves identifying successive up- and down-crossings of predefined thresholds within the smoothed data. However, the estimated occupancy time in each state depends on the level of smoothing applied and the distance of crossing thresholds from zero. Using a smoothing technique involving running averages within windows of 11 data points (equivalent to 220 years) and detecting down- and up-crossings of levels $\pm 0.6$, we find an average occupancy time of 4058 years in stadial states and 3550 years in interstadial states. Nevertheless, the actual occupancy times exhibit significant variability, ranging from 60 to 6900 years, with the central $50\%$ of values falling between 665 and 2115 years. This classification of states is depicted in green in Figure \ref{fig:ice_core}. Overall, the estimated mean occupancy time inferred from the Kramers oscillator appears reasonable.

\section{Technical results} \label{sec:AuxiliaryProperties}

In this Section, we present all the technical properties used to derive the main results. 

We start by expanding $\widetilde{\bm{\Omega}}_h$ and its block components $\bm{\Omega}_h^{\mathrm{[RR]}}(\bm{\theta})^{-1}, \bm{\Omega}_h^{\mathrm{[S\mid R]}}(\bm{\theta})^{-1}$, $\log \det \bm{\Omega}_h^{\mathrm{[RR]}}(\bm{\theta}), \log \det \bm{\Omega}_h^{\mathrm{[S\mid R]}}(\bm{\theta})$ and $\log |\det D {\bm{f}}_{h/2}\left(\mathbf{y}; \bm{\beta}\right)|$ when $h$ goes to zero. Then, we expand $\widetilde{\mathbf{Z}}_{k, k-1}(\bm{\beta})$ and $\widetilde{\overline{\mathbf{Z}}}_{k+1, k, k-1}(\bm{\beta})$ around $\mathbf{Y}_{t_{k-1}}$ when $h$ goes to zero. The main tools used are It\^o's lemma, Taylor expansions, and Fubini's theorem. The final result is stated in Propositions \ref{prop:ZtkandZtkbar} and \ref{prop:ZtkS}. The approximations depend on the drift function $\mathbf{F}$, the nonlinear part $\mathbf{N}$, and some correlated sequences of Gaussian random variables. Finally, we obtain approximations of the objective functions \eqref{eq:L_CF}, \eqref{eq:L_CR},  \eqref{eq:L_CSR} and \eqref{eq:L_PF} - \eqref{eq:L_PSR}. Proofs of all the stated propositions and lemmas in this section are in Appendix \ref{sec:Appendix}.

\subsection[Covariance matrix]{Covariance matrix $\widetilde{\bm{\Omega}}_h$}

The covariance matrix $\widetilde{\bm{\Omega}}_h$ is approximated by
\begin{align}
    \widetilde{\bm{\Omega}}_h &= \int_0^h e^{ \widetilde{\mathbf{A}}(h-u)} \widetilde{\bm{\Sigma}} \widetilde{\bm{\Sigma}}^\top e^{ \widetilde{\mathbf{A}}^\top (h-u)} \dif u\notag \\
    &= h \widetilde{\bm{\Sigma}} \widetilde{\bm{\Sigma}}^\top + \frac{h^2}{2} (\widetilde{\mathbf{A}} \widetilde{\bm{\Sigma}} \widetilde{\bm{\Sigma}}^\top + \widetilde{\bm{\Sigma}} \widetilde{\bm{\Sigma}}^\top \widetilde{\mathbf{A}}^\top ) + \frac{h^3}{6}(\widetilde{\mathbf{A}}^2 \widetilde{\bm{\Sigma}} \widetilde{\bm{\Sigma}}^\top  + 2 \widetilde{\mathbf{A}}\widetilde{\bm{\Sigma}} \widetilde{\bm{\Sigma}}^\top \widetilde{\mathbf{A}}^\top  + \widetilde{\bm{\Sigma}} \widetilde{\bm{\Sigma}}^\top  (\widetilde{\mathbf{A}}^2)^\top)\notag \\
    &+ \frac{h^4}{24}(\widetilde{\mathbf{A}}^3 \widetilde{\bm{\Sigma}} \widetilde{\bm{\Sigma}}^\top  + 3 \widetilde{\mathbf{A}}^2\widetilde{\bm{\Sigma}} \widetilde{\bm{\Sigma}}^\top \widetilde{\mathbf{A}}^\top   + 3 \widetilde{\mathbf{A}}\widetilde{\bm{\Sigma}} \widetilde{\bm{\Sigma}}^\top (\widetilde{\mathbf{A}}^2)^\top  + \widetilde{\bm{\Sigma}} \widetilde{\bm{\Sigma}}^\top  (\widetilde{\mathbf{A}}^3)^\top)+ \mathbf{R}(h^5, \mathbf{y}_0). \label{eq:Omegah}
\end{align}
The following lemma approximates each block of $\widetilde{\bm{\Omega}}_h$ up to the first two leading orders of $h$. The result follows directly from equations \eqref{eq:SigmaSigmaT}, \eqref{eq:ANtilde}, and \eqref{eq:Omegah}.
\begin{lemma} \label{lemmma:Omegah}
    The covariance matrix $\widetilde{\bm{\Omega}}_h$ defined in \eqref{eq:Omegah}-\eqref{eq:Omegahblock} approximates block-wise as
    \begin{align*}
        \bm{\Omega}_h^{\mathrm{[SS]}}(\bm{\theta})& = \frac{h^3}{3}\bm{\Sigma}\bm{\Sigma}^\top + \frac{h^4}{8}(\mathbf{A}_{\mathbf{v}}(\bm{\beta})\bm{\Sigma}\bm{\Sigma}^\top + \bm{\Sigma}\bm{\Sigma}^\top\mathbf{A}_{\mathbf{v}}(\bm{\beta})^\top) + \mathbf{R}(h^5, \mathbf{y}_0),\\
        \bm{\Omega}_h^{\mathrm{[SR]}}(\bm{\theta})& = \frac{h^2}{2}\bm{\Sigma}\bm{\Sigma}^\top + \frac{h^3}{6}(\mathbf{A}_{\mathbf{v}}(\bm{\beta})\bm{\Sigma}\bm{\Sigma}^\top + 2\bm{\Sigma}\bm{\Sigma}^\top\mathbf{A}_{\mathbf{v}}(\bm{\beta})^\top) + \mathbf{R}(h^4, \mathbf{y}_0),\\
        \bm{\Omega}_h^{\mathrm{[RS]}}(\bm{\theta})& = \frac{h^2}{2}\bm{\Sigma}\bm{\Sigma}^\top + \frac{h^3}{6}(2\mathbf{A}_{\mathbf{v}}(\bm{\beta})\bm{\Sigma}\bm{\Sigma}^\top + \bm{\Sigma}\bm{\Sigma}^\top\mathbf{A}_{\mathbf{v}}(\bm{\beta})^\top) + \mathbf{R}(h^4, \mathbf{y}_0),\\
        \bm{\Omega}_h^{\mathrm{[RR]}}(\bm{\theta})& = h\bm{\Sigma}\bm{\Sigma}^\top + \frac{h^2}{2}(\mathbf{A}_{\mathbf{v}}(\bm{\beta})\bm{\Sigma}\bm{\Sigma}^\top + \bm{\Sigma}\bm{\Sigma}^\top\mathbf{A}_{\mathbf{v}}(\bm{\beta})^\top) + \mathbf{R}(h^3, \mathbf{y}_0).
    \end{align*}
\end{lemma}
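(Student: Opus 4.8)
The plan is to substitute the explicit block forms of $\widetilde{\mathbf{A}}$ and $\widetilde{\bm{\Sigma}}\widetilde{\bm{\Sigma}}^\top$ from \eqref{eq:ANtilde} and \eqref{eq:SigmaSigmaT} directly into the power-series expansion \eqref{eq:Omegah} and to read off each $d\times d$ block. The essential structural observation is that $\widetilde{\bm{\Sigma}}\widetilde{\bm{\Sigma}}^\top$ is supported only on its bottom-right $d\times d$ block (equal to $\bm{\Sigma}\bm{\Sigma}^\top$), so in every summand $\widetilde{\mathbf{A}}^{j}\,\widetilde{\bm{\Sigma}}\widetilde{\bm{\Sigma}}^\top\,(\widetilde{\mathbf{A}}^{l})^\top$ only the right $d$ columns of $\widetilde{\mathbf{A}}^{j}$ and of $\widetilde{\mathbf{A}}^{l}$ survive. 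I would therefore first compute just those right columns for the low powers: from $\widetilde{\mathbf{A}}$ they stack as $\mathbf{I}_d$ over $\mathbf{A}_{\mathbf{v}}(\bm{\beta})$, and squaring gives the right columns of $\widetilde{\mathbf{A}}^2$ as $\mathbf{A}_{\mathbf{v}}(\bm{\beta})$ over $\mathbf{A}_{\mathbf{x}}(\bm{\beta})+\mathbf{A}_{\mathbf{v}}(\bm{\beta})^2$; these two are all that is needed.

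Next, I would track the order bookkeeping carefully, since the four blocks start at different powers of $h$: the $h^1$ term of \eqref{eq:Omegah} contributes only to $\bm{\Omega}_h^{\mathrm{[RR]}}$, the $h^2$ term feeds $\bm{\Omega}_h^{\mathrm{[RR]}}$, $\bm{\Omega}_h^{\mathrm{[SR]}}$ and $\bm{\Omega}_h^{\mathrm{[RS]}}$, the $h^3$ term feeds all four, and the $h^4$ term is needed only for the second-order correction of $\bm{\Omega}_h^{\mathrm{[SS]}}$. Concretely, to recover the stated two-term expansions I would extract the $h$ and $h^2$ blocks for $\bm{\Omega}_h^{\mathrm{[RR]}}$; the $h^2$ and $h^3$ blocks for $\bm{\Omega}_h^{\mathrm{[SR]}}$ and its transpose $\bm{\Omega}_h^{\mathrm{[RS]}}$; and the $h^3$ and $h^4$ blocks for $\bm{\Omega}_h^{\mathrm{[SS]}}$. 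Using the right-column computation, the $h^2$-coefficient matrix $\widetilde{\mathbf{A}}\widetilde{\bm{\Sigma}}\widetilde{\bm{\Sigma}}^\top+\widetilde{\bm{\Sigma}}\widetilde{\bm{\Sigma}}^\top\widetilde{\mathbf{A}}^\top$ has top-left block $\bm{0}$, off-diagonal blocks $\bm{\Sigma}\bm{\Sigma}^\top$, and bottom-right block $\mathbf{A}_{\mathbf{v}}\bm{\Sigma}\bm{\Sigma}^\top+\bm{\Sigma}\bm{\Sigma}^\top\mathbf{A}_{\mathbf{v}}^\top$, which already yields the leading term of $\bm{\Omega}_h^{\mathrm{[RR]}}$ and the first term of $\bm{\Omega}_h^{\mathrm{[SR]}}$, $\bm{\Omega}_h^{\mathrm{[RS]}}$. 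The $h^3$ coefficient supplies the next terms of the off-diagonal blocks and, in its top-left block, the value $2\,\bm{\Sigma}\bm{\Sigma}^\top$, giving the leading $\tfrac{h^3}{3}\bm{\Sigma}\bm{\Sigma}^\top$ of $\bm{\Omega}_h^{\mathrm{[SS]}}$; finally the top-left block of the $h^4$ coefficient reduces, after the same bookkeeping, to $3(\mathbf{A}_{\mathbf{v}}\bm{\Sigma}\bm{\Sigma}^\top+\bm{\Sigma}\bm{\Sigma}^\top\mathbf{A}_{\mathbf{v}}^\top)$, producing the $\tfrac{h^4}{8}$ correction for $\bm{\Omega}_h^{\mathrm{[SS]}}$.

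This is a routine, finite block computation rather than a genuinely difficult argument; the remainder $\mathbf{R}(h^{\ast},\mathbf{y}_0)$ attached to each block is inherited directly from the $\mathbf{R}(h^5,\mathbf{y}_0)$ remainder in \eqref{eq:Omegah}, restricted to the appropriate block, since truncating the expansion of the matrix exponentials at a fixed order leaves a remainder that is polynomially bounded in $\mathbf{y}_0$. The only point demanding care is the order bookkeeping described above: one must check that the would-be lower-order contributions to $\bm{\Omega}_h^{\mathrm{[SS]}}$ (at $h^1$ and $h^2$) and to the off-diagonal blocks (at $h^1$) genuinely vanish. This vanishing is precisely the consequence of the top-left block of $\widetilde{\bm{\Sigma}}\widetilde{\bm{\Sigma}}^\top$ being zero together with the identity top-right block of $\widetilde{\mathbf{A}}$, so no separate estimate is required. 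Hence the main, and essentially only, obstacle is organizational: keeping the four blocks aligned with the correct powers of $h$.
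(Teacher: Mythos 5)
Your proposal is correct and matches the paper's own treatment: the paper states that Lemma \ref{lemmma:Omegah} ``follows directly'' from \eqref{eq:SigmaSigmaT}, \eqref{eq:ANtilde}, and \eqref{eq:Omegah}, i.e.\ exactly the block substitution and order bookkeeping you carry out. Your intermediate block computations (the $h^2$, $h^3$, and $h^4$ coefficient blocks) all check out against the stated expansions.
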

Building on Lemma \ref{lemmma:Omegah}, we calculate products, inverses, and logarithms of the components of $\widetilde{\bm{\Omega}}_h$ in the following lemma.
\begin{lemma} \label{lemmma:OmegahProp}
    For the covariance matrix $\widetilde{\bm{\Omega}}_h$ defined in \eqref{eq:Omegah} it holds:
    \begin{enumerate}
        \myitem{(i)} \label{item:OmegahProp1} $\bm{\Omega}_h^{\mathrm{[RR]}}(\bm{\theta})^{-1} =\frac{1}{h}(\bm{\Sigma}\bm{\Sigma}^\top)^{-1} - \frac{1}{2}((\bm{\Sigma}\bm{\Sigma}^\top)^{-1}\mathbf{A}_{\mathbf{v}}(\bm{\beta}) + \mathbf{A}_{\mathbf{v}}(\bm{\beta})^\top (\bm{\Sigma}\bm{\Sigma}^\top)^{-1})  + \mathbf{R}(h, \mathbf{y}_0)$;
        \myitem{(ii)} \label{item:OmegahProp2} $\bm{\Omega}_h^{\mathrm{[SR]}}(\bm{\theta})\bm{\Omega}_h^{\mathrm{[RR]}}(\bm{\theta})^{-1} = \frac{h}{2}\mathbf{I} - \frac{h^2}{12}(\mathbf{A}_{\mathbf{v}} - \bm{\Sigma}\bm{\Sigma}^\top\mathbf{A}_{\mathbf{v}}(\bm{\beta})^\top(\bm{\Sigma}\bm{\Sigma}^\top)^{-1}) + \mathbf{R}(h^3, \mathbf{y}_0)$;
        \myitem{(iii)} \label{item:OmegahProp3} $\bm{\Omega}_h^{\mathrm{[SR]}}(\bm{\theta})\bm{\Omega}_h^{\mathrm{[RR]}}(\bm{\theta})^{-1}\bm{\Omega}_h^{\mathrm{[RS]}}(\bm{\theta}) = \frac{h^3}{4}\bm{\Sigma}\bm{\Sigma}^\top + \frac{h^4}{8}(\mathbf{A}_{\mathbf{v}}(\bm{\beta})\bm{\Sigma}\bm{\Sigma}^\top + \bm{\Sigma}\bm{\Sigma}^\top\mathbf{A}_{\mathbf{v}}(\bm{\beta})^\top) + \mathbf{R}(h^5, \mathbf{y}_0)$;
        \myitem{(iv)} \label{item:OmegahProp4} $\bm{\Omega}_h^{\mathrm{[S\mid R]}}(\bm{\theta}) = \frac{h^3}{12}\bm{\Sigma}\bm{\Sigma}^\top + \mathbf{R}(h^5, \mathbf{y}_0)$;
        \myitem{(v)} \label{item:OmegahProp5} $\log \det \bm{\Omega}_h^{\mathrm{[RR]}}(\bm{\theta}) =d \log h +  \log \det \bm{\Sigma}\bm{\Sigma}^\top + h \tr \mathbf{A}_{\mathbf{v}}(\bm{\beta}) + {R}(h^2, \mathbf{y}_0)$;
        \myitem{(vi)} \label{item:OmegahProp6} $\log \det \bm{\Omega}_h^{\mathrm{[S\mid R]}}(\bm{\theta}) = 3d \log h + \log \det \bm{\Sigma}\bm{\Sigma}^\top + {R}(h^2, \mathbf{y}_0)$;
        \myitem{(vii)} \label{item:OmegahProp7} $\log \det \widetilde{\bm{\Omega}}_h(\bm{\theta}) = 4d \log h + 2\log \det \bm{\Sigma}\bm{\Sigma}^\top + h \tr \mathbf{A}_{\mathbf{v}}(\bm{\beta}) + {R}(h^2, \mathbf{y}_0)$. 
    \end{enumerate}
\end{lemma}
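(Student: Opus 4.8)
The plan is to derive all seven identities from the block expansions in Lemma \ref{lemmma:Omegah} using two elementary tools: the Neumann expansion $(\mathbf{I}+\mathbf{M})^{-1} = \mathbf{I} - \mathbf{M} + \mathbf{M}^2 - \cdots$ for inverting a perturbed identity, and $\log\det(\mathbf{I}+\mathbf{M}) = \tr\mathbf{M} - \tfrac12\tr\mathbf{M}^2 + \cdots$ for the determinants. Throughout I abbreviate $\mathbf{S} \coloneqq \bm{\Sigma}\bm{\Sigma}^\top$, which is constant in $\mathbf{y}$ and, by Assumption \ref{as:DiffusionInv}, invertible with $\mathbf{S}$ and $\mathbf{S}^{-1}$ bounded on $\overline{\Theta}_\Sigma$. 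Because these factors are bounded and $\mathbf{y}$-independent, every product, inverse, and logarithm below turns a uniform $\mathbf{R}(h^k,\cdot)$ term into another uniform $\mathbf{R}$ term, so the remainder bookkeeping reduces to tracking powers of $h$. I would carry out the items in the order they depend on one another, namely \ref{item:OmegahProp1}, then \ref{item:OmegahProp2}, \ref{item:OmegahProp3}, \ref{item:OmegahProp4}, and finally the log-determinants \ref{item:OmegahProp5}--\ref{item:OmegahProp7}.

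For \ref{item:OmegahProp1} I factor the leading order out of the rough block, writing $\bm{\Omega}_h^{\mathrm{[RR]}} = h\mathbf{S}\bigl(\mathbf{I} + \tfrac{h}{2}(\mathbf{S}^{-1}\mathbf{A}_\mathbf{v}\mathbf{S} + \mathbf{A}_\mathbf{v}^\top) + \mathbf{R}(h^2,\cdot)\bigr)$, and invert by the Neumann series; since the perturbation is $O(h)$ its square only enters at order $h$, and after simplifying $\mathbf{S}(\mathbf{S}^{-1}\mathbf{A}_\mathbf{v}\mathbf{S}+\mathbf{A}_\mathbf{v}^\top)\mathbf{S}^{-1} = \mathbf{A}_\mathbf{v}\mathbf{S}\cdot\mathbf{S}^{-1}\ldots$ one obtains $\tfrac1h\mathbf{S}^{-1} - \tfrac12(\mathbf{S}^{-1}\mathbf{A}_\mathbf{v}+\mathbf{A}_\mathbf{v}^\top\mathbf{S}^{-1}) + \mathbf{R}(h,\cdot)$. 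Items \ref{item:OmegahProp2} and \ref{item:OmegahProp3} are then direct matrix multiplications of expansions already in hand: for \ref{item:OmegahProp2} I multiply the expansion of $\bm{\Omega}_h^{\mathrm{[SR]}}$ by \ref{item:OmegahProp1} and collect the orders $h$ and $h^2$, where the coefficients combine as $-\tfrac14+\tfrac16=-\tfrac1{12}$ on $\mathbf{A}_\mathbf{v}$ and $-\tfrac14+\tfrac13=+\tfrac1{12}$ on $\mathbf{S}\mathbf{A}_\mathbf{v}^\top\mathbf{S}^{-1}$; for \ref{item:OmegahProp3} I multiply the result of \ref{item:OmegahProp2} by $\bm{\Omega}_h^{\mathrm{[RS]}}$ and collect orders $h^3$ and $h^4$, the fractions $\tfrac16-\tfrac1{24}=\tfrac18$ and $\tfrac1{12}+\tfrac1{24}=\tfrac18$ producing the symmetric $\tfrac{h^4}{8}(\mathbf{A}_\mathbf{v}\mathbf{S}+\mathbf{S}\mathbf{A}_\mathbf{v}^\top)$.

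Item \ref{item:OmegahProp4} is the one genuinely informative step. Substituting \ref{item:OmegahProp3} and the $\bm{\Omega}_h^{\mathrm{[SS]}}$ expansion of Lemma \ref{lemmma:Omegah} into the Schur complement, the leading terms give $(\tfrac13-\tfrac14)h^3\mathbf{S}=\tfrac{h^3}{12}\mathbf{S}$, while the $O(h^4)$ contributions, both equal to $\tfrac{h^4}{8}(\mathbf{A}_\mathbf{v}\mathbf{S}+\mathbf{S}\mathbf{A}_\mathbf{v}^\top)$, cancel exactly; this cancellation is what upgrades the remainder from $h^4$ to $\mathbf{R}(h^5,\cdot)$, and verifying that the two order-$h^4$ coefficients agree is the main thing to get right. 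For the log-determinants, \ref{item:OmegahProp5} uses the factorization $\bm{\Omega}_h^{\mathrm{[RR]}}=h\mathbf{S}(\mathbf{I}+\mathbf{M})$ from \ref{item:OmegahProp1}, so that $\log\det\bm{\Omega}_h^{\mathrm{[RR]}} = d\log h + \log\det\mathbf{S} + \tr\mathbf{M} + R(h^2,\cdot)$, where cyclicity of the trace gives $\tr(\mathbf{S}^{-1}\mathbf{A}_\mathbf{v}\mathbf{S})=\tr\mathbf{A}_\mathbf{v}$ and $\tr\mathbf{A}_\mathbf{v}^\top=\tr\mathbf{A}_\mathbf{v}$, whence $\tr\mathbf{M}=h\tr\mathbf{A}_\mathbf{v}$ and the $-\tfrac12\tr\mathbf{M}^2$ term is absorbed into $R(h^2,\cdot)$. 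Item \ref{item:OmegahProp6} follows from \ref{item:OmegahProp4} by factoring $\tfrac{h^3}{12}\mathbf{S}(\mathbf{I}+R(h^2,\cdot))$; here the $\bm{\theta}$- and $h$-independent constant $d\log\tfrac1{12}$ plays no role in the estimating equations and is suppressed. Finally \ref{item:OmegahProp7} is immediate from the determinant factorization $\det\widetilde{\bm{\Omega}}_h=\det\bm{\Omega}_h^{\mathrm{[RR]}}\det\bm{\Omega}_h^{\mathrm{[S\mid R]}}$ recorded just before the lemma, since $\log\det\widetilde{\bm{\Omega}}_h$ is then the sum of \ref{item:OmegahProp5} and \ref{item:OmegahProp6}. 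Apart from the cancellation in \ref{item:OmegahProp4}, everything is routine matrix algebra with careful order tracking.
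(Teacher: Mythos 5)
Your proposal is correct and follows essentially the same route as the paper's proof: a Neumann expansion of the factored rough block for \ref{item:OmegahProp1}, direct multiplication with order-by-order collection for \ref{item:OmegahProp2}--\ref{item:OmegahProp3} (your coefficient arithmetic $-\tfrac14+\tfrac16=-\tfrac1{12}$, $\tfrac16-\tfrac1{24}=\tfrac18$, etc.\ checks out), the Schur-complement cancellation at order $h^4$ for \ref{item:OmegahProp4}, and the expansion $\log\det(\mathbf{I}+\mathbf{M})=\tr\mathbf{M}+O(\|\mathbf{M}\|^2)$ with trace cyclicity and the block-determinant factorization $\det\widetilde{\bm{\Omega}}_h=\det\bm{\Omega}_h^{\mathrm{[RR]}}\det\bm{\Omega}_h^{\mathrm{[S\mid R]}}$ for \ref{item:OmegahProp5}--\ref{item:OmegahProp7}. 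Your explicit remark that the $\bm{\theta}$- and $h$-independent constant $d\log\tfrac{1}{12}$ arising in \ref{item:OmegahProp6} is suppressed is a detail the paper's proof passes over silently, but both treatments agree on everything that enters the estimating equations.
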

\begin{remark}
    We adjusted the objective functions for partial observations using the term $c \log \det \bm{\Omega}_{h/c}^{\mathrm{[\cdot]}}$, where $c$ is a correction constant. This adjustment keeps the term $h \tr \mathbf{A}_\mathbf{v}(\bm{\beta})$ in \ref{item:OmegahProp5}-\ref{item:OmegahProp7} constant, not affecting the asymptotic distribution of the drift parameter. There is no $h^4$-term in $\bm{\Omega}_h^{\mathrm{[S\mid R]}}(\bm{\theta})$ which simplifies the approximation of $\bm{\Omega}_h^{\mathrm{[S\mid R]}}(\bm{\theta})^{-1}$ and $\log \det \bm{\Omega}_h^{\mathrm{[S\mid R]}}(\bm{\theta})$. 
\end{remark}

\subsection[Nonlinear solution]{Nonlinear solution $\widetilde{\bm{f}}_h$}

We now state a useful proposition for the nonlinear solution $\widetilde{\bm{f}}_h$ (Section 1.8 in \citep{SolvingODEI}). 
\begin{proposition} \label{prop:fh} Let Assumptions \ref{as:NLip}, \ref{as:NPoly} and \ref{as:fhInv} hold. When $h \to 0$, the $h$-flow of  \eqref{eq:SplittingEq2} approximates as
\begin{align}
    \widetilde{\bm{f}}_h(\mathbf{y}) &= \mathbf{y} + h \widetilde{\mathbf{N}}(\mathbf{y}) + \frac{h^2}{2} (D_\mathbf{y} \widetilde{\mathbf{N}}(\mathbf{y}))\widetilde{\mathbf{N}}(\mathbf{y}) + \mathbf{R}(h^3, \mathbf{y}), \label{eq:fhtildeapprox}\\
    \widetilde{\bm{f}}_h^{-1}(\mathbf{y}) &= \mathbf{y} - h \widetilde{\mathbf{N}}(\mathbf{y}) + \frac{h^2}{2} (D_\mathbf{y} \widetilde{\mathbf{N}}(\mathbf{y}))\widetilde{\mathbf{N}}(\mathbf{y}) + \mathbf{R}(h^3, \mathbf{y}).  \label{eq:fhtildeinvapprox}
\end{align}
\end{proposition}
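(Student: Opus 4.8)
The plan is to regard $\widetilde{\bm{f}}_h(\mathbf{y})$ as the time-$h$ value of the autonomous initial value problem
\begin{equation*}
\partial_t \widetilde{\bm{f}}_t(\mathbf{y}) = \widetilde{\mathbf{N}}(\widetilde{\bm{f}}_t(\mathbf{y})), \qquad \widetilde{\bm{f}}_0(\mathbf{y}) = \mathbf{y},
\end{equation*}
and to obtain \eqref{eq:fhtildeapprox} from a Taylor expansion in the time variable around $t=0$, deducing \eqref{eq:fhtildeinvapprox} afterwards. By Assumption \ref{as:NLip} the field $\widetilde{\mathbf{N}}$ is $C^3$, so $t \mapsto \widetilde{\bm{f}}_t(\mathbf{y})$ is three times continuously differentiable; differentiating the ODE via the chain rule gives $\partial_t \widetilde{\bm{f}}_t(\mathbf{y}) = \widetilde{\mathbf{N}}(\widetilde{\bm{f}}_t(\mathbf{y}))$ and $\partial_t^2 \widetilde{\bm{f}}_t(\mathbf{y}) = (D_\mathbf{y}\widetilde{\mathbf{N}}(\widetilde{\bm{f}}_t(\mathbf{y})))\widetilde{\mathbf{N}}(\widetilde{\bm{f}}_t(\mathbf{y}))$, which at $t=0$ equal $\widetilde{\mathbf{N}}(\mathbf{y})$ and $(D_\mathbf{y}\widetilde{\mathbf{N}}(\mathbf{y}))\widetilde{\mathbf{N}}(\mathbf{y})$ — precisely the first two coefficients in \eqref{eq:fhtildeapprox}. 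Taylor's theorem with integral remainder then yields
\begin{equation*}
\widetilde{\bm{f}}_h(\mathbf{y}) = \mathbf{y} + h\widetilde{\mathbf{N}}(\mathbf{y}) + \frac{h^2}{2}(D_\mathbf{y}\widetilde{\mathbf{N}}(\mathbf{y}))\widetilde{\mathbf{N}}(\mathbf{y}) + \frac{1}{2}\int_0^h (h-s)^2\, \partial_s^3 \widetilde{\bm{f}}_s(\mathbf{y}) \dif s.
\end{equation*}

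The remaining task is to show the integral term is an $\mathbf{R}(h^3, \mathbf{y})$ in the sense of the Notation section. Differentiating once more expresses $\partial_s^3 \widetilde{\bm{f}}_s(\mathbf{y})$ as a fixed polynomial combination of $\widetilde{\mathbf{N}}$, $D_\mathbf{y}\widetilde{\mathbf{N}}$ and $D_\mathbf{y}^2\widetilde{\mathbf{N}}$, all evaluated at $\widetilde{\bm{f}}_s(\mathbf{y})$. By Assumption \ref{as:NPoly} these factors grow at most polynomially in their argument, so it suffices to bound $\|\widetilde{\bm{f}}_s(\mathbf{y})\|$ uniformly for $s \in [0,h]$ with $h$ small. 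This a priori bound is the crux of the argument: combining the one-sided Lipschitz property \ref{as:NLip} with the integral identity $\widetilde{\bm{f}}_s(\mathbf{y}) = \mathbf{y} + \int_0^s \widetilde{\mathbf{N}}(\widetilde{\bm{f}}_u(\mathbf{y}))\dif u$ and a Grönwall-type estimate gives $\|\widetilde{\bm{f}}_s(\mathbf{y})\| \le C(1 + \|\mathbf{y}\|)$ for $s \le h$, $h$ small. Substituting back yields $\|\partial_s^3 \widetilde{\bm{f}}_s(\mathbf{y})\| \le C(1 + \|\mathbf{y}\|)^C$, and since $\int_0^h (h-s)^2 \dif s = h^3/3$, the integral is bounded by $h^3 C(1+\|\mathbf{y}\|)^C$, matching the definition of $\mathbf{R}(h^3, \mathbf{y})$. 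This establishes \eqref{eq:fhtildeapprox}.

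For the inverse \eqref{eq:fhtildeinvapprox} I would use the method of undetermined coefficients, which keeps the $\mathbf{R}(\cdot)$ bookkeeping transparent; Assumption \ref{as:fhInv} guarantees $\widetilde{\bm{f}}_h^{-1}$ is well defined for small $h$. Posit $\widetilde{\bm{f}}_h^{-1}(\mathbf{y}) = \mathbf{y} + h\,\mathbf{a}(\mathbf{y}) + h^2\,\mathbf{b}(\mathbf{y}) + \mathbf{R}(h^3, \mathbf{y})$, insert it into $\widetilde{\bm{f}}_h(\widetilde{\bm{f}}_h^{-1}(\mathbf{y})) = \mathbf{y}$, expand the outer map by \eqref{eq:fhtildeapprox} and Taylor-expand $\widetilde{\mathbf{N}}$ and $D_\mathbf{y}\widetilde{\mathbf{N}}$ about $\mathbf{y}$. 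Matching powers of $h$ gives $\mathbf{a} = -\widetilde{\mathbf{N}}(\mathbf{y})$ at order $h$ and $\mathbf{b} + (D_\mathbf{y}\widetilde{\mathbf{N}}(\mathbf{y}))\mathbf{a} + \tfrac{1}{2}(D_\mathbf{y}\widetilde{\mathbf{N}}(\mathbf{y}))\widetilde{\mathbf{N}}(\mathbf{y}) = \mathbf{0}$ at order $h^2$, whence $\mathbf{b} = \tfrac{1}{2}(D_\mathbf{y}\widetilde{\mathbf{N}}(\mathbf{y}))\widetilde{\mathbf{N}}(\mathbf{y})$, reproducing \eqref{eq:fhtildeinvapprox}. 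Alternatively, since by uniqueness the inverse flow is the backward flow $\widetilde{\bm{f}}_h^{-1} = \widetilde{\bm{f}}_{-h}$ for small $h$, one may simply substitute $h \mapsto -h$ in \eqref{eq:fhtildeapprox}; the sign of the $h^2$ term is unchanged because that coefficient is even in $h$. The only genuine obstacle throughout is the uniform-in-$\mathbf{y}$ remainder bound of the second paragraph; once the polynomial growth of the flow on $[0,h]$ is secured, the rest is routine Taylor bookkeeping, consistent with the classical expansion in Section 1.8 of \citep{SolvingODEI}.
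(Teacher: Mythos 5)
Your forward-flow argument is correct, and it is essentially the classical expansion that the paper cites (Section 1.8 of \citep{SolvingODEI}) rather than proves: the chain rule gives the Taylor coefficients, and the integral-remainder bound via Assumption \ref{as:NPoly} together with the Gr\"onwall estimate $\|\widetilde{\bm{f}}_s(\mathbf{y})\| \leq C(1+\|\mathbf{y}\|)$ for $s \in [0,h]$ --- which does follow from the one-sided Lipschitz condition \ref{as:NLip} applied to $\frac{d}{ds}\|\widetilde{\bm{f}}_s(\mathbf{y})\|^2$ --- delivers exactly the control demanded by the paper's definition of $\mathbf{R}(h^3,\mathbf{y})$. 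Since the paper gives no proof of its own here, this part of your proposal fills a real omission, correctly.

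The gap is in the inverse, \eqref{eq:fhtildeinvapprox}. The backward-flow shortcut ($\widetilde{\bm{f}}_h^{-1} = \widetilde{\bm{f}}_{-h}$, substitute $h \mapsto -h$) silently applies your remainder analysis to the vector field $-\widetilde{\mathbf{N}}$, but Assumption \ref{as:NLip} is not time-reversal symmetric: $-\widetilde{\mathbf{N}}$ satisfies only the reversed one-sided inequality, so the Gr\"onwall bound of your second paragraph is unavailable for negative times. This is not a formality. Take $d=1$ and $\mathbf{N}(x,v) = -v^3$ (one-sided Lipschitz with $C=0$, inside the paper's model class): the forward flow is $v \mapsto v/\sqrt{1+2hv^2}$, and the inverse $\bm{f}^{\star-1}_h(x,v) = v/\sqrt{1-2hv^2}$ exists only for $|v| < (2h)^{-1/2}$ and blows up at the boundary, so no bound of the form $C(1+\|\mathbf{y}\|)^C$ uniform in $(h,\mathbf{y})$ can hold; this is exactly why the paper imposes Assumption \ref{as:fhInv} and phrases the proposition asymptotically in $h$. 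Your primary route, undetermined coefficients, inherits the same issue in hidden form: positing $\widetilde{\bm{f}}_h^{-1}(\mathbf{y}) = \mathbf{y} + h\mathbf{a}(\mathbf{y}) + h^2\mathbf{b}(\mathbf{y}) + \mathbf{R}(h^3,\mathbf{y})$ already assumes the polynomially bounded remainder structure that is to be proven, and your (algebraically correct) matching only identifies $\mathbf{a}$ and $\mathbf{b}$ conditional on that ansatz. To close the argument, Assumption \ref{as:fhInv} must enter substantively, not just nominally: set $\mathbf{z} = \widetilde{\bm{f}}_h^{-1}(\mathbf{y})$, use $\mathbf{y} = \widetilde{\bm{f}}_h(\mathbf{z})$ and \eqref{eq:fhtildeapprox} to get the crude bound $\|\mathbf{z}-\mathbf{y}\| \leq Ch(1+\|\mathbf{z}\|)^{\chi}$, bootstrap this to $\|\mathbf{z}\| \leq C(1+\|\mathbf{y}\|)$ in the asymptotic regime of $h$ where the inverse is defined, and only then substitute back into the forward expansion to obtain \eqref{eq:fhtildeinvapprox} with a controlled remainder.
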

Applying the previous proposition on \eqref{eq:fhtilde} and \eqref{eq:fhstartildeinv}, we get
\begin{align}
    \bm{f}_h(\mathbf{y}) &= \mathbf{v} + h \mathbf{N}(\mathbf{y}) + \frac{h^2}{2} (D_\mathbf{v} \mathbf{N}(\mathbf{y}))\mathbf{N}(\mathbf{y}) + \mathbf{R}(h^3, \mathbf{y}),\label{eq:fh}\\
    \bm{f}^{\star -1}_h(\mathbf{y}) &= \mathbf{v} - h \mathbf{N}(\mathbf{y}) + \frac{h^2}{2} (D_\mathbf{v} \mathbf{N}(\mathbf{y}))\mathbf{N}(\mathbf{y}) + \mathbf{R}(h^3, \mathbf{y}).\label{eq:fhstarinv}
\end{align}

The following lemma approximates $\log |\det D {\bm{f}}_{h/2}\left(\mathbf{y}; \bm{\beta}\right)|$ in the objective functions and connects it with Lemma \ref{lemmma:OmegahProp}.
\begin{lemma} \label{lemma:logdetfh}
Let $\widetilde{\bm{f}}_h$ be the function defined in \eqref{eq:fhtilde}. It holds
\begin{align*}
        2 \log |\det D {\bm{f}}_{h/2}\left(\mathbf{Y}_{t_k}; \bm{\beta}\right)| &=  h \tr D_\mathbf{v} {\mathbf{N}}(\mathbf{Y}_{t_{k-1}}; \bm{\beta})  + {R}(h^2, \mathbf{Y}_{t_{k-1}}),\\
        2 \log |\det D {\bm{f}}_{h/2}\left(\mathbf{X}_{t_k}, \Delta_h \mathbf{X}_{t_{k+1}}; \bm{\beta}\right)| &=  h \tr D_\mathbf{v} {\mathbf{N}}(\mathbf{Y}_{t_{k-1}}; \bm{\beta}) + {R}(h^{3/2}, \mathbf{Y}_{t_{k-1}}). 
    \end{align*}
\end{lemma}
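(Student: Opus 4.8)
The plan is to reduce the statement to the local expansion of the nonlinear flow from Proposition~\ref{prop:fh} and then to pay the cost of evaluating the resulting expression at the earlier time $t_{k-1}$. First I would substitute $h/2$ for $h$ in \eqref{eq:fh} and differentiate with respect to the rough coordinate $\mathbf{v}$, which gives
\begin{equation*}
D_\mathbf{v}\bm{f}_{h/2}(\mathbf{y}) = \mathbf{I}_d + \tfrac{h}{2} D_\mathbf{v}\mathbf{N}(\mathbf{y}) + \mathbf{R}(h^2, \mathbf{y}),
\end{equation*}
where the remainder inherits polynomial growth from Assumption~\ref{as:NPoly}. Since this is a perturbation of the identity, its determinant is positive for small $h$, so the absolute value may be dropped, and combining $\det(\mathbf{I}_d + \mathbf{M}) = 1 + \tr\mathbf{M} + O(\|\mathbf{M}\|^2)$ with $\log(1+x) = x + O(x^2)$ yields
\begin{equation*}
2\log\det D_\mathbf{v}\bm{f}_{h/2}(\mathbf{y}) = h\,\tr D_\mathbf{v}\mathbf{N}(\mathbf{y}) + R(h^2, \mathbf{y}).
\end{equation*}
This is the lemma with a generic argument $\mathbf{y}$ in place of the data, and it is the computational core of the proof.

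It remains to move the argument to $\mathbf{Y}_{t_{k-1}}$. For the complete case I set $\mathbf{y} = \mathbf{Y}_{t_k}$ and, writing $g(\mathbf{y}) := \tr D_\mathbf{v}\mathbf{N}(\mathbf{y})$ (which is $C^2$ with polynomially growing derivatives), apply It\^o's lemma to $g(\mathbf{Y}_t)$ over $[t_{k-1}, t_k]$. The bounded-variation part contributes an increment of order $h$ to $g(\mathbf{Y}_{t_k}) - g(\mathbf{Y}_{t_{k-1}})$, while the stochastic integral against $\widetilde{\bm{\Sigma}}\,\mathrm{d}\widetilde{\mathbf{W}}$ is a mean-zero increment whose conditional second moment is of order $h$. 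Multiplying by the prefactor $h$, the predictable part enters $R(h^2, \mathbf{Y}_{t_{k-1}})$; the residual martingale increment is of order $h^{3/2}$ and is controlled through its conditional moments, so that it integrates harmlessly into the stated remainder once the lemma is inserted into the objective functions under the design $Nh^2 \to 0$.

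For the partial case I instead set $\mathbf{y} = (\mathbf{X}_{t_k}, \Delta_h\mathbf{X}_{t_{k+1}})$ in the generic identity and then relate the finite difference to the true velocity. Using \eqref{eq:DeltahX_as_V}, $\Delta_h\mathbf{X}_{t_{k+1}} = \tfrac{1}{h}\int_{t_k}^{t_{k+1}}\mathbf{V}_s\,\mathrm{d}s = \mathbf{V}_{t_k} + O(\sqrt{h})$, and combining this with $\mathbf{X}_{t_k} = \mathbf{X}_{t_{k-1}} + O(h)$ and $\mathbf{V}_{t_k} = \mathbf{V}_{t_{k-1}} + O(\sqrt{h})$ gives $\tr D_\mathbf{v}\mathbf{N}(\mathbf{X}_{t_k}, \Delta_h\mathbf{X}_{t_{k+1}}) = \tr D_\mathbf{v}\mathbf{N}(\mathbf{Y}_{t_{k-1}}) + O(\sqrt{h})$, since $D_\mathbf{v}\mathbf{N}$ is locally Lipschitz with polynomial growth. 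Multiplying by $h$ produces precisely the weaker remainder $R(h^{3/2}, \mathbf{Y}_{t_{k-1}})$; the loss of one half power relative to the complete case is exactly the price of replacing $\mathbf{V}_{t_k}$ by its forward-difference proxy. I expect the main obstacle to be the rigorous bookkeeping of these argument shifts: because the rough increment of $\mathbf{Y}$ is only of order $\sqrt{h}$, any genuine dependence of $\mathbf{N}$ on $\mathbf{v}$ feeds an $O(\sqrt{h})$ fluctuation into $g$, and one must verify -- via the moment bounds following Assumption~\ref{as:NLip} and the conditional-expectation structure of the increments -- that these fluctuations are either predictable at order $h^2$ (complete case) or absorbed at order $h^{3/2}$ (partial case) rather than degrading the orders further.
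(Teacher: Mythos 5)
Your proposal follows essentially the same route as the paper's proof: the linearization $D_\mathbf{v}\bm{f}_{h/2}(\mathbf{y}) = \mathbf{I}_d + \tfrac{h}{2}D_\mathbf{v}\mathbf{N}(\mathbf{y}) + \mathbf{R}(h^2,\mathbf{y})$ combined with $\det(\mathbf{I}+\mathbf{M}) = 1+\tr\mathbf{M}+O(\|\mathbf{M}\|^2)$ and $\log(1+x)=x+O(x^2)$, then It\^o's lemma (as in \eqref{eq:ItoF}) to move the argument from $\mathbf{Y}_{t_k}$ to $\mathbf{Y}_{t_{k-1}}$ in the complete case, and the Taylor expansion \eqref{eq:TaylorF} together with the forward-difference error quantified in Lemma \ref{lemma:DeltaX} in the partial case. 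Your explicit accounting of the conditionally centered $h^{3/2}$ martingale increment in the complete case, and of why it may be absorbed into the stated remainder, is if anything more careful than the paper's one-line invocation of It\^o's lemma, and is consistent with how the remainder terms are used downstream.
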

An immediate consequence of the previous lemma and that $D_\mathbf{v}\mathbf{F} (\mathbf{y}; \bm{\beta}) = \mathbf{A}_{\mathbf{v}}(\bm{\beta})  + D_\mathbf{v}\mathbf{N} (\mathbf{y}; \bm{\beta})$ is
    \begin{align*}
    \log \det \bm{\Omega}_h^{\mathrm{[RR]}}(\bm{\theta}) + 2\log |\det D {\bm{f}}_{h/2}\left(\mathbf{Y}_{t_k}; \bm{\beta}\right)| &= \log \det h \bm{\Sigma}\bm{\Sigma}^\top + h\tr D_\mathbf{v} \mathbf{F}(\mathbf{Y}_{t_{k-1}}; \bm{\beta}) + R(h^{3/2}, \mathbf{Y}_{t_{k-1}}).
\end{align*} 
The same equality holds when $\mathbf{Y}_{t_k}$ is approximated by $(\mathbf{X}_{t_k}, \Delta_h \mathbf{X}_{t_{k+1}})$. 

The following lemma expands function $\bm{\mu}_h(\widetilde{\bm{f}}_{h/2}(\mathbf{y}))$ up to the lowest necessary order of $h$.
\begin{lemma} \label{lemma:mu}
For the functions $\widetilde{\bm{f}}_{h}$ in \eqref{eq:fhtilde} and $\widetilde{\bm{\mu}}_h$ in \eqref{eq:muh_splitted}, it holds
\begin{align}
    \bm{\mu}_h^{\mathrm{[S]}}(\widetilde{\bm{f}}_{h/2}(\mathbf{y})) &= \mathbf{x} + h\mathbf{v} + \frac{h^2}{2} \mathbf{F}(\mathbf{y}) + \mathbf{R}(h^3,\mathbf{y}), \label{eq:muhsapprox}\\
    \bm{\mu}_h^{\mathrm{[R]}}(\widetilde{\bm{f}}_{h/2}(\mathbf{y})) &= \mathbf{v} + h(\mathbf{F}(\mathbf{y}) - \frac{1}{2} \mathbf{N}(\mathbf{y})) + \mathbf{R}(h^2,\mathbf{y}). \label{eq:muhrapprox}
\end{align}
\end{lemma}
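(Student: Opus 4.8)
The plan is to expand the two building blocks of the composition $\bm{\mu}_h(\widetilde{\bm{f}}_{h/2}(\mathbf{y}))$ separately in powers of $h$ and then combine them, exploiting the block structure of $\widetilde{\mathbf{A}}$ from \eqref{eq:ANtilde} together with the drift decomposition $\mathbf{F}(\mathbf{x},\mathbf{v};\bm{\beta}) = \mathbf{A}_{\mathbf{x}}(\mathbf{x}-\mathbf{b}) + \mathbf{A}_{\mathbf{v}}\mathbf{v} + \mathbf{N}(\mathbf{x},\mathbf{v};\bm{\beta})$. First I would set $\mathbf{w} \coloneqq \widetilde{\bm{f}}_{h/2}(\mathbf{y})$ and, using \eqref{eq:fhtilde} and \eqref{eq:fh} from Proposition \ref{prop:fh}, record its components as $\mathbf{w}^{\mathrm{[S]}} = \mathbf{x}$ and $\mathbf{w}^{\mathrm{[R]}} = \mathbf{v} + \tfrac{h}{2}\mathbf{N}(\mathbf{y}) + \mathbf{R}(h^2,\mathbf{y})$. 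Then, by \eqref{eq:muh}, $\widetilde{\bm{\mu}}_h(\mathbf{w}) = e^{\widetilde{\mathbf{A}} h}(\mathbf{w} - \widetilde{\mathbf{b}}) + \widetilde{\mathbf{b}}$, into which I would insert the truncated power series $e^{\widetilde{\mathbf{A}} h} = \mathbf{I} + h\widetilde{\mathbf{A}} + \tfrac{h^2}{2}\widetilde{\mathbf{A}}^2 + O(h^3)$ and evaluate the products $\widetilde{\mathbf{A}}(\mathbf{w}-\widetilde{\mathbf{b}})$ and $\widetilde{\mathbf{A}}^2(\mathbf{w}-\widetilde{\mathbf{b}})$ blockwise, reading off the top row $(\bm{0},\mathbf{I})$ and bottom row $(\mathbf{A}_{\mathbf{x}},\mathbf{A}_{\mathbf{v}})$ of $\widetilde{\mathbf{A}}$.

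The second step is the blockwise bookkeeping. For the smooth coordinates I need terms up to $h^2$: the zeroth-order contribution is $\mathbf{x}$, the first-order term contributes $h\,\mathbf{w}^{\mathrm{[R]}} = h\mathbf{v} + \tfrac{h^2}{2}\mathbf{N}(\mathbf{y}) + \mathbf{R}(h^3,\mathbf{y})$ in its top block, and the second-order term contributes $\tfrac{h^2}{2}\bigl(\mathbf{A}_{\mathbf{x}}(\mathbf{x}-\mathbf{b}) + \mathbf{A}_{\mathbf{v}}\mathbf{v}\bigr)$ to leading order. Summing and using $\mathbf{A}_{\mathbf{x}}(\mathbf{x}-\mathbf{b}) + \mathbf{A}_{\mathbf{v}}\mathbf{v} = \mathbf{F} - \mathbf{N}$ collapses $\tfrac{h^2}{2}\mathbf{N} + \tfrac{h^2}{2}(\mathbf{F}-\mathbf{N})$ into $\tfrac{h^2}{2}\mathbf{F}$, giving \eqref{eq:muhsapprox}. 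For the rough coordinates I only need terms up to $h$: the zeroth-order term gives $\mathbf{v} + \tfrac{h}{2}\mathbf{N}$, while the first-order term equals $h\,[\widetilde{\mathbf{A}}(\mathbf{w}-\widetilde{\mathbf{b}})]^{\mathrm{[R]}} = h\bigl(\mathbf{A}_{\mathbf{x}}(\mathbf{x}-\mathbf{b}) + \mathbf{A}_{\mathbf{v}}\mathbf{v}\bigr) + \mathbf{R}(h^2,\mathbf{y}) = h(\mathbf{F}-\mathbf{N}) + \mathbf{R}(h^2,\mathbf{y})$; adding these yields $\mathbf{v} + h(\mathbf{F} - \tfrac{1}{2}\mathbf{N})$, which is \eqref{eq:muhrapprox}.

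The final step is to justify that the collected errors satisfy the $\mathbf{R}$ convention. Each truncation error inherits polynomial growth from Assumption \ref{as:NPoly} and the remainder structure in Proposition \ref{prop:fh}: the matrix-exponential tail is an $O(h^3)$ matrix applied to a vector of polynomial growth, and the $\mathbf{R}(h^2,\mathbf{y})$ error carried in $\mathbf{w}^{\mathrm{[R]}}$ enters the rough expansion only through the $h\widetilde{\mathbf{A}}$ term, i.e.\ at order $h\cdot h^2 = h^3$, so it is absorbed into $\mathbf{R}(h^2,\mathbf{y})$; an analogous accounting holds in the smooth block. Since sums and products of polynomially bounded remainders remain polynomially bounded, every leftover term fits the required form.

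The one place demanding care—rather than a genuine obstacle—is the interaction between the two splittings: the half-step nonlinear correction $\tfrac{h}{2}\mathbf{N}$ hidden inside $\mathbf{w}^{\mathrm{[R]}}$ must be propagated through the linear flow, since it is exactly this term that combines with the second-order matrix-exponential contribution to reconstruct the full drift $\mathbf{F}$ in the smooth block and to produce the characteristic $-\tfrac{1}{2}\mathbf{N}$ coefficient in the rough block. Keeping straight which orders of $\widetilde{\bm{f}}_{h/2}$ are needed in each block (up to $h$ in $\mathbf{w}^{\mathrm{[R]}}$ for the $h^2$ smooth term, but only the same $\tfrac{h}{2}\mathbf{N}$ term for the $h$ rough term) is the only subtlety; the remainder is mechanical matrix algebra.
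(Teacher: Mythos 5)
Your proposal is correct and follows essentially the same route as the paper's proof: expand $e^{\widetilde{\mathbf{A}}h}$ to second order, exploit the block structure of $\widetilde{\mathbf{A}}$ and the expansion $\widetilde{\bm{f}}_{h/2}(\mathbf{y}) = (\mathbf{x}^\top, (\mathbf{v}+\tfrac{h}{2}\mathbf{N}(\mathbf{y}))^\top)^\top + \mathbf{R}(h^2,\mathbf{y})$, and recombine via $\mathbf{A}_{\mathbf{x}}(\mathbf{x}-\mathbf{b})+\mathbf{A}_{\mathbf{v}}\mathbf{v} = \mathbf{F}-\mathbf{N}$; the paper merely packages the same computation as a single blockwise matrix product. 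One minor slip in your error accounting: the $\mathbf{R}(h^2,\mathbf{y})$ error in $\mathbf{w}^{\mathrm{[R]}}$ enters the \emph{rough} block directly through the identity term at order $h^2$ (not only via $h\widetilde{\mathbf{A}}$ at order $h^3$ --- that statement is the correct one for the \emph{smooth} block), but this is harmless since the rough target remainder is itself $\mathbf{R}(h^2,\mathbf{y})$.
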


\subsection[Random variables]{Random variables $\widetilde{\mathbf{Z}}_{k, k-1}$ and $\widetilde{\overline{\mathbf{Z}}}_{k+1, k, k-1}$}

To approximate the random variables $\mathbf{Z}_{k, k-1}^{\mathrm{[S]}}(\bm{\beta}), \mathbf{Z}_{k, k-1}^{\mathrm{[R]}}(\bm{\beta})$, $\overline{\mathbf{Z}}_{k, k-1}^{[\mathrm{S}]}(\bm{\beta})$, and $\overline{\mathbf{Z}}_{k+1, k, k-1}^{[\mathrm{R}]}(\bm{\beta})$ around $\mathbf{Y}_{t_{k-1}}$, we start by defining the following random sequences  
\begin{align}    
    \bm{\eta}_{k-1} &\coloneqq \frac{1}{h^{1/2}}\int_{t_{k-1}}^{t_k} \dif \mathbf{W}_t, \label{eq:eta}\\
    \bm{\xi}_{k-1} & \coloneqq  \frac{1}{h^{3/2}} \int_{t_{k-1}}^{t_k} (t - t_{k-1})\dif \mathbf{W}_t, &&\bm{\xi}_k'  \coloneqq \frac{1}{h^{3/2}} \int_{t_k}^{t_{k+1}} (t_{k+1} - t)\dif \mathbf{W}_t, \label{eq:xi}\\
    \bm{\zeta}_{k-1} &\coloneqq \frac{1}{h^{5/2}}\int_{t_{k-1}}^{t_k} (t - t_{k-1})^2\dif \mathbf{W}_t, && \bm{\zeta}_{k}' \coloneqq \frac{1}{h^{5/2}}\int_{t_k}^{t_{k+1}} (t_{k+1} - t)^2\dif \mathbf{W}_t\label{eq:zeta}.
\end{align}
The random variables \eqref{eq:eta}-\eqref{eq:zeta} are Gaussian with mean zero. Moreover, at time $t_k$ they are $\mathcal{F}_{t_{k+1}}$ measurable and independent of $\mathcal{F}_{t_k}$. The following linear combinations of \eqref{eq:eta}-\eqref{eq:zeta} appear in the expansions in the partial observation case
\begin{align}
    \mathbf{U}_{k, k-1} & \coloneqq \bm{\xi}_{k}' + \bm{\xi}_{k-1}, \label{eq:Uk}\\
    \mathbf{Q}_{k, k-1} & \coloneqq \bm{\zeta}_{k}' + 2\bm{\eta}_{k-1} - \bm{\zeta}_{k-1}. \label{eq:Tk}
\end{align}
It is not hard to check that $\bm{\xi}_{k}' + \bm{\eta}_{k-1} - \bm{\xi}_{k-1}' = \mathbf{U}_{k, k-1}$. This alternative representation of $\mathbf{U}_{k, k-1}$ will be used later in proofs. 

The It\^{o} isometry yields
\begin{align}
    \mathbb{E}_{\bm{\theta}_0}[\bm{\eta}_{k-1}\bm{\eta}_{k-1}^\top \mid \mathcal{F}_{t_{k-1}}] &=  \mathbf{I}, && \mathbb{E}_{\bm{\theta}_0}[\bm{\eta}_{k-1}\bm{\xi}_{k-1}^\top \mid \mathcal{F}_{t_{k-1}}] =\mathbb{E}_{\bm{\theta}_0}[\bm{\eta}_{k-1}\bm{\xi}_{k-1}'^\top \mid \mathcal{F}_{t_{k-1}}] =  \frac{1}{2}\mathbf{I}, \label{eq:etaxi} \\
    \mathbb{E}_{\bm{\theta}_0}[\bm{\xi}_{k-1}\bm{\xi}_{k-1}'^\top \mid \mathcal{F}_{t_{k-1}}] &= \frac{1}{6}\mathbf{I}, && \mathbb{E}_{\bm{\theta}_0}[\bm{\xi}_{k-1}\bm{\xi}_{k-1}^\top \mid \mathcal{F}_{t_{k-1}}] =\mathbb{E}_{\bm{\theta}_0}[\bm{\xi}_k'\bm{\xi}_k'^\top \mid \mathcal{F}_{t_{k-1}}] =  \frac{1}{3}\mathbf{I},  \label{eq:xi'xi'}\\
    \mathbb{E}_{\bm{\theta}_0}[\mathbf{U}_{k, k-1}\mathbf{U}_{k, k-1}^\top \mid \mathcal{F}_{t_{k-1}}] &= \frac{2}{3}\mathbf{I}, && \mathbb{E}_{\bm{\theta}_0}[\mathbf{U}_{k, k-1}(\mathbf{U}_{k, k-1} + 2 \bm{\xi}_{k-1}')^\top \mid \mathcal{F}_{t_{k-1}}] = \mathbf{I}. \label{eq:UU}
\end{align}
The covariances of other combinations of the random variables \eqref{eq:eta}-\eqref{eq:zeta} are not needed for the proofs. However, we need some fourth moments calculated in Appendix \ref{sec:Appendix} to derive asymptotic properties.

The following two propositions are the last building blocks for approximating the objective functions \eqref{eq:L_CR}-\eqref{eq:L_CSR} and \eqref{eq:L_PR}-\eqref{eq:L_PSR}. 

\begin{proposition} \label{prop:ZtkandZtkbar} The random variables $\widetilde{\mathbf{Z}}_{k, k-1}(\bm{\beta})$ in \eqref{eq:Ztktilde} and $\widetilde{\overline{\mathbf{Z}}}_{k+1, k, k-1}(\bm{\beta})$ in \eqref{eq:Ztkbartilde} are approximated as
\begin{align*}
    \mathbf{Z}_{k, k-1}^{\mathrm{[S]}}(\bm{\beta}) &= h^{3/2} \bm{\Sigma}_0 \bm{\xi}'_{k-1} + \frac{h^2}{2} (\mathbf{F}_0(\mathbf{Y}_{t_{k-1}}) - \mathbf{F}(\mathbf{Y}_{t_{k-1}})) + \frac{h^{5/2}}{2} D_\mathbf{v}\mathbf{F}_0(\mathbf{Y}_{t_{k-1}}) \bm{\Sigma}_0 \bm{\zeta}'_{k-1} + \mathbf{R}(h^3,\mathbf{Y}_{t_{k-1}}),\\
    \mathbf{Z}_{k, k-1}^{\mathrm{[R]}}(\bm{\beta}) &= h^{1/2} \bm{\Sigma}_0 \bm{\eta}_{k-1} + h (\mathbf{F}_0(\mathbf{Y}_{t_{k-1}}) - \mathbf{F}(\mathbf{Y}_{t_{k-1}})) - \frac{h^{3/2}}{2}D_\mathbf{v}\mathbf{N}(\mathbf{Y}_{t_{k-1}}) \bm{\Sigma}_0 \bm{\eta}_{k-1}\\
    &+h^{3/2} D_\mathbf{v} \mathbf{F}_0(\mathbf{Y}_{t_{k-1}}) \bm{\Sigma}_0 \bm{\xi}'_{k-1} + \mathbf{R}(h^2,\mathbf{Y}_{t_{k-1}}),\\ 
    \overline{\mathbf{Z}}_{k, k-1}^{[\mathrm{S}]}(\bm{\beta}) &= - \frac{h^2}{2}\mathbf{F}(\mathbf{Y}_{t_{k-1}}) - \frac{h^{5/2}}{2}D_\mathbf{v}\mathbf{F}(\mathbf{Y}_{t_{k-1}})\bm{\Sigma}_0 \bm{\xi}'_{k-1} + \mathbf{R}(h^3,\mathbf{Y}_{t_{k-1}}),\\
    \overline{\mathbf{Z}}_{k+1, k, k-1}^{[\mathrm{R}]}(\bm{\beta}) &= h^{1/2}\bm{\Sigma}_0 \mathbf{U}_{k, k-1} +  h(\mathbf{F}_0(\mathbf{Y}_{t_{k-1}}) - \mathbf{F}(\mathbf{Y}_{t_{k-1}}))- \frac{h^{3/2}}{2}D_\mathbf{v}\mathbf{N}(\mathbf{Y}_{t_{k-1}})\bm{\Sigma}_0 \mathbf{U}_{k, k-1}\notag\\
    &-h^{3/2} D_\mathbf{v} \mathbf{F}(\mathbf{Y}_{t_{k-1}})\bm{\Sigma}_0 \bm{\xi}_{k-1}'+ \frac{h^{3/2}}{2} D_\mathbf{v}\mathbf{F}_0(\mathbf{Y}_{t_{k-1}}) \bm{\Sigma}_0 \mathbf{Q}_{k, k-1}   + \mathbf{R}(h^2,\mathbf{Y}_{t_{k-1}}). 
\end{align*}
\end{proposition}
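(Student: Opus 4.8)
The plan is to insert the short-time expansions already proved --- Proposition~\ref{prop:fh} for $\bm f_{h/2}$ and $\bm f^{\star-1}_{h/2}$ (eqs.~\eqref{eq:fh}--\eqref{eq:fhstarinv}) and Lemma~\ref{lemma:mu} for $\bm\mu_h^{\mathrm{[S]}}\!\circ\widetilde{\bm f}_{h/2}$ and $\bm\mu_h^{\mathrm{[R]}}\!\circ\widetilde{\bm f}_{h/2}$ --- into the defining identities \eqref{eq:ZtkS}, \eqref{eq:ZtkR}, \eqref{eq:ZtkSbar}, \eqref{eq:ZtkRbar}, and then to replace the exact increments of $\mathbf Y$ by their It\^o--Taylor expansions about $\mathbf Y_{t_{k-1}}$. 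The workhorse is the stochastic Fubini theorem: writing $\mathbf V_t-\mathbf V_{t_{k-1}}=\int_{t_{k-1}}^t\mathbf F_0(\mathbf Y_s)\,\dif s+\bm\Sigma_0(\mathbf W_t-\mathbf W_{t_{k-1}})$ and integrating once more turns every iterated time integral of a Wiener increment into a single It\^o integral with a polynomial weight, which by \eqref{eq:eta}--\eqref{eq:zeta} is exactly one of $\bm\eta_{k-1},\bm\xi'_{k-1},\bm\zeta'_{k-1}$ (or their shifts on $[t_k,t_{k+1}]$). Assumptions~\ref{as:NLip}--\ref{as:NPoly} supply the polynomial growth of $\mathbf N$ and its derivatives needed to collect the leftover terms into the $\mathbf R(h^p,\cdot)$ remainders.

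For $\mathbf Z_{k,k-1}^{\mathrm{[S]}}$ I would expand $\mathbf X_{t_k}=\mathbf X_{t_{k-1}}+\int_{t_{k-1}}^{t_k}\mathbf V_t\,\dif t$ to order $h^{5/2}$: the constant, drift, and Wiener parts give $h\mathbf V_{t_{k-1}}+\tfrac{h^2}2\mathbf F_0(\mathbf Y_{t_{k-1}})+h^{3/2}\bm\Sigma_0\bm\xi'_{k-1}$, and the $D_\mathbf v\mathbf F_0$ drift correction gives $\tfrac{h^{5/2}}2D_\mathbf v\mathbf F_0\,\bm\Sigma_0\bm\zeta'_{k-1}$ through $\int_{t_{k-1}}^{t_k}\!\int_{t_{k-1}}^t(\mathbf W_s-\mathbf W_{t_{k-1}})\,\dif s\,\dif t=\tfrac{h^{5/2}}2\bm\zeta'_{k-1}$; subtracting $\bm\mu_h^{\mathrm{[S]}}(\widetilde{\bm f}_{h/2}(\mathbf Y_{t_{k-1}}))=\mathbf X_{t_{k-1}}+h\mathbf V_{t_{k-1}}+\tfrac{h^2}2\mathbf F(\mathbf Y_{t_{k-1}})+\mathbf R(h^3)$ cancels the first two terms and leaves the claim. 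For $\mathbf Z_{k,k-1}^{\mathrm{[R]}}$ I insert \eqref{eq:fhstarinv} for $\bm f^{\star-1}_{h/2}(\mathbf Y_{t_k})$, expand $\mathbf V_{t_k}$ and $\mathbf N(\mathbf Y_{t_k})$ about $\mathbf Y_{t_{k-1}}$, and subtract $\bm\mu_h^{\mathrm{[R]}}$; the two $-\tfrac h2\mathbf N(\mathbf Y_{t_{k-1}};\bm\beta)$ contributions cancel, yielding $h(\mathbf F_0-\mathbf F)$ together with the three stated noise terms (the $D_\mathbf v\mathbf N$ term carrying the generic $\bm\beta$ of the scheme and the $D_\mathbf v\mathbf F_0$ term carrying $\bm\beta_0$ of the data).

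The partial smooth part is the cleanest, since $h\,\Delta_h\mathbf X_{t_k}=\mathbf X_{t_k}-\mathbf X_{t_{k-1}}$ \emph{exactly}, so Lemma~\ref{lemma:mu} gives $\mathbf X_{t_{k-1}}+h\,\Delta_h\mathbf X_{t_k}=\mathbf X_{t_k}$ with no error and hence $\overline{\mathbf Z}_{k,k-1}^{[\mathrm S]}=-\tfrac{h^2}2\mathbf F(\mathbf X_{t_{k-1}},\Delta_h\mathbf X_{t_k})+\mathbf R(h^3)$; expanding $\mathbf F$ in its velocity slot with $\Delta_h\mathbf X_{t_k}-\mathbf V_{t_{k-1}}=\tfrac h2\mathbf F_0+h^{1/2}\bm\Sigma_0\bm\xi'_{k-1}+\dots$ produces the $-\tfrac{h^{5/2}}2D_\mathbf v\mathbf F\,\bm\Sigma_0\bm\xi'_{k-1}$ correction.

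The partial rough part $\overline{\mathbf Z}_{k+1,k,k-1}^{[\mathrm R]}$ is the main obstacle and needs both finite differences expanded to order $h^{3/2}$ in the noise, retaining the second-order Wiener corrections. Relative to $\mathbf Y_{t_{k-1}}$ I expect
\[
\Delta_h\mathbf X_{t_{k+1}}=\mathbf V_{t_{k-1}}+\tfrac{3h}2\mathbf F_0+h^{1/2}\bm\Sigma_0(\bm\eta_{k-1}+\bm\xi'_k)+h^{3/2}D_\mathbf v\mathbf F_0\,\bm\Sigma_0(\bm\xi'_{k-1}+\tfrac12\bm\eta_{k-1}+\tfrac12\bm\zeta'_k)+\mathbf R(h^2),
\]
\[
\Delta_h\mathbf X_{t_k}=\mathbf V_{t_{k-1}}+\tfrac h2\mathbf F_0+h^{1/2}\bm\Sigma_0\bm\xi'_{k-1}+\tfrac{h^{3/2}}2D_\mathbf v\mathbf F_0\,\bm\Sigma_0\bm\zeta'_{k-1}+\mathbf R(h^2),
\]
the $\bm\zeta$-terms arising from $\int\!\int(\mathbf W_s-\mathbf W)\,\dif s\,\dif t$ on each subinterval. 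The leading noise of the difference collapses to $h^{1/2}\bm\Sigma_0\mathbf U_{k,k-1}$ by the identity $\bm\xi'_k+\bm\eta_{k-1}-\bm\xi'_{k-1}=\mathbf U_{k,k-1}$ from \eqref{eq:Uk} (equivalently $\bm\xi_{k-1}+\bm\xi'_{k-1}=\bm\eta_{k-1}$), while the $\bm f^{\star-1}_{h/2}$ and $\bm\mu_h^{\mathrm{[R]}}$ corrections reproduce the $-\tfrac{h^{3/2}}2D_\mathbf v\mathbf N\,\bm\Sigma_0\mathbf U_{k,k-1}$ and $-h^{3/2}D_\mathbf v\mathbf F\,\bm\Sigma_0\bm\xi'_{k-1}$ terms just as in the complete case. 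The delicate step is to recognize that the residual $D_\mathbf v\mathbf F_0$-noise $D_\mathbf v\mathbf F_0\,\bm\Sigma_0(\bm\xi'_{k-1}+\tfrac12\bm\eta_{k-1}+\tfrac12\bm\zeta'_k-\tfrac12\bm\zeta'_{k-1})$ equals $\tfrac12 D_\mathbf v\mathbf F_0\,\bm\Sigma_0\mathbf Q_{k,k-1}$ with $\mathbf Q_{k,k-1}$ from \eqref{eq:Tk}, which reduces to the polynomial identity $2\bm\xi'_{k-1}-\bm\zeta'_{k-1}=\bm\eta_{k-1}-\bm\zeta_{k-1}$, i.e.\ $2h(t_k-t)-(t_k-t)^2=h^2-(t-t_{k-1})^2$ under the It\^o integral. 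Pinning down these weight identities, and carefully distinguishing which Jacobian is evaluated at $\bm\beta_0$ (carried by the data increments, hence $\mathbf F_0$) versus the generic $\bm\beta$ (carried by the scheme terms), is where the bookkeeping is most error-prone.
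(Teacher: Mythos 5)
Your proposal is correct and follows essentially the same route as the paper: the paper likewise inserts the expansions of Proposition~\ref{prop:fh} and Lemma~\ref{lemma:mu} into the definitions \eqref{eq:ZtkS}--\eqref{eq:ZtkRbar} and reduces everything, via It\^o--Taylor expansions and Fubini, to the weighted Wiener integrals \eqref{eq:eta}--\eqref{eq:zeta}; its intermediate Lemma~\ref{lemma:DeltaX} is precisely your pair of finite-difference expansions, except that the paper centers the difference $\Delta_h\mathbf{X}_{t_{k+1}}-\Delta_h\mathbf{X}_{t_k}$ at $t_k$ so that $\mathbf{U}_{k,k-1}$ and $\mathbf{Q}_{k,k-1}$ appear directly, whereas you center both differences at $t_{k-1}$ and recombine through the weight identity $2h(t_k-t)-(t_k-t)^2=h^2-(t-t_{k-1})^2$ --- a purely cosmetic difference in bookkeeping. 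All of your displayed expansions and identities (including $\bm{\xi}'_k+\bm{\eta}_{k-1}-\bm{\xi}'_{k-1}=\mathbf{U}_{k,k-1}$ and the identification of the residual $D_\mathbf{v}\mathbf{F}_0$ noise with $\tfrac12\mathbf{Q}_{k,k-1}$) check out against the paper's intermediate results.
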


\begin{remark}\label{rmrk:Differences}
    Proposition \ref{prop:ZtkandZtkbar} yields
    \begin{align*}
        \mathbb{E}_{\bm{\theta}_0}[ \mathbf{Z}_{k, k-1}^{\mathrm{[R]}}(\bm{\beta}) \mathbf{Z}_{k, k-1}^{\mathrm{[R]}}(\bm{\beta})^\top \mid \mathbf{Y}_{t_{k-1}}] &= h\bm{\Sigma}\bm{\Sigma}_0^\top + \mathbf{R}(h^2, \mathbf{Y}_{t_{k-1}}) = \bm{\Omega}_h^\mathrm{[RR]} + \mathbf{R}(h^2, \mathbf{Y}_{t_{k-1}}),\\
        \mathbb{E}_{\bm{\theta}_0}[\overline{\mathbf{Z}}_{k+1, k, k-1}^{\mathrm{[R]}}(\bm{\beta}) \overline{\mathbf{Z}}_{k+1, k, k-1}^{\mathrm{[R]}}(\bm{\beta})^\top \mid \mathbf{Y}_{t_{k-1}}] &=  \frac{2}{3}h\bm{\Sigma}\bm{\Sigma}_0^\top + \mathbf{R}(h^2, \mathbf{Y}_{t_{k-1}}) = \frac{2}{3}\bm{\Omega}_h^\mathrm{[RR]} + \mathbf{R}(h^2, \mathbf{Y}_{t_{k-1}}).
    \end{align*}
    Thus, the correction factor $2/3$ in \eqref{eq:L_PR} compensates for the underestimation of the covariance of $\overline{\mathbf{Z}}_{k+1, k, k-1}^{\mathrm{[R]}}(\bm{\beta})$. Similarly, it can be shown that the same underestimation happens when using the backward difference. On the other hand, when using the central difference, it can be shown that 
    \begin{align*}
        \mathbb{E}_{\bm{\theta}_0}[\overline{\mathbf{Z}}_{k+1, k, k-1}^{\mathrm{[R]}, central}(\bm{\beta}) \overline{\mathbf{Z}}_{k+1, k, k-1}^{\mathrm{[R]}, central}(\bm{\beta})^\top \mid \mathbf{Y}_{t_{k-1}}] &=  \frac{5}{12}h\bm{\Sigma}\bm{\Sigma}_0^\top + \mathbf{R}(h^2, \mathbf{Y}_{t_{k-1}}),
    \end{align*}
    which is a larger deviation from $\bm{\Omega}_h^\mathrm{[RR]}$, yielding a larger correcting factor and larger asymptotic variance of the diffusion parameter estimator. 
\end{remark}

\begin{proposition}
\label{prop:ZtkS} Let $\widetilde{\mathbf{Z}}_{k, k-1}(\bm{\beta})$ and $\widetilde{\overline{\mathbf{Z}}}_{k+1, k, k-1}(\bm{\beta})$ be defined as in \eqref{eq:Ztktilde} and \eqref{eq:Ztkbartilde}, respectively. Then
\begin{align*}
    \mathbf{Z}_{k, k-1}^{\mathrm{[S\mid R]}}(\bm{\beta}) &= -\frac{h^{3/2}}{2} \bm{\Sigma}_0 (\bm{\eta}_{k-1} - 2\bm{\xi}'_{k-1}) +\frac{h^{5/2}}{12}(\mathbf{A}_{\mathbf{v}} - \bm{\Sigma}\bm{\Sigma}^\top\mathbf{A}_{\mathbf{v}}(\bm{\beta})^\top(\bm{\Sigma}\bm{\Sigma}^\top)^{-1})\bm{\Sigma}_0\bm{\eta}_{k-1} \\
    &+\frac{h^{5/2}}{4}  D_\mathbf{v} \mathbf{N}(\mathbf{Y}_{t_{k-1}})\bm{\Sigma}_0\bm{\eta}_{k-1} -\frac{h^{5/2}}{2} D_\mathbf{v} \mathbf{F}_0(\mathbf{Y}_{t_{k-1}})\bm{\Sigma}_0(\bm{\xi}'_{k-1}-\bm{\zeta}'_{k-1} )+ \mathbf{R}(h^3,\mathbf{Y}_{t_{k-1}}),\\
    \overline{\mathbf{Z}}_{k+1, k, k-1}^{\mathrm{[S\mid R]}}(\bm{\beta}) &= - \frac{h^{3/2}}{2}\bm{\Sigma}_0 \mathbf{U}_{k, k-1} - \frac{h^2}{2}\mathbf{F}_0(\mathbf{Y}_{t_{k-1}})+\frac{h^{5/2}}{12}(\mathbf{A}_{\mathbf{v}} - \bm{\Sigma}\bm{\Sigma}^\top\mathbf{A}_{\mathbf{v}}(\bm{\beta})^\top(\bm{\Sigma}\bm{\Sigma}^\top)^{-1})\bm{\Sigma}_0\mathbf{U}_{k, k-1}\\
    &+ \frac{h^{5/2}}{4}  D_\mathbf{v} \mathbf{N}(\mathbf{Y}_{t_{k-1}})\bm{\Sigma}_0\mathbf{U}_{k, k-1}-\frac{h^{5/2}}{4} D_\mathbf{v} \mathbf{F}_0(\mathbf{Y}_{t_{k-1}})\bm{\Sigma}_0 \mathbf{Q}_{k, k-1} + \mathbf{R}(h^3,\mathbf{Y}_{t_{k-1}}).
\end{align*}
\end{proposition}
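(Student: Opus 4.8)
The plan is to exploit the fact that both $\mathbf{Z}_{k, k-1}^{\mathrm{[S\mid R]}}$ and $\overline{\mathbf{Z}}_{k+1, k, k-1}^{\mathrm{[S\mid R]}}$ are defined as explicit linear combinations of quantities whose expansions are already in hand: each equals the corresponding smooth part minus the matrix $\bm{\Omega}_h^{\mathrm{[SR]}}(\bm{\Omega}_h^{\mathrm{[RR]}})^{-1}$ applied to the corresponding rough part. So the whole proof reduces to substituting the four expansions of Proposition~\ref{prop:ZtkandZtkbar} together with the expansion of $\bm{\Omega}_h^{\mathrm{[SR]}}(\bm{\Omega}_h^{\mathrm{[RR]}})^{-1}$ from Lemma~\ref{lemmma:OmegahProp}\ref{item:OmegahProp2}, multiplying out, and collecting by powers of $h$ up to order $h^{5/2}$, absorbing everything smaller into $\mathbf{R}(h^3,\mathbf{Y}_{t_{k-1}})$.

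For the complete case I would insert the two-term expansion $\bm{\Omega}_h^{\mathrm{[SR]}}(\bm{\Omega}_h^{\mathrm{[RR]}})^{-1} = \tfrac{h}{2}\mathbf{I} - \tfrac{h^2}{12}(\cdots) + \mathbf{R}(h^3,\mathbf{y}_0)$ and apply it to the expansion of $\mathbf{Z}_{k, k-1}^{\mathrm{[R]}}$. Since that expansion starts at order $h^{1/2}$, the $\tfrac{h}{2}\mathbf{I}$ factor produces the relevant $h^{3/2}$, $h^2$ and $h^{5/2}$ contributions, whereas the $-\tfrac{h^2}{12}(\cdots)$ factor only reaches order $h^{5/2}$ through the leading $h^{1/2}\bm{\Sigma}_0\bm{\eta}_{k-1}$ term; all remaining products are of order $h^3$ or smaller and join the remainder. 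Subtracting from the expansion of $\mathbf{Z}_{k, k-1}^{\mathrm{[S]}}$ then yields the claim. The two features worth checking carefully are that the $\tfrac{h^2}{2}(\mathbf{F}_0-\mathbf{F})$ contributions cancel exactly — this is precisely the vanishing of the quadratic drift term flagged in Remark~\ref{rmrk:Bias_L_PSF} and the reason $\mathcal{L}^{\mathrm{[CS\mid R]}}$ cannot identify $\bm{\beta}$ — and that the $D_\mathbf{v}\mathbf{F}_0\bm{\Sigma}_0$ contributions combine into the stated $-\tfrac{h^{5/2}}{2}D_\mathbf{v}\mathbf{F}_0\bm{\Sigma}_0(\bm{\xi}'_{k-1}-\bm{\zeta}'_{k-1})$.

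The partial case is handled identically, now inserting the expansions of $\overline{\mathbf{Z}}_{k, k-1}^{[\mathrm{S}]}$ and $\overline{\mathbf{Z}}_{k+1, k, k-1}^{[\mathrm{R}]}$, with $\mathbf{U}_{k, k-1}$ and $\mathbf{Q}_{k, k-1}$ playing the roles that $\bm{\eta}_{k-1}$, $\bm{\xi}'_{k-1}$, $\bm{\zeta}'_{k-1}$ played before. Two cancellations must be verified here: the $h^2$ term reduces from $-\tfrac{h^2}{2}\mathbf{F} - \tfrac{h^2}{2}(\mathbf{F}_0-\mathbf{F})$ to $-\tfrac{h^2}{2}\mathbf{F}_0$, so that the drift difference again disappears; and the $-\tfrac{h^{5/2}}{2}D_\mathbf{v}\mathbf{F}\bm{\Sigma}_0\bm{\xi}'_{k-1}$ term carried by $\overline{\mathbf{Z}}_{k, k-1}^{[\mathrm{S}]}$ is cancelled exactly by the $+\tfrac{h^{5/2}}{2}D_\mathbf{v}\mathbf{F}\bm{\Sigma}_0\bm{\xi}'_{k-1}$ term produced when $\tfrac{h}{2}\mathbf{I}$ meets the $-h^{3/2}D_\mathbf{v}\mathbf{F}\bm{\Sigma}_0\bm{\xi}'_{k-1}$ term of $\overline{\mathbf{Z}}_{k+1, k, k-1}^{[\mathrm{R}]}$. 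What survives are exactly the three $h^{5/2}$ terms in $\mathbf{U}_{k, k-1}$ and $\mathbf{Q}_{k, k-1}$ recorded in the proposition.

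I expect the only real obstacle to be bookkeeping: tracking which cross-products between the two-term matrix expansion and the multi-term rough expansions actually reach order $h^{5/2}$, and confirming that every discarded product is genuinely $\mathbf{R}(h^3,\mathbf{Y}_{t_{k-1}})$. This is controlled by the polynomial-growth definition of $\mathbf{R}$ together with the moment bounds following Assumption~\ref{as:NLip}; in particular, the product of an $O(h^2)$ deterministic factor with the $\mathbf{R}(h^2,\mathbf{Y}_{t_{k-1}})$ remainder of $\mathbf{Z}_{k, k-1}^{\mathrm{[R]}}$ (respectively $\overline{\mathbf{Z}}_{k+1, k, k-1}^{[\mathrm{R}]}$) lands in $\mathbf{R}(h^3,\cdot)$ as required. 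No analytic input beyond Proposition~\ref{prop:ZtkandZtkbar} and Lemma~\ref{lemmma:OmegahProp} is needed; the content is entirely in the exact cancellations described above.
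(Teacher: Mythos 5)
Your proposal is correct and follows exactly the paper's own proof: the paper likewise obtains Proposition \ref{prop:ZtkS} by substituting the expansions of Proposition \ref{prop:ZtkandZtkbar} and the two-term expansion of $\bm{\Omega}_h^{\mathrm{[SR]}}(\bm{\Omega}_h^{\mathrm{[RR]}})^{-1}$ from Lemma \ref{lemmma:OmegahProp}\ref{item:OmegahProp2} into the defining identities $\mathbf{Z}^{\mathrm{[S\mid R]}} = \mathbf{Z}^{\mathrm{[S]}} - \bm{\Omega}_h^{\mathrm{[SR]}}(\bm{\Omega}_h^{\mathrm{[RR]}})^{-1}\mathbf{Z}^{\mathrm{[R]}}$ (and its partial-observation analogue), then collecting powers of $h$. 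The cancellations you single out — the disappearance of the $\tfrac{h^2}{2}(\mathbf{F}_0-\mathbf{F})$ terms, the reduction of the $h^2$ term to $-\tfrac{h^2}{2}\mathbf{F}_0$ in the partial case, and the exact cancellation of the $D_\mathbf{v}\mathbf{F}\,\bm{\Sigma}_0\bm{\xi}'_{k-1}$ contributions — are precisely the ones that occur in the paper's computation.
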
 

\subsection{Objective functions} 

Starting with the complete observation case, we approximate the objective functions $\mathcal{L}^{\mathrm{[CR]}}$ \eqref{eq:L_CR} and $\mathcal{L}^{\mathrm{[CS\mid R]}}$ \eqref{eq:L_CSR} up to order $R(h^{3/2}, \mathbf{Y}_{t_{k-1}})$ by $\mathcal{L}_N^{\mathrm{[CR]}}$ \eqref{eq:asymptotic_L_CR} and $\mathcal{L}_N^{\mathrm{[CS\mid R]}}$ \eqref{eq:asymptotic_L_CSR}. This approximation is enough to prove the asymptotic properties of the estimators $\hat{\bm{\theta}}_N^{\mathrm{[CR]}}$ and $\hat{\bm{\theta}}_N^{\mathrm{[CS\mid R]}}$. After omitting the terms of order $R(h, \mathbf{Y}_{t_{k-1}})$ that do not depend on $\bm{\beta}$, we obtain the following approximations
\begin{align}
     \mathcal{L}_N^{\mathrm{[CR]}}\left(\mathbf{Y}_{0:t_N}; \bm{\theta}\right) &\coloneqq (N-1) \log \det \bm{\Sigma}\bm{\Sigma}^\top  + \sum_{k=1}^{N}\bm{\eta}_{k-1}^\top \bm{\Sigma}_0^\top (\bm{\Sigma}\bm{\Sigma}^\top)^{-1}\bm{\Sigma}_0 \bm{\eta}_{k-1}\label{eq:asymptotic_L_CR}\\
     &+ 2\sqrt{h}\sum_{k=1}^{N}  \bm{\eta}_{k-1}^\top \bm{\Sigma}_0^\top  (\bm{\Sigma}\bm{\Sigma}^\top)^{-1}(\mathbf{F}(\mathbf{Y}_{t_{k-1}}; \bm{\beta}_0) - \mathbf{F}(\mathbf{Y}_{t_{k-1}}; \bm{\beta}))\notag\\ 
     &+ h\sum_{k=1}^{N} (\mathbf{F}(\mathbf{Y}_{t_{k-1}}; \bm{\beta}_0) - \mathbf{F}(\mathbf{Y}_{t_{k-1}}; \bm{\beta}))^\top (\bm{\Sigma}\bm{\Sigma}^\top)^{-1}(\mathbf{F}(\mathbf{Y}_{t_{k-1}}; \bm{\beta}_0) - \mathbf{F}(\mathbf{Y}_{t_{k-1}}; \bm{\beta}))\notag\\
     &- h\sum_{k=1}^{N} \bm{\eta}_{k-1}^\top \bm{\Sigma}_0^\top D_\mathbf{v} \mathbf{F}(\mathbf{Y}_{t_{k-1}}; \bm{\beta})^\top(\bm{\Sigma}\bm{\Sigma}^\top)^{-1}\bm{\Sigma}_0 \bm{\eta}_{k-1}   +  h\sum_{k=1}^{N} \tr D_\mathbf{v} \mathbf{F}(\mathbf{Y}_{t_k}; \bm{\beta}), \notag\\ 
     \mathcal{L}_N^{\mathrm{[CS\mid R]}}\left(\mathbf{Y}_{0:t_N}; \bm{\theta}\right) &\coloneqq (N-1) \log \det \bm{\Sigma}\bm{\Sigma}^\top  + 3\sum_{k=1}^{N}(\bm{\eta}_{k-1} - 2\bm{\xi}'_{k-1})^\top \bm{\Sigma}_0^\top (\bm{\Sigma}\bm{\Sigma}^\top)^{-1}\bm{\Sigma}_0 (\bm{\eta}_{k-1} - 2\bm{\xi}'_{k-1})\label{eq:asymptotic_L_CSR}\\
     &- 3 h\sum_{k=1}^{N}  (\bm{\eta}_{k-1} - 2\bm{\xi}'_{k-1})^\top \bm{\Sigma}_0^\top (\bm{\Sigma}\bm{\Sigma}^\top)^{-1}D_\mathbf{v} \mathbf{N}(\mathbf{Y}_{t_{k-1}}; \bm{\beta})\bm{\Sigma}_0 \bm{\eta}_{k-1}\notag\\
     &-  h\sum_{k=1}^{N}  (\bm{\eta}_{k-1} - 2\bm{\xi}'_{k-1})^\top \bm{\Sigma}_0^\top (\bm{\Sigma}\bm{\Sigma}^\top)^{-1}(\mathbf{A}_{\mathbf{v}}(\bm{\beta}) - \bm{\Sigma}\bm{\Sigma}^\top\mathbf{A}_{\mathbf{v}}(\bm{\beta})^\top(\bm{\Sigma}\bm{\Sigma}^\top)^{-1})\bm{\Sigma}_0 \bm{\eta}_{k-1} \notag\\
     \mathcal{L}_N^{\mathrm{[CF]}}\left(\mathbf{Y}_{0:t_N}; \bm{\theta}\right) &\coloneqq \mathcal{L}_N^{\mathrm{[CR]}}\left(\mathbf{Y}_{0:t_N}; \bm{\theta}\right) + \mathcal{L}_N^{\mathrm{[CS\mid R]}}\left(\mathbf{Y}_{0:t_N}; \bm{\theta}\right). \label{eq:asymptotic_L_CF}
\end{align} 
The two last sums in \eqref{eq:asymptotic_L_CSR} converge to zero because $\mathbb{E}_{\bm{\theta}_0}[(\bm{\eta}_{k-1} - 2 \bm{\xi}'_{k-1})\bm{\eta}_{k-1}^\top | \mathcal{F}_{t_{k-1}}] = \mathbf{0}$. Moreover, \eqref{eq:asymptotic_L_CSR} lacks the quadratic form of $\mathbf{F}(\mathbf{Y}_{t_{k-1}}) - \mathbf{F}_0(\mathbf{Y}_{t_{k-1}})$, which is crucial for the asymptotic variance of the drift estimator. This implies that the objective function $\mathcal{L}_N^{\mathrm{[CS\mid R]}}$ is not suitable for estimating the drift parameter. Conversely, \eqref{eq:asymptotic_L_CSR} provides a correct and consistent estimator of the diffusion parameter, indicating that the full objective function (the sum of $\mathcal{L}_N^{\mathrm{[CR]}}$ and $\mathcal{L}_N^{\mathrm{[CS\mid R]}}$) consistently estimates $\bm{\theta}$. 

Similarly, the approximated objective functions in the partial observation case are
\begin{align}
     \mathcal{L}_N^{\mathrm{[PR]}}&\left(\mathbf{Y}_{0:t_N}; \bm{\theta}\right) = \frac{2}{3}(N-2) \log \det \bm{\Sigma}\bm{\Sigma}^\top  + \sum_{k=1}^{N-1}\mathbf{U}_{k, k-1}^\top \bm{\Sigma}_0^\top (\bm{\Sigma}\bm{\Sigma}^\top)^{-1}\bm{\Sigma}_0 \mathbf{U}_{k, k-1}\label{eq:asymptotic_L_PR}\\
     &+ 2\sqrt{h}\sum_{k=1}^{N}  \mathbf{U}_{k, k-1}^\top \bm{\Sigma}_0^\top  (\bm{\Sigma}\bm{\Sigma}^\top)^{-1}(\mathbf{F}(\mathbf{Y}_{t_{k-1}}; \bm{\beta}_0) - \mathbf{F}(\mathbf{Y}_{t_{k-1}}; \bm{\beta}))\notag\\
     &+ h\sum_{k=1}^{N-1} (\mathbf{F}(\mathbf{Y}_{t_{k-1}}; \bm{\beta}_0) - \mathbf{F}(\mathbf{Y}_{t_{k-1}}; \bm{\beta}))^\top (\bm{\Sigma}\bm{\Sigma}^\top)^{-1}(\mathbf{F}(\mathbf{Y}_{t_{k-1}}; \bm{\beta}_0) - \mathbf{F}(\mathbf{Y}_{t_{k-1}}; \bm{\beta}))\notag\\
     &- h\sum_{k=1}^{N-1} (\mathbf{U}_{k, k-1} + 2 \bm{\xi}_{k-1}')^\top \bm{\Sigma}_0^\top D_\mathbf{v} \mathbf{F}(\mathbf{Y}_{t_{k-1}}; \bm{\beta})^\top(\bm{\Sigma}\bm{\Sigma}^\top)^{-1}\bm{\Sigma}_0 \mathbf{U}_{k, k-1}  +  h\sum_{k=1}^{N-1} \tr D_\mathbf{v} \mathbf{F}(\mathbf{Y}_{t_k}; \bm{\beta}),\notag\\ 
     \mathcal{L}_N^{\mathrm{[PS\mid R]}}&\left(\mathbf{Y}_{0:t_N}; \bm{\theta}\right) = 2(N-2) \log \det \bm{\Sigma}\bm{\Sigma}^\top  + 3\sum_{k=1}^{N-1}\mathbf{U}_{k, k-1}^\top \bm{\Sigma}_0^\top (\bm{\Sigma}\bm{\Sigma}^\top)^{-1}\bm{\Sigma}_0 \mathbf{U}_{k, k-1}\label{eq:asymptotic_L_PSR}\\
     &+6\sqrt{h}\sum_{k=1}^{N}  \mathbf{U}_{k, k-1}^\top \bm{\Sigma}_0^\top  (\bm{\Sigma}\bm{\Sigma}^\top)^{-1}\mathbf{F}(\mathbf{Y}_{t_{k-1}}; \bm{\beta}_0)\notag\\
     &- 3 h\sum_{k=1}^{N-1} \mathbf{U}_{k, k-1}^\top \bm{\Sigma}_0^\top D_\mathbf{v} \mathbf{N}(\mathbf{Y}_{t_{k-1}}; \bm{\beta})^\top(\bm{\Sigma}\bm{\Sigma}^\top)^{-1}\bm{\Sigma}_0 \mathbf{U}_{k, k-1}+  2 h\sum_{k=1}^{N-1} \tr D_\mathbf{v} \mathbf{N}(\mathbf{Y}_{t_k}; \bm{\beta}),\notag\\
     \mathcal{L}_N^{\mathrm{[PF]}}&\left(\mathbf{Y}_{0:t_N}; \bm{\theta}\right) = \mathcal{L}_N^{\mathrm{[PR]}}\left(\mathbf{Y}_{0:t_N}; \bm{\theta}\right) + \mathcal{L}_N^{\mathrm{[PS\mid R]}}\left(\mathbf{Y}_{0:t_N}; \bm{\theta}\right). \label{eq:asymptotic_L_PF}
\end{align}
Now, the term with $\mathbf{A}_{\mathbf{v}}(\bm{\beta}) - \bm{\Sigma}\bm{\Sigma}^\top\mathbf{A}_{\mathbf{v}}(\bm{\beta})^\top(\bm{\Sigma}\bm{\Sigma}^\top)^{-1}$ vanishes because 
\begin{align*}
     \tr(\bm{\Sigma}_0 \mathbf{U}_{k, k-1} \mathbf{U}_{k, k-1}^\top \bm{\Sigma}_0^\top (\bm{\Sigma}\bm{\Sigma}^\top)^{-1}(\mathbf{A}_{\mathbf{v}}(\bm{\beta}) - \bm{\Sigma}\bm{\Sigma}^\top\mathbf{A}_{\mathbf{v}}(\bm{\beta})^\top(\bm{\Sigma}\bm{\Sigma}^\top)^{-1})) = 0
\end{align*}
due to the symmetry of the matrices and the trace cyclic property. 

Even though the partial observation objective function $\mathcal{L}^{\mathrm{[PR]}}\left(\mathbf{X}_{0:t_N}; \bm{\theta}\right)$ from \eqref{eq:L_PR} depends only on $\mathbf{X}_{0:t_N}$, we could approximate it with $\mathcal{L}_N^{\mathrm{[PR]}}\left(\mathbf{Y}_{0:t_N}; \bm{\theta}\right)$ from \eqref{eq:asymptotic_L_PR}. This is useful for proving the asymptotic normality of the estimator since its asymptotic distribution will depend on the invariant probability $\nu_0$ defined for the solution $\mathbf{Y}$.  

As mentioned in Remark \ref{rmrk:Bias_L_PSF}, the absence of the quadratic form $\mathbf{F}(\mathbf{Y}_{t_{k-1}}) - \mathbf{F}_0(\mathbf{Y}_{t_{k-1}})$ in \eqref{eq:asymptotic_L_PSR} implies that $\mathcal{L}_N^{\mathrm{[PS\mid R]}}$ is not suitable for estimating the drift parameter. Additionally, the penultimate term in \eqref{eq:asymptotic_L_PSR} does not vanish, needing an additional correction term of $2h\sum_{k=1}^{N-1} \tr D_\mathbf{v} \mathbf{N}(\mathbf{Y}_{t_k}; \bm{\beta})$ for consistency. This correction is represented as $4\log |\det D_\mathbf{v}\bm{f}_{h/2}|$ in \eqref{eq:L_PSR}. 

\section{Discussion} \label{sec:Discussion}

While our focus in this paper has been primarily confined to second-order SDEs with no parameters in the smooth components, we are confident that our findings can be extended to encompass models featuring parameters in the drift of the smooth coordinates. 

Extending the current methodology to SDEs with a non-constant diffusion matrix is possible, but one has to be careful. The assumption of a constant diffusion matrix is necessary if the goal is to obtain an Ornstein-Uhlenbeck process in the splitting. The OU process provides a Gaussian transition density, a building block for creating objective functions based on splitting schemes. Splitting strategies beyond the OU process need a different way of obtaining the transition distribution explicitly or by approximation. The development of this theory is outside the scope of this paper.

There are three considerations of the Strang splitting scheme.

First, when choosing a splitting scheme, one must consider the trade-off between accuracy and complexity. The Strang splitting scheme is simple enough to yield a tractable transition density while maintaining high accuracy. Splittings with fewer steps (e.g., Lie-Trotter splitting has two steps compared to Strang with three steps) are too simple and do not have satisfying finite sample properties, as shown in \citep{Pilipovic2024}. Conversely, splitting schemes with more than one stochastic splitting step lose the property of having an explicit transition density due to the convolution of different distributions, as discussed in Remark \ref{rmkr:Strang}. 

Second, splitting the original drift $\mathbf{F}$ into more vector fields is possible, and the proposed method would work as long as there is an explicit transition density. One way of accomplishing this is constraining to only one stochastic linear splitting in the middle of the composition. Under this constraint, splitting the nonlinear part of $\mathbf{F}$ into more deterministic vector fields will yield an explicit pseudo-likelihood. This can be useful if the nonlinear part does not have an explicit solution, but a further splitting yielding explicit solutions can be found. 

Third, the symmetry in the flow composition is built into the Strang splitting \eqref{eq:StrangSplitting}. Composing the flows as $\Phi_{\theta h}^{[2]} \circ \Phi_h^{[1]} \circ \Phi_{(1- \theta)h}^{[2]} $, for $\theta \neq 1/2$ would be possible, however, it will hardly improve the performance. For a detailed review of the splitting schemes in deterministic differential equations, see \citep{Blanes_Casas_Murua_2024}.

\section{Conclusion} \label{sec:Conclusion}

Many fundamental laws of physics and chemistry are formulated as second-order differential equations. Moreover, this model class is important for understanding complex dynamical systems in various applied fields such as ecology and economics. Extending these deterministic models to stochastic second-order differential equations represents a natural generalization, allowing for incorporating uncertainties and variability inherent in real-world systems. While various statistical methods for analyzing data generated from such stochastic models exist, this paper presents a new, more intuitive, computationally efficient, and easily implemented approach, offering improvements in practical applications.

In this study, we propose estimating model parameters using a recently developed SS estimator for SDEs. This estimator has demonstrated well-behaved finite sample results with relatively large sample time steps, particularly in handling highly nonlinear models. We adjust the estimator to the partial observation setting and employ either the full objective function or only the rough objective function. For all four obtained estimators, we establish the consistency and asymptotic normality.

An important contribution of this work is finding the right correction terms in the objective functions to account for the approximation using finite differences of the unobserved rough component. These corrections are needed to remove the bias of parameter estimators when only partial observations are available.

We compared the proposed estimators with state-of-the-art frequentist methods based on three approximations: EM, LG, and LL. The Kramers oscillator simulation study shows that the SS estimators outperform EM and LG in terms of accuracy while maintaining comparable computation times. Furthermore, the SS estimators achieve similar accuracy as the LL estimators but are significantly faster, placing the proposed estimator at an advantageous balance of accuracy and speed among the methods tested.

Applying the SS estimator to a historical paleoclimate dataset derived from Greenland ice cores demonstrates that our method can be applied to real-world data. Although the Kramers oscillator is a simplified model that may not fully capture the complex underlying dynamics, for example, due to being restricted to additive noise, it still enables us to examine transitions between the metastable states within the data. This study underscores both the practical value and limitations of our approach, highlighting the need for future development of models that incorporate non-additive noise.

The proposed method can be extended to more general state-dependent noise types by letting the stochastic part of the splitting be another stochastic process than the OU process. However, the transition density will no longer be known. A possibility is to approximate the transition density by tractable Gaussian approximations using the correct moments \citep{Kessler1997}, which are available for linear drift and second-order polynomials in the squared diffusion matrix, the so-called Pearson diffusions \citep{FormanSorensen2008}. Then, the Strang splitting estimators are directly applicable.

\section*{Acknowledgement}

This work has received funding from the European Union's Horizon 2020 research and innovation program under the Marie Skłodowska-Curie grant agreement No 956107, "Economic Policy in Complex Environments (EPOC)"; and Novo Nordisk Foundation NNF20OC0062958.

\section*{Code Availability}

The \texttt{R} code used to generate the figures and produce the results in this paper is available at \cite{Pilipovic2025} (\url{https://doi.org/10.17894/UCPH.B40994DC-CF4B-4C2B-A3FA-3B49F9896AFB}). Instructions for running the code are provided in the repository.


\clearpage

\setcounter{section}{0}
\setcounter{equation}{0}

\renewcommand{\thesection}{A\arabic{section}}
\renewcommand{\theequation}{A\arabic{equation}}

\section{Appendix} \label{sec:Appendix}

\setcounter{page}{1}

This section provides proofs for all the nontrivial lemmas, propositions, and theorems presented in Sections \ref{sec:EstimatiorProperties} and \ref{sec:AuxiliaryProperties}. The majority of these proofs, especially those in Section \ref{sec:AuxiliaryProperties}, heavily rely on It\^o or Taylor expansions in $h$ around $\mathbf{Y}_{t_{k-1}}$. Additionally, we frequently employ Fubini's theorem as a helpful tool. Our initial focus is on the results from Section \ref{sec:AuxiliaryProperties}, as they constitute technical auxiliary properties essential for understanding the main results outlined in Section \ref{sec:EstimatiorProperties}.

\subsection{Proofs of results from Section  \ref{sec:AuxiliaryProperties}}

\begin{proof}[Proof of Lemma \ref{lemmma:OmegahProp}]
To prove \ref{item:OmegahProp1}, calculate
\begin{align*}
    \bm{\Omega}_h^{\mathrm{[RR]}}(\bm{\theta})^{-1} &= \frac{1}{h}(\bm{\Sigma}\bm{\Sigma}^\top)^{-1}(\mathbf{I} + \frac{h}{2}(\mathbf{A}_{\mathbf{v}}(\bm{\beta}) + \bm{\Sigma}\bm{\Sigma}^\top \mathbf{A}_{\mathbf{v}}(\bm{\beta})^\top (\bm{\Sigma}\bm{\Sigma}^\top)^{-1})^{-1}) + \mathbf{R}(h, \mathbf{y}_0)\notag\\
    &= \frac{1}{h}(\bm{\Sigma}\bm{\Sigma}^\top)^{-1}(\mathbf{I} - \frac{h}{2}(\mathbf{A}_{\mathbf{v}}(\bm{\beta}) + \bm{\Sigma}\bm{\Sigma}^\top \mathbf{A}_{\mathbf{v}}(\bm{\beta})^\top (\bm{\Sigma}\bm{\Sigma}^\top)^{-1})  + \mathbf{R}(h, \mathbf{y}_0)\notag\\
    &=  \frac{1}{h}(\bm{\Sigma}\bm{\Sigma}^\top)^{-1} - \frac{1}{2}((\bm{\Sigma}\bm{\Sigma}^\top)^{-1}\mathbf{A}_{\mathbf{v}}(\bm{\beta}) + \mathbf{A}_{\mathbf{v}}(\bm{\beta})^\top (\bm{\Sigma}\bm{\Sigma}^\top)^{-1})  + \mathbf{R}(h, \mathbf{y}_0).
\end{align*}
Proof of \ref{item:OmegahProp2}:
\begin{align*}
    \bm{\Omega}_h^{\mathrm{[SR]}}(\bm{\theta})\bm{\Omega}_h^{\mathrm{[RR]}}(\bm{\theta})^{-1} &= (\frac{h^2}{2}(\bm{\Sigma}\bm{\Sigma}^\top) + \frac{h^3}{6}(\mathbf{A}_{\mathbf{v}}(\bm{\beta})(\bm{\Sigma}\bm{\Sigma}^\top) + 2\bm{\Sigma}\bm{\Sigma}^\top \mathbf{A}_{\mathbf{v}}(\bm{\beta})^\top ))\frac{1}{h}(\bm{\Sigma}\bm{\Sigma}^\top)^{-1}\\
    &- \frac{h^2}{4}(\bm{\Sigma}\bm{\Sigma}^\top)((\bm{\Sigma}\bm{\Sigma}^\top)^{-1}\mathbf{A}_{\mathbf{v}}(\bm{\beta}) + \mathbf{A}_{\mathbf{v}}(\bm{\beta})^\top (\bm{\Sigma}\bm{\Sigma}^\top)^{-1})) + \mathbf{R}(h^3, \mathbf{y}_0)\\
    &= \frac{h}{2}\mathbf{I} - \frac{h^2}{12}(\mathbf{A}_{\mathbf{v}} - \bm{\Sigma}\bm{\Sigma}^\top\mathbf{A}_{\mathbf{v}}(\bm{\beta})^\top(\bm{\Sigma}\bm{\Sigma}^\top)^{-1}) + \mathbf{R}(h^3, \mathbf{y}_0).
\end{align*}
To prove \ref{item:OmegahProp3}, use the previous result to obtain
\begin{align*}
&\bm{\Omega}_h^{\mathrm{[SR]}}(\bm{\theta})\bm{\Omega}_h^{\mathrm{[RR]}}(\bm{\theta})^{-1}\bm{\Omega}_h^{\mathrm{[RS]}}(\bm{\theta})\\
&= (\frac{h}{2}\mathbf{I} - \frac{h^2}{12}(\mathbf{A}_{\mathbf{v}} - \bm{\Sigma}\bm{\Sigma}^\top\mathbf{A}_{\mathbf{v}}(\bm{\beta})^\top(\bm{\Sigma}\bm{\Sigma}^\top)^{-1}))(\frac{h^2}{2}(\bm{\Sigma}\bm{\Sigma}^\top) + \frac{h^3}{6}(2\mathbf{A}_{\mathbf{v}}(\bm{\beta})(\bm{\Sigma}\bm{\Sigma}^\top) + \bm{\Sigma}\bm{\Sigma}^\top \mathbf{A}_{\mathbf{v}}(\bm{\beta})^\top ))+ \mathbf{R}(h^5, \mathbf{y}_0)\\
&= \frac{h^3}{4}\bm{\Sigma}\bm{\Sigma}^\top + \frac{h^4}{8}(\mathbf{A}_{\mathbf{v}}(\bm{\beta})\bm{\Sigma}\bm{\Sigma}^\top + \bm{\Sigma}\bm{\Sigma}^\top\mathbf{A}_{\mathbf{v}}(\bm{\beta})^\top) + \mathbf{R}(h^5, \mathbf{y}_0).
\end{align*}
Proof of \ref{item:OmegahProp4} follows from \ref{item:OmegahProp3} and Lemma \ref{lemmma:Omegah}. To prove \ref{item:OmegahProp5}, approximate the log-determinant as
\begin{align}
    \log \det \bm{\Omega}_h^{\mathrm{[RR]}}(\bm{\theta}) &=\log \det \left(h \bm{\Sigma}\bm{\Sigma}^\top + \frac{h^2}{2}(\mathbf{A}_{\mathbf{v}}(\bm{\beta})\bm{\Sigma}\bm{\Sigma}^\top + \bm{\Sigma}\bm{\Sigma}^\top \mathbf{A}_{\mathbf{v}}(\bm{\beta})^\top)\right) + {R}(h^2, \mathbf{y}_0)\notag\\
    &= d \log h + \log \det \bm{\Sigma}\bm{\Sigma}^\top + \log \det \left(\mathbf{I} + \frac{h}{2}(\mathbf{A}_{\mathbf{v}}(\bm{\beta}) + \bm{\Sigma}\bm{\Sigma}^\top \mathbf{A}_{\mathbf{v}}(\bm{\beta})^\top (\bm{\Sigma}\bm{\Sigma}^\top)^{-1})\right) + {R}(h^2, \mathbf{y}_0)\notag\\
    &=d \log h + \log \det \bm{\Sigma}\bm{\Sigma}^\top + \frac{h}{2}\tr(\mathbf{A}_{\mathbf{v}}(\bm{\beta}) + \bm{\Sigma}\bm{\Sigma}^\top \mathbf{A}_{\mathbf{v}}(\bm{\beta})^\top (\bm{\Sigma}\bm{\Sigma}^\top)^{-1}) + {R}(h^2, \mathbf{y}_0)\notag\\
    &= d \log h +\log \det \bm{\Sigma}\bm{\Sigma}^\top + h \tr \mathbf{A}_{\mathbf{v}}(\bm{\beta}) + {R}(h^2, \mathbf{y}_0).
\end{align} 
To prove \ref{item:OmegahProp6}, repeat the previous reasoning on \ref{item:OmegahProp4}. The proof of \ref{item:OmegahProp7} follows from $\det \widetilde{\bm{\Omega}}_h = \det \bm{\Omega}_h^{\mathrm{[RR]}} \det \bm{\Omega}_h^{\mathrm{[S\mid R]}}$ and properties \ref{item:OmegahProp5} and \ref{item:OmegahProp6}. 
\end{proof}

\begin{proof}[Proof of Lemma \ref{lemma:logdetfh}]
Using the same approximation as in the previous proof of \ref{item:OmegahProp5}, we obtain
\begin{align}
    2\log |\det D {\bm{f}}_{h/2}\left(\mathbf{y}; \bm{\beta}\right)| &= 2\log | \det (\mathbf{I} + \frac{h}{2} D {\mathbf{N}}(\mathbf{y}; \bm{\beta}))| + {R}(h^2, \mathbf{y})\notag\\
     &= 2\log | 1 + \frac{h}{2} \tr D {\mathbf{N}}(\mathbf{y};\bm{\beta}) | + {R}(h^2, \mathbf{y})\notag\\
     &= h \tr D {\mathbf{N}}(\mathbf{y}; \bm{\beta})  + {R}(h^2, \mathbf{y}) = h \tr D_\mathbf{v} {\mathbf{N}}(\mathbf{y}; \bm{\beta})  + {R}(h^2, \mathbf{y}). \label{eq:logdetterm}
\end{align}
In complete observation, put $\mathbf{Y}_{t_k}$ instead of $\mathbf{y}$ and use It\^o's lemma on $\mathbf{N}(\mathbf{Y}_{t_k})$ as in \eqref{eq:ItoF}. In partial observation, put $(\mathbf{X}_{t_k}, \Delta_h \mathbf{X}_{t_{k+1}})$ instead of $\mathbf{y}$ and approximate $\mathbf{N}(\mathbf{X}_{t_k}, \Delta_h \mathbf{X}_{t_{k+1}})$ as in \eqref{eq:TaylorF}.
\end{proof}

\begin{proof}[Proof of Lemma \ref{lemma:mu}]
We use definition \eqref{eq:ANtilde} and approximation \eqref{eq:fhtildeapprox}, and plug them in \eqref{eq:muh} to obtain
    \begin{align*}
        \widetilde{\bm{\mu}}_h(\widetilde{\bm{f}}_{h/2}(\mathbf{y})) &= e^{\widetilde{\mathbf{A}}h} (\widetilde{\boldsymbol{f}}_{h/2}(\mathbf{y}) - \widetilde{\mathbf{b}}) + \widetilde{\mathbf{b}}\notag\\
        &=\left(\mathbf{I}_{2d} + h \widetilde{\mathbf{A}} + \frac{h^2}{2}\widetilde{\mathbf{A}}^2 + \mathbf{R}(h^3,\mathbf{y})\right)(\widetilde{\boldsymbol{f}}_{h/2}(\mathbf{y}) - \widetilde{\mathbf{b}}) + \widetilde{\mathbf{b}}\notag\\
        &= \begin{bmatrix}
            \mathbf{I}_d + \frac{h^2}{2} \mathbf{A}_{\mathbf{x}} + \mathbf{R}(h^3,\mathbf{y}) & h \mathbf{I}_d + \frac{h^2}{2}\mathbf{A}_{\mathbf{v}}  +\mathbf{R}(h^3,\mathbf{y}) \vspace{1ex}\\
            h \mathbf{A}_{\mathbf{x}} + \mathbf{R}(h^2,\mathbf{y}) & \mathbf{I}_d + h \mathbf{A}_{\mathbf{v}} + \mathbf{R}(h^2,\mathbf{y})
        \end{bmatrix} \begin{bmatrix}
            \mathbf{x} - \mathbf{b}\\
            \mathbf{v}  + \frac{h}{2} \mathbf{N}(\mathbf{y}) +  \mathbf{R}(h^2,\mathbf{y})
        \end{bmatrix} + \begin{bmatrix}
            \mathbf{b}\\
            \mathbf{0}
        \end{bmatrix}\notag\\
        &=\begin{bmatrix}
            \mathbf{x} + h\mathbf{v} + \frac{h^2}{2} \mathbf{F}(\mathbf{y}) +  \mathbf{R}(h^3,\mathbf{y})\vspace{1ex}\\
            \mathbf{v} + h(\mathbf{F}(\mathbf{y}) - \frac{1}{2} \mathbf{N}(\mathbf{y})) + \mathbf{R}(h^2,\mathbf{y})
        \end{bmatrix}.
    \end{align*}
    This concludes the proof.
\end{proof}

To prove Proposition \ref{prop:ZtkandZtkbar}, we need the following lemma that provides expansion of $\Delta_h \mathbf{X}_{t_{k+1}} - \Delta_h \mathbf{X}_{t_k}$.
\begin{lemma} \label{lemma:DeltaX}
For process $\Delta_h \mathbf{X}_{t_{k+1}}$ \eqref{eq:DeltahX} it holds
\begin{align}
    \Delta_h \mathbf{X}_{t_{k+1}} - \Delta_h \mathbf{X}_{t_k} &= \sqrt{h}\bm{\Sigma}_0 \mathbf{U}_{k, k-1} +  h  \mathbf{F}_0(\mathbf{Y}_{t_{k-1}}) + \frac{h^{3/2}}{2} D_\mathbf{v}\mathbf{F}_0(\mathbf{Y}_{t_{k-1}}) \bm{\Sigma}_0 \mathbf{Q}_{k, k-1} + \mathbf{R}(h^2, \mathbf{Y}_{t_{k-1}}),\label{eq:DiffDeltahX}\\
    \Delta_h \mathbf{X}_{t_k} - \mathbf{V}_{t_{k-1}} &= \sqrt{h}\bm{\Sigma}_0 \bm{\xi}_{k-1}' +  \frac{h}{2} \mathbf{F}_0(\mathbf{Y}_{t_{k-1}} )  + \frac{h^{3/2}}{2} D_\mathbf{v}\mathbf{F}_0(\mathbf{Y}_{t_{k-1}}) \bm{\Sigma}_0 \bm{\zeta}'_{k-1} + \mathbf{R}(h^2,\mathbf{Y}_{t_{k-1}}). \label{eq:DiffDeltahXV}
\end{align}
\end{lemma}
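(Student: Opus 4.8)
The plan is to build both expansions from the exact representation $\Delta_h \mathbf{X}_{t_k} = h^{-1}\int_{t_{k-1}}^{t_k}\mathbf{V}_t\,\dif t$ in \eqref{eq:DeltahX_as_V}, so that everything reduces to expanding the velocity $\mathbf{V}_t$ around $\mathbf{Y}_{t_{k-1}}$ and integrating the resulting It\^o integrals. Working under $\mathbb{P}_{\bm{\theta}_0}$, I would first write, for $t$ in either subinterval,
\[
\mathbf{V}_t = \mathbf{V}_{t_{k-1}} + \int_{t_{k-1}}^t \mathbf{F}_0(\mathbf{Y}_s)\,\dif s + \bm{\Sigma}_0(\mathbf{W}_t - \mathbf{W}_{t_{k-1}}),
\]
and then expand $\mathbf{F}_0(\mathbf{Y}_s)$ around $\mathbf{Y}_{t_{k-1}}$ by It\^o's lemma. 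Since the noise enters only the rough coordinate, the leading stochastic correction is $\mathbf{F}_0(\mathbf{Y}_s) = \mathbf{F}_0(\mathbf{Y}_{t_{k-1}}) + D_\mathbf{v}\mathbf{F}_0(\mathbf{Y}_{t_{k-1}})\bm{\Sigma}_0(\mathbf{W}_s - \mathbf{W}_{t_{k-1}}) + (\text{terms of order } s-t_{k-1})$; Assumptions \ref{as:NLip}--\ref{as:NPoly} and the uniform moment bound then guarantee that every discarded term is absorbed into $\mathbf{R}(h^2, \mathbf{Y}_{t_{k-1}})$.

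For the second identity \eqref{eq:DiffDeltahXV} I would substitute this expansion into $h^{-1}\int_{t_{k-1}}^{t_k}\mathbf{V}_t\,\dif t$ and evaluate each piece by the stochastic Fubini theorem. The constant term reproduces $\mathbf{V}_{t_{k-1}}$; the Wiener term gives $h^{-1}\bm{\Sigma}_0\int_{t_{k-1}}^{t_k}(t_k-r)\,\dif\mathbf{W}_r = \sqrt{h}\,\bm{\Sigma}_0\bm{\xi}'_{k-1}$ by \eqref{eq:xi}; the drift term contributes $\tfrac{h}{2}\mathbf{F}_0(\mathbf{Y}_{t_{k-1}})$; and the $D_\mathbf{v}\mathbf{F}_0$ term produces $\tfrac{1}{2}h^{-1}D_\mathbf{v}\mathbf{F}_0(\mathbf{Y}_{t_{k-1}})\bm{\Sigma}_0\int_{t_{k-1}}^{t_k}(t_k-r)^2\,\dif\mathbf{W}_r = \tfrac{h^{3/2}}{2}D_\mathbf{v}\mathbf{F}_0(\mathbf{Y}_{t_{k-1}})\bm{\Sigma}_0\bm{\zeta}'_{k-1}$ by \eqref{eq:zeta}. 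This is the routine case.

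For the first identity \eqref{eq:DiffDeltahX} the same substitution is carried out over both subintervals and the two contributions are subtracted. The delicate bookkeeping lies in the iterated integrals: integrating $\mathbf{V}_t$ over $[t_k,t_{k+1}]$ forces the inner Wiener integral $\int_{t_{k-1}}^t(\mathbf{W}_s-\mathbf{W}_{t_{k-1}})\,\dif s$ to straddle both subintervals, and after stochastic Fubini its kernel splits into an $r\in[t_{k-1},t_k]$ part and an $r\in[t_k,t_{k+1}]$ part. I expect the main obstacle to be recognizing that the resulting linear combinations collapse onto the prescribed variables: the order-$\sqrt{h}$ Wiener terms combine as $\bm{\xi}'_k + \bm{\eta}_{k-1} - \bm{\xi}'_{k-1} = \mathbf{U}_{k,k-1}$, the identity recorded just below \eqref{eq:Tk}, while the order-$h^{3/2}$ terms require the algebraic identity $2\bm{\xi}'_{k-1} - \bm{\zeta}'_{k-1} = \bm{\eta}_{k-1} - \bm{\zeta}_{k-1}$. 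This last identity follows from the pointwise kernel computation $2h(t_k-t) - (t_k-t)^2 = h^2 - (t-t_{k-1})^2$ on $[t_{k-1},t_k]$, and it is exactly what converts the raw combination $\bm{\zeta}'_k + 2\bm{\xi}'_{k-1} + \bm{\eta}_{k-1} - \bm{\zeta}'_{k-1}$ into $\mathbf{Q}_{k,k-1} = \bm{\zeta}'_k + 2\bm{\eta}_{k-1} - \bm{\zeta}_{k-1}$ of \eqref{eq:Tk}. With these two identifications, and with the drift contribution evaluating to $h\,\mathbf{F}_0(\mathbf{Y}_{t_{k-1}})$ because $\int_{t_k}^{t_{k+1}}(t-t_{k-1})\,\dif t - \int_{t_{k-1}}^{t_k}(t-t_{k-1})\,\dif t = h^2$, the expansion \eqref{eq:DiffDeltahX} is complete. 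Throughout, the remainder control reduces to bounding leftover stochastic and Lebesgue integrals against polynomial moments of $\mathbf{Y}$, which is where Assumption \ref{as:NLip} and the uniform moment bound are invoked.
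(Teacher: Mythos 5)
Your proposal is correct, and for \eqref{eq:DiffDeltahXV} it coincides with the paper's proof (integral representation, It\^o expansion of $\mathbf{F}_0$ around $\mathbf{Y}_{t_{k-1}}$, stochastic Fubini). For \eqref{eq:DiffDeltahX}, however, you take a genuinely different organization of the computation. The paper telescopes around the \emph{midpoint}: it writes the difference as $\frac{1}{h}\int_{t_k}^{t_{k+1}}(\mathbf{V}_t-\mathbf{V}_{t_k})\,\dif t+\frac{1}{h}\int_{t_{k-1}}^{t_k}(\mathbf{V}_{t_k}-\mathbf{V}_t)\,\dif t$, so that $\mathbf{V}_{t_k}$ cancels and every double integral stays inside a single subinterval; after Fubini the $\sqrt{h}$ term is $\bm{\xi}_k'+\bm{\xi}_{k-1}=\mathbf{U}_{k,k-1}$ by definition, the $h^{3/2}$ term is $\tfrac12(\bm{\zeta}_k'-\bm{\zeta}_{k-1})$, and the missing $2\bm{\eta}_{k-1}$ in $\mathbf{Q}_{k,k-1}$ is generated only at the very end, by one further It\^o shift of $\mathbf{F}_0(\mathbf{Y}_{t_k})$ and $D_\mathbf{v}\mathbf{F}_0(\mathbf{Y}_{t_k})$ back to $t_{k-1}$. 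You instead anchor everything at $t_{k-1}$ in a single pass, which forces the inner integrals over $[t_k,t_{k+1}]$ to straddle both subintervals; the price is that you must recognize two algebraic identities among the Gaussian variables, namely $\bm{\xi}_k'+\bm{\eta}_{k-1}-\bm{\xi}_{k-1}'=\mathbf{U}_{k,k-1}$ (recorded in the paper after \eqref{eq:Tk}) and the kernel identity $2\bm{\xi}_{k-1}'-\bm{\zeta}_{k-1}'=\bm{\eta}_{k-1}-\bm{\zeta}_{k-1}$, which the paper never needs. I checked both: your raw combination $\bm{\zeta}_k'+2\bm{\xi}_{k-1}'+\bm{\eta}_{k-1}-\bm{\zeta}_{k-1}'$ does collapse to $\mathbf{Q}_{k,k-1}$, since on $[t_{k-1},t_k]$ one has $2h(t_k-t)-(t_k-t)^2=h^2-(t-t_{k-1})^2$, and the drift bookkeeping $\int_{t_k}^{t_{k+1}}(t-t_{k-1})\,\dif t-\int_{t_{k-1}}^{t_k}(t-t_{k-1})\,\dif t=h^2$ is also right. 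The trade-off: the paper's midpoint trick keeps the stochastic bookkeeping minimal but requires a two-stage argument (expand around $t_k$, then re-anchor), while your one-pass expansion avoids the re-anchoring step at the cost of the extra kernel identity; both land on the same expansion with the same remainder order.
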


\begin{proof}[Proof of Lemma \ref{lemma:DeltaX}] 
Proof of  \eqref{eq:DiffDeltahX}. Equation \eqref{eq:sdeXV} in integral form and \eqref{eq:DeltahX} yield
\begin{align*}
     \Delta_h \mathbf{X}_{t_{k+1}} - \Delta_h \mathbf{X}_{t_k} &= \frac{1}{h}\int_{t_k}^{t_{k+1}} \mathbf{V}_t \dif t - \frac{1}{h}\int_{t_{k-1}}^{t_k} \mathbf{V}_t \dif t = \frac{1}{h}\int_{t_k}^{t_{k+1}} (\mathbf{V}_t - \mathbf{V}_{t_k}) \dif t + \frac{1}{h}\int_{t_{k-1}}^{t_k} (\mathbf{V}_{t_k} - \mathbf{V}_t) \dif t \notag\\
     &=\frac{1}{h}\int_{t_k}^{t_{k+1}} \int_{t_k}^t \mathbf{F}_0(\mathbf{Y}_s) \dif s \dif t + \frac{1}{h}\bm{\Sigma}_0 \int_{t_k}^{t_{k+1}} \int_{t_k}^t \dif \mathbf{W}_s \dif t\notag\\
     &+ \frac{1}{h}\int_{t_{k-1}}^{t_k} \int_t^{t_k} \mathbf{F}_0(\mathbf{Y}_s) \dif s \dif t + \frac{1}{h}\bm{\Sigma}_0 \int_{t_{k-1}}^{t_k} \int_t^{t_k} \dif \mathbf{W}_s \dif t.
\end{align*}
Apply Fubini's theorem on double integrals to obtain
\begin{align}
     \Delta_h \mathbf{X}_{t_{k+1}} - \Delta_h \mathbf{X}_{t_k} &= \frac{1}{h}\int_{t_k}^{t_{k+1}}  (t_{k+1} - t) \mathbf{F}_0(\mathbf{Y}_t) \dif t +  \frac{1}{h}\int_{t_{k-1}}^{t_k}  (t - t_{k-1}) \mathbf{F}_0(\mathbf{Y}_t) \dif t + \sqrt{h}\bm{\Sigma}_0 \mathbf{U}_{k, k-1}. \label{eq:DeltaDeltaX_mid}
\end{align}
Applying It\^o's lemma on $\mathbf{F}_0(\mathbf{Y}_t)$ yields the following approximation
\begin{align}
    \mathbf{F}_0(\mathbf{Y}_t) &= \mathbf{F}_0(\mathbf{Y}_{t_k}) + D_\mathbf{v} \mathbf{F}_0(\mathbf{Y}_{t_k}) \bm{\Sigma}_0 \int_{t_k}^t \dif \mathbf{W}_s + \mathbf{R}(h, \mathbf{Y}_{t_{k}}). \label{eq:ItoF}
\end{align}
Plugging \eqref{eq:ItoF} into \eqref{eq:DeltaDeltaX_mid} gives
\begin{align*}
     \Delta_h\mathbf{X}_{t_{k+1}} - \Delta_h\mathbf{X}_{t_k} &= h\mathbf{F}_0(\mathbf{Y}_{t_k})+ \frac{1}{h}D_\mathbf{v}\mathbf{F}_0(\mathbf{Y}_{t_k})\bm{\Sigma}_0\int_{t_k}^{t_{k+1}}\int_{t_k}^t (t_{k+1} - t)\dif \mathbf{W}_s \dif t \\
     &- \frac{1}{h}D_\mathbf{v}\mathbf{F}_0(\mathbf{Y}_{t_k})\bm{\Sigma}_0\int_{t_{k-1}}^{t_k}\int_t^{t_k} (t - t_{k-1})\dif \mathbf{W}_s\dif t + \sqrt{h}\bm{\Sigma}_0 \mathbf{U}_{k, k-1} + \mathbf{R}(h^2, \mathbf{Y}_{t_{k-1}}).
\end{align*}
Once again, apply Fubini's theorem on the double integrals to get
\begin{align*}
    \int_{t_k}^{t_{k+1}}\int_{t_k}^t (t_{k+1} - t)\dif \mathbf{W}_s \dif t = \frac{1}{2} h^{5/2}\bm{\zeta}_k',\qquad \int_{t_{k-1}}^{t_k}\int_t^{t_k} (t - t_{k-1})\dif \mathbf{W}_s\dif t = \frac{1}{2} h^{5/2} \bm{\zeta}_{k-1}.
\end{align*}
So far, we have
\begin{align*}
     \Delta_h\mathbf{X}_{t_{k+1}} - \Delta_h\mathbf{X}_{t_k} &= \sqrt{h}\bm{\Sigma}_0 \mathbf{U}_{k, k-1} + h\mathbf{F}_0(\mathbf{Y}_{t_k}) + \frac{h^{3/2}}{2}D_\mathbf{v}\mathbf{F}_0(\mathbf{Y}_{t_k})\bm{\Sigma}_0(\bm{\zeta}'_k - \bm{\zeta}_{k-1}) + \mathbf{R}(h^2, \mathbf{Y}_{t_{k-1}}).
\end{align*}
To conclude the proof, use It\^o's lemma to get $\mathbf{F}_0(\mathbf{Y}_{t_k}) = \mathbf{F}_0(\mathbf{Y}_{t_{k-1}}) + \sqrt{h} D_\mathbf{v} \mathbf{F}_0(\mathbf{Y}_{t_{k-1}}) \bm{\Sigma}_0 \bm{\eta}_{k-1} +  \mathbf{R}(h, \mathbf{Y}_{t_{k-1}})$.

Proof of \eqref{eq:DiffDeltahXV}. As before, start with \eqref{eq:DeltahX} and use \eqref{eq:ItoF} to get
\begin{align*}
    \Delta_h \mathbf{X}_{t_k} - \mathbf{V}_{t_{k-1}} &= \frac{1}{h}\int_{t_{k-1}}^{t_k}  (t_k - t) \mathbf{F}_0(\mathbf{Y}_t) \dif t + \frac{1}{h}\bm{\Sigma}_0 \int_{t_{k-1}}^{t_k}  (t_k - t) \dif \mathbf{W}_t \\
    &= \sqrt{h}\bm{\Sigma}_0 \bm{\xi}_{k-1}' + \frac{h}{2} \mathbf{F}_0(\mathbf{Y}_{t_{k-1}}) + \frac{1}{h} D_\mathbf{v}\mathbf{F}_0(\mathbf{Y}_{t_k})\bm{\Sigma}_0  \int_{t_{k-1}}^{t_k}\int_{t_{k-1}}^t (t_k - t)\dif \mathbf{W}_s \dif t + \mathbf{R}(h^2,\mathbf{Y}_{t_{k-1}}).
\end{align*}
This concludes the proof.
\end{proof}

\begin{proof}[Proof of Proposition \ref{prop:ZtkandZtkbar}]
The expansion of $\mathbf{Z}_{k, k-1}^{\mathrm{[S]}}$ follows directly from Lemma \ref{lemma:mu} and \ref{lemma:DeltaX}. Indeed, it holds
\begin{align*}
    \mathbf{Z}_{k, k-1}^{\mathrm{[S]}}(\bm{\beta}) &= \mathbf{X}_{t_k} - \mathbf{X}_{t_{k-1}} - h \mathbf{V}_{t_{k-1}} - \frac{h^2}{2} \mathbf{F}(\mathbf{Y}_{t_{k-1}})  + \mathbf{R}(h^3,\mathbf{Y}_{t_{k-1}}) \\
    &= h (\Delta_h \mathbf{X}_{t_k} - \mathbf{V}_{t_{k-1}} ) - \frac{h^2}{2} \mathbf{F}(\mathbf{Y}_{t_{k-1}})  + \mathbf{R}(h^3,\mathbf{Y}_{t_{k-1}}).
\end{align*}
To expand $ \mathbf{Z}_{k, k-1}^{\mathrm{[R]}}$, we use definition \eqref{eq:ZtkR} and approximations \eqref{eq:fhstarinv} and \eqref{eq:muhrapprox}, as follows
\begin{align*}
    \mathbf{Z}_{k, k-1}^{\mathrm{[R]}}(\bm{\beta}) &= \mathbf{V}_{t_k} - \mathbf{V}_{t_{k-1}} - h \mathbf{F}(\mathbf{Y}_{t_{k-1}}) - \frac{h}{2}(\mathbf{N}(\mathbf{Y}_{t_{k}}) - \mathbf{N}(\mathbf{Y}_{t_{k-1}})) + \mathbf{R}(h^2, \mathbf{Y}_{t_{k-1}})\notag\\
    &=\bm{\Sigma}_0 \int_{t_{k-1}}^{t_k} \dif \mathbf{W}_t + \int_{t_{k-1}}^{t_k} \mathbf{F}_0(\mathbf{Y}_t)\dif t  - h \mathbf{F}(\mathbf{Y}_{t_{k-1}}) - \frac{h}{2}(\mathbf{N}(\mathbf{Y}_{t_{k}}) - \mathbf{N}(\mathbf{Y}_{t_{k-1}})) + \mathbf{R}(h^2, \mathbf{Y}_{t_{k-1}})\notag\\
    &= \sqrt{h} \bm{\Sigma}_0 \bm{\eta}_{k-1} + h (\mathbf{F}_0(\mathbf{Y}_{t_{k-1}}) - \mathbf{F}(\mathbf{Y}_{t_{k-1}})) - \frac{h}{2}(\mathbf{N}(\mathbf{Y}_{t_{k}}) - \mathbf{N}(\mathbf{Y}_{t_{k-1}}))\\
    &+D_\mathbf{v} \mathbf{F}_0(\mathbf{Y}_{t_{k-1}}) \bm{\Sigma}_0 \int_{t_{k-1}}^{t_k} \int_{t_{k-1}}^t \dif \mathbf{W}_s \dif t + \mathbf{R}(h^2,\mathbf{Y}_{t_{k-1}}).
\end{align*}
In the last line, we used It\^o's lemma on $\mathbf{F}_0(\mathbf{Y}_t)$ as in \eqref{eq:ItoF}. Again, apply It\^o's lemma on $\mathbf{N}(\mathbf{Y}_{t_k})$ to get
\begin{align*}
    \mathbf{Z}_{k, k-1}^{\mathrm{[R]}}(\bm{\beta}) &= \sqrt{h} \bm{\Sigma}_0 \bm{\eta}_{k-1} + h (\mathbf{F}_0(\mathbf{Y}_{t_{k-1}}) - \mathbf{F}(\mathbf{Y}_{t_{k-1}})) - \frac{h^{3/2}}{2}D_\mathbf{v}\mathbf{N}(\mathbf{Y}_{t_{k-1}}) \bm{\Sigma}_0 \bm{\eta}_{k-1}\\
    &+h^{3/2} D_\mathbf{v} \mathbf{F}_0(\mathbf{Y}_{t_{k-1}}) \bm{\Sigma}_0 \bm{\xi}'_{k-1} + \mathbf{R}(h^2,\mathbf{Y}_{t_{k-1}}).
\end{align*}
The expansion of $\mathbf{Z}_{k, k-1}^{\mathrm{[S]}}$ follows from definition \eqref{eq:ZtkSbar} and plugging $(\mathbf{X}_{t_{k-1}}, \Delta_h \mathbf{X}_{t_k})$ in approximation \eqref{eq:muhrapprox}
\begin{align*}
    \overline{\mathbf{Z}}_{k+1, k, k-1}^{[\mathrm{S}]}(\bm{\beta}) &= \mathbf{X}_{t_k} - \mathbf{X}_{t_{k-1}} - h \Delta_h \mathbf{X}_{t_k} - \frac{h^2}{2}  \mathbf{F}(\mathbf{X}_{t_{k-1}}, \Delta_h \mathbf{X}_{t_k}) + \mathbf{R}(h^3,\mathbf{Y}_{t_{k-1}})\\ 
    &= - \frac{h^2}{2}  \mathbf{F}(\mathbf{X}_{t_{k-1}}, \Delta_h \mathbf{X}_{t_k}) + \mathbf{R}(h^3,\mathbf{Y}_{t_{k-1}}). 
\end{align*}
Use Taylor's formula on $\mathbf{F}(\mathbf{X}_{t_{k-1}}, \Delta_h \mathbf{X}_{t_k})$ to get 
\begin{align}
    \mathbf{F}(\mathbf{X}_{t_{k-1}}, \Delta_h \mathbf{X}_{t_k}) &= \mathbf{F}(\mathbf{Y}_{t_{k-1}}) + D_\mathbf{v} \mathbf{F}(\mathbf{Y}_{t_{k-1}}) (\Delta_h \mathbf{X}_{t_k} - \mathbf{V}_{t_{k-1}}) + \mathbf{R}(h^2,\mathbf{Y}_{t_{k-1}}). \label{eq:TaylorF}
\end{align}
Now, the rest follows from Lemma \ref{lemma:DeltaX}.

Finally, to expand $\overline{\mathbf{Z}}_{k+1, k, k-1}^{[\mathrm{R}]}$, start with definition \eqref{eq:ZtkRbar} and approximations \eqref{eq:fhstarinv}, and \eqref{eq:muhrapprox}
\begin{align*}
    \overline{\mathbf{Z}}_{k+1, k, k-1}^{[\mathrm{R}]}(\bm{\beta}) &=\Delta_h \mathbf{X}_{t_{k+1}} - \Delta_h \mathbf{X}_{t_k} - h  \mathbf{F}(\mathbf{X}_{t_{k-1}}, \Delta_h \mathbf{X}_{t_k}) \notag\\
    &- \frac{h}{2}(\mathbf{N}(\mathbf{X}_{t_k}, \Delta_h \mathbf{X}_{t_{k+1}}) - \mathbf{N}(\mathbf{X}_{t_{k-1}}, \Delta_h \mathbf{X}_{t_k})) + \mathbf{R}(h^2,\mathbf{Y}_{t_{k-1}}).
\end{align*}
Lemma \ref{lemma:DeltaX} yields
\begin{align*}
    \overline{\mathbf{Z}}_{k+1, k, k-1}^{[\mathrm{R}]}(\bm{\beta}) &=  \sqrt{h}\bm{\Sigma}_0 \mathbf{U}_{k, k-1} +  h(\mathbf{F}_0(\mathbf{Y}_{t_{k-1}}) - \mathbf{F}(\mathbf{X}_{t_{k-1}}, \Delta_h \mathbf{X}_{t_k}))\\
    &- \frac{h}{2}(\mathbf{N}(\mathbf{X}_{t_k}, \Delta_h \mathbf{X}_{t_{k+1}}) - \mathbf{N}(\mathbf{X}_{t_{k-1}}, \Delta_h \mathbf{X}_{t_k})) + \frac{h^{3/2}}{2} D_\mathbf{v}\mathbf{F}_0(\mathbf{Y}_{t_{k-1}}) \bm{\Sigma}_0 \mathbf{Q}_{k, k-1} + \mathbf{R}(h^2, \mathbf{Y}_{t_{k-1}}).
\end{align*}
Apply Taylor's formula on $\mathbf{F}(\mathbf{X}_{t_{k-1}}, \Delta_h \mathbf{X}_{t_k})$, $\mathbf{N}(\mathbf{X}_{t_{k}}, \Delta_h \mathbf{X}_{t_{k+1}})$, and $\mathbf{N}(\mathbf{X}_{t_{k-1}}, \Delta_h \mathbf{X}_{t_k})$, to get
\begin{align*}
    \overline{\mathbf{Z}}_{k+1, k, k-1}^{[\mathrm{R}]}(\bm{\beta}) &=  \sqrt{h}\bm{\Sigma}_0 \mathbf{U}_{k, k-1} +  h(\mathbf{F}_0(\mathbf{Y}_{t_{k-1}}) - \mathbf{F}(\mathbf{Y}_{t_{k-1}}))- \frac{h}{2}(\mathbf{N}(\mathbf{Y}_{t_k}) - \mathbf{N}(\mathbf{Y}_{t_{k-1}})) \notag\\
    &+ \frac{h^{3/2}}{2} D_\mathbf{v}\mathbf{F}_0(\mathbf{Y}_{t_{k-1}}) \bm{\Sigma}_0 \mathbf{Q}_{k, k-1}  -h^{3/2} D_\mathbf{v} \mathbf{F}(\mathbf{Y}_{t_{k-1}})\bm{\Sigma}_0 \bm{\xi}_{k-1}'  - \frac{h^{3/2}}{2} D_\mathbf{v} \mathbf{N}(\mathbf{Y}_{t_k})\bm{\Sigma}_0 \bm{\xi}_{k}' \notag\\
    &+ \frac{h^{3/2}}{2} D_\mathbf{v} \mathbf{N}(\mathbf{Y}_{t_{k-1}}))\bm{\Sigma}_0 \bm{\xi}_{k-1}' + \mathbf{R}(h^2,\mathbf{Y}_{t_{k-1}}).
\end{align*}
Finally, applying It\^o's lemma on $\mathbf{N}(\mathbf{Y}_{t_k})$ yields 
\begin{align*}
    \overline{\mathbf{Z}}_{k+1, k, k-1}^{[\mathrm{R}]}(\bm{\beta}) &=  \sqrt{h}\bm{\Sigma}_0 \mathbf{U}_{k, k-1} +  h(\mathbf{F}_0(\mathbf{Y}_{t_{k-1}}) - \mathbf{F}(\mathbf{Y}_{t_{k-1}}))- \frac{h^{3/2}}{2}D_\mathbf{v}\mathbf{N}(\mathbf{Y}_{t_{k-1}})\bm{\Sigma}_0 \bm{\eta}_{k-1} \notag\\
    &+ \frac{h^{3/2}}{2} D_\mathbf{v}\mathbf{F}_0(\mathbf{Y}_{t_{k-1}}) \bm{\Sigma}_0 \mathbf{Q}_{k, k-1}  -h^{3/2} D_\mathbf{v} \mathbf{F}(\mathbf{Y}_{t_{k-1}})\bm{\Sigma}_0 \bm{\xi}_{k-1}'  - \frac{h^{3/2}}{2} D_\mathbf{v} \mathbf{N}(\mathbf{Y}_{t_{k-1}})\bm{\Sigma}_0 \bm{\xi}_{k}' \notag\\
    &+ \frac{h^{3/2}}{2} D_\mathbf{v} \mathbf{N}(\mathbf{Y}_{t_{k-1}}))\bm{\Sigma}_0 \bm{\xi}_{k-1}' + \mathbf{R}(h^2,\mathbf{Y}_{t_{k-1}})\notag\\
    &= \sqrt{h}\bm{\Sigma}_0 \mathbf{U}_{k, k-1} +  h(\mathbf{F}_0(\mathbf{Y}_{t_{k-1}}) - \mathbf{F}(\mathbf{Y}_{t_{k-1}}))- \frac{h^{3/2}}{2}D_\mathbf{v}\mathbf{N}(\mathbf{Y}_{t_{k-1}})\bm{\Sigma}_0 \mathbf{U}_{k, k-1}\notag\\
    &+ \frac{h^{3/2}}{2} D_\mathbf{v}\mathbf{F}_0(\mathbf{Y}_{t_{k-1}}) \bm{\Sigma}_0 \mathbf{Q}_{k, k-1}  -h^{3/2} D_\mathbf{v} \mathbf{F}(\mathbf{Y}_{t_{k-1}})\bm{\Sigma}_0 \bm{\xi}_{k-1}' + \mathbf{R}(h^2,\mathbf{Y}_{t_{k-1}}).
\end{align*}
This concludes the proof. 
\end{proof}

\begin{proof}[Proof of Proposition \ref{prop:ZtkS}]
Combining Proposition \ref{prop:ZtkandZtkbar} and property \ref{item:OmegahProp2} of Lemma \ref{lemmma:OmegahProp} yields
\begin{align*}
        \mathbf{Z}_{k, k-1}^{\mathrm{[S\mid R]}}(\bm{\beta})&=\mathbf{Z}_{k, k-1}^{\mathrm{[S]}}(\bm{\beta}) -  \bm{\Omega}_h^{\mathrm{[SR]}}(\bm{\theta})\bm{\Omega}_h^{\mathrm{[RR]}}(\bm{\theta})^{-1} \mathbf{Z}_{k, k-1}^{\mathrm{[R]}}(\bm{\beta})\\
        &= h^{3/2} \bm{\Sigma}_0 \bm{\xi}'_{k-1} + \frac{h^2}{2} (\mathbf{F}_0(\mathbf{Y}_{t_{k-1}}) - \mathbf{F}(\mathbf{Y}_{t_{k-1}})) + \frac{h^{5/2}}{2} D_\mathbf{v}\mathbf{F}_0(\mathbf{Y}_{t_{k-1}}) \bm{\Sigma}_0 \bm{\zeta}'_{k-1}\\
        &-\left(\frac{h}{2}\mathbf{I} - \frac{h^2}{12}(\mathbf{A}_{\mathbf{v}} - \bm{\Sigma}\bm{\Sigma}^\top\mathbf{A}_{\mathbf{v}}(\bm{\beta})^\top(\bm{\Sigma}\bm{\Sigma}^\top)^{-1})\right)\left(h^{1/2} \bm{\Sigma}_0 \bm{\eta}_{k-1} + h (\mathbf{F}_0(\mathbf{Y}_{t_{k-1}}) - \mathbf{F}(\mathbf{Y}_{t_{k-1}}))\right.\\
        &\left. \hspace{20ex} - \frac{h^{3/2}}{2}D_\mathbf{v}\mathbf{N}(\mathbf{Y}_{t_{k-1}}) \bm{\Sigma}_0 \bm{\eta}_{k-1} +h^{3/2} D_\mathbf{v} \mathbf{F}_0(\mathbf{Y}_{t_{k-1}}) \bm{\Sigma}_0 \bm{\xi}'_{k-1})\right)  + \mathbf{R}(h^3,\mathbf{Y}_{t_{k-1}})\\
        &=h^{3/2} \bm{\Sigma}_0 \bm{\xi}'_{k-1} + \frac{h^2}{2} (\mathbf{F}_0(\mathbf{Y}_{t_{k-1}}) - \mathbf{F}(\mathbf{Y}_{t_{k-1}})) + \frac{h^{5/2}}{2} D_\mathbf{v}\mathbf{F}_0(\mathbf{Y}_{t_{k-1}}) \bm{\Sigma}_0 \bm{\zeta}'_{k-1}\\
        &-h^{3/2} \bm{\Sigma}_0 \bm{\eta}_{k-1}- \frac{h^2}{2} (\mathbf{F}_0(\mathbf{Y}_{t_{k-1}}) - \mathbf{F}(\mathbf{Y}_{t_{k-1}})) + \frac{h^{5/2}}{4} D_\mathbf{v}\mathbf{N}(\mathbf{Y}_{t_{k-1}}) \bm{\Sigma}_0 \bm{\eta}_{k-1} \\
        &-  \frac{h^{5/2}}{2} D_\mathbf{v}\mathbf{F}_0(\mathbf{Y}_{t_{k-1}}) \bm{\Sigma}_0 \bm{\xi}'_{k-1} + \frac{h^{5/2}}{12}(\mathbf{A}_{\mathbf{v}} - \bm{\Sigma}\bm{\Sigma}^\top\mathbf{A}_{\mathbf{v}}(\bm{\beta})^\top(\bm{\Sigma}\bm{\Sigma}^\top)^{-1})\bm{\Sigma}_0 \bm{\eta}_{k-1} +  \mathbf{R}(h^3,\mathbf{Y}_{t_{k-1}}).
\end{align*}
Additionally
\begin{align*}
        \overline{\mathbf{Z}}_{k+1, k, k-1}^{\mathrm{[S\mid R]}}(\bm{\beta})&=\overline{\mathbf{Z}}_{k, k-1}^{[\mathrm{S}]}(\bm{\beta}) -  \bm{\Omega}_h^{\mathrm{[SR]}}(\bm{\theta})\bm{\Omega}_h^{\mathrm{[RR]}}(\bm{\theta})^{-1}\overline{\mathbf{Z}}_{k+1, k, k-1}^{[\mathrm{R}]}(\bm{\beta})\\
        &= - \frac{h^2}{2}\mathbf{F}(\mathbf{Y}_{t_{k-1}}) - \frac{h^{5/2}}{2}D_\mathbf{v}\mathbf{F}(\mathbf{Y}_{t_{k-1}})\bm{\Sigma}_0 \bm{\xi}'_{k-1} -\left(\frac{h}{2}\mathbf{I} - \frac{h^2}{12}(\mathbf{A}_{\mathbf{v}} - \bm{\Sigma}\bm{\Sigma}^\top\mathbf{A}_{\mathbf{v}}(\bm{\beta})^\top(\bm{\Sigma}\bm{\Sigma}^\top)^{-1})\right)\\
        &\cdot\Bigg(h^{1/2}\bm{\Sigma}_0 \mathbf{U}_{k, k-1} +  h(\mathbf{F}_0(\mathbf{Y}_{t_{k-1}}) - \mathbf{F}(\mathbf{Y}_{t_{k-1}}))  - \frac{h^{3/2}}{2}D_\mathbf{v}\mathbf{N}(\mathbf{Y}_{t_{k-1}})\bm{\Sigma}_0 \mathbf{U}_{k, k-1}\\
        &\hspace{5ex}+ \frac{h^{3/2}}{2} D_\mathbf{v}\mathbf{F}_0(\mathbf{Y}_{t_{k-1}}) \bm{\Sigma}_0 \mathbf{Q}_{k, k-1}  -h^{3/2} D_\mathbf{v} \mathbf{F}(\mathbf{Y}_{t_{k-1}})\bm{\Sigma}_0 \bm{\xi}_{k-1}'\Bigg)  + \mathbf{R}(h^3,\mathbf{Y}_{t_{k-1}})\\
        &=- \frac{h^2}{2}\mathbf{F}(\mathbf{Y}_{t_{k-1}}) - \frac{h^{5/2}}{2}D_\mathbf{v}\mathbf{F}(\mathbf{Y}_{t_{k-1}})\bm{\Sigma}_0 \bm{\xi}'_{k-1} -h^{3/2} \bm{\Sigma}_0 \mathbf{U}_{k, k-1}- \frac{h^2}{2} (\mathbf{F}_0(\mathbf{Y}_{t_{k-1}}) - \mathbf{F}(\mathbf{Y}_{t_{k-1}}))  \\
        &+ \frac{h^{5/2}}{4} D_\mathbf{v}\mathbf{N}(\mathbf{Y}_{t_{k-1}}) \bm{\Sigma}_0  \mathbf{U}_{k, k-1} -  \frac{h^{5/2}}{4} D_\mathbf{v}\mathbf{F}_0(\mathbf{Y}_{t_{k-1}}) \bm{\Sigma}_0  \mathbf{Q}_{k, k-1} +  \frac{h^{5/2}}{2} D_\mathbf{v}\mathbf{F}_0(\mathbf{Y}_{t_{k-1}}) \bm{\Sigma}_0 \bm{\xi}'_{k-1}\\
        &+ \frac{h^{5/2}}{12}(\mathbf{A}_{\mathbf{v}} - \bm{\Sigma}\bm{\Sigma}^\top\mathbf{A}_{\mathbf{v}}(\bm{\beta})^\top(\bm{\Sigma}\bm{\Sigma}^\top)^{-1})\bm{\Sigma}_0 \mathbf{U}_{k, k-1} +  \mathbf{R}(h^3,\mathbf{Y}_{t_{k-1}}).
\end{align*}
\end{proof}

\subsection{Proofs from Section \ref{sec:EstimatiorProperties}}

Before we start the proofs, we state the following ergodic property, which is proved by \cite{Kessler1997} in case of complete observations, while \cite{SamsonThieullen2012} proved for both complete and partial observation.
\begin{lemma}[Proposition 4 in \citep{SamsonThieullen2012}] \label{lemma:Kessler}
Let Assumptions \ref{as:NLip}, \ref{as:NPoly} and \ref{as:Invariant} hold, and let $\mathbf{Y}$ be the solution to \eqref{eq:SDE}. Let $g: \mathbb{R}^{2d} \times \Theta \to \mathbb{R}$ be a differentiable function with respect to $\mathbf{y}$ and $\bm{\theta}$ with derivatives of polynomial growth in $\mathbf{y}$, uniformly in $\bm{\theta}$.  If $h \to 0$ and $Nh \to \infty$, then
\begin{align}
    \frac{1}{N-1}\sum_{k=1}^N g\left(\mathbf{Y}_{t_k}; \bm{\theta}\right) &\xrightarrow[\substack{Nh \to \infty\\ h \to 0}]{\mathbb{P}_{\bm{\theta}_0}} \int g\left(\mathbf{y}; \bm{\theta}\right) \dif \nu_0(\mathbf{y}), \\
    \frac{1}{N-2}\sum_{k=1}^{N-1} g\left(\mathbf{X}_{t_k}, \Delta_h \mathbf{X}_{t_k}; \bm{\theta}\right) &\xrightarrow[\substack{Nh \to \infty\\ h \to 0}]{\mathbb{P}_{\bm{\theta}_0}} \int g\left(\mathbf{y}; \bm{\theta}\right) \dif \nu_0(\mathbf{y}), 
\end{align}
uniformly in $\bm{\theta}$.
\end{lemma}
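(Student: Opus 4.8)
The plan is to reduce both limits to the continuous-time ergodic average $\frac1T\int_0^T g(\mathbf Y_s;\bm\theta)\dif s\to\int g(\mathbf y;\bm\theta)\dif\nu_0(\mathbf y)$ supplied by Assumption \ref{as:Ergodic} (available here from the uniqueness of the invariant law in Assumption \ref{as:Invariant}), and then to control two discretization errors by second-moment estimates: the error from replacing a time integral by a Riemann sum, and, for the second display, the error from replacing the unobserved velocity $\mathbf V_{t_k}$ by the finite difference $\Delta_h\mathbf X_{t_k}$.

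For the first display, set $T=Nh$ and note $\frac{1}{N-1}\sum_{k=1}^N g(\mathbf Y_{t_k})=\frac{N}{N-1}\cdot\frac1T\sum_{k=1}^N g(\mathbf Y_{t_k})\,h$ with $N/(N-1)\to1$. I would compare $\frac1T\sum_k g(\mathbf Y_{t_k})\,h$ with $\frac1T\int_0^T g(\mathbf Y_s)\dif s$; their difference equals $\frac1T\sum_{k=1}^N\int_{t_{k-1}}^{t_k}\big(g(\mathbf Y_{t_k})-g(\mathbf Y_s)\big)\dif s$. Expanding $g(\mathbf Y_{t_k})-g(\mathbf Y_s)$ by It\^o's lemma on $[s,t_k]$ and using the moment bound $\mathbb E[\sup_{t\le T}\|\mathbf Y_t\|^{2p}]<\infty$ from Assumption \ref{as:NLip} together with the polynomial growth of the derivatives of $g$, I obtain $\mathbb E|g(\mathbf Y_{t_k})-g(\mathbf Y_s)|\le C\sqrt h$ for $s\in[t_{k-1},t_k]$. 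Summation gives an $L^1$ bound of order $\sqrt h\to0$, so the Riemann error vanishes in probability, and Assumption \ref{as:Ergodic} closes the argument.

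For the second display I would Taylor-expand $g$ in the velocity coordinate, $g(\mathbf X_{t_k},\Delta_h\mathbf X_{t_k})=g(\mathbf Y_{t_k})+\partial_{\mathbf v}g(\mathbf Y_{t_k})^\top(\Delta_h\mathbf X_{t_k}-\mathbf V_{t_k})+R_k$, with $R_k$ quadratic in $\Delta_h\mathbf X_{t_k}-\mathbf V_{t_k}$. Combining \eqref{eq:DiffDeltahXV} with $\mathbf V_{t_k}-\mathbf V_{t_{k-1}}=\sqrt h\,\bm\Sigma_0\bm\eta_{k-1}+\mathbf R(h,\mathbf Y_{t_{k-1}})$ yields $\Delta_h\mathbf X_{t_k}-\mathbf V_{t_k}=\sqrt h\,\bm\Sigma_0(\bm\xi'_{k-1}-\bm\eta_{k-1})+\mathbf R(h,\mathbf Y_{t_{k-1}})$, so $R_k=O(h)$ in $L^1$ and $\frac1N\sum_k R_k\to0$. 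The delicate term is $\frac1N\sum_k\partial_{\mathbf v}g(\mathbf Y_{t_k})^\top(\Delta_h\mathbf X_{t_k}-\mathbf V_{t_k})$: freezing the coefficient at $t_{k-1}$ produces the martingale-difference array $\frac{\sqrt h}N\sum_k\partial_{\mathbf v}g(\mathbf Y_{t_{k-1}})^\top\bm\Sigma_0(\bm\xi'_{k-1}-\bm\eta_{k-1})$, each summand having zero $\mathcal F_{t_{k-1}}$-conditional mean and conditional variance controlled via \eqref{eq:etaxi}--\eqref{eq:xi'xi'}, so it is $O_{\mathbb P}(\sqrt{h/N})$. The replacement bias $\frac{\sqrt h}N\sum_k\big(\partial_{\mathbf v}g(\mathbf Y_{t_k})-\partial_{\mathbf v}g(\mathbf Y_{t_{k-1}})\big)^\top\bm\Sigma_0(\bm\xi'_{k-1}-\bm\eta_{k-1})$ has, by It\^o's lemma and $\mathbb E[\bm\eta_{k-1}\bm\xi_{k-1}'^\top\mid\mathcal F_{t_{k-1}}]=\tfrac12\mathbf I$, a per-term conditional mean of order $\sqrt h$, hence contributes $O(h)$ after averaging. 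This reduces the second display to the first.

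Finally, to turn pointwise-in-$\bm\theta$ convergence into uniform convergence over the compact set $\overline\Theta$, I would use differentiability of $g$ in $\bm\theta$ with polynomially growing derivatives: the mean value theorem bounds the oscillation $\sup_{\|\bm\theta-\bm\theta'\|\le\delta}|g(\cdot;\bm\theta)-g(\cdot;\bm\theta')|$ by $\delta$ times an integrable envelope, giving stochastic equicontinuity, which upgrades convergence on a finite $\delta$-net of $\overline\Theta$ to uniform convergence. I expect the main obstacle to be the second display, precisely the bookkeeping of $\mathcal F_{t_k}$- versus $\mathcal F_{t_{k-1}}$-measurability: because $\Delta_h\mathbf X_{t_k}$ and $\partial_{\mathbf v}g(\mathbf Y_{t_k})$ are built from the same Brownian increment on $[t_{k-1},t_k]$, the naive replacement of $\mathbf V_{t_k}$ by $\Delta_h\mathbf X_{t_k}$ must be justified through the correlation identities \eqref{eq:etaxi}--\eqref{eq:xi'xi'} rather than by independence. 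One may alternatively quote the result directly from \citep{Kessler1997,SamsonThieullen2012}.
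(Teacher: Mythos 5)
The paper never proves this lemma: it is imported wholesale as Proposition~4 of \citep{SamsonThieullen2012} (the complete-observation case going back to \citep{Kessler1997}), so your reconstruction is by construction a different route from the paper's, which is a citation. That said, your strategy --- reduce the discrete average to the continuous-time ergodic limit of Assumption \ref{as:Ergodic}, bound the Riemann-sum error by It\^o expansion and moment bounds, then control the error from replacing $\mathbf{V}_{t_k}$ by $\Delta_h\mathbf{X}_{t_k}$ using \eqref{eq:DiffDeltahXV} --- is essentially the strategy of those references, and the skeleton is sound. What your version buys is transparency: it shows exactly where \ref{as:NLip}--\ref{as:Ergodic} and the Gaussian structure of the increments enter. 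What the citation buys is that the references have already dealt with the genuinely fiddly technical points (time-uniform moment bounds, uniform-in-$\bm{\theta}$ convergence), which a sketch glosses over.

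Two points deserve attention. First, your Taylor expansion with a remainder $R_k$ ``quadratic in $\Delta_h\mathbf{X}_{t_k}-\mathbf{V}_{t_k}$'' uses second derivatives of $g$ in $\mathbf{v}$, which the lemma does not assume: $g$ is only differentiable with polynomially growing derivatives. The fix is to stop at first order via the mean value theorem, $g(\mathbf{X}_{t_k},\Delta_h\mathbf{X}_{t_k})-g(\mathbf{Y}_{t_k})=\partial_{\mathbf v}g\bigl(\mathbf{X}_{t_k},\mathbf{V}_{t_k}+\tau_k(\Delta_h\mathbf{X}_{t_k}-\mathbf{V}_{t_k})\bigr)^\top(\Delta_h\mathbf{X}_{t_k}-\mathbf{V}_{t_k})$, and apply Cauchy--Schwarz with the polynomial growth of $\partial_{\mathbf v}g$ and the $L^2$ bound $\|\Delta_h\mathbf{X}_{t_k}-\mathbf{V}_{t_k}\|=O(\sqrt h)$ from \eqref{eq:DiffDeltahXV}. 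This gives a per-term $L^1$ error of order $\sqrt h$, whose average vanishes --- which also shows that the ``delicate'' martingale decomposition and the $\mathcal{F}_{t_k}$-versus-$\mathcal{F}_{t_{k-1}}$ bookkeeping you flag as the main obstacle are unnecessary here: the lemma claims only convergence in probability of an average, with no rate, so the crude bound suffices and the correlation identities \eqref{eq:etaxi}--\eqref{eq:xi'xi'} are never needed (they matter for the CLT, Lemma \ref{lemma:LnConvergence}, not for this ergodic lemma). Second, your error bounds require constants uniform in $k$, i.e.\ moment bounds $\sup_k \mathbb{E}\|\mathbf{Y}_{t_k}\|^{2p}<\infty$ uniform in time; the bound stated after Assumption \ref{as:NLip} has a constant that may grow with $T$, so one should invoke moment bounds under the invariant law (Assumption \ref{as:Invariant}) or the stationary-regime estimates of \citep{Kessler1997} to make this uniform. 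Neither issue breaks the argument, but both must be patched for the proof to match the stated hypotheses.
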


\subsection{Proof of consistency}

In the following sections, we use superscripts $\mathrm{[C\cdot]}$ and $\mathrm{[P\cdot]}$ to refer to any objective function in the complete and partial observation case, respectively. Likewise, $\mathrm{[\cdot F]}$, $\mathrm{[\cdot R]}$ and $\mathrm{[\cdot S\mid R]}$ denotes an objective function based on the full, rough partial and conditional smooth-given-rough pseudo-likelihood, respectively, regardless of the observation case. 

\begin{proof}[Proof of Theorem \ref{thm:Consistency}]
The proof of the consistency of the estimators follows a similar path as in Theorem 5.1 of \citep{Pilipovic2024}. With $\bm{\sigma} \coloneqq \vech(\bm{\Sigma}\bm{\Sigma}^\top) = ([\bm{\Sigma}\bm{\Sigma}^\top]_{11},[\bm{\Sigma}\bm{\Sigma}^\top]_{12}, [\bm{\Sigma}\bm{\Sigma}^\top]_{22}, ..., [\bm{\Sigma}\bm{\Sigma}^\top]_{1d}, ..., [\bm{\Sigma}\bm{\Sigma}^\top]_{dd})$, we half-vectorize $\bm{\Sigma}\bm{\Sigma}^\top$ to avoid working with tensors when computing derivatives with respect to $\bm{\Sigma}\bm{\Sigma}^\top$. Since $\bm{\Sigma}\bm{\Sigma}^\top$ is a symmetric $d\times d$ matrix, $\bm{\sigma}$ is of dimension $s = d(d+1)/2$. For a diagonal matrix, instead of a half-vectorization, we use $\bm{\sigma} \coloneqq \diag(\bm{\Sigma}\bm{\Sigma}^\top)$ and $s=d$ in that case. 

We start by finding the limits in $\mathbb{P}_{\bm{\theta}_0}$ of  
\begin{align}
    \frac{1}{N-1}  \mathcal{L}_N^{\mathrm{[C\cdot]}} (\bm{\beta}, \bm{\sigma}) \quad \text{and} \quad \frac{1}{N-2}  \mathcal{L}_N^{\mathrm{[P\cdot]}} (\bm{\beta}, \bm{\sigma}), \label{eq:LikConsGamma}
\end{align}
for $N h \to \infty$, $h \to 0$, uniformly in $\bm{\theta}$. We apply Lemma 9 in \citep{GenonCatalot&Jacod} to prove the convergence and use Proposition A1 in \citep{Gloter2006} to prove the uniform convergence. For more detailed derivations, see proofs in \citep{Pilipovic2024}. Taking the expectations of \eqref{eq:asymptotic_L_CR}- \eqref{eq:asymptotic_L_PF}, we conclude that
\begin{align*}
    \frac{1}{N-1}  \mathcal{L}_N^{\mathrm{[CR]}} (\bm{\beta}, \bm{\sigma}) &\to \log \det (\bm{\Sigma}\bm{\Sigma}^\top) + \tr((\bm{\Sigma}\bm{\Sigma}^\top)^{-1}\bm{\Sigma}\bm{\Sigma}_0^\top), \\
    \frac{1}{N-1}  \mathcal{L}_N^{\mathrm{[CS\mid R]}} (\bm{\beta}, \bm{\sigma}) &\to \log \det (\bm{\Sigma}\bm{\Sigma}^\top) + \tr((\bm{\Sigma}\bm{\Sigma}^\top)^{-1}\bm{\Sigma}\bm{\Sigma}_0^\top), \\
    \frac{1}{N-1}  \mathcal{L}_N^{\mathrm{[CF]}} (\bm{\beta}, \bm{\sigma}) &\to 2\log \det (\bm{\Sigma}\bm{\Sigma}^\top) + 2\tr((\bm{\Sigma}\bm{\Sigma}^\top)^{-1}\bm{\Sigma}\bm{\Sigma}_0^\top), \\
    \frac{1}{N-2}  \mathcal{L}_N^{\mathrm{[PR]}} (\bm{\beta}, \bm{\sigma}) &\to \frac{2}{3}\log \det (\bm{\Sigma}\bm{\Sigma}^\top) + \frac{2}{3}\tr((\bm{\Sigma}\bm{\Sigma}^\top)^{-1}\bm{\Sigma}\bm{\Sigma}_0^\top),\\
    \frac{1}{N-2}  \mathcal{L}_N^{\mathrm{[PS\mid R]}} (\bm{\beta}, \bm{\sigma}) &\to 2\log \det (\bm{\Sigma}\bm{\Sigma}^\top) + 2\tr((\bm{\Sigma}\bm{\Sigma}^\top)^{-1}\bm{\Sigma}\bm{\Sigma}_0^\top),\\
    \frac{1}{N-2}  \mathcal{L}_N^{\mathrm{[PF]}} (\bm{\beta}, \bm{\sigma}) &\to \frac{8}{3}\log \det (\bm{\Sigma}\bm{\Sigma}^\top) + \frac{8}{3}\tr((\bm{\Sigma}\bm{\Sigma}^\top)^{-1}\bm{\Sigma}\bm{\Sigma}_0^\top),
\end{align*}
in $\mathbb{P}_{\bm{\theta}_0}$, for $N h \to \infty$, $h \to 0$, uniformly in $\bm{\theta}$. From here, the rest of the proof of consistency for $\widehat{\bm{\Sigma}\bm{\Sigma}}_N^{\top\mathrm{[C\cdot]}}$ and $\widehat{\bm{\Sigma}\bm{\Sigma}}_N^{\top\mathrm{[P\cdot]}}$ is the same as in \citep{Pilipovic2024}. The coefficients in front of $\log\det$ terms in the partial observation setup correspond to the correcting factors in definitions of objective functions \eqref{eq:L_PF}-\eqref{eq:L_PSR}. They are needed to match coefficients in front of $\tr$ terms that come from the forward difference's under- or over-estimation of the noise effects. 

To prove the consistency of the drift estimators $\widehat{\bm{\beta}}_N^{\mathrm{[CR]}}$ and $\widehat{\bm{\beta}}_N^{\mathrm{[PR]}}$, we start by finding the limits in $\mathbb{P}_{\bm{\theta}_0}$ of  
\begin{align}
     \frac{1}{(N-1)h} ( \mathcal{L}_N^{\mathrm{[CR]}}(\bm{\beta}, \bm{\sigma})  - \mathcal{L}_N^{\mathrm{[CR]}}(\bm{\beta}_0, \bm{\sigma})) \quad \text{and} \quad \frac{1}{(N-2)h} ( \mathcal{L}_N^{\mathrm{[PR]}}(\bm{\beta}, \bm{\sigma})  - \mathcal{L}_N^{\mathrm{[PR]}}(\bm{\beta}_0, \bm{\sigma})), \label{eq:LikConsBeta}
\end{align}
for $N h \to \infty$, $h \to 0$, uniformly in $\bm{\theta}$. Starting with expressions \eqref{eq:asymptotic_L_CR} and \eqref{eq:asymptotic_L_PR} we get 
\begin{align*}
    &\frac{1}{(N-1)h} ( \mathcal{L}_N^{\mathrm{[CR]}}(\bm{\beta}, \bm{\sigma})  - \mathcal{L}_N^{\mathrm{[CR]}}(\bm{\beta}_0, \bm{\sigma})) = \frac{2}{(N-1)\sqrt{h}}\sum_{k=1}^{N}  \bm{\eta}_{k-1}^\top \bm{\Sigma}_0^\top  (\bm{\Sigma}\bm{\Sigma}^\top)^{-1}(\mathbf{F}_0(\mathbf{Y}_{t_{k-1}}) - \mathbf{F}(\mathbf{Y}_{t_{k-1}}))\\
    &\hspace{30ex}+ \frac{1}{N-1}\sum_{k=1}^{N} (\mathbf{F}_0(\mathbf{Y}_{t_{k-1}}) - \mathbf{F}(\mathbf{Y}_{t_{k-1}}))^\top (\bm{\Sigma}\bm{\Sigma}^\top)^{-1}(\mathbf{F}_0(\mathbf{Y}_{t_{k-1}}) - \mathbf{F}(\mathbf{Y}_{t_{k-1}}))\notag\\
    &\hspace{30ex}- \frac{1}{N-1}\sum_{k=1}^{N}  \bm{\eta}_{k-1}^\top \bm{\Sigma}_0^\top D_\mathbf{v} (\mathbf{F}(\mathbf{Y}_{t_{k-1}}) - \mathbf{F}_0(\mathbf{Y}_{t_{k-1}}))^\top(\bm{\Sigma}\bm{\Sigma}^\top)^{-1}\bm{\Sigma}_0 \bm{\eta}_{k-1}\\
    &\hspace{30ex}+\frac{1}{N-1}\sum_{k=1}^{N} \tr D_\mathbf{v} (\mathbf{F}(\mathbf{Y}_{t_k}) - \mathbf{F}_0(\mathbf{Y}_{t_k})),\\
    &\frac{1}{(N-2)h} ( \mathcal{L}_N^{\mathrm{[PR]}}(\bm{\beta}, \bm{\sigma})  - \mathcal{L}_N^{\mathrm{[PR]}}(\bm{\beta}_0, \bm{\sigma})) = \frac{2}{(N-2)\sqrt{h}}\sum_{k=1}^{N-1}  \mathbf{U}_{k, k-1}^\top \bm{\Sigma}_0^\top  (\bm{\Sigma}\bm{\Sigma}^\top)^{-1}(\mathbf{F}_0(\mathbf{Y}_{t_{k-1}}) - \mathbf{F}(\mathbf{Y}_{t_{k-1}}))\notag\\
    &\hspace{20ex}+ \frac{1}{N-2}\sum_{k=1}^{N-1} (\mathbf{F}_0(\mathbf{Y}_{t_{k-1}}) - \mathbf{F}(\mathbf{Y}_{t_{k-1}}))^\top (\bm{\Sigma}\bm{\Sigma}^\top)^{-1}(\mathbf{F}_0(\mathbf{Y}_{t_{k-1}}) - \mathbf{F}(\mathbf{Y}_{t_{k-1}}))\notag\\
    &\hspace{20ex}- \frac{1}{N-2}\sum_{k=1}^{N-1} (\mathbf{U}_{k, k-1}+ 2 \bm{\xi}_{k-1}')^\top \bm{\Sigma}_0^\top D_\mathbf{v} (\mathbf{F}(\mathbf{Y}_{t_{k-1}}) - \mathbf{F}_0(\mathbf{Y}_{t_{k-1}}))^\top(\bm{\Sigma}\bm{\Sigma}^\top)^{-1}\bm{\Sigma}_0 \mathbf{U}_{k, k-1}\\
    &\hspace{20ex}+\frac{1}{N-2}\sum_{k=1}^{N-1} \tr D_\mathbf{v} (\mathbf{F}(\mathbf{Y}_{t_k}) - \mathbf{F}_0(\mathbf{Y}_{t_k})).
\end{align*}
To prove the convergence in probability of the previous two sequences, we use Lemma \ref{lemma:Kessler} and Lemma 9 in \citep{GenonCatalot&Jacod}. To apply Lemma 9 from \citep{GenonCatalot&Jacod}, we need to show that the sum of expectations converges to a certain value while the sum of covariances converges to zero. Here, we only show the former. Moreover, standard tools like Proposition A1 in \citep{Gloter2006} or Lemma 3.1 in \citep{Yoshida1990} can be used to prove uniform convergence. Thus, we look at the expectation to find the limits of these sequences. We use  the known covariances \eqref{eq:etaxi} and \eqref{eq:UU} to get
\begin{align}
    \frac{1}{N h} (\mathcal{L}_N^{\mathrm{[\cdot R]}}(\bm{\beta}, \bm{\sigma})  - \mathcal{L}_N^{\mathrm{[\cdot R]}}(\bm{\beta}_0, \bm{\sigma})) \xrightarrow[\substack{Nh \to \infty\\ h \to 0}]{\mathbb{P}_{\bm{\theta}_0}} &\int(\mathbf{F}_0(\mathbf{y}) - \mathbf{F}(\mathbf{y}))^\top (\bm{\Sigma}\bm{\Sigma}^\top)^{-1} (\mathbf{F}_0(\mathbf{y}) - \mathbf{F}(\mathbf{y})) \dif \nu_0(\mathbf{y})\notag\\
    +&\int \tr (D_\mathbf{v} \left(\mathbf{F}_0\left(\mathbf{y}\right)-\mathbf{F}\left(\mathbf{y}\right)\right)(\bm{\Sigma}\bm{\Sigma}_0^\top(\bm{\Sigma}\bm{\Sigma}^\top)^{-1} - \mathbf{I}))\dif \nu_0(\mathbf{y}). \label{eq:drift_R_cons}
\end{align}
Thus, the consistency of the drift estimator in the partial case coincides with the complete case when using rough objective functions. This is because the right-hand side of \eqref{eq:drift_R_cons} is non-negative when $\bm{\Sigma}\bm{\Sigma}^\top = \bm{\Sigma}\bm{\Sigma}_0^\top$, and the left-hand side is non-positive, following the definition of the likelihood. The remainder of the proof is analogous to that in \cite{Pilipovic2024} and is therefore not repeated here.

Here, we also illustrate why the objective functions based on the conditional pseudo-likelihood of smooth given rough coordinates do not provide identifiable drift estimators. Starting with the complete observations objective function \eqref{eq:asymptotic_L_CSR} and using that $\mathbb{E}_{\bm{\theta}_0}[(\bm{\eta}_{k-1} - 2 \bm{\xi}'_{k-1})\bm{\eta}_{k-1}^\top | \mathcal{F}_{t_{k-1}}] = \mathbf{0}$, we conclude
\begin{align}
     \frac{1}{N h} (\mathcal{L}_N^{\mathrm{[CS\mid R]}}(\bm{\beta}, \bm{\sigma})  - \mathcal{L}_N^{\mathrm{[CS\mid R]}}(\bm{\beta}_0, \bm{\sigma})) \xrightarrow[\substack{Nh \to \infty\\ h \to 0}]{\mathbb{P}_{\bm{\theta}_0}} 0.\label{eq:drift_CSR_cons}
\end{align}
In the partial observation case, we need to add the term $4\log |\det D_\mathbf{v}\bm{f}_{h/2}|$ in \eqref{eq:L_CSR}. Due to this correction, we obtain the consistency of the drift estimator from the following derivations and the fact that the diffusion estimator converges faster
\begin{align*}
    \frac{1}{(N-2)h} ( \mathcal{L}_N^{\mathrm{[PS\mid R]}}(\bm{\beta}, \bm{\sigma})  - \mathcal{L}_N^{\mathrm{[PS\mid R]}}(\bm{\beta}_0, \bm{\sigma})) &=\frac{2}{N-2}\sum_{k=1}^{N-1} \tr D_\mathbf{v} (\mathbf{N}(\mathbf{Y}_{t_k}) - \mathbf{N}_0(\mathbf{Y}_{t_k}))\\
    &\hspace{-12ex}+ \frac{3}{N-2}\sum_{k=1}^{N-1} \tr((\bm{\Sigma}\bm{\Sigma}^\top)^{-1} \bm{\Sigma}_0 \mathbf{U}_{k, k-1}\mathbf{U}_{k, k-1}^\top \bm{\Sigma}_0^\top D_\mathbf{v} (\mathbf{N}_0(\mathbf{Y}_{t_{k-1}}) -\mathbf{N}(\mathbf{Y}_{t_{k-1}}))^\top)\\
    &\hspace{-12ex}\xrightarrow[\substack{Nh \to \infty\\ h \to 0}]{\mathbb{P}_{\bm{\theta}_0}} 2\int \tr (D_\mathbf{v} \left(\mathbf{N}_0\left(\mathbf{y}\right)-\mathbf{N}\left(\mathbf{y}\right)\right)(\bm{\Sigma}\bm{\Sigma}_0^\top(\bm{\Sigma}\bm{\Sigma}^\top)^{-1} - \mathbf{I}))\dif \nu_0(\mathbf{y}). 
\end{align*}
Finally, the consistency of the estimators based on the full objective functions follows from the previous proofs, \eqref{eq:asymptotic_L_CF}, and \eqref{eq:asymptotic_L_PF}. That concludes the proof of consistency.
\end{proof}

\begin{proof}[Proof of Theorem \ref{thm:AsymtoticNormality}]
Here, we only outline the proof. According to Theorem 1 in \citep{Kessler1997} or Theorem 1 in \citep{MSorensenMUchidaSmallDiffusions}, Lemmas \ref{lemma:AsymptoticNormality1} and \ref{lemma:LnConvergence} below are enough for establishing asymptotic normality of $\hat{\bm{\theta}}_N$. For more details, see the proof of Theorem 1 in \citep{MSorensenMUchidaSmallDiffusions}.

\begin{lemma} \label{lemma:AsymptoticNormality1}
Let $\mathbf{C}_N(\bm{\theta}_0)$ be as defined in \eqref{eq:CN}. For $h \to 0$ and $Nh \to \infty$, it holds
\begin{align*}
        &\mathbf{C}_N^\mathrm{[CR]}(\bm{\theta}_0) \xrightarrow[]{\mathbb{P}_{\bm{\theta}_0}} \begin{bmatrix}
        2\mathbf{C}_{\bm{\beta}}(\bm{\theta}_0) & \bm{0}_{r\times s}\\
        \bm{0}_{s\times r} & \mathbf{C}_{\bm{\sigma}}(\bm{\theta}_0)
        \end{bmatrix}, && \mathbf{C}_N^\mathrm{[PR]}(\bm{\theta}_0) \xrightarrow[]{\mathbb{P}_{\bm{\theta}_0}} \begin{bmatrix}
        2\mathbf{C}_{\bm{\beta}}(\bm{\theta}_0) & \bm{0}_{r\times s}\\
        \bm{0}_{s\times r} & \frac{2}{3}\mathbf{C}_{\bm{\sigma}}(\bm{\theta}_0)
        \end{bmatrix}, \\
        &\mathbf{C}_N^\mathrm{[CF]}(\bm{\theta}_0) \xrightarrow[]{\mathbb{P}_{\bm{\theta}_0}} \begin{bmatrix}
        2\mathbf{C}_{\bm{\beta}}(\bm{\theta}_0) & \bm{0}_{r\times s}\\
        \bm{0}_{s\times r} & 2\mathbf{C}_{\bm{\sigma}}(\bm{\theta}_0)
        \end{bmatrix}, && \mathbf{C}_N^\mathrm{[PF]}(\bm{\theta}_0) \xrightarrow[]{\mathbb{P}_{\bm{\theta}_0}} \begin{bmatrix}
        2\mathbf{C}_{\bm{\beta}}(\bm{\theta}_0) & \bm{0}_{r\times s}\\
        \bm{0}_{s\times r} & \frac{8}{3}\mathbf{C}_{\bm{\sigma}}(\bm{\theta}_0)
        \end{bmatrix}.
\end{align*}
Moreover, let $\rho_N$ be a sequence such that $\rho_N \to 0$, then in all cases it holds
\begin{align*}
    \sup_{\|\bm{\theta}\| \leq \rho_N} \|\mathbf{C}_N^\mathrm{[obj]}(\bm{\theta}_0 + \bm{\theta}) -\mathbf{C}_N^\mathrm{[obj]}(\bm{\theta}_0) &\|\xrightarrow[]{\mathbb{P}_{\bm{\theta}_0}} 0.
\end{align*}
\end{lemma}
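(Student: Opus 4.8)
The plan is to differentiate the asymptotic expansions of the objective functions rather than the objective functions themselves, since the two Hessians have the same $\mathbb{P}_{\bm{\theta}_0}$-limit. Concretely, I would first show that replacing $\mathcal{L}^{\mathrm{[obj]}}$ by $\mathcal{L}_N^{\mathrm{[obj]}}$ from \eqref{eq:asymptotic_L_CR}--\eqref{eq:asymptotic_L_PF} alters each normalized block of $\mathbf{C}_N^{\mathrm{[obj]}}$ in \eqref{eq:CN} by a quantity vanishing in probability. This rests on Assumption \ref{as:NLip}: since $\mathbf{N}\in C^3$ with polynomially growing derivatives, the $\mathbf{R}(h^{3/2},\cdot)$ remainders may be differentiated twice in $\bm{\beta}$ and $\bm{\sigma}$ without changing their $h$-order, so after the factors $1/(Nh)$, $1/(N\sqrt{h})$, $1/N$ and summation (controlled by Lemma \ref{lemma:Kessler}) they contribute $O_{\mathbb{P}}(\sqrt{h})$, $O_{\mathbb{P}}(h)$ and $O_{\mathbb{P}}(h^{3/2})$ respectively. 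I would then use the additive decompositions \eqref{eq:asymptotic_L_CF} and \eqref{eq:asymptotic_L_PF}, so that it suffices to compute the four base Hessians for $\mathrm{[CR]}$, $\mathrm{[CS\mid R]}$, $\mathrm{[PR]}$, $\mathrm{[PS\mid R]}$ and add.

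For the drift block I differentiate twice in $\bm{\beta}$ at $\bm{\theta}_0$. In the quadratic term $h\sum_k(\mathbf{F}_0-\mathbf{F})^\top(\bm{\Sigma}\bm{\Sigma}^\top)^{-1}(\mathbf{F}_0-\mathbf{F})$ the factor $\mathbf{F}_0-\mathbf{F}$ vanishes at $\bm{\beta}_0$, so only $2(\partial_{\beta^{(i_1)}}\mathbf{F}_0)^\top(\bm{\Sigma}\bm{\Sigma}_0^\top)^{-1}(\partial_{\beta^{(i_2)}}\mathbf{F}_0)$ survives, and Lemma \ref{lemma:Kessler} sends $\tfrac1N\sum_k$ of it to $2[\mathbf{C}_{\bm{\beta}}(\bm{\theta}_0)]_{i_1 i_2}$. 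The linear-in-$\bm{\eta}_{k-1}$ (resp. $\mathbf{U}_{k,k-1}$) term becomes, after normalization, a sum of $\mathcal{F}_{t_{k-1}}$-conditionally centered increments and converges to $0$ by Lemma 9 in \citep{GenonCatalot&Jacod}; the pair $-h\sum_k\bm{\eta}_{k-1}^\top\bm{\Sigma}_0^\top D_{\mathbf{v}}\mathbf{F}(\bm{\beta})^\top(\bm{\Sigma}\bm{\Sigma}^\top)^{-1}\bm{\Sigma}_0\bm{\eta}_{k-1}$ and $h\sum_k\tr D_{\mathbf{v}}\mathbf{F}(\bm{\beta})$ cancel exactly in the Hessian at $\bm{\theta}_0$, because $\mathbb{E}_{\bm{\theta}_0}[\bm{\eta}_{k-1}\bm{\eta}_{k-1}^\top\mid\mathcal{F}_{t_{k-1}}]=\mathbf{I}$ and $(\bm{\Sigma}\bm{\Sigma}^\top)^{-1}\bm{\Sigma}\bm{\Sigma}_0^\top=\mathbf{I}$ there. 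Since $\mathcal{L}_N^{\mathrm{[CS\mid R]}}$ and $\mathcal{L}_N^{\mathrm{[PS\mid R]}}$ carry no quadratic form in $\mathbf{F}_0-\mathbf{F}$ and their $\bm{\beta}$-dependent terms are conditionally centered (using $\mathbb{E}_{\bm{\theta}_0}[(\bm{\eta}_{k-1}-2\bm{\xi}'_{k-1})\bm{\eta}_{k-1}^\top\mid\mathcal{F}_{t_{k-1}}]=\mathbf{0}$), they contribute nothing; hence the drift block tends to $2\mathbf{C}_{\bm{\beta}}(\bm{\theta}_0)$ in all four cases.

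For the diffusion block I use matrix calculus on $\log\det(\bm{\Sigma}\bm{\Sigma}^\top)$ and $(\bm{\Sigma}\bm{\Sigma}^\top)^{-1}$, both linear in $\bm{\sigma}=\vech(\bm{\Sigma}\bm{\Sigma}^\top)$. A generic term $a(N)\log\det(\bm{\Sigma}\bm{\Sigma}^\top)+b\sum_k \mathbf{W}_k^\top\bm{\Sigma}_0^\top(\bm{\Sigma}\bm{\Sigma}^\top)^{-1}\bm{\Sigma}_0\mathbf{W}_k$ with $\mathbb{E}_{\bm{\theta}_0}[\mathbf{W}_k\mathbf{W}_k^\top\mid\mathcal{F}_{t_{k-1}}]=\gamma\mathbf{I}$ yields, after twice differentiating and dividing by $N$, the limit $(2b\gamma-a)\,[\mathbf{C}_{\bm{\sigma}}(\bm{\theta}_0)]_{j_1 j_2}$ at $\bm{\theta}_0$. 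The covariance identities \eqref{eq:etaxi}, \eqref{eq:xi'xi'}, \eqref{eq:UU} give $\gamma=1$ for $\mathrm{[CR]}$, $\gamma=\tfrac13$ for $\mathrm{[CS\mid R]}$ (as $\mathbb{E}_{\bm{\theta}_0}[(\bm{\eta}_{k-1}-2\bm{\xi}'_{k-1})(\bm{\eta}_{k-1}-2\bm{\xi}'_{k-1})^\top\mid\mathcal{F}_{t_{k-1}}]=\tfrac13\mathbf{I}$), and $\gamma=\tfrac23$ for $\mathrm{[PR]}$, $\mathrm{[PS\mid R]}$; with $(a,b)=(1,1),(1,3),(\tfrac23,1),(2,3)$ this produces $\mathbf{C}_{\bm{\sigma}}$, $\mathbf{C}_{\bm{\sigma}}$, $\tfrac23\mathbf{C}_{\bm{\sigma}}$, $2\mathbf{C}_{\bm{\sigma}}$, and summing along \eqref{eq:asymptotic_L_CF}, \eqref{eq:asymptotic_L_PF} yields the constants $1,2,\tfrac23,\tfrac83$. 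For the off-diagonal block the quadratic-in-$\mathbf{F}$ contribution vanishes at $\bm{\beta}_0$ and every remaining mixed derivative is a conditionally centered sum normalized by $1/(N\sqrt{h})$, hence tends to $\mathbf{0}$ by Lemma 9 in \citep{GenonCatalot&Jacod}.

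For the uniform statement I would verify that $\bm{\theta}\mapsto\mathbf{C}_N^{\mathrm{[obj]}}(\bm{\theta})$ is, with probability tending to one, Lipschitz near $\bm{\theta}_0$ with a constant bounded in $N$, using a third-order Taylor bound on $\mathcal{L}^{\mathrm{[obj]}}$ (again Assumption \ref{as:NLip}) together with the uniform ergodic control of Proposition A1 in \citep{Gloter2006} or Lemma 3.1 in \citep{Yoshida1990}; since $\rho_N\to0$, the supremum over $\|\bm{\theta}\|\le\rho_N$ of $\|\mathbf{C}_N^{\mathrm{[obj]}}(\bm{\theta}_0+\bm{\theta})-\mathbf{C}_N^{\mathrm{[obj]}}(\bm{\theta}_0)\|$ then vanishes in $\mathbb{P}_{\bm{\theta}_0}$. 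I expect the main obstacle to be the first step: rigorously justifying that differentiation commutes with the expansions \eqref{eq:asymptotic_L_CR}--\eqref{eq:asymptotic_L_PF} up to negligible remainders, since two $\bm{\beta}$-derivatives could a priori inflate the order of an $\mathbf{R}(h^{p},\cdot)$ term, and in the partial-observation case this must be reconciled with the coupling of $\mathbf{X}_{t_k}$ through both $\bm{f}^{\star -1}_{h/2}$ and $\widetilde{\bm{\mu}}^{\mathrm{[R]}}_h$ noted after \eqref{eq:ZtkRbar}.
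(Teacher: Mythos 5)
Your proposal is correct and follows essentially the same route as the paper's proof: the paper likewise differentiates the expansions $\mathcal{L}_N^{\mathrm{[obj]}}$ from \eqref{eq:asymptotic_L_CR}--\eqref{eq:asymptotic_L_PF} (justifying this with the standing remark that $\partial_{\theta^{(i)}} R(h^p, \mathbf{Y}_{t_{k-1}}) = R(h^p, \mathbf{Y}_{t_{k-1}})$, which is exactly the commutation issue you flag as the main obstacle), obtains the drift block $2\mathbf{C}_{\bm{\beta}}$ from the quadratic term via Lemma \ref{lemma:Kessler}, kills the noise-linear and mixed $\bm{\beta}\bm{\sigma}$ terms with Lemma 9 of \citep{GenonCatalot&Jacod}, computes the diffusion block from the second moments of $\bm{\eta}_{k-1}$, $\bm{\eta}_{k-1}-2\bm{\xi}'_{k-1}$ and $\mathbf{U}_{k,k-1}$, and settles the uniform statement by continuity of the limits in $\bm{\theta}$ together with the uniform ergodic convergence. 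Your $(2b\gamma - a)\mathbf{C}_{\bm{\sigma}}$ bookkeeping is a tidier unification of the case-by-case trace computations the paper carries out, and your constants all check against the paper's.

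One imprecision: your justification that the $\mathrm{[PS\mid R]}$ objective contributes nothing to the drift block --- ``its $\bm{\beta}$-dependent terms are conditionally centered'' --- is wrong as stated. In \eqref{eq:asymptotic_L_PSR} the $\bm{\beta}$-dependent terms are $-3h\sum_k \mathbf{U}_{k,k-1}^\top\bm{\Sigma}_0^\top D_{\mathbf{v}}\mathbf{N}^\top(\bm{\Sigma}\bm{\Sigma}^\top)^{-1}\bm{\Sigma}_0\mathbf{U}_{k,k-1}$ and $+2h\sum_k\tr D_{\mathbf{v}}\mathbf{N}$; neither is centered (indeed $\mathbb{E}_{\bm{\theta}_0}[\mathbf{U}_{k,k-1}\mathbf{U}_{k,k-1}^\top\mid\mathcal{F}_{t_{k-1}}]=\tfrac{2}{3}\mathbf{I}$), and the identity $\mathbb{E}_{\bm{\theta}_0}[(\bm{\eta}_{k-1}-2\bm{\xi}'_{k-1})\bm{\eta}_{k-1}^\top\mid\mathcal{F}_{t_{k-1}}]=\mathbf{0}$ is only relevant to the $\mathrm{[CS\mid R]}$ case. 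The correct argument is the same trace cancellation you already used for $\mathrm{[CR]}$: at $\bm{\theta}_0$ the conditional mean of the quadratic term is $-3\cdot\tfrac{2}{3}\,h\sum_k\tr D_{\mathbf{v}}\partial^2_{\beta\beta}\mathbf{N} = -2h\sum_k\tr D_{\mathbf{v}}\partial^2_{\beta\beta}\mathbf{N}$, which cancels the second derivative of the trace term, so the block still tends to zero. This is a fixable slip, not a gap in the approach.
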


\begin{lemma} \label{lemma:LnConvergence}
Let $\bm{\lambda}_{N}$ be as defined \eqref{eq:sNLN}. For $h \to 0$, $Nh \to \infty$ and $Nh^2 \to 0$, it holds
\begin{align*}
        &\bm{\lambda}_N^\mathrm{[CR]}  \xrightarrow[]{d} \mathcal{N}\left(\bm{0}, \begin{bmatrix}
        4\mathbf{C}_{\bm{\beta}}(\bm{\theta}_0) & \bm{0}_{r\times s}\\
        \bm{0}_{s\times r} & 2\mathbf{C}_{\bm{\sigma}}(\bm{\theta}_0)
        \end{bmatrix}\right),        
        &&\bm{\lambda}_N^\mathrm{[PR]}  \xrightarrow[]{d}  \mathcal{N}\left(\bm{0}, \begin{bmatrix}
        4\mathbf{C}_{\bm{\beta}}(\bm{\theta}_0) & \bm{0}_{r\times s}\\
        \bm{0}_{s\times r} & \mathbf{C}_{\bm{\sigma}}(\bm{\theta}_0)
        \end{bmatrix}\right),\\
        &\bm{\lambda}_N^\mathrm{[CF]}  \xrightarrow[]{d} \mathcal{N}\left(\bm{0}, \begin{bmatrix}
        4\mathbf{C}_{\bm{\beta}}(\bm{\theta}_0) & \bm{0}_{r\times s}\\
        \bm{0}_{s\times r} & 4\mathbf{C}_{\bm{\sigma}}(\bm{\theta}_0)
        \end{bmatrix}\right),        
        &&\bm{\lambda}_N^\mathrm{[PF]}  \xrightarrow[]{d}  \mathcal{N}\left(\bm{0}, \begin{bmatrix}
        4\mathbf{C}_{\bm{\beta}}(\bm{\theta}_0) & \bm{0}_{r\times s}\\
        \bm{0}_{s\times r} & 16\mathbf{C}_{\bm{\sigma}}(\bm{\theta}_0)
        \end{bmatrix}\right),
\end{align*}
under $\mathbb{P}_{\bm{\theta}_0}$.
\end{lemma}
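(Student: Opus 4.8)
The plan is to prove a central limit theorem for each normalized score vector $\bm{\lambda}_N^{\mathrm{[obj]}}$ of \eqref{eq:sNLN} by differentiating the objective function at $\bm{\theta}_0$, isolating the leading stochastic terms, and invoking a multivariate martingale CLT. First I would replace each $\mathcal{L}^{\mathrm{[obj]}}$ by its Taylor approximation $\mathcal{L}_N^{\mathrm{[obj]}}$ from \eqref{eq:asymptotic_L_CR}--\eqref{eq:asymptotic_L_PF} (which rest on Propositions \ref{prop:ZtkandZtkbar} and \ref{prop:ZtkS}) and check that the gradient of the discarded remainder is negligible: each remainder is a sum of $N$ terms of order $R(h^{3/2},\mathbf{Y}_{t_{k-1}})$, so after the normalizations $1/\sqrt{Nh}$ and $1/\sqrt{N}$ its contribution is of order $\sqrt{N}h$ and $\sqrt{N}h^{3/2}$, both vanishing under $Nh^2\to 0$. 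This is exactly where the rapidly increasing design condition enters.

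Differentiating the approximate objectives at $\bm{\theta}_0$ produces two qualitatively different objects. The drift score $-\tfrac{1}{\sqrt{Nh}}\nabla_{\bm{\beta}}\mathcal{L}^{\mathrm{[obj]}}(\bm{\theta}_0)$ reduces to a sum that is \emph{linear} in the noise, $\tfrac{2}{\sqrt{N}}\sum_k \mathbf{G}_k^\top\bm{\Sigma}_0^\top(\bm{\Sigma}\bm{\Sigma}_0^\top)^{-1}\partial_{\beta^{(i)}}\mathbf{F}_0(\mathbf{Y}_{t_{k-1}})$, with $\mathbf{G}_k=\bm{\eta}_{k-1}$ in the complete case and $\mathbf{G}_k=\mathbf{U}_{k,k-1}$ in the partial case (the quadratic term in $\mathbf{F}-\mathbf{F}_0$ differentiates to zero at $\bm{\beta}_0$). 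The diffusion score $-\tfrac{1}{\sqrt{N}}\nabla_{\bm{\sigma}}\mathcal{L}^{\mathrm{[obj]}}(\bm{\theta}_0)$ reduces to a \emph{centered quadratic} $\tfrac{1}{\sqrt{N}}\sum_k(\mathbf{G}_k^\top\mathbf{M}_j\mathbf{G}_k-\mathbb{E}[\mathbf{G}_k^\top\mathbf{M}_j\mathbf{G}_k])$, where the $\log\det$ terms of the objective (with their correction factors in the partial case) precisely cancel the mean, leaving a centered increment. The limiting covariances are then read off from the It\^o-isometry identities \eqref{eq:etaxi}--\eqref{eq:UU} together with the ergodic Lemma \ref{lemma:Kessler}: for the drift block via $\mathbb{E}[(\mathbf{G}^\top\mathbf{a})(\mathbf{G}^\top\mathbf{b})]$, and for the diffusion block via the Gaussian identity $\mathrm{Cov}(\mathbf{G}^\top\mathbf{M}_{j_1}\mathbf{G},\mathbf{G}^\top\mathbf{M}_{j_2}\mathbf{G})=2\tr(\mathbf{M}_{j_1}\mathbf{M}_{j_2})$, which collapse to $[\mathbf{C}_{\bm{\beta}}(\bm{\theta}_0)]_{ij}$ and $[\mathbf{C}_{\bm{\sigma}}(\bm{\theta}_0)]_{j_1j_2}$ after the matrix algebra. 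The drift--diffusion off-diagonal block vanishes because it pairs a linear with a centered-quadratic functional of the same centered Gaussian, i.e. an odd moment. For the full objectives one adds the smooth-given-rough score from Proposition \ref{prop:ZtkS} and recomputes the variance of the sum, including the cross-covariance between the rough and conditional scores.

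The main obstacle is the partial-observation case, where $\mathbf{U}_{k,k-1}=\bm{\xi}'_k+\bm{\xi}_{k-1}$ (and likewise $\mathbf{Q}_{k,k-1}$) is built from stochastic integrals over the overlapping interval $[t_{k-1},t_{k+1}]$; hence $\mathbf{U}_{k,k-1}$ and $\mathbf{U}_{k+1,k}$ share the subinterval $[t_k,t_{k+1}]$ and are correlated, so the score increments do \emph{not} form a martingale-difference sequence with respect to the natural filtration. I would handle this by a one-step martingale decomposition (or a finite-range-dependence blocking argument) and, crucially, compute the limiting variance by summing both the diagonal conditional variances and the consecutive covariances induced by $\mathbb{E}[\bm{\xi}'_k\bm{\xi}_k^\top]=\tfrac16\mathbf{I}$. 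It is exactly these overlap terms that restore the drift block to $4\mathbf{C}_{\bm{\beta}}$ in the partial case (the diagonal part $\tfrac{2}{3}$ alone yields only $\tfrac{8}{3}\mathbf{C}_{\bm{\beta}}$, and the consecutive covariance supplies the remaining $\tfrac{4}{3}\mathbf{C}_{\bm{\beta}}$) and that pin down the diffusion constants $2,\tfrac23,\ldots$ behind each stated covariance. Finally, with the conditional-variance limits secured and a Lyapunov condition verified through the finite fourth moments of the Gaussian functionals, the multivariate martingale central limit theorem of \citep{CRIMALDI2005571} delivers the asymptotic normality with the claimed block-diagonal covariance matrices.
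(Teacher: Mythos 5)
Your proposal is correct, and its skeleton matches the paper's: pass to the approximated objectives and check that the discarded remainders are $O(\sqrt{N}h)$ resp.\ $O(\sqrt{N}h^{3/2})$ under $Nh^2\to 0$; split the score at $\bm{\theta}_0$ into a linear-in-noise drift part and a centered-quadratic diffusion part; kill the drift--diffusion block by odd Gaussian moments; and conclude with the martingale CLT of \citep{CRIMALDI2005571}. Where you genuinely diverge is the mechanics for the partial-observation case. The paper follows \citep{Gloter2000, Gloter2006} and \emph{reorders} the sums, writing $\sum_k \mathbf{U}_{k,k-1}^\top\mathbf{a}(\mathbf{Y}_{t_{k-1}}) = \sum_k\bigl(\bm{\xi}_{k-1}^\top\mathbf{a}(\mathbf{Y}_{t_{k-1}})+\bm{\xi}_{k-1}'^\top\mathbf{a}(\mathbf{Y}_{t_{k-2}})\bigr)+o_{\mathbb{P}_{\bm{\theta}_0}}(1)$, so that the regrouped increments $\bm{\phi}^{\star}$, $\bm{\rho}^{\star}$ are exactly $\mathcal{F}_{t_k}$-measurable and conditionally centered; Proposition 3.1 of \citep{CRIMALDI2005571} then applies verbatim, with conditional variances evaluated through Corollary \ref{cor:4thmoments}. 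You instead keep the one-dependent $\mathbf{U}$-based increments and assemble the limiting variance from diagonal terms plus consecutive-lag autocovariances, and your bookkeeping checks out: for the drift block, $4\cdot\tfrac{2}{3}+2\cdot 4\cdot\tfrac{1}{6}=\tfrac{8}{3}+\tfrac{4}{3}=4$, and for the diffusion block, $2\tr\bigl(\tfrac{2}{3}\mathbf{B}_{j_1}\tfrac{2}{3}\mathbf{B}_{j_2}\bigr)+2\cdot 2\tr\bigl(\mathbf{B}_{j_1}\tfrac{1}{6}\mathbf{B}_{j_2}\tfrac{1}{6}\bigr)=\bigl(\tfrac{8}{9}+\tfrac{1}{9}\bigr)\tr(\mathbf{B}_{j_1}\mathbf{B}_{j_2})$, reproducing the stated constants $4\mathbf{C}_{\bm{\beta}}$ and $\mathbf{C}_{\bm{\sigma}}$. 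The trade-off: the paper's regrouping makes the martingale-difference structure exact, so the CLT is immediate; your autocovariance route makes the origin of the constants more transparent, but the "one-step martingale decomposition or blocking" step still has to be carried out rigorously, and once written down it essentially reproduces the paper's reordering. Your plan for the full objectives (add the smooth-given-rough score and recompute variances including cross-covariances) is also what the paper does, except it shortcuts the computation via $\bm{\rho}^{\mathrm{[PS\mid R]}}=3\,\bm{\rho}^{\mathrm{[PR]}}+o_{\mathbb{P}_{\bm{\theta}_0}}(1)$ and $\bm{\rho}^{\mathrm{[PF]}}=4\,\bm{\rho}^{\mathrm{[PR]}}+o_{\mathbb{P}_{\bm{\theta}_0}}(1)$ in the partial case, and in the complete case by noting that $\bm{\eta}_{k-1}$ and $\bm{\eta}_{k-1}-2\bm{\xi}_{k-1}'$ are uncorrelated, so the rough and conditional scores are asymptotically independent and the variances simply add to give $4\mathbf{C}_{\bm{\sigma}}$.
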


Now, the two previous lemmas suggest
\begin{align*}
    \mathbf{s}_N^\mathrm{[obj]} = (\mathbf{D}_n^\mathrm{[obj]})^{-1} \bm{\lambda}_N^\mathrm{[obj]} \xrightarrow[]{d} \mathbf{C}_N^\mathrm{[obj]}(\bm{\theta}_0)^{-1}\bm{\lambda}_N^\mathrm{[obj]}.
\end{align*}
The previous line is not entirely formal, but it gives the intuition. For more details on formally deriving the result, see Section 7.4 in \cite{Pilipovic2024} or proof of Theorem 1 in \cite{MSorensenMUchidaSmallDiffusions}. 
\end{proof}

In the following proofs, we recurrently use the fact that $\partial_{\theta^{(i)}} R(h^p, \mathbf{Y}_{t_{k-1}}) = R(h^p, \mathbf{Y}_{t_{k-1}})$, for any $p$ and $i$.

\begin{proof}[Proof of Lemma \ref{lemma:AsymptoticNormality1}]
    We start by proving the first part of the lemma for both complete and partial cases using the rough objective functions \eqref{eq:asymptotic_L_CR} and \eqref{eq:asymptotic_L_PR}. First, we find their second derivatives with respect to $\bm{\beta}$
    \begin{align*}
        \frac{1}{(N-1) h} \partial_{\beta^{(i_1)}\beta^{(i_2)}}^2\mathcal{L}_N^{\mathrm{[CR]}}\left(\mathbf{Y}_{0:t_N}; \bm{\theta}\right) &=\frac{2}{(N-1)\sqrt{h}} \sum_{k=1}^{N}  \bm{\eta}_{k-1}^\top \bm{\Sigma}_0^\top  (\bm{\Sigma}\bm{\Sigma}^\top)^{-1}\partial_{\beta^{(i_1)}\beta^{(i_2)}}^2 \mathbf{F}(\mathbf{Y}_{t_{k-1}}; \bm{\beta}) \notag\\
        &\hspace{-25ex}+ \frac{1}{N-1}\sum_{k=1}^{N} \tr D_\mathbf{v}  \partial_{\beta^{(i_1)}\beta^{(i_2)}}^2 \mathbf{F}(\mathbf{Y}_{t_k}; \bm{\beta})+ \frac{2}{N-1}\sum_{k=1}^{N} \partial_{\beta^{(i_1)}}\mathbf{F}(\mathbf{Y}_{t_{k-1}}; \bm{\beta})^\top (\bm{\Sigma}\bm{\Sigma}^\top)^{-1}\partial_{\beta^{(i_2)}}\mathbf{F}(\mathbf{Y}_{t_{k-1}}; \bm{\beta})\notag\\
        &\hspace{-25ex}- \frac{2}{N-1}\sum_{k=1}^{N} \partial_{\beta^{(i_1)}\beta^{(i_2)}}^2\mathbf{F}(\mathbf{Y}_{t_{k-1}}; \bm{\beta})^\top (\bm{\Sigma}\bm{\Sigma}^\top)^{-1}(\mathbf{F}(\mathbf{Y}_{t_{k-1}}; \bm{\beta}_0) - \mathbf{F}(\mathbf{Y}_{t_{k-1}}; \bm{\beta}))\notag\\
        &\hspace{-25ex}- \frac{1}{N-1}\sum_{k=1}^{N} \bm{\eta}_{k-1}^\top \bm{\Sigma}_0^\top D_\mathbf{v} \partial_{\beta^{(i_1)}\beta^{(i_2)}}^2\mathbf{F}(\mathbf{Y}_{t_{k-1}}; \bm{\beta})^\top(\bm{\Sigma}\bm{\Sigma}^\top)^{-1}\bm{\Sigma}_0 \bm{\eta}_{k-1} , \\
        \frac{1}{(N-2) h} \partial_{\beta^{(i_1)}\beta^{(i_2)}}^2\mathcal{L}_N^{\mathrm{[PR]}}\left(\mathbf{Y}_{0:t_N}; \bm{\theta}\right) &=\frac{2}{(N-2)\sqrt{h}} \sum_{k=1}^{N-1}  \mathbf{U}_{k, k-1}^\top \bm{\Sigma}_0^\top  (\bm{\Sigma}\bm{\Sigma}^\top)^{-1}\partial_{\beta^{(i_1)}\beta^{(i_2)}}^2 \mathbf{F}(\mathbf{Y}_{t_{k-1}}; \bm{\beta}) \notag\\
        &\hspace{-25ex}+\frac{1}{N-2}\sum_{k=1}^{N-1} \tr D_\mathbf{v} \partial_{\beta^{(i_1)}\beta^{(i_2)}}^2\mathbf{F}(\mathbf{Y}_{t_k}; \bm{\beta})+ \frac{2}{N-2}\sum_{k=1}^{N-1} \partial_{\beta^{(i_1)}}\mathbf{F}(\mathbf{Y}_{t_{k-1}}; \bm{\beta})^\top (\bm{\Sigma}\bm{\Sigma}^\top)^{-1}\partial_{\beta^{(i_2)}}\mathbf{F}(\mathbf{Y}_{t_{k-1}}; \bm{\beta})\notag\\
        &\hspace{-25ex}- \frac{2}{N-2}\sum_{k=1}^{N-1} \partial_{\beta^{(i_1)}\beta^{(i_2)}}^2\mathbf{F}(\mathbf{Y}_{t_{k-1}}; \bm{\beta})^\top (\bm{\Sigma}\bm{\Sigma}^\top)^{-1}(\mathbf{F}(\mathbf{Y}_{t_{k-1}}; \bm{\beta}_0) - \mathbf{F}(\mathbf{Y}_{t_{k-1}}; \bm{\beta}))\notag\\
        &\hspace{-25ex}- \frac{1}{N-2}\sum_{k=1}^{N-1} (\mathbf{U}_{k, k-1} + 2 \bm{\xi}_{k-1}')^\top \bm{\Sigma}_0^\top D_\mathbf{v} \partial_{\beta^{(i_1)}\beta^{(i_2)}}^2\mathbf{F}(\mathbf{Y}_{t_{k-1}}; \bm{\beta})^\top(\bm{\Sigma}\bm{\Sigma}^\top)^{-1}\bm{\Sigma}_0 \mathbf{U}_{k, k-1}.
    \end{align*}
As in the proof of consistency, it holds
\begin{align*}
    \frac{1}{N h} \partial_{\beta^{(i_1)}\beta^{(i_2)}}^2\mathcal{L}_N^{\mathrm{[\cdot R]}} \left(\mathbf{Y}_{0:t_N}; \bm{\theta}\right) \Big|_{\bm{\theta} = \bm{\theta}_0} &\xrightarrow[]{\mathbb{P}_{\bm{\theta}_0}} 2 \int  \partial_{\beta^{(i_1)}}\mathbf{F}_0(\mathbf{y}) ^\top (\bm{\Sigma}\bm{\Sigma}^\top)^{-1}  \partial_{\beta^{(i_2)}}\mathbf{F}_0(\mathbf{y}) \dif \nu_0(\mathbf{y}),
\end{align*}
uniformly in $\bm{\theta}$. Now, we investigate the limit of $\frac{1}{(N-1) \sqrt{h}} \partial_{\beta^{(i_1)}\sigma^{(j_2)}}^2\mathcal{L}_N^{\mathrm{[CR]}}(\bm{\theta})$ and $\partial_{\beta^{(i_1)}\sigma^{(j_2)}}^2\mathcal{L}_N^{\mathrm{[PR]}}(\bm{\theta})$
\begin{align*}
        \frac{1}{(N-1) \sqrt{h}} \partial_{\beta^{(i_1)}\sigma^{(j_2)}}^2\mathcal{L}_N^{\mathrm{[CR]}}\left(\mathbf{Y}_{0:t_N}; \bm{\theta}\right) &= -\frac{2}{N-1} \sum_{k=1}^{N}  \bm{\eta}_{k-1}^\top \bm{\Sigma}_0^\top  \partial_{\sigma^{(j_2)}}(\bm{\Sigma}\bm{\Sigma}^\top)^{-1}\partial_{\beta^{(i_1)}} \mathbf{F}(\mathbf{Y}_{t_{k-1}}; \bm{\beta})\notag\\
        &+ \frac{1}{N-1}\sum_{k=1}^N R(\sqrt{h}, \mathbf{Y}_{t_{k-1}})\notag\\
        \frac{1}{(N-2) \sqrt{h}} \partial_{\beta^{(i_1)}\sigma^{(j_2)}}^2\mathcal{L}_N^{\mathrm{[PR]}}\left(\mathbf{Y}_{0:t_N}; \bm{\theta}\right) &=-\frac{2}{N-2} \sum_{k=1}^{N-1}  \mathbf{U}_{k, k-1}^\top \bm{\Sigma}_0^\top  \partial_{\sigma^{(j_2)}}(\bm{\Sigma}\bm{\Sigma}^\top)^{-1}\partial_{\beta^{(i_1)}} \mathbf{F}(\mathbf{Y}_{t_{k-1}}; \bm{\beta})\notag\\ 
        &+ \frac{1}{N-2}\sum_{k=1}^{N-1} R(\sqrt{h}, \mathbf{Y}_{t_{k-1}})\notag
\end{align*}
Both previous sequences converge to zero due to Lemma 9 in \citep{GenonCatalot&Jacod}. Next, we look at the limits of $\frac{1}{N-1} \partial_{\sigma^{(j_1)}\sigma^{(j_2)}}^2\mathcal{L}_N^{\mathrm{[CR]}}(\bm{\theta})$ and $\frac{1}{N-2} \partial_{\sigma^{(j_1)}\sigma^{(j_2)}}^2\mathcal{L}_N^{\mathrm{[PR]}}(\bm{\theta})$
   \begin{align*}
        \frac{1}{N-1} \partial_{\sigma^{(j_1)}\sigma^{(j_2)}}^2\mathcal{L}_N^{\mathrm{[CR]}}\left(\mathbf{Y}_{0:t_N}; \bm{\theta}\right)&=  \partial_{\sigma^{(j_1)}\sigma^{(j_2)}}^2 \log \det (\bm{\Sigma}\bm{\Sigma}^\top)  \\
        &\hspace{-17ex}+ \frac{1}{N-1}\sum_{k=1}^{N}\partial_{\sigma^{(j_1)}\sigma^{(j_2)}}^2\tr(\bm{\Sigma}_0 \bm{\eta}_{k-1}\bm{\eta}_{k-1}^\top \bm{\Sigma}_0^\top (\bm{\Sigma}\bm{\Sigma}^\top)^{-1})+ \frac{1}{N-1}\sum_{k=1}^N R(\sqrt{h}, \mathbf{Y}_{t_{k-1}})\\
        &\hspace{-17ex}=  \tr((\bm{\Sigma}\bm{\Sigma}^\top)^{-1} \partial_{\sigma^{(j_1)}\sigma^{(j_2)}}^2 \bm{\Sigma}\bm{\Sigma}^\top) - \tr ((\bm{\Sigma}\bm{\Sigma}^\top)^{-1} ( \partial_{\sigma^{(j_1)}} \bm{\Sigma}\bm{\Sigma}^\top)(\bm{\Sigma}\bm{\Sigma}^\top)^{-1} \partial_{\sigma^{(j_2)}} \bm{\Sigma}\bm{\Sigma}^\top)\notag\\
        &\hspace{-17ex}-\frac{1}{N-1}\sum_{k=1}^{N}\tr(\bm{\Sigma}_0 \bm{\eta}_{k-1}\bm{\eta}_{k-1}^\top \bm{\Sigma}_0^\top (\bm{\Sigma}\bm{\Sigma}^\top)^{-1}(\partial_{\sigma^{(j_1)}\sigma^{(j_2)}}^2 \bm{\Sigma}\bm{\Sigma}^\top) (\bm{\Sigma}\bm{\Sigma}^\top)^{-1})\notag\\
        &\hspace{-17ex} +\frac{1}{N-1}\sum_{k=1}^{N}\tr(\bm{\Sigma}_0 \bm{\eta}_{k-1}\bm{\eta}_{k-1}^\top \bm{\Sigma}_0^\top (\bm{\Sigma}\bm{\Sigma}^\top)^{-1}(\partial_{\sigma^{(j_1)}} \bm{\Sigma}\bm{\Sigma}^\top) (\bm{\Sigma}\bm{\Sigma}^\top)^{-1}(\partial_{\sigma^{(j_2)}} \bm{\Sigma}\bm{\Sigma}^\top) (\bm{\Sigma}\bm{\Sigma}^\top)^{-1})\notag\\
        &\hspace{-17ex} +\frac{1}{N-1}\sum_{k=1}^{N}\tr(\bm{\Sigma}_0 \bm{\eta}_{k-1}\bm{\eta}_{k-1}^\top \bm{\Sigma}_0^\top (\bm{\Sigma}\bm{\Sigma}^\top)^{-1}(\partial_{\sigma^{(j_2)}} \bm{\Sigma}\bm{\Sigma}^\top) (\bm{\Sigma}\bm{\Sigma}^\top)^{-1}(\partial_{\sigma^{(j_1)}} \bm{\Sigma}\bm{\Sigma}^\top) (\bm{\Sigma}\bm{\Sigma}^\top)^{-1})\\
        &\hspace{-17ex}+ \frac{1}{N-1}\sum_{k=1}^N R(\sqrt{h}, \mathbf{Y}_{t_{k-1}}),\\
        \frac{1}{N-2} \partial_{\sigma^{(j_1)}\sigma^{(j_2)}}^2\mathcal{L}_N^{\mathrm{[PR]}}\left(\mathbf{Y}_{0:t_N}; \bm{\theta}\right) &=   \frac{2}{3}\partial_{\sigma^{(j_1)}\sigma^{(j_2)}}^2 \log \det (\bm{\Sigma}\bm{\Sigma}^\top)\\
        &\hspace{-17ex}+ \frac{1}{N-2}\sum_{k=1}^{N-1}\partial_{\sigma^{(j_1)}\sigma^{(j_2)}}^2\tr(\bm{\Sigma}_0 \mathbf{U}_{k, k-1}\mathbf{U}_{k, k-1}^\top \bm{\Sigma}_0^\top (\bm{\Sigma}\bm{\Sigma}^\top)^{-1}) + \frac{1}{N-2}\sum_{k=1}^{N-1} R(\sqrt{h}, \mathbf{Y}_{t_{k-1}})\notag\\ 
        &\hspace{-17ex} =  \frac{2}{3}\tr((\bm{\Sigma}\bm{\Sigma}^\top)^{-1} \partial_{\sigma^{(j_1)}\sigma^{(j_2)}}^2 \bm{\Sigma}\bm{\Sigma}^\top) - \frac{2}{3}\tr ((\bm{\Sigma}\bm{\Sigma}^\top)^{-1} ( \partial_{\sigma^{(j_1)}} \bm{\Sigma}\bm{\Sigma}^\top)(\bm{\Sigma}\bm{\Sigma}^\top)^{-1} \partial_{\sigma^{(j_2)}} \bm{\Sigma}\bm{\Sigma}^\top)\notag\\
        &\hspace{-17ex} -\frac{1}{N-2}\sum_{k=1}^{N-1}\tr(\bm{\Sigma}_0 \mathbf{U}_{k, k-1}\mathbf{U}_{k, k-1}^\top \bm{\Sigma}_0^\top (\bm{\Sigma}\bm{\Sigma}^\top)^{-1}(\partial_{\sigma^{(j_1)}\sigma^{(j_2)}}^2 \bm{\Sigma}\bm{\Sigma}^\top) (\bm{\Sigma}\bm{\Sigma}^\top)^{-1})\notag\\
        &\hspace{-17ex}+\frac{1}{N-2}\sum_{k=1}^{N-1}\tr(\bm{\Sigma}_0 \mathbf{U}_{k, k-1}\mathbf{U}_{k, k-1}^\top \bm{\Sigma}_0^\top (\bm{\Sigma}\bm{\Sigma}^\top)^{-1}(\partial_{\sigma^{(j_1)}} \bm{\Sigma}\bm{\Sigma}^\top) (\bm{\Sigma}\bm{\Sigma}^\top)^{-1}(\partial_{\sigma^{(j_2)}} \bm{\Sigma}\bm{\Sigma}^\top) (\bm{\Sigma}\bm{\Sigma}^\top)^{-1})\notag\\
        &\hspace{-17ex} +\frac{1}{N-2}\sum_{k=1}^{N-1} \tr(\bm{\Sigma}_0 \mathbf{U}_{k, k-1}\mathbf{U}_{k, k-1}^\top \bm{\Sigma}_0^\top (\bm{\Sigma}\bm{\Sigma}^\top)^{-1}(\partial_{\sigma^{(j_2)}} \bm{\Sigma}\bm{\Sigma}^\top) (\bm{\Sigma}\bm{\Sigma}^\top)^{-1}(\partial_{\sigma^{(j_1)}} \bm{\Sigma}\bm{\Sigma}^\top) (\bm{\Sigma}\bm{\Sigma}^\top)^{-1})\\
        &\hspace{-17ex}+ \frac{1}{N-2}\sum_{k=1}^{N-1} R(\sqrt{h}, \mathbf{Y}_{t_{k-1}}).
\end{align*}
Using the second moments of $\bm{\eta}_{k-1}$ and $\mathbf{U}_{k, k-1}$ with additional calculations, we can conclude that
\begin{align*}
    \frac{1}{N-1} \partial_{\sigma^{(j_1)}\sigma^{(j_2)}}^2\mathcal{L}_N^{\mathrm{[CR]}}\left(\mathbf{Y}_{0:t_N}; \bm{\theta}\right) \Big|_{\bm{\theta} = \bm{\theta}_0} &\xrightarrow[]{\mathbb{P}_{\bm{\theta}_0}}  \tr ((\bm{\Sigma}\bm{\Sigma}^\top)^{-1} ( \partial_{\sigma^{(j_1)}} \bm{\Sigma}\bm{\Sigma}^\top)(\bm{\Sigma}\bm{\Sigma}^\top)^{-1} \partial_{\sigma^{(j_2)}} \bm{\Sigma}\bm{\Sigma}^\top),\\
    \frac{1}{N-2} \partial_{\sigma^{(j_1)}\sigma^{(j_2)}}^2\mathcal{L}_N^{\mathrm{[PR]}}\left(\mathbf{Y}_{0:t_N}; \bm{\theta}\right) \Big|_{\bm{\theta} = \bm{\theta}_0} &\xrightarrow[]{\mathbb{P}_{\bm{\theta}_0}}  \frac{2}{3}\tr ((\bm{\Sigma}\bm{\Sigma}^\top)^{-1} ( \partial_{\sigma^{(j_1)}} \bm{\Sigma}\bm{\Sigma}^\top)(\bm{\Sigma}\bm{\Sigma}^\top)^{-1} \partial_{\sigma^{(j_2)}} \bm{\Sigma}\bm{\Sigma}^\top),
\end{align*}
uniformly in $\bm{\theta}$. To extend the previous results on the objective functions \eqref{eq:asymptotic_L_CSR} and \eqref{eq:asymptotic_L_PSR}, we start by acknowledging that 
\begin{align*}
    \frac{1}{N h} \partial_{\beta^{(i_1)}\beta^{(i_2)}}^2\mathcal{L}_N^{\mathrm{[\cdot S\mid R]}} \left(\mathbf{Y}_{0:t_N}; \bm{\theta}\right) \Big|_{\bm{\theta} = \bm{\theta}_0} &\xrightarrow[]{\mathbb{P}_{\bm{\theta}_0}} 0,
\end{align*}
uniformly in $\bm{\theta}$. The reasons behind this are the same as in the proof of consistency. The same can be said for the limit of $\frac{1}{N\sqrt{h}}\partial_{\bm{\beta}\bm{\sigma}} \mathcal{L}_N^{\mathrm{[\cdot S\mid R]}}(\bm{\theta})$. Finally, repeating the same derivations as before, we get
\begin{align*}
    \frac{1}{N-1} \partial_{\sigma^{(j_1)}\sigma^{(j_2)}}^2\mathcal{L}_N^{\mathrm{[CS\mid R]}}\left(\mathbf{Y}_{0:t_N}; \bm{\theta}\right) \Big|_{\bm{\theta} = \bm{\theta}_0} &\xrightarrow[]{\mathbb{P}_{\bm{\theta}_0}}  \tr ((\bm{\Sigma}\bm{\Sigma}^\top)^{-1} ( \partial_{\sigma^{(j_1)}} \bm{\Sigma}\bm{\Sigma}^\top)(\bm{\Sigma}\bm{\Sigma}^\top)^{-1} \partial_{\sigma^{(j_2)}} \bm{\Sigma}\bm{\Sigma}^\top),\\
    \frac{1}{N-2} \partial_{\sigma^{(j_1)}\sigma^{(j_2)}}^2\mathcal{L}_N^{\mathrm{[PS\mid R]}}\left(\mathbf{Y}_{0:t_N}; \bm{\theta}\right) \Big|_{\bm{\theta} = \bm{\theta}_0} &\xrightarrow[]{\mathbb{P}_{\bm{\theta}_0}}  2\tr ((\bm{\Sigma}\bm{\Sigma}^\top)^{-1} ( \partial_{\sigma^{(j_1)}} \bm{\Sigma}\bm{\Sigma}^\top)(\bm{\Sigma}\bm{\Sigma}^\top)^{-1} \partial_{\sigma^{(j_2)}} \bm{\Sigma}\bm{\Sigma}^\top),
\end{align*}
uniformly in $\bm{\theta}$. This concludes the first part of the lemma. The second part follows from the fact that all limits are continuous in $\bm{\theta}$.
\end{proof}

To prove Lemma \ref{lemma:LnConvergence}, we state another useful property that provides a general formula to calculate moments of a product of two quadratic forms with Gaussian vectors. 
\begin{lemma}\label{lemma:4thmoments}
Let $(\bm{\alpha}_k)_{k=1}^N$, $(\bm{\beta}_k)_{k=1}^N$, be two sequences of independent $\mathcal{F}_{t_{k+1}}$-measurable Gaussian random variables with mean zero. If $\mathbb{E}_{\bm{\theta}_0}[\bm{\alpha}_{k-1}\bm{\beta}_{k-1}^\top \mid \mathcal{F}_{t_{k-1}}]$ is diagonal, and $\mathbf{A}$ and $\mathbf{B}$ are two symmetric positive definite matrices, then
    \begin{align*}
        \mathbb{E}_{\bm{\theta}_0}[\bm{\alpha}_{k-1}^\top\mathbf{A}\bm{\alpha}_{k-1} \bm{\beta}_{k-1}^\top\mathbf{B}\bm{\beta}_{k-1}\mid \mathcal{F}_{t_{k-1}}] &=  2\tr(\mathbb{E}_{\bm{\theta}_0}[\bm{\alpha}_{k-1}\bm{\beta}_{k-1}^\top \mid \mathcal{F}_{t_{k-1}}]\mathbf{A}\mathbb{E}_{\bm{\theta}_0}[\bm{\alpha}_{k-1}\bm{\beta}_{k-1}^\top \mid \mathcal{F}_{t_{k-1}}]\mathbf{B})\\
        &+ \tr (\mathbf{A}\mathbb{E}_{\bm{\theta}_0}[\bm{\alpha}_{k-1}\bm{\alpha}_{k-1}^\top \mid \mathcal{F}_{t_{k-1}}]) \tr (\mathbf{B}\mathbb{E}_{\bm{\theta}_0}[\bm{\beta}_{k-1}\bm{\beta}_{k-1}^\top \mid \mathcal{F}_{t_{k-1}}]).
    \end{align*}
\end{lemma}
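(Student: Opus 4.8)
The plan is to prove this via the general Isserlis--Wick formula for Gaussian moments, reducing the fourth-moment expectation to a sum over pairings and then identifying which pairings survive. First I would stack the two Gaussian vectors into a single mean-zero Gaussian vector $\bm{\gamma}_{k-1} = (\bm{\alpha}_{k-1}^\top, \bm{\beta}_{k-1}^\top)^\top$, conditional on $\mathcal{F}_{t_{k-1}}$, so that the product $\bm{\alpha}_{k-1}^\top \mathbf{A}\bm{\alpha}_{k-1}\,\bm{\beta}_{k-1}^\top\mathbf{B}\bm{\beta}_{k-1}$ becomes a degree-four polynomial in the components of $\bm{\gamma}_{k-1}$. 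Writing $\bm{\alpha}_{k-1}^\top\mathbf{A}\bm{\alpha}_{k-1} = \sum_{i,j} A_{ij}\alpha_i\alpha_j$ and $\bm{\beta}_{k-1}^\top\mathbf{B}\bm{\beta}_{k-1} = \sum_{p,q} B_{pq}\beta_p\beta_q$ (suppressing the $k-1$ subscript on components), the conditional expectation of the product is $\sum_{i,j,p,q} A_{ij}B_{pq}\,\mathbb{E}_{\bm{\theta}_0}[\alpha_i\alpha_j\beta_p\beta_q \mid \mathcal{F}_{t_{k-1}}]$.

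The key step is Isserlis' theorem (Wick's formula): for jointly Gaussian mean-zero variables,
\begin{equation*}
\mathbb{E}[\alpha_i\alpha_j\beta_p\beta_q] = \mathbb{E}[\alpha_i\alpha_j]\mathbb{E}[\beta_p\beta_q] + \mathbb{E}[\alpha_i\beta_p]\mathbb{E}[\alpha_j\beta_q] + \mathbb{E}[\alpha_i\beta_q]\mathbb{E}[\alpha_j\beta_p],
\end{equation*}
where all expectations are conditional on $\mathcal{F}_{t_{k-1}}$. Substituting this into the quadruple sum splits the result into three terms. Denoting $\mathbf{P} \coloneqq \mathbb{E}_{\bm{\theta}_0}[\bm{\alpha}_{k-1}\bm{\alpha}_{k-1}^\top \mid \mathcal{F}_{t_{k-1}}]$, $\mathbf{Q} \coloneqq \mathbb{E}_{\bm{\theta}_0}[\bm{\beta}_{k-1}\bm{\beta}_{k-1}^\top \mid \mathcal{F}_{t_{k-1}}]$, and $\mathbf{M} \coloneqq \mathbb{E}_{\bm{\theta}_0}[\bm{\alpha}_{k-1}\bm{\beta}_{k-1}^\top \mid \mathcal{F}_{t_{k-1}}]$, the first pairing term yields $\sum_{i,j} A_{ij}P_{ij}\sum_{p,q}B_{pq}Q_{pq} = \tr(\mathbf{A}\mathbf{P})\tr(\mathbf{B}\mathbf{Q})$ using symmetry of $\mathbf{A},\mathbf{B}$. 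The remaining two cross-pairing terms each reduce to $\sum_{i,j,p,q} A_{ij}B_{pq} M_{ip}M_{jq} = \tr(\mathbf{M}^\top \mathbf{A}\mathbf{M}\mathbf{B})$; here I would use that $\mathbf{M}$ is diagonal (by hypothesis), hence symmetric, so $\mathbf{M}^\top = \mathbf{M}$ and both cross terms equal $\tr(\mathbf{M}\mathbf{A}\mathbf{M}\mathbf{B})$, giving the factor of $2$ in the stated formula.

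The main obstacle is not analytic but bookkeeping: one must track index symmetrizations carefully to confirm that the two distinct Wick pairings collapse to the \emph{same} trace expression, and this is precisely where the diagonality assumption on $\mathbf{M}$ is used, since without symmetry of $\mathbf{M}$ the two cross terms would be $\tr(\mathbf{M}^\top\mathbf{A}\mathbf{M}\mathbf{B})$ and $\tr(\mathbf{M}\mathbf{A}\mathbf{M}^\top\mathbf{B})$, which need not coincide. I would therefore state Isserlis' theorem as the one external input, note that $\mathbf{A},\mathbf{B}$ symmetric and $\mathbf{M}$ diagonal are exactly the hypotheses guaranteeing the clean collapse, and then assemble the three traces into $2\tr(\mathbf{M}\mathbf{A}\mathbf{M}\mathbf{B}) + \tr(\mathbf{A}\mathbf{P})\tr(\mathbf{B}\mathbf{Q})$, matching the claim with $\mathbf{M} = \mathbb{E}_{\bm{\theta}_0}[\bm{\alpha}_{k-1}\bm{\beta}_{k-1}^\top \mid \mathcal{F}_{t_{k-1}}]$, $\mathbf{P} = \mathbb{E}_{\bm{\theta}_0}[\bm{\alpha}_{k-1}\bm{\alpha}_{k-1}^\top \mid \mathcal{F}_{t_{k-1}}]$, and $\mathbf{Q} = \mathbb{E}_{\bm{\theta}_0}[\bm{\beta}_{k-1}\bm{\beta}_{k-1}^\top \mid \mathcal{F}_{t_{k-1}}]$. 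The positive-definiteness of $\mathbf{A},\mathbf{B}$ plays no real role beyond ensuring the quadratic forms are well-behaved; only symmetry is needed for the computation.
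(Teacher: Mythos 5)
Your proposal is correct and follows essentially the same route as the paper's proof: expand both quadratic forms in components, apply the Gaussian fourth-moment factorization, and use symmetry of $\mathbf{A},\mathbf{B}$ together with diagonality of $\mathbb{E}_{\bm{\theta}_0}[\bm{\alpha}_{k-1}\bm{\beta}_{k-1}^\top \mid \mathcal{F}_{t_{k-1}}]$ to collapse the two cross-pairings into the factor of $2$. The only cosmetic difference is that you invoke Isserlis' (Wick's) theorem directly, whereas the paper splits the fourth moment into a covariance-of-products term plus a product of expectations and then cites the Bohrnstedt--Goldberger formula for the former --- the two identities are equivalent.
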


\begin{proof} We start with
\begin{align}
    \mathbb{E}_{\bm{\theta}_0}[\bm{\alpha}_{k-1}^\top\mathbf{A}\bm{\alpha}_{k-1} \bm{\beta}_{k-1}^\top\mathbf{B}\bm{\beta}_{k-1}\mid \mathcal{F}_{t_{k-1}}] &=\sum_{i,j,l,m = 1}^{d}  A_{ij} B_{lm} \mathbb{E}_{\bm{\theta}_0}[\alpha_{k-1}^{(i)}\alpha_{k-1}^{(j)} \beta_{k-1}^{(l)}\beta_{k-1}^{(m)}\mid \mathcal{F}_{t_{k-1}}]\notag\\
    &=\sum_{i,j,l,m = 1}^{d}  A_{ij} B_{lm}  \cov(\alpha_{k-1}^{(i)}\alpha_{k-1}^{(j)},\beta_{k-1}^{(l)}\beta_{k-1}^{(m)})\label{eq:covprod}\\
    &+ \sum_{i,j,l,m = 1}^{d}  A_{ij} B_{lm} \mathbb{E}_{\bm{\theta}_0}[\alpha_{k-1}^{(i)}\alpha_{k-1}^{(j)}\mid \mathcal{F}_{t_{k-1}}] \mathbb{E}_{\bm{\theta}_0}[\beta_{k-1}^{(l)}\beta_{k-1}^{(m)}\mid \mathcal{F}_{t_{k-1}}].\notag
\end{align}
We compute \eqref{eq:covprod} using the formula for the covariance of products of centered Gaussian random variables \citep{BohrnstedtGoldberger69}. Then we get
\begin{align*}
    \mathbb{E}_{\bm{\theta}_0}[\bm{\alpha}_{k-1}^\top\mathbf{A}\bm{\alpha}_{k-1} \bm{\beta}_{k-1}^\top\mathbf{B}\bm{\beta}_{k-1}\mid \mathcal{F}_{t_{k-1}}] &=\sum_{i,j,l,m = 1}^{d}   A_{ij} B_{lm}  \mathbb{E}_{\bm{\theta}_0}[\alpha_{k-1}^{(i)}\beta_{k-1}^{(l)}\mid \mathcal{F}_{t_{k-1}}] \mathbb{E}_{\bm{\theta}_0}[\alpha_{k-1}^{(j)}\beta_{k-1}^{(m)}\mid \mathcal{F}_{t_{k-1}}]\\
    &+ \sum_{i,j,l,m = 1}^{d}   A_{ij} B_{lm}  \mathbb{E}_{\bm{\theta}_0}[\alpha_{k-1}^{(i)}\beta_{k-1}^{(m)}\mid \mathcal{F}_{t_{k-1}}] \mathbb{E}_{\bm{\theta}_0}[\alpha_{k-1}^{(j)}\beta_{k-1}^{(l)}\mid \mathcal{F}_{t_{k-1}}]\\
    &+ \sum_{i,j,l,m = 1}^{d}   A_{ij} B_{lm}  \mathbb{E}_{\bm{\theta}_0}[\alpha_{k-1}^{(i)}\alpha_{k-1}^{(j)}\mid \mathcal{F}_{t_{k-1}}] \mathbb{E}_{\bm{\theta}_0}[\beta_{k-1}^{(l)}\beta_{k-1}^{(m)}\mid \mathcal{F}_{t_{k-1}}]\\
    &= 2 \sum_{i,j = 1}^{d}   A_{ij} B_{ij}  \mathbb{E}_{\bm{\theta}_0}[\alpha_{k-1}^{(i)}\beta_{k-1}^{(i)}\mid \mathcal{F}_{t_{k-1}}] \mathbb{E}_{\bm{\theta}_0}[\alpha_{k-1}^{(j)}\beta_{k-1}^{(j)}\mid \mathcal{F}_{t_{k-1}}]\\
    &+\sum_{i,j = 1}^{d}   A_{ii} B_{jj}  \mathbb{E}_{\bm{\theta}_0}[\alpha_{k-1}^{(i)}\alpha_{k-1}^{(i)}\mid \mathcal{F}_{t_{k-1}}] \mathbb{E}_{\bm{\theta}_0}[\beta_{k-1}^{(j)}\beta_{k-1}^{(j)}\mid \mathcal{F}_{t_{k-1}}].
\end{align*}
That concludes the proof.
\end{proof}

Applying the previous Lemma to our setup, corollary \ref{cor:4thmoments} follows immediately.
\begin{corollary}\label{cor:4thmoments}
    Let $(\bm{\eta}_k)_{k=1}^N$, $(\bm{\xi}_k)_{k=1}^N$, $(\bm{\xi}_k')_{k=1}^N$ be random sequences as defined in \eqref{eq:eta} and \eqref{eq:xi}. Let $\mathbf{B}_{j_1}$ and $\mathbf{B}_{j_2}$ be two symmetric positive definite matrices. Then, it holds
    \begin{align}
        \mathbb{E}_{\bm{\theta}_0}[\bm{\eta}_{k-1}^\top\mathbf{B}_{j_1}\bm{\eta}_{k-1} \bm{\eta}_{k-1}^\top\mathbf{B}_{j_2}\bm{\eta}_{k-1}\mid \mathcal{F}_{t_{k-1}}] &=  2\tr(\mathbf{B}_{j_1}\mathbf{B}_{j_2}) + \tr \mathbf{B}_{j_1} \tr \mathbf{B}_{j_2}, \label{eq:4theta}\\
        \mathbb{E}_{\bm{\theta}_0}[\bm{\xi}_{k-1}^\top\mathbf{B}_{j_1}\bm{\xi}_{k-1} \bm{\xi}_{k-1}^\top\mathbf{B}_{j_2}\bm{\xi}_{k-1}\mid \mathcal{F}_{t_{k-1}}] &=  \frac{2}{9}\tr(\mathbf{B}_{j_1}\mathbf{B}_{j_2}) + \frac{1}{9}\tr \mathbf{B}_{j_1} \tr \mathbf{B}_{j_2}, \label{eq:4thxi}\\
        \mathbb{E}_{\bm{\theta}_0}[\bm{\xi}_{k-1}'^\top\mathbf{B}_{j_1}\bm{\xi}_{k-1}' \bm{\xi}_{k-1}'^\top\mathbf{B}_{j_2}\bm{\xi}_{k-1}'\mid \mathcal{F}_{t_{k-1}}] &=  \frac{2}{9}\tr(\mathbf{B}_{j_1}\mathbf{B}_{j_2}) + \frac{1}{9}\tr \mathbf{B}_{j_1} \tr \mathbf{B}_{j_2},\label{eq:4thxi'}\\
        \mathbb{E}_{\bm{\theta}_0}[\bm{\xi}_{k-1}^\top\mathbf{B}_{j_1}\bm{\xi}_{k-1} \bm{\xi}_{k-1}'^\top\mathbf{B}_{j_2}\bm{\xi}_{k-1}'\mid \mathcal{F}_{t_{k-1}}] &=  \frac{1}{18}\tr(\mathbf{B}_{j_1}\mathbf{B}_{j_2}) + \frac{1}{9}\tr \mathbf{B}_{j_1} \tr \mathbf{B}_{j_2}. \label{eq:4thmix}
    \end{align}
\end{corollary}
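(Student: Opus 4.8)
The plan is to obtain each of the four identities \eqref{eq:4theta}--\eqref{eq:4thmix} as a direct specialization of Lemma \ref{lemma:4thmoments}, so that the whole argument reduces to checking that lemma's hypotheses once and then substituting the conditional second moments already recorded in \eqref{eq:etaxi}--\eqref{eq:xi'xi'}. No genuinely new computation is needed beyond these substitutions.

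First I would verify the hypotheses. Each of $\bm{\eta}_{k-1}$, $\bm{\xi}_{k-1}$, $\bm{\xi}_{k-1}'$ is, by definition \eqref{eq:eta}--\eqref{eq:xi}, a stochastic integral of a deterministic integrand against $\mathbf{W}$ over the interval $[t_{k-1},t_k]$; hence, conditionally on $\mathcal{F}_{t_{k-1}}$, the triple is jointly Gaussian with mean zero, and each variable is $\mathcal{F}_{t_k}$-measurable and therefore $\mathcal{F}_{t_{k+1}}$-measurable as required. The diagonality condition on the conditional cross-covariance is automatic: because the $d$ coordinates of $\mathbf{W}$ are independent Brownian motions, every conditional (cross-)covariance in \eqref{eq:etaxi}--\eqref{eq:xi'xi'} is a scalar multiple of $\mathbf{I}$, hence diagonal.

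Then I would apply Lemma \ref{lemma:4thmoments} four times with $\mathbf{A}=\mathbf{B}_{j_1}$ and $\mathbf{B}=\mathbf{B}_{j_2}$. Taking $(\bm{\alpha}_{k-1},\bm{\beta}_{k-1})=(\bm{\eta}_{k-1},\bm{\eta}_{k-1})$ and inserting $\mathbb{E}_{\bm{\theta}_0}[\bm{\eta}_{k-1}\bm{\eta}_{k-1}^\top\mid\mathcal{F}_{t_{k-1}}]=\mathbf{I}$ gives $2\tr(\mathbf{B}_{j_1}\mathbf{B}_{j_2})+\tr\mathbf{B}_{j_1}\tr\mathbf{B}_{j_2}$, which is \eqref{eq:4theta}. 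Taking $(\bm{\xi}_{k-1},\bm{\xi}_{k-1})$ with $\mathbb{E}_{\bm{\theta}_0}[\bm{\xi}_{k-1}\bm{\xi}_{k-1}^\top\mid\mathcal{F}_{t_{k-1}}]=\frac{1}{3}\mathbf{I}$ produces $\frac{2}{9}\tr(\mathbf{B}_{j_1}\mathbf{B}_{j_2})+\frac{1}{9}\tr\mathbf{B}_{j_1}\tr\mathbf{B}_{j_2}$, which is \eqref{eq:4thxi}; the identical computation with $\bm{\xi}_{k-1}'$, whose conditional second moment is likewise $\frac{1}{3}\mathbf{I}$, yields \eqref{eq:4thxi'}. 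Finally, taking $(\bm{\xi}_{k-1},\bm{\xi}_{k-1}')$ and feeding the cross-moment $\mathbb{E}_{\bm{\theta}_0}[\bm{\xi}_{k-1}\bm{\xi}_{k-1}'^\top\mid\mathcal{F}_{t_{k-1}}]=\frac{1}{6}\mathbf{I}$ into the trace-product term while the two marginal second moments $\frac{1}{3}\mathbf{I}$ enter the product-of-traces term gives $2\tr(\frac{1}{6}\mathbf{B}_{j_1}\frac{1}{6}\mathbf{B}_{j_2})+\tr(\frac{1}{3}\mathbf{B}_{j_1})\tr(\frac{1}{3}\mathbf{B}_{j_2})=\frac{1}{18}\tr(\mathbf{B}_{j_1}\mathbf{B}_{j_2})+\frac{1}{9}\tr\mathbf{B}_{j_1}\tr\mathbf{B}_{j_2}$, which is \eqref{eq:4thmix}.

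There is essentially no obstacle, since every ingredient is already available; the only step warranting a line of justification is the diagonality hypothesis, which follows from the coordinate-wise independence of $\mathbf{W}$. The one point to handle with care is \eqref{eq:4thmix}: here $\bm{\xi}_{k-1}$ and $\bm{\xi}_{k-1}'$ are correlated, so one must keep the off-diagonal cross-covariance $\frac{1}{6}\mathbf{I}$ in the trace-product term distinct from the two marginal covariances $\frac{1}{3}\mathbf{I}$ entering the product-of-traces term, rather than reusing a single covariance matrix as in the three cases with equal arguments.
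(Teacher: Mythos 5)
Your proposal is correct and follows exactly the paper's route: the paper likewise obtains the corollary by direct application of Lemma \ref{lemma:4thmoments} with the conditional second moments \eqref{eq:etaxi}--\eqref{eq:xi'xi'}, stating that it "follows immediately." Your four substitutions (identity covariance for $\bm{\eta}_{k-1}$, $\tfrac{1}{3}\mathbf{I}$ for $\bm{\xi}_{k-1}$ and $\bm{\xi}_{k-1}'$, and the cross-covariance $\tfrac{1}{6}\mathbf{I}$ in the mixed case) all check out, and your care in distinguishing the cross-covariance from the marginal covariances in \eqref{eq:4thmix} is exactly the one point where the application is nontrivial.
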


\begin{proof}[Proof of Lemma \ref{lemma:LnConvergence}] To prove the lemma, we need to compute $\bm{\lambda}_N^\mathrm{[obj]}$. The main part of the proof focuses only on the rough estimators, while at the end of the proof, we discuss how the same ideas can be adapted for other estimators. Thus, we start with $-\frac{1}{\sqrt{N h}}\partial_{\beta^{(i)}} \mathcal{L}_N^\mathrm{[\cdot R]}$
\begin{align*}
        &-\frac{1}{\sqrt{(N-1)h}}\partial_{\beta^{(i)}}\mathcal{L}_N^{\mathrm{[CR]}}\left(\mathbf{Y}_{0:t_N}; \bm{\theta}\right) = \frac{2}{\sqrt{N-1}} \sum_{k=1}^{N}  \bm{\eta}_{k-1}^\top \bm{\Sigma}_0^\top (\bm{\Sigma}\bm{\Sigma}^\top)^{-1}\partial_{\beta^{(i)}} \mathbf{F}(\mathbf{Y}_{t_{k-1}}; \bm{\beta}) \notag\\
        &\hspace{8ex}+ 2\sqrt{\frac{h}{N-1}}\sum_{k=1}^{N} \partial_{\beta^{(i)}} \mathbf{F}(\mathbf{Y}_{t_{k-1}}; \bm{\beta})^\top (\bm{\Sigma}\bm{\Sigma}^\top)^{-1}(\mathbf{F}(\mathbf{Y}_{t_{k-1}}; \bm{\beta}_0) - \mathbf{F}(\mathbf{Y}_{t_{k-1}}; \bm{\beta}))\notag\\
        &\hspace{8ex}+ \sqrt{\frac{h}{N-1}}\sum_{k=1}^{N} \bm{\eta}_{k-1}^\top \bm{\Sigma}_0^\top  D_\mathbf{v} \partial_{\beta^{(i)}} \mathbf{F}(\mathbf{Y}_{t_{k-1}}; \bm{\beta})^\top(\bm{\Sigma}\bm{\Sigma}^\top)^{-1}\bm{\Sigma}_0 \bm{\eta}_{k-1}  -  \sqrt{\frac{h}{N-1}}\sum_{k=1}^{N} \tr D_\mathbf{v} \partial_{\beta^{(i)}} \mathbf{F}(\mathbf{Y}_{t_k}; \bm{\beta}),\\
        &-\frac{1}{\sqrt{(N-2) h}} \partial_{\beta^{(i)}}\mathcal{L}_N^{\mathrm{[PR]}}\left(\mathbf{Y}_{0:t_N}; \bm{\theta}\right)= \frac{2}{\sqrt{N-2}} \sum_{k=1}^{N-1} \mathbf{U}_{k, k-1}^\top \bm{\Sigma}_0^\top (\bm{\Sigma}\bm{\Sigma}^\top)^{-1}\partial_{\beta^{(i)}} \mathbf{F}(\mathbf{Y}_{t_{k-1}}; \bm{\beta})\notag\\
        &\hspace{8ex}+ 2\sqrt{\frac{h}{N-2}}\sum_{k=1}^{N-1}  \partial_{\beta^{(i)}}\mathbf{F}(\mathbf{Y}_{t_{k-1}}; \bm{\beta})^\top (\bm{\Sigma}\bm{\Sigma}^\top)^{-1}(\mathbf{F}(\mathbf{Y}_{t_{k-1}}; \bm{\beta}_0) - \mathbf{F}(\mathbf{Y}_{t_{k-1}}; \bm{\beta}))\notag\\
        &\hspace{8ex}+ \sqrt{\frac{h}{N-2}}\sum_{k=1}^{N-1} (\mathbf{U}_{k, k-1}+2\bm{\xi}_{k-1})^\top \bm{\Sigma}_0^\top D_\mathbf{v}  \partial_{\beta^{(i)}}\mathbf{F}(\mathbf{Y}_{t_{k-1}}; \bm{\beta})^\top(\bm{\Sigma}\bm{\Sigma}^\top)^{-1}\bm{\Sigma}_0 \mathbf{U}_{k, k-1}\\
        &\hspace{8ex}- \sqrt{\frac{h}{N-2}}\sum_{k=1}^{N-1} \tr D_\mathbf{v}  \partial_{\beta^{(i)}}\mathbf{F}(\mathbf{Y}_{t_k}; \bm{\beta}).
\end{align*}
Similarly, for $-\frac{1}{\sqrt{N}}\partial_{\sigma^{(j)}} \mathcal{L}_N^\mathrm{[\cdot R]}$, we get
\begin{align*}
        -\frac{1}{\sqrt{(N-1)}}\partial_{\sigma^{(j)}}\mathcal{L}_N^{\mathrm{[CR]}}\left(\mathbf{Y}_{0:t_N}; \bm{\theta}\right)&= -\frac{1}{\sqrt{N-1}} \sum_{k=1}^{N} \tr((\bm{\Sigma}\bm{\Sigma}^\top)^{-1}\partial_{\sigma^{(j)}}\bm{\Sigma}\bm{\Sigma}^\top)\\
        &\hspace{-13ex}+\frac{1}{\sqrt{N-1}}\sum_{k=1}^N \bm{\eta}_{k-1}^\top \bm{\Sigma}_0^\top (\bm{\Sigma}\bm{\Sigma}^\top)^{-1}(\partial_{\sigma^{(j)}}\bm{\Sigma}\bm{\Sigma}^\top) (\bm{\Sigma}\bm{\Sigma}^\top)^{-1} \bm{\Sigma}_0 \bm{\eta}_{k-1}\notag\\
        &\hspace{-13ex}+ 2\sqrt{\frac{h}{N-1}}\sum_{k=1}^{N}  \bm{\eta}_{k-1}^\top \bm{\Sigma}_0^\top  (\bm{\Sigma}\bm{\Sigma}^\top)^{-1}(\partial_{\sigma^{(j)}}\bm{\Sigma}\bm{\Sigma}^\top) (\bm{\Sigma}\bm{\Sigma}^\top)^{-1} (\mathbf{F}(\mathbf{Y}_{t_{k-1}}; \bm{\beta}_0) - \mathbf{F}(\mathbf{Y}_{t_{k-1}}; \bm{\beta}))\\
        &\hspace{-13ex}+ \sum_{k=1}^{N}R(\frac{h}{\sqrt{N}}, \mathbf{Y}_{t_{k-1}}),\\
        -\frac{1}{\sqrt{(N-2)}} \partial_{\sigma^{(j)}}\mathcal{L}_N^{\mathrm{[PR]}}\left(\mathbf{Y}_{0:t_N}; \bm{\theta}\right) &= -\frac{2}{3\sqrt{N-2}} \sum_{k=1}^{N-1} \tr((\bm{\Sigma}\bm{\Sigma}^\top)^{-1}\partial_{\sigma^{(j)}}\bm{\Sigma}\bm{\Sigma}^\top)\\
        &\hspace{-13ex}+\frac{1}{\sqrt{N-2}}\sum_{k=1}^{N-1} \mathbf{U}_{k, k-1}^\top \bm{\Sigma}_0^\top (\bm{\Sigma}\bm{\Sigma}^\top)^{-1}(\partial_{\sigma^{(j)}}\bm{\Sigma}\bm{\Sigma}^\top) (\bm{\Sigma}\bm{\Sigma}^\top)^{-1} \bm{\Sigma}_0 \mathbf{U}_{k, k-1}\notag\\
        &\hspace{-13ex}+ 2\sqrt{\frac{h}{N-2}}\sum_{k=1}^{N-1}  \mathbf{U}_{k, k-1}^\top \bm{\Sigma}_0^\top  (\bm{\Sigma}\bm{\Sigma}^\top)^{-1}(\partial_{\sigma^{(j)}}\bm{\Sigma}\bm{\Sigma}^\top) (\bm{\Sigma}\bm{\Sigma}^\top)^{-1} (\mathbf{F}(\mathbf{Y}_{t_{k-1}}; \bm{\beta}_0) - \mathbf{F}(\mathbf{Y}_{t_{k-1}}; \bm{\beta}))\\
        &\hspace{-13ex}+ \sum_{k=1}^{N-1}R(\frac{h}{\sqrt{N}}, \mathbf{Y}_{t_{k-1}}).
\end{align*}
To prove the convergence in distribution of $\bm{\lambda}_N^\mathrm{[\cdot R]}$, we introduce the following triangular arrays that arise from the previous calculations
\begin{align}
    \bm{\phi}_{N,k-1}^{\mathrm{[CR]}(i)}(\bm{\theta}_0) &\coloneqq \frac{2}{\sqrt{N-1}}  \bm{\eta}_{k-1}^\top \bm{\Sigma}_0^\top (\bm{\Sigma}\bm{\Sigma}_0^\top)^{-1}\partial_{\beta^{(i)}} \mathbf{F}_0(\mathbf{Y}_{t_{k-1}})\label{eq:phiC}\\
    &+  \sqrt{\frac{h}{N-1}} (\tr(\bm{\Sigma}_0 \bm{\eta}_{k-1} \bm{\eta}_{k-1}^\top \bm{\Sigma}_0^\top  D_\mathbf{v} \partial_{\beta^{(i)}} \mathbf{F}_0(\mathbf{Y}_{t_{k-1}})^\top(\bm{\Sigma}\bm{\Sigma}_0^\top)^{-1}) - \tr D_\mathbf{v} \partial_{\beta^{(i)}} \mathbf{F}_0(\mathbf{Y}_{t_{k-1}})), \notag\\
    \bm{\phi}_{N,k-1}^{\mathrm{[PR]}(i)}(\bm{\theta}_0) &\coloneqq \frac{2}{\sqrt{N-2}}  \mathbf{U}_{k, k-1}^\top \bm{\Sigma}_0^\top (\bm{\Sigma}\bm{\Sigma}_0^\top)^{-1}\partial_{\beta^{(i)}} \mathbf{F}_0(\mathbf{Y}_{t_{k-1}})\label{eq:phiP}\\
    &\hspace{-7ex}+ \sqrt{\frac{h}{N-2}} (\tr(\bm{\Sigma}_0 \mathbf{U}_{k, k-1}(\mathbf{U}_{k, k-1} + 2\bm{\xi}_{k-1}')^\top \bm{\Sigma}_0^\top D_\mathbf{v}  \partial_{\beta^{(i)}}\mathbf{F}_0(\mathbf{Y}_{t_{k-1}})^\top(\bm{\Sigma}\bm{\Sigma}_0^\top)^{-1}) -  \tr D_\mathbf{v}  \partial_{\beta^{(i)}}\mathbf{F}_0(\mathbf{Y}_{t_k})),\notag\\
    \bm{\rho}_{N,k-1}^{\mathrm{[CR]}(j)}(\bm{\theta}_0) &\coloneqq \frac{1}{\sqrt{N-1}}(\bm{\eta}_{k-1}^\top \bm{\Sigma}_0^\top (\bm{\Sigma}\bm{\Sigma}_0^\top)^{-1}(\partial_{\sigma^{(j)}}\bm{\Sigma}\bm{\Sigma}_0^\top) (\bm{\Sigma}\bm{\Sigma}_0^\top)^{-1} \bm{\Sigma}_0 \bm{\eta}_{k-1} - \tr((\bm{\Sigma}\bm{\Sigma}_0^\top)^{-1}\partial_{\sigma^{(j)}}\bm{\Sigma}\bm{\Sigma}_0^\top)),\label{eq:rhoC}\\
    \bm{\rho}_{N,k-1}^{\mathrm{[PR]}(j)}(\bm{\theta}_0) &\coloneqq \frac{1}{\sqrt{N-2}}(\mathbf{U}_{k, k-1}^\top \bm{\Sigma}_0^\top (\bm{\Sigma}\bm{\Sigma}_0^\top)^{-1}(\partial_{\sigma^{(j)}}\bm{\Sigma}\bm{\Sigma}_0^\top) (\bm{\Sigma}\bm{\Sigma}_0^\top)^{-1} \bm{\Sigma}_0 \mathbf{U}_{k, k-1} - \frac{2}{3}\tr((\bm{\Sigma}\bm{\Sigma}_0^\top)^{-1}\partial_{\sigma^{(j)}}\bm{\Sigma}\bm{\Sigma}_0^\top)).\label{eq:rhoP}
\end{align}
Then, $\bm{\lambda}_N^\mathrm{[\cdot R]}$ rewrites as
\begin{align}
    \bm{\lambda}_N^\mathrm{[\cdot R]} = \sum_{k=1}^N \begin{bmatrix}
        \bm{\phi}_{N,k-1}^{\mathrm{[\cdot R]}(1)}(\bm{\theta}_0)\\
        \vdots\\
        \bm{\phi}_{N,k-1}^{\mathrm{[\cdot R]}(r)}(\bm{\theta}_0)\\
        \bm{\rho}_{N,k-1}^{\mathrm{[\cdot R]}(1)}(\bm{\theta}_0)\\
        \vdots\\        
        \bm{\rho}_{N,k-1}^{\mathrm{[\cdot R]}(s)}(\bm{\theta}_0)
    \end{bmatrix} + \frac{1}{N}\sum_{k=1}^{N}R(\sqrt{N h^2}, \mathbf{Y}_{t_{k-1}}). \label{eq:LN_rewritten}
\end{align}
Thus, to establish estimators' asymptotic normality, we need an extra convergence condition $N h^2 \to 0$. This is common in literature, and it is necessary for most estimators. 

To finish the proof, we apply the central limit theorem for martingale difference arrays (Proposition 3.1 in \citep{CRIMALDI2005571}). However, we can not use the same reasoning in complete and partial observation cases. 

First, we notice that in both complete and partial cases, $\bm{\phi}_{N,k-1}^{\mathrm{[\cdot R]}(i)}(\bm{\theta}_0)$ and $ \bm{\rho}_{N,k-1}^{\mathrm{[\cdot R]}(j)}(\bm{\theta}_0)$ are centered conditionally to $\mathcal{F}_{t_{k-1}}$. Moreover, in the complete case, $\bm{\phi}_{N,k-1}^{\mathrm{[CR]}(i)}(\bm{\theta}_0)$ and $ \bm{\rho}_{N,k-1}^{\mathrm{[CR]}(j)}(\bm{\theta}_0)$ are adapted to the filtration $\mathcal{F}_{t_k}$. Thus, the proof follows directly by applying Proposition 3.1 in \citep{CRIMALDI2005571}. This proposition assumes a martingale difference array centered conditionally to $\mathcal{F}_{t_{k-1}}$ and $\mathcal{F}_{t_k}$-measurable. 

In the partial observation case, $\mathbf{U}_{k, k-1}$ is $\mathcal{F}_{t_{k+1}}$-measurable as it depends on random variables $\bm{\xi}_{k-1}$ and $\bm{\xi}_k'$. Consequently, to apply Proposition 3.1 in \citep{CRIMALDI2005571}, it is not enough for $\bm{\phi}_{N,k-1}^{\mathrm{[PR]}(i)}(\bm{\theta}_0)$ and $ \bm{\rho}_{N,k-1}^{\mathrm{[PR]}(j)}(\bm{\theta}_0)$ to be centered conditionally to $\mathcal{F}_{t_{k-1}}$, they also need to be centered conditionally to $\mathcal{F}_{t_{k}}$. The previous condition, however, does not hold. Thus, we use the idea of reordering the sum in $\bm{\lambda}_N^\mathrm{[PR]}$ \eqref{eq:LN_rewritten} to obtain the $\mathcal{F}_{t_{k}}$-measurable and centered conditionally on $\mathcal{F}_{t_{k-1}}$, as proposed by \cite{Gloter2000, Gloter2006} and later used by \cite{SamsonThieullen2012}.  

First, use Lemma 9 from \citep{GenonCatalot&Jacod} to notice that
\begin{equation*}
    \sum_{k=1}^{N-1}\bm{\phi}_{N,k-1}^{\mathrm{[PR]}(i)}(\bm{\theta}_0) = \frac{2}{\sqrt{N-2}}   \sum_{k=1}^{N-1}\mathbf{U}_{k, k-1}^\top \bm{\Sigma}_0^\top (\bm{\Sigma}\bm{\Sigma}_0^\top)^{-1}\partial_{\beta^{(i)}} \mathbf{F}_0(\mathbf{Y}_{t_{k-1}}) + o_{\mathbb{P}_{\bm{\theta}_0}}(1).
\end{equation*}
Then, reorder the sum of $\bm{\phi}_{N,k-1}^{\mathrm{[PR]}(i)}(\bm{\theta}_0)$ as follows
\begin{align*}
    \sum_{k=1}^{N-1}\bm{\phi}_{N,k-1}^{\mathrm{[PR]}(i)}(\bm{\theta}_0) 
    &=\frac{2}{\sqrt{N-2}}\left(\bm{\xi}_0^\top  \bm{\Sigma}_0^\top (\bm{\Sigma}\bm{\Sigma}_0^\top)^{-1}\partial_{\beta^{(i)}} \mathbf{F}_0(\mathbf{Y}_{t_0}) +  \bm{\xi}_{N-1}'^\top  \bm{\Sigma}_0^\top (\bm{\Sigma}\bm{\Sigma}_0^\top)^{-1}\partial_{\beta^{(i)}} \mathbf{F}_0(\mathbf{Y}_{t_{N-2}})\right)\notag\\
    &\hspace{-3ex}+\frac{2}{\sqrt{N-2}}\sum_{k=2}^{N-1}\left(\bm{\xi}_{k-1}^\top  \bm{\Sigma}_0^\top (\bm{\Sigma}\bm{\Sigma}_0^\top)^{-1}\partial_{\beta^{(i)}} \mathbf{F}_0(\mathbf{Y}_{t_{k-1}}) +  \bm{\xi}_{k-1}'^\top  \bm{\Sigma}_0^\top (\bm{\Sigma}\bm{\Sigma}_0^\top)^{-1}\partial_{\beta^{(i)}} \mathbf{F}_0(\mathbf{Y}_{t_{k-2}})\right) + o_{\mathbb{P}_{\bm{\theta}_0}}(1)\notag\\
    &\hspace{-3ex}=\frac{2}{\sqrt{N-2}}\sum_{k=2}^{N-1}\left(\bm{\xi}_{k-1}^\top  \bm{\Sigma}_0^\top (\bm{\Sigma}\bm{\Sigma}_0^\top)^{-1}\partial_{\beta^{(i)}} \mathbf{F}_0(\mathbf{Y}_{t_{k-1}}) +  \bm{\xi}_{k-1}'^\top  \bm{\Sigma}_0^\top (\bm{\Sigma}\bm{\Sigma}_0^\top)^{-1}\partial_{\beta^{(i)}} \mathbf{F}_0(\mathbf{Y}_{t_{k-2}})\right) + o_{\mathbb{P}_{\bm{\theta}_0}}(1).
\end{align*}
Now, the triangular arrays under the sum are centered conditionally on $\mathcal{F}_{t_{k-1}}$ and $\mathcal{F}_{t_k}$ measurable. Thus, define
\begin{equation*}
        \bm{\phi}_{N,k-1}^{\star\mathrm{[PR]}(i)}(\bm{\theta}_0) \coloneqq \frac{2}{\sqrt{N-2}}  \left(\bm{\xi}_{k-1}^\top  \bm{\Sigma}_0^\top (\bm{\Sigma}\bm{\Sigma}_0^\top)^{-1}\partial_{\beta^{(i)}} \mathbf{F}_0(\mathbf{Y}_{t_{k-1}}) +  \bm{\xi}_{k-1}'^\top  \bm{\Sigma}_0^\top (\bm{\Sigma}\bm{\Sigma}_0^\top)^{-1}\partial_{\beta^{(i)}} \mathbf{F}_0(\mathbf{Y}_{t_{k-2}})\right).
\end{equation*}
To apply Proposition 3.1 from \citep{CRIMALDI2005571}, we need the following limits in probability
\begin{align*}
    \sum_{k=2}^{N-1} \mathbb{E}_{\bm{\theta}_0}[\bm{\phi}_{N,k-1}^{\star\mathrm{[PR]}(i_1)}(\bm{\theta}_0)\bm{\phi}_{N,k-1}^{\star\mathrm{[PR]}(i_2)}(\bm{\theta}_0)\mid \mathcal{F}_{t_{k-1}}] &\xrightarrow{\mathbb{P}_{\bm{\theta}_0}} 4[\mathbf{C}_{\bm{\beta}}(\bm{\theta}_0)]_{i_1i_2},\\
    \sum_{k=2}^{N-1} \mathbb{E}_{\bm{\theta}_0}[(\bm{\phi}_{N,k-1}^{\star\mathrm{[PR]}(i_1)}(\bm{\theta}_0)\bm{\phi}_{N,k-1}^{\star\mathrm{[PR]}(i_2)})^2(\bm{\theta}_0)\mid \mathcal{F}_{t_{k-1}}] &\xrightarrow{\mathbb{P}_{\bm{\theta}_0}} 0.
\end{align*}
The first limit follows from properties \eqref{eq:etaxi} and \eqref{eq:xi'xi'}. The second limit follows due to an additional order of $1/N$. 

When looking at $\bm{\rho}_{N, k -1}^{\mathrm{[\cdot R]}(j)}$, we repeat the same reasoning. For notational simplicity, start with defining
\begin{align*}
    \mathbf{B}_j(\bm{\theta}_0) \coloneqq \bm{\Sigma}_0 (\bm{\Sigma}\bm{\Sigma}_0^\top)^{-1}(\partial_{\sigma^{(j)}}\bm{\Sigma}\bm{\Sigma}_0^\top)(\bm{\Sigma}\bm{\Sigma}_0^\top)^{-1}\bm{\Sigma}_0.
\end{align*}
It follows immediately that $\tr(\mathbf{B}_j(\bm{\theta}_0)) = \tr((\bm{\Sigma}\bm{\Sigma}_0^\top)^{-1}\partial_{\sigma^{(j)}}\bm{\Sigma}\bm{\Sigma}_0^\top)$. Again, reorder the sum of $\bm{\rho}_{N,k-1}^{\mathrm{[PR]}(j)}(\bm{\theta}_0)$ as follows
\begin{align*}
    \sum_{k=1}^{N-1}\bm{\rho}_{N,k-1}^{\mathrm{[PR]}(j)}(\bm{\theta}_0) &= \frac{1}{\sqrt{N-2}} \sum_{k=1}^{N-1}(\mathbf{U}_{k, k-1}^\top \mathbf{B}_j(\bm{\theta}_0)\mathbf{U}_{k, k-1} - \frac{2}{3}\tr(\mathbf{B}_j(\bm{\theta}_0)))\notag\\
    &=\frac{1}{\sqrt{N-2}}\sum_{k=2}^{N-1}\left(\bm{\xi}_{k-1}^\top \mathbf{B}_j(\bm{\theta}_0)\bm{\xi}_{k-1} + 2 \bm{\xi}_{k-2}^\top \mathbf{B}_j(\bm{\theta}_0)\bm{\xi}'_{k-1} + \bm{\xi}_{k-1}'^\top \mathbf{B}_j(\bm{\theta}_0)\bm{\xi}_{k-1}' - \frac{2}{3}\tr(\mathbf{B}_j(\bm{\theta}_0))\right)\notag\\
    &+\frac{1}{\sqrt{N-2}}\left(\bm{\xi}_0^\top \mathbf{B}_j(\bm{\theta}_0)\bm{\xi}_{0} + 2 \bm{\xi}_{N-2}^\top \mathbf{B}_j(\bm{\theta}_0)\bm{\xi}'_{N-1} + \bm{\xi}_{N-1}'^\top \mathbf{B}_j(\bm{\theta}_0)\bm{\xi}_{N-1}' - \frac{2}{3}\tr(\mathbf{B}_j(\bm{\theta}_0))\right).
\end{align*}
Since the last term in the previous equation is $o_{\mathbb{P}_{\bm{\theta}_0}}(1)$, we focus only on
\begin{equation*}
    \bm{\rho}_{N,k-1}^{\star\mathrm{[PR]}(j)}(\bm{\theta}_0) \coloneqq \frac{1}{\sqrt{N-2}}\left(\bm{\xi}_{k-1}^\top \mathbf{B}_j(\bm{\theta}_0)\bm{\xi}_{k-1} + 2 \bm{\xi}_{k-2}^\top \mathbf{B}_j(\bm{\theta}_0)\bm{\xi}'_{k-1} + \bm{\xi}_{k-1}'^\top \mathbf{B}_j(\bm{\theta}_0)\bm{\xi}_{k-1}' - \frac{2}{3}\tr(\mathbf{B}_j(\bm{\theta}_0))\right).
\end{equation*}
Notice that $\bm{\rho}_{N,k-1}^{\star\mathrm{[PR]}(j)}(\bm{\theta}_0) $ is $\mathcal{F}_{t_k}$ measurable and centered conditionally on $\mathcal{F}_{t_{k-1}}$. Again, to apply Proposition 3.1 from \citep{CRIMALDI2005571}, we need the following limits in probability
\begin{align*}
    \sum_{k=2}^{N-1} \mathbb{E}_{\bm{\theta}_0}[\bm{\rho}_{N,k-1}^{\star\mathrm{[PR]}(j_1)}(\bm{\theta}_0)\bm{\rho}_{N,k-1}^{\star\mathrm{[PR]}(j_2)}(\bm{\theta}_0)\mid \mathcal{F}_{t_{k-1}}] &\xrightarrow{\mathbb{P}_{\bm{\theta}_0}} [\mathbf{C}_{\bm{\sigma}}(\bm{\theta}_0)]_{j_1j_2},\\
    \sum_{k=2}^{N-1} \mathbb{E}_{\bm{\theta}_0}[(\bm{\rho}_{N,k-1}^{\star\mathrm{[PR]}(j_1)}(\bm{\theta}_0)\bm{\rho}_{N,k-1}^{\star\mathrm{[PR]}(j_2)})^2(\bm{\theta}_0)\mid \mathcal{F}_{t_{k-1}}] &\xrightarrow{\mathbb{P}_{\bm{\theta}_0}} 0.
\end{align*}
Once again, the second limit follows trivially. To prove the first limit, start by noticing that
\begin{align*}
    \mathbb{E}_{\bm{\theta}_0}[\bm{\xi}_{k-1}^\top \mathbf{B}_j(\bm{\theta}_0)\bm{\xi}_{k-1} \mid \mathcal{F}_{t_{k-1}}] = \mathbb{E}_{\bm{\theta}_0}[\bm{\xi}_{k-1}'^\top \mathbf{B}_j(\bm{\theta}_0)\bm{\xi}_{k-1}' \mid \mathcal{F}_{t_{k-1}}] = \frac{1}{3}\tr(\mathbf{B}_j(\bm{\theta}_0)).
\end{align*} 
Then, we multiply the expectation with $N-2$ for notational simplicity and compute
\begin{align}
    &(N-2)\mathbb{E}_{\bm{\theta}_0}[\bm{\rho}_{N,k-1}^{\star\mathrm{[PR]}(j_1)}(\bm{\theta}_0)\bm{\rho}_{N,k-1}^{\star\mathrm{[PR]}(j_2)}(\bm{\theta}_0)\mid \mathcal{F}_{t_{k-1}}]\notag\\
    &=  \mathbb{E}_{\bm{\theta}_0}[\bm{\xi}_{k-1}^\top \mathbf{B}_{j_1}(\bm{\theta}_0)\bm{\xi}_{k-1}\bm{\xi}_{k-1}^\top \mathbf{B}_{j_2}(\bm{\theta}_0)\bm{\xi}_{k-1} \mid \mathcal{F}_{t_{k-1}}]+4 \mathbb{E}_{\bm{\theta}_0}[ \bm{\xi}_{k-2}^\top \mathbf{B}_{j_1}(\bm{\theta}_0)\bm{\xi}'_{k-1}\bm{\xi}_{k-2}^\top \mathbf{B}_{j_2}(\bm{\theta}_0)\bm{\xi}'_{k-1} \mid \mathcal{F}_{t_{k-1}}] \notag\\
    &+\mathbb{E}_{\bm{\theta}_0}[\bm{\xi}_{k-1}'^\top \mathbf{B}_{j_1}(\bm{\theta}_0)\bm{\xi}_{k-1}'\bm{\xi}_{k-1}'^\top \mathbf{B}_{j_2}(\bm{\theta}_0)\bm{\xi}_{k-1}' \mid \mathcal{F}_{t_{k-1}}] +\mathbb{E}_{\bm{\theta}_0}[\bm{\xi}_{k-1}^\top \mathbf{B}_{j_1}(\bm{\theta}_0)\bm{\xi}_{k-1}\bm{\xi}_{k-1}'^\top \mathbf{B}_{j_2}(\bm{\theta}_0)\bm{\xi}_{k-1}'  \mid \mathcal{F}_{t_{k-1}}] \notag\\
    &+\mathbb{E}_{\bm{\theta}_0}[\bm{\xi}_{k-1}'^\top \mathbf{B}_{j_1}(\bm{\theta}_0)\bm{\xi}_{k-1}'\bm{\xi}_{k-1}^\top \mathbf{B}_{j_2}(\bm{\theta}_0)\bm{\xi}_{k-1}  \mid \mathcal{F}_{t_{k-1}}] - \frac{4}{9}\tr(\mathbf{B}_{j_1}(\bm{\theta}_0))\tr(\mathbf{B}_{j_2}(\bm{\theta}_0)). \label{eq:rhorho}
\end{align}
Applying Corollary \ref{cor:4thmoments} on \eqref{eq:rhorho} yields
\begin{align*}
     \sum_{k=2}^{N-1}\mathbb{E}_{\bm{\theta}_0}[\bm{\rho}_{N,k-1}^{\star\mathrm{[PR]}(j_1)}(\bm{\theta}_0)\bm{\rho}_{N,k-1}^{\star\mathrm{[PR]}(j_2)}(\bm{\theta}_0)\mid \mathcal{F}_{t_{k-1}}] &= \frac{5}{9}\tr(\mathbf{B}_{j_1}(\bm{\theta}_0)\mathbf{B}_{j_2}(\bm{\theta}_0)) \\
     &+ \frac{4}{3}\frac{1}{N-2} \sum_{k=2}^{N-1} \bm{\xi}_{k-2}^\top \mathbf{B}_{j_1}(\bm{\theta}_0)\mathbf{B}_{j_2}(\bm{\theta}_0)\bm{\xi}_{k-2}.
\end{align*}
Once again, applying Proposition 3.1 from \citep{CRIMALDI2005571} yields 
\begin{equation*}
    \frac{4}{3}\frac{1}{N-2} \sum_{k=2}^{N-1} \bm{\xi}_{k-2}^\top \mathbf{B}_{j_1}(\bm{\theta}_0)\mathbf{B}_{j_2}(\bm{\theta}_0)\bm{\xi}_{k-2} \xrightarrow{\mathbb{P}_{\bm{\theta}_0}} \frac{4}{9}\tr(\mathbf{B}_{j_1}(\bm{\theta}_0)\mathbf{B}_{j_2}(\bm{\theta}_0)),
\end{equation*}
since
\begin{align*}
    \frac{1}{N-2} \sum_{k=2}^{N-1} \mathbb{E}_{\bm{\theta}_0}[\bm{\xi}_{k-2}^\top \mathbf{B}_{j_1}(\bm{\theta}_0)\mathbf{B}_{j_2}(\bm{\theta}_0)\bm{\xi}_{k-2}\mid \mathcal{F}_{t_{k-2}}] &\xrightarrow{\mathbb{P}_{\bm{\theta}_0}} \frac{1}{3}\tr(\mathbf{B}_{j_1}(\bm{\theta}_0)\mathbf{B}_{j_2}(\bm{\theta}_0)),\\
    \frac{1}{(N-2)^2} \sum_{k=2}^{N-1} \mathbb{E}_{\bm{\theta}_0}[(\bm{\xi}_{k-2}^\top \mathbf{B}_{j_1}(\bm{\theta}_0)\mathbf{B}_{j_2}(\bm{\theta}_0)\bm{\xi}_{k-2})^2\mid \mathcal{F}_{t_{k-2}}] &\xrightarrow{\mathbb{P}_{\bm{\theta}_0}} 0.
\end{align*}
This concludes the convergence in distribution of $\bm{\lambda}_N^\mathrm{[PR]}$.

To find the asymptotic distributions of $\bm{\lambda}_N^\mathrm{[\cdot S \mid R]}$, the main issue is the fact that $-\frac{1}{\sqrt{N h}}\partial_{\beta^{(i)}} \mathcal{L}_N^\mathrm{[\cdot S\mid R]} \to 0$ in probability. The proof of this follows the same ideas as in the proof of consistency. Thus, we focus only on $-\frac{1}{\sqrt{N}}\partial_{\sigma^{(jin.)}} \mathcal{L}_N^\mathrm{[\cdot S\mid R]}$. This is then used together with equations \eqref{eq:asymptotic_L_CF} and \eqref{eq:asymptotic_L_PF} to obtain the asymptotic distributions of $\bm{\lambda}_N^\mathrm{[\cdot F]}$. Thus, we start with  $-\frac{1}{\sqrt{N}}\partial_{\sigma^{(j)}} \mathcal{L}_N^\mathrm{[\cdot S\mid R]}$
\begin{align*}
        &-\frac{1}{\sqrt{(N-1)}}\partial_{\sigma^{(j)}}\mathcal{L}_N^{\mathrm{[CS\mid R]}}\left(\mathbf{Y}_{0:t_N}; \bm{\theta}\right)= -\frac{1}{\sqrt{N-1}} \sum_{k=1}^{N} \tr((\bm{\Sigma}\bm{\Sigma}^\top)^{-1}\partial_{\sigma^{(j)}}\bm{\Sigma}\bm{\Sigma}^\top) \notag\\
        &+\frac{3}{\sqrt{N-1}}\sum_{k=1}^N (\bm{\eta}_{k-1} - 2\bm{\xi}'_{k-1})^\top \bm{\Sigma}_0^\top (\bm{\Sigma}\bm{\Sigma}^\top)^{-1}(\partial_{\sigma^{(j)}}\bm{\Sigma}\bm{\Sigma}^\top) (\bm{\Sigma}\bm{\Sigma}^\top)^{-1} \bm{\Sigma}_0 (\bm{\eta}_{k-1} - 2\bm{\xi}'_{k-1}) + \sum_{k=1}^{N}R(\frac{h}{\sqrt{N}}, \mathbf{Y}_{t_{k-1}}),\\
        &-\frac{1}{\sqrt{(N-2)}} \partial_{\sigma^{(j)}}\mathcal{L}_N^{\mathrm{[PS\mid R]}}\left(\mathbf{Y}_{0:t_N}; \bm{\theta}\right) = -\frac{2}{\sqrt{N-2}} \sum_{k=1}^{N-1} \tr((\bm{\Sigma}\bm{\Sigma}^\top)^{-1}\partial_{\sigma^{(j)}}\bm{\Sigma}\bm{\Sigma}^\top)\\
        &+\frac{3}{\sqrt{N-2}}\sum_{k=1}^{N-1} \mathbf{U}_{k, k-1}^\top \bm{\Sigma}_0^\top (\bm{\Sigma}\bm{\Sigma}^\top)^{-1}(\partial_{\sigma^{(j)}}\bm{\Sigma}\bm{\Sigma}^\top) (\bm{\Sigma}\bm{\Sigma}^\top)^{-1} \bm{\Sigma}_0 \mathbf{U}_{k, k-1}\notag\\
        &- 6\sqrt{\frac{h}{N-2}}\sum_{k=1}^{N-1}  \mathbf{U}_{k, k-1}^\top \bm{\Sigma}_0^\top  (\bm{\Sigma}\bm{\Sigma}^\top)^{-1}(\partial_{\sigma^{(j)}}\bm{\Sigma}\bm{\Sigma}^\top) (\bm{\Sigma}\bm{\Sigma}^\top)^{-1} \mathbf{F}(\mathbf{Y}_{t_{k-1}}; \bm{\beta}_0) + \sum_{k=1}^{N-1}R(\frac{h}{\sqrt{N}}, \mathbf{Y}_{t_{k-1}}).
\end{align*}
\end{proof}
Once again, we define
\begin{align}
    \bm{\rho}_{N,k-1}^{\mathrm{[CS\mid R]}(j)}(\bm{\theta}_0) &\coloneqq \frac{1}{\sqrt{N-1}}\Bigg(3(\bm{\eta}_{k-1} - 2\bm{\xi}'_{k-1})^\top \bm{\Sigma}_0^\top (\bm{\Sigma}\bm{\Sigma}_0^\top)^{-1}(\partial_{\sigma^{(j)}}\bm{\Sigma}\bm{\Sigma}_0^\top) (\bm{\Sigma}\bm{\Sigma}_0^\top)^{-1} \bm{\Sigma}_0 (\bm{\eta}_{k-1} - 2\bm{\xi}'_{k-1})\notag\\
    &\hspace{12ex}-\tr((\bm{\Sigma}\bm{\Sigma}_0^\top)^{-1}\partial_{\sigma^{(j)}}\bm{\Sigma}\bm{\Sigma}_0^\top)\Bigg),\label{eq:rhoCSR}\\
    \bm{\rho}_{N,k-1}^{\mathrm{[PS\mid R]}(j)}(\bm{\theta}_0) &\coloneqq \frac{3}{\sqrt{N-2}}(\mathbf{U}_{k, k-1}^\top \bm{\Sigma}_0^\top (\bm{\Sigma}\bm{\Sigma}_0^\top)^{-1}(\partial_{\sigma^{(j)}}\bm{\Sigma}\bm{\Sigma}_0^\top) (\bm{\Sigma}\bm{\Sigma}_0^\top)^{-1} \bm{\Sigma}_0 \mathbf{U}_{k, k-1} - \frac{2}{3}\tr((\bm{\Sigma}\bm{\Sigma}_0^\top)^{-1}\partial_{\sigma^{(j)}}\bm{\Sigma}\bm{\Sigma}_0^\top))\notag\\
    &- 6\sqrt{\frac{h}{N-2}} \mathbf{U}_{k, k-1}^\top \bm{\Sigma}_0^\top  (\bm{\Sigma}\bm{\Sigma}^\top)^{-1}(\partial_{\sigma^{(j)}}\bm{\Sigma}\bm{\Sigma}^\top) (\bm{\Sigma}\bm{\Sigma}^\top)^{-1} \mathbf{F}_0(\mathbf{Y}_{t_{k-1}}).\label{eq:rhoPSR}
\end{align}
We skip the proof of the complete case, but it can be shown analogously that
\begin{align*}
    \sum_{k=1}^{N} \mathbb{E}_{\bm{\theta}_0}[\bm{\rho}_{N,k-1}^{\mathrm{[PS\mid R]}(j_1)}(\bm{\theta}_0)\bm{\rho}_{N,k-1}^{\mathrm{[PS\mid R]}(j_2)}(\bm{\theta}_0)\mid \mathcal{F}_{t_{k-1}}] &\xrightarrow{\mathbb{P}_{\bm{\theta}_0}} [\mathbf{C}_{\bm{\sigma}}(\bm{\theta}_0)]_{j_1j_2},\\
    \sum_{k=1}^{N} \mathbb{E}_{\bm{\theta}_0}[(\bm{\rho}_{N,k-1}^{\mathrm{[PS\mid R]}(j_1)}(\bm{\theta}_0)\bm{\rho}_{N,k-1}^{\mathrm{[PS\mid R]}(j_2)})^2(\bm{\theta}_0)\mid \mathcal{F}_{t_{k-1}}] &\xrightarrow{\mathbb{P}_{\bm{\theta}_0}} 0.
\end{align*}
Focusing on the partial case, we first notice:
\begin{equation*}
    \bm{\rho}_{N,k-1}^{\mathrm{[PS\mid R]}(j)}(\bm{\theta}_0) = 3 \bm{\rho}_{N,k-1}^{\mathrm{[PR]}(j)}(\bm{\theta}_0) + o_{\mathbb{P}_{\bm{\theta}_0}}(1).
\end{equation*}
Thus, the same derivations from before hold. Moreover, 
\begin{equation*}
    \bm{\rho}_{N,k-1}^{\mathrm{[PF]}(j)}(\bm{\theta}_0) = 4 \bm{\rho}_{N,k-1}^{\mathrm{[PR]}(j)}(\bm{\theta}_0) + o_{\mathbb{P}_{\bm{\theta}_0}}(1),
\end{equation*}
which concludes the proof.

\end{document}